\newif\ifarxiv
\newtheorem{thm}{Theorem} 
\newtheorem{lem}{Lemma}
\newtheorem{rem}{Remark}
\newtheorem{cor}{Corollary}
\newtheorem{Obs}{Observation}
\newtheorem{examp}{Example}
\newcommand{\pfbox}{\quad\hspace*{\fill}$\Box$}
\newcommand{\NewChange}[1]{#1}
\newcommand{\Z}{\mathbb{Z}}
\newcommand{\Q}{\mathbb{Q}}
\newcommand{\Set}[1]{\{ #1 \}}
\newcommand{\Range}[2]{#1 ,\dots, #2}
\newcommand{\RangeSet}[2]{\Set{ \Range{#1}{#2}}}
\newcommand{\Automat}[1]{\ensuremath{\mathcal{#1}}}
\newcommand{\Heading}[1]{\vspace{-0.25cm}\paragraph{\bf{#1}}}
\newcommand{\BeginProof}{\vspace{-0.25cm}\begin{proof}}
\newcommand{\Comment}[1]{}
\newcommand{\Avg}{\mathsf{Avg}}
\newcommand{\LimInfAvg}{\mathsf{LimInfAvg}}
\newcommand{\LimSupAvg}{\mathsf{LimSupAvg}}
\newcommand{\Nat}{\ensuremath{\mathbb{N}}}
\begin{document}

\pagestyle{plain}

\def\One{\mathit{One}}
\makesavenoteenv{tabular}
\newcommand{\CMPSup}{\ensuremath{\bigwedge \textrm{MeanPayoffSup} }}
\newcommand{\CMPInf}{\ensuremath{\bigwedge \textrm{MeanPayoffInf} }}
\newcommand{\DMPSup}{\ensuremath{\bigvee \textrm{MeanPayoffSup} }}
\newcommand{\DMPInf}{\ensuremath{\bigvee \textrm{MeanPayoffInf} }}

\def\Skip{\mathit{skip}}
\def\Pop{\mathit{pop}}
\def\Push{\mathit{push}}
\def\Com{\mathsf{Com}}
\def\com{\mathit{com}}
\def\SH{\mathsf{SH}}
\def\ASH{\mathsf{ASH}}
\def\Top{\mathsf{Top}}
\newcommand{\wps}{\mathcal{A}}
\newcommand{\atuple}[1]{\langle #1 \rangle}
\newcommand{\ov}{\overline}
\newcommand{\Gr}{\mathsf{Gr}}
\newcommand{\CycleFree}{\mathsf{CycleFree}}
\newcommand{\Manipulated}{\mathsf{Man}^{\tau}_{\epsilon}}
\newcommand{\False}{\mathsf{FALSE}}
\newcommand{\True}{\mathsf{TRUE}}
\def\Calls{\mathsf{Calls}}
\def\Returns{\mathsf{Retns}}
\newcommand{\LocalHistory}{\mathsf{LocalHistory}}


\makeatletter

\begingroup \catcode `|=0 \catcode `[= 1
\catcode`]=2 \catcode `\{=12 \catcode `\}=12
\catcode`\\=12 |gdef|@xcomment#1\end{comment}[|end[comment]]
|endgroup

\def\@comment{\let\do\@makeother \dospecials\catcode`\^^M=10\def\par{}}

\def\begincomment{\@comment\@xcomment}

\makeatother

\ifarxiv
\newenvironment{comment}{\begincomment}{}
\fi

\title{The Complexity of Mean-Payoff Pushdown Games\thanks{A preliminary 
version of the paper appeared in LICS (Logic in Computer Science) 2012.}$^,$\thanks{The research 
was supported by Austrian Science Fund (FWF) Grant No P 23499-N23,
FWF NFN Grant No S11407-N23 (RiSE), ERC Start grant (279307: Graph Games), 
Microsoft faculty fellows award, the Israeli Centers of Research Excellence 
(I-CORE) program, (Center  No. 4/11), and the RICH Model Toolkit (ICT COST Action IC0901).}
} 
\author{Krishnendu Chatterjee \and Yaron Velner}
\institute{IST Austria \and  School of Computer Science and Engineering, The Hebrew University, Israel}

\maketitle
\pagestyle{plain}

\begin{abstract}
Two-player games on graphs are central in many problems in formal verification 
and program analysis such as synthesis and verification of open systems. 
In this work we consider solving recursive game graphs (or pushdown game 
graphs) that model the control flow of sequential programs with 
recursion. 
While pushdown games have been studied before with qualitative objectives, 
such as reachability and $\omega$-regular objectives, 
in this work we study for the first time such games with the most well-studied 
quantitative objective, namely, mean-payoff objective.
In pushdown games two types of strategies are relevant: (1)~global strategies, 
that depend on the entire global history; and (2)~modular strategies, that 
have only local memory and thus do not depend on the context of invocation,
but only on the history of the current invocation of the module. 
Our main results are as follows:
(1)~One-player pushdown games with mean-payoff objectives under global 
strategies are decidable in polynomial time.
(2)~Two-player pushdown games with mean-payoff objectives under global 
strategies are undecidable.
(3)~One-player pushdown games with mean-payoff objectives under modular 
strategies are NP-hard.
(4)~Two-player pushdown games with mean-payoff objectives under modular 
strategies can be solved in NP (i.e., both one-player and two-player 
pushdown games with mean-payoff objectives under modular strategies are 
NP-complete).
We also establish the optimal strategy complexity by showing that global 
strategies for mean-payoff objectives require infinite memory even in 
one-player pushdown games; and memoryless modular strategies are 
sufficient in two-player pushdown games.
Finally we also show that all the problems have the same complexity if 
the stack boundedness condition is added, where along with the mean-payoff
objective the player must also ensure that the stack height is bounded.
\end{abstract}

\ifarxiv
\else
\category{D.2.4}{Software/Program Verification}{}
\category{F.2.2}{Analysis of Algorithms and Problem Complexity}{Nonnumerical Algorithms and Problems---Computations on discrete structures}

\terms{Algorithms (Graph Algorithms and data structures); 
Verification (Computer-aided verification); Languages.
}
\keywords{Graph games; Quantitative verification and synthesis; 
Pushdown systems; Mean-payoff objectives.}
\fi

\section{Introduction}

\noindent{\bf Quantitative verification and synthesis of pushdown systems.}
\emph{Pushdown automata} (a.k.a pushdown systems) are one of the most basic and 
fundamental computation models.
They extend finite-state systems with a single unbounded stack, to model 
the call stack of first-order recursive programs, where the control states
hold valuations of the program's global variables, and stack characters encode 
the local variable valuations.
The fundamental result for the model-checking of pushdown systems was established 
by B\"uchi in~\cite{Buchi64}, who showed how to compute the set of all reachable stack configurations.
Since then, other problems of pushdown model-checking (e.g., wrt 
safety, LTL or $\omega$-regular properties) as well as games over pushdown systems 
have been extensively studied and led to various efficient implementations and 
applications to program analysis~\cite{mooly,alur12,chen2002,henz2002,Wal01,Wal00,AlurParity}.
While the traditional model-checking problem asked for Boolean answers (whether a property
is satisfied or not), recent trends explore more quantitative properties, such 
as performance measure. 
One of the most fundamental and well-studied quantitative objective is the 
{\em mean-payoff or long-run average} objective. 
In this work we study the verification and synthesis problem of pushdown systems 
with mean-payoff objectives.
We start with the description of finite-state game graphs, and then pushdown systems
and its various equivalent models, followed by existing results and then 
our contribution.

\smallskip\noindent{\bf Games on graphs.} 
Two-player games played on finite-state graphs provide the mathematical 
framework to analyze several important problems in computer science as 
well as mathematics.  
In particular, when the vertices of the graph represent the states 
of a reactive system and the edges represent the transitions, then the synthesis problem
(Church's problem) asks for the construction of a winning strategy in a 
game played on the graph~\cite{BuchiLandweber69,RamadgeWonham87,PnueliRosner89,McNaughton93}.
Game-theoretic formulations have also proved useful for the 
verification~\cite{AHK02}, refinement~\cite{FairSimulation}, and compatibility 
checking \cite{InterfaceTheories} of reactive systems.    
Games played on graphs are dynamic games that proceed for an infinite 
number of rounds.
The vertex set of the graph is partitioned into player-1 vertices
and player-2 vertices. 
The game starts at an initial vertex, and if the current vertex is a 
player-1 vertex, then player~1 chooses an outgoing edge, and 
if the current vertex is a player-2 vertex, then player~2 does 
likewise.
This process is repeated forever, and gives rise to 
an outcome of the game, called a {\em play}, that consists of the 
infinite sequence of states that are visited. 
Two-player games on finite-state graphs with qualitative objectives
such as reachability, liveness and $\omega$-regular conditions formalized
by the canonical parity objectives, strong fairness objectives, etc., 
have been extensively studied in the literature~\cite{GH82,EJ88,EJ91,Zie98,Thomas97,WilkeBook}. 

\smallskip\noindent{\bf The extensions.} 
The study of two-player finite-state games with qualitative objectives has 
been extended in two orthogonal directions in the literature: 
(1)~two-player infinite-state games with qualitative objectives; and 
(2)~two-player finite-state games with quantitative objectives.
One of the most well-studied models of infinite-state games with qualitative 
objectives is pushdown games (or games on recursive state machines) that 
can model reactive systems with recursion (or model the control flow of 
sequential programs with recursion).
Pushdown games with reachability and parity objectives have been studied 
in~\cite{Wal01,Wal00,AlurReach,AlurParity} (also see~\cite{EY05,EY09,BBKO11,BBFK08} and the recent work of~\cite{EY15} for sample research in stochastic pushdown games).
The most well-studied quantitative objective is the \emph{mean-payoff} objective,
where a reward is associated with every transition and the goal of one of the 
players is to maximize the long-run average of the rewards (and the goal of
the opponent is to minimize).
Two-player finite-state games with mean-payoff objectives have been studied
in~\cite{EM79,ZP95,LigLip69}, and more recently applied in synthesis of reactive systems with 
quality guarantee~\cite{BCHJ09,C_acm_15} and robustness~\cite{BGHJ09}, as well as 
interprocedural analysis~\cite{CPV15}.
Moreover recently many quantitative logics and automata theoretic formalisms
have been proposed with mean-payoff objectives in their heart to express 
properties such as reliability requirements, and resource bounds of reactive
systems~\cite{CDH10,BCHK11,DM10,ornaACM}.
Thus pushdown games with mean-payoff objectives would be a central 
theoretical question for model checking of quantitative logics
(specifying reliability and resource bounds) on reactive systems 
with recursion feature.
Several applications of mean-payoff pushdown systems in the context of program analysis, such as, resource usage of containers, static profiling of programs for frequency of function calls, are given in~\cite[Section~3,Section~5]{CPV15}.
 
\smallskip\noindent{\bf Pushdown mean-payoff games.} 
In this work we study for the first time pushdown games with mean-payoff objectives
(to the best of our knowledge mean-payoff objectives have not been studied in
the context of pushdown games).
In pushdown games two types of strategies are relevant and studied in the
literature. 
The first are the \emph{global} strategies, where a global strategy can 
choose the successor state depending on the entire global history of the 
play (where history is the finite sequence of configurations of the current 
prefix of a play).
The second are the \emph{modular} strategies, and modular strategies are 
understood more intuitively in the model of games on recursive state machines.
A \emph{recursive state machine} (RSM) consists of a set of component machines 
(or modules). 
Each module has a set of \emph{nodes} (atomic states) and \emph{boxes} 
(each of which is mapped to a module), a well-defined interface consisting of 
\emph{entry} and \emph{exit} nodes, and edges connecting nodes/boxes. 
An edge entering a box models the invocation of the module associated with the 
box and an edge leaving the box represents return from the module.
In the game version the nodes are partitioned into player-1 nodes and 
player-2 nodes.
Due to recursion the underlying global state-space is infinite and isomorphic
to pushdown games. 
The polynomial-time equivalence of pushdown games and recursive games has been established
in~\cite{AlurReach}. 
A modular strategy is a strategy that has only local memory, and thus, 
the strategy does not depend on the context of invocation of the module,
but only on the history within the current invocation of the module. 
In other words, modular strategies are appealing because they are 
stackless strategies, decomposable into one for each module. 
In this work we will study pushdown games with mean-payoff objectives
for both global and modular strategies.

\smallskip\noindent{\bf Previous results.}
Pushdown games with qualitative objectives were studied in~\cite{Wal01,Wal00}.
It was shown in~\cite{Wal01} that solving pushdown games (i.e., 
determining the winner in pushdown games) 
with reachability objectives under global strategies is EXPTIME-hard, 
and pushdown games with parity objectives under 
global strategies can be solved in EXPTIME.
Thus it follows that pushdown games with reachability and 
parity objectives under global strategies are EXPTIME-complete.
The notion of modular strategies in games on recursive state machines
was introduced in~\cite{AlurReach,AlurParity}. 
It was shown that the modular strategies problem is NP-complete 
in pushdown games with reachability and parity objectives in general~\cite{AlurReach,AlurParity}.
The results of~\cite{AlurReach} also presents more refined complexity results 
in terms of the number of exit nodes, showing that if 
every module has single exit, then the problem is polynomial for 
reachability objectives~\cite{AlurReach} and in NP $\cap$ coNP for parity objectives~\cite{AlurParity}.

\smallskip\noindent{\bf Our contributions.} 
In this work we present a complete characterization of the computational 
and strategy complexity of pushdown games and pushdown systems (one-player 
pushdown games or pushdown automata) with mean-payoff objectives. 
Solving a pushdown system (resp. pushdown game) with respect to a mean-payoff
objective is to decide whether there exists a path that (resp. a winning strategy 
to ensure that every path possible given the strategy) satisfies the mean-payoff objective.
Our main results for computational complexity are as follows.
\begin{enumerate}
\item \emph{Global strategies.} 
We show that pushdown systems (one-player pushdown games) 
with mean-payoff objectives under global strategies 
can be solved in polynomial time, whereas solving pushdown games 
with mean-payoff objectives under global strategies is 
undecidable.

\item \emph{Modular strategies.} 
Solving pushdown systems with single exit nodes with mean-payoff objectives
under modular strategies is NP-hard, and pushdown games 
with mean-payoff objectives under modular strategies can be solved
in NP.
Thus both pushdown systems and pushdown games with mean-payoff objectives
under modular strategies are NP-complete.
\end{enumerate}
Our results are shown in Table~\ref{tab-results}.
First observe that our hardness result for modular strategies 
is different from the NP-hardness of~\cite{AlurReach} because the hardness
result of~\cite{AlurReach} shows hardness for \emph{games} with reachability 
objectives and requires that the number of modules with multiple 
exit nodes are not bounded
(in fact if every module of the recursive game has a single exit, 
then the problem is in PTIME for reachability and in NP $\cap$ coNP for parity 
objectives).
In contrast we show that for mean-payoff objectives the problem 
is NP-hard even for pushdown systems (only one player), where every module
has a single exit node, under modular strategies. 
Second, we also observe the very different complexity of global 
and modular strategies for mean-payoff objectives in pushdown systems 
vs pushdown games: 
the global strategies problem is computationally inexpensive (in PTIME) as
compared to the modular strategies problem (which is NP-complete) in pushdown systems;
whereas  the global strategies problem is computationally infeasible 
(undecidable) as compared to the modular strategies problem (which is NP-complete) 
in pushdown games.
Also observe that in contrast to finite-state game graphs 
where the complexities for mean-payoff and parity objectives match, 
for pushdown systems and games, the complexities 
of parity and mean-payoff objectives are very different.
Along with the computational complexities, 
we also establish the optimal strategy complexity showing 
that global winning strategies for mean-payoff objectives in general 
require infinite memory even in pushdown systems; 
whereas memoryless or positional (independent of history) strategies
suffice for modular strategies for mean-payoff objectives
in pushdown games (see Table~\ref{tab-strategy}).
Finally we also study the stack boundedness conditions where the goal
of one player along with maximizing the mean-payoff objective is 
also to ensure that the height of the stack is bounded.
We show that all the complexities for the additional stack boundedness 
condition along with mean-payoff objectives 
are the same in pushdown systems and games as without the stack boundedness
condition.

\smallskip\noindent{\bf Technical contributions.} 
Our key technical contributions are as follows.
For pushdown systems under global strategies we show that the mean-payoff objective problem 
can be solved by only considering additional stack height that is polynomial.
We then show that the stack height bounded problem can be solved in 
polynomial time using a dynamic programming style algorithm.
For pushdown games under global strategies our undecidability result is obtained by a reduction 
from the universality problem of \emph{weighted} automata (which is 
undecidable~\cite{Krob,ABK11}).
For modular strategies we first show the existence of cycle independent 
modular strategies, and then show that memoryless modular strategies 
are sufficient. 
Given memoryless modular strategies and our polynomial-time algorithm for
pushdown systems, we obtain the NP upper bound for the modular strategies
problem.
Our NP-hardness result for modular strategies is a reduction from the
3-SAT problem.

\smallskip\noindent{\bf Organization.}
Our paper is organized as follows.
In Section~\ref{sect:PushDownProcesses} (resp. Section~\ref{sec:games})
we present the results for pushdown systems
(resp. pushdown games) under global strategies. 
In Section~\ref{sec:modular} we present the results for modular
strategies.

\begin{table}[t]
\begin{center}
\begin{small}
\ifarxiv
\else
\tbl{Computational complexity of pushdown systems and pushdown games with mean-payoff objectives.\label{tab-results}}{
\fi
\begin{tabular}{|c|c|c|}
\hline
                & \ Global strategies \ & \ Modular strategies\ \\ 
\hline
\ Pushdown systems \ &    PTIME   & \ NP-complete (NP-hard for single exit) \   \\
						&	(Theorem~\ref{thm:MPSupMPInfInPTIME} and Theorem~\ref{thm:LeqIsInP})			& (Theorem~\ref{thm:ModularStratInNP} and Theorem~\ref{thm:np-hard}) \ \\
\hline
\ Pushdown games \ & \  Undecidable \ &\ NP-complete \ \\
& (Theorem~\ref{thrm:game}) & (Theorem~\ref{thm:ModularStratInNP} and Theorem~\ref{thm:np-hard})  \ \\
\hline
\end{tabular}
\ifarxiv
\else
}
\fi
\end{small}
\end{center}
\ifarxiv
\caption{Computational complexity of pushdown systems and pushdown games with mean-payoff objectives.}\label{tab-results}
\fi
\end{table}

\begin{table}[t]
\begin{center}
\begin{small}
\ifarxiv
\else
\tbl{Strategy complexity of pushdown systems and pushdown games with mean-payoff objectives.\label{tab-strategy}}{
\fi
\begin{tabular}{|c|c|c|}
\hline
                & \ Global strategies \ & \ Modular strategies\ \\ 
\hline
\ Pushdown systems \ &    Infinite   & Memoryless   \\
& (Example~\ref{ex:illustration1}) & (Lemma~\ref{lem:ModularMemorylessDeterminacy}) \\
\hline
\ Pushdown games \ & \  Infinite \ &\ Memoryless  \ \\
& (Example~\ref{ex:illustration1}) & (Lemma~\ref{lem:ModularMemorylessDeterminacy}) \\
\hline
\end{tabular}
\ifarxiv
\else
}
\fi
\end{small}
\end{center}
\ifarxiv
\caption{Strategy complexity of pushdown systems and pushdown games with mean-payoff objectives.}\label{tab-strategy}
\fi
\end{table}

\section{Mean-Payoff Pushdown Graphs}\label{sect:PushDownProcesses}
In this section we consider pushdown graphs (or pushdown systems) with 
mean-payoff objectives.
\NewChange{We start with the basic definitions of pushdown systems and valid paths in pushdown systems, and then give the overview of the solution and introduce basic notations.}

\smallskip\noindent{\em Stack alphabet and commands.}
Let $\Gamma$ denote a finite set of \emph{stack symbols} (called the stack 
alphabet), and 
$\Com(\Gamma) = \Set{\Skip,\Pop} \cup \Set{\Push(z) \mid z\in\Gamma}$ denote the set of 
\emph{stack commands} over $\Gamma$.
Intuitively, the command $\Skip$ does nothing, $\Pop$ deletes the top element of the stack, 
$\Push(z)$ puts $z$ on the top of the stack.
For a stack command $\com$ and a stack string $\alpha \in \Gamma^+$ 
we denote by $\com(\alpha)$ the stack string obtained by executing the 
command $\com$ on $\alpha$.

\smallskip\noindent{\bf Weighted pushdown systems.}
A \emph{weighted pushdown system (WPS)} (or a weighted pushdown graph) is a tuple:
\[
\wps = \atuple{ Q, \Gamma, q_0\in Q, E \subseteq  (Q\times\Gamma) \times (Q\times \Com(\Gamma)), w:E \to \Z},
\]
where $Q$ is a finite set of \emph{states} with $q_0$ as the initial state; 
$\Gamma$ is a finite \emph{stack alphabet} and we assume that there is a special 
initial stack symbol $\bot \in \Gamma$; $E$ describes the set of 
edges or transitions of the pushdown system; and $w$ is a weight function 
that assigns integer weights to every edge (and the weights are encoded in binary).
A \emph{configuration} of a WPS is a pair $(\alpha,q)$ where $\alpha\in \Gamma ^+$ 
is a stack string and $q\in Q$.
We assume that $\bot$ can be neither put nor removed from the stack,  
and thus all configurations must contain $\bot$ at the bottom of the stack.
For a stack string $\alpha$ we denote by $\Top(\alpha)$ the top symbol of the stack.
The initial configuration of the WPS is $(\bot,q_0)$.
We use $W$ to denote the maximal absolute weight of the edge
weights.

\smallskip\noindent{\em Successor configurations, paths, and ultimately periodic paths.}
Given a WPS $\wps$, a configuration $c_{i+1}=(\alpha_{i+1},q_{i+1})$
is a \emph{successor} configuration of a configuration $c_{i}=(\alpha_{i},q_{i})$,
if there is an edge $(q_i,\gamma_i,q_{i+1},\com) \in E$ such that 
$\com(\alpha_i)=\alpha_{i+1}$, where $\gamma_i=\Top(\alpha_i)$.
A \emph{path} $\pi$ is a sequence of configurations.
A path $\pi = \atuple{c_1, \dots, c_{n+1}}$ is a \emph{valid path} if for all 
$1 \leq i \leq n$ the configuration $c_{i+1}$ is a successor configuration of $c_i$
(and the notation is similar for infinite paths).
In the sequel we shall refer only to valid paths.
Let $\pi = \atuple{c_1,c_2,\dots,c_i,c_{i+1},\dots}$ be a path.
We denote by $\pi[j] = c_j$ the $j$-th configuration of the path and 
by $\pi[i_1,i_2] = \atuple{c_{i_1}, c_{i_1 + 1}, \dots,c_{i_2}}$ the 
segment of the path from the $i_1$-th to the $i_2$-th configuration.
A path can equivalently be defined as a sequence $\atuple{c_1 e_1 e_2 \dots e_n}$,
where $c_1$ is the first configuration and $e_i$ are valid transitions.
A path $\pi$ is \emph{ultimately periodic} if there exists a finite sequence 
$\xi_1 \in E^*$ and a non-empty finite sequence $\xi_2 \in E^+$ of transitions 
such that the path consists of $\xi_1$ followed by $\xi_2$ forever, i.e., 
$\pi=\atuple{c_1 \xi_1 (\xi_2)^\omega}$. 
A configuration $c_r$ is \emph{reachable} if there is a finite path that 
begins at the initial configuration and ends in $c_r$.
Similarly, a path $\pi$ is reachable if its first configuration is reachable.

\smallskip\noindent{\em Average weights of paths.}
For a finite path $\pi$, we denote by $w(\pi)$ the sum of the weights of the
edges in $\pi$ and $\Avg(\pi) = \frac{w(\pi)}{|\pi|}$, where $|\pi|$ 
is the length of $\pi$, denotes the average of the weights.
For an infinite path $\pi$, we denote by $\LimSupAvg(\pi)$ (resp. 
$\LimInfAvg(\pi)$)  the limit-sup (resp. limit-inf) of the averages (long-run 
average or mean-payoff objectives), i.e., $\lim\sup(\Avg(\pi[1,i]))_{i\geq 1}$
(resp. $\lim\inf(\Avg(\pi[1,i]))_{i\geq 1}$).
We say that $\pi$ is a \emph{positive path} if $w(\pi) > 0$, and \emph{negative, non-negative} and \emph{non-positive} paths are similarly defined.

\noindent{\bf Mean-payoff objectives with strict and non-strict inequalities.}
For a given integer $r$, the mean-payoff objective $\LimInfAvg\bowtie r$ 
(resp. $\LimSupAvg\bowtie r$) defines the set of infinite paths $\pi$ such 
that $\LimInfAvg(\pi)\bowtie r$ (resp. $\LimSupAvg(\pi)\bowtie r$),
where $\bowtie \in \Set{\geq,>}$. 
Mean-payoff objectives with integer threshold $r$ can be transformed 
to threshold~0 by subtracting $r$ from all transition weights. 
Hence in this work w.l.o.g we will consider the mean-payoff objectives
(i)~$\LimInfAvg>0$ (resp. $\LimSupAvg>0$), and call them mean-payoff objectives
with strict inequality; and
(ii)~$\LimInfAvg\geq0$ (resp. $\LimSupAvg\geq0$), and call them mean-payoff 
objectives with non-strict inequality.
We are interested in solving WPSs with mean-payoff objectives, i.e., to
decide if there is a path that satisfies the objective.

\begin{figure}[!tb]
\begin{center}
\begin{picture}(48,28)(0,0)
\node[Nmarks=i, iangle=180](n0)(10,12){$q_1$}
\node[Nmarks=n](n1)(40,12){$q_2$}
\drawloop[ELside=l,loopCW=y, loopdiam=6](n0){$\Push(\gamma),-1$}
\drawloop[ELside=l,loopCW=y, loopdiam=6](n1){$\Pop(\gamma),1$}
\drawedge[ELpos=50, ELside=l, ELdist=0.5, curvedepth=6](n0,n1){$\Skip,-1$}
\drawedge[ELpos=50, ELside=l, curvedepth=6](n1,n0){$\Skip,-1$}
\end{picture}
\caption{WPS $\wps$ to witness ultimately periodic words might not suffice
for mean-payoff objectives with non-strict inequality.
\NewChange{The $\Pop$ transition is valid only when the stack is not empty.
All other transitions are valid for all stack tops}}\label{fig:ex-wps}
\end{center}
\end{figure}
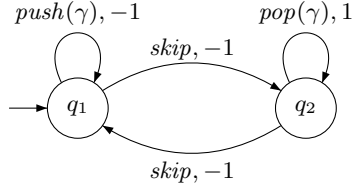
\begin{examp}\label{ex:illustration1}
We now illustrate with an example that \NewChange{ultimately periodic witness paths for non-strict inequality mean-payoff objectives need not exist, and the only witness paths are non-ultimately periodic ones.}
Consider the WPS $\wps$ with two states $Q = \Set{q_1,q_2}$, with two symbol 
stack alphabet $\Gamma = \Set{\bot,\gamma}$, and the edge set 
$E=\Set{e_1,e_2,\ldots,e_5}$ is described as follows:
$e_1 = (q_1,\bot,q_1,\Push(\gamma)), e_2 = (q_1,\gamma,q_1,\Push(\gamma)),
e_3 = (q_1,\gamma,q_2,\Skip), e_4 = (q_2,\gamma,q_2,\Pop),$ and 
$e_5 = (q_2,\bot,q_1,\Skip)$.
The weight function is as follows: $w(e_4)=1$, and all other edge weights are $-1$.
(See Figure~\ref{fig:ex-wps} for a pictorial description).
For $i\geq 1$, consider the path segment $\rho_i=c_1 e_1 e_2^{i-1} e_3 e_4^i e_5$ that executes 
the edge $e_1$, followed by $(i-1)$-times the edge $e_2$, then the edge $e_3$, 
followed by $i$-times the edge $e_4$, and finally the edge $e_5$.
It is straightforward to verify that 
for the infinite path $\pi = (\bot,q_1) \rho_1 \rho_2 \rho_3 \dots $
we have that $\LimSupAvg(\pi)=\LimInfAvg(\pi) = 0$.
However for every valid path $\pi = c_1 \xi_1 (\xi_2)^\omega$, where $\xi_1 \in E^*$ and 
$\xi_2 \in E^+$ it must be the case that either (i)~$\xi_2 = e_2$ and 
then $\LimInfAvg(\pi) =\LimSupAvg(\pi) = -1$ or that 
(ii)~$\xi_2$ is a cycle with length $|\xi_2|$ and 
has weight at most~$-1$,
and hence $\LimInfAvg(\pi) \leq \LimSupAvg(\pi) \leq -\frac{1}{|\xi_2|}<0$.
\hfill\qed
\end{examp}

\smallskip\noindent{\bf Overview of the solution.}
We first characterize the \emph{pumpable paths} in a pushdown graph that 
determine the possible mean-payoff values of the graph.
In a finite-state graph a path with non-negative mean-payoff exists if and 
only if there is a finite pumpable path (namely, a cycle) with non-negative 
weight.
For infinite-state graphs, and pushdown graphs in particular, the latter does 
not hold.
However we show that a path with a strictly positive mean-payoff exists if and 
only if there is a finite pumpable path with positive weight.
For this purpose we first characterize the pumpable paths in a pushdown graph 
and in Section~\ref{subsec:wps1} we obtain a polynomial algorithm to detect 
pumpable paths with a positive weight, and hence we get a polynomial-time 
algorithm to detect a path with a positive mean-payoff in a pushdown graph.
In Section~\ref{subsec:wps2} we show a reduction from the problem of detecting paths 
with non-negative mean-payoff to the problem of detecting paths with positive mean-payoff 
and a polynomial-time algorithm for mean-payoff pushdown graphs is obtained.

\smallskip\noindent{\em Notations.}
We shall use (i)~$\gamma$ or $\gamma_i$ for an element of $\Gamma$;
(ii)~$e$ or $e_i$ for a transition (equivalently an edge) from $E$;
(iii)~$\alpha$ or $\alpha_i$ for a string from $\Gamma^*$.
For a path $\pi = \atuple{c_1,  c_2, \dots} = \atuple{c_1 e_1 e_2 \dots}$ we denote by
(i)~$q_i$: the state of configuration $c_i$, and 
(ii)~$\alpha_i$: the stack string of configuration $c_i$.

\smallskip\noindent{\em Stack height and additional stack height of paths.}
For a path $\pi = \atuple {(\alpha_1,q_1),\dots,(\alpha_n,q_n)}$, the \emph{stack height}
of $\pi$ is the maximal height of the stack in the path, i.e., 
$\SH(\pi) = \max\Set{|\alpha_1|,\dots,|\alpha_n|}$.
The \emph{additional stack height} of $\pi$ is the additional height 
of the stack in the segment of the path, i.e., the additional 
stack height $\ASH(\pi)$ is 
$\SH(\pi) - \max\Set{|\alpha_1|,|\alpha_n|}$.

\smallskip\noindent{\em Pumpable pair of paths.}
Let $\pi = \atuple{c_1 e_1 e_2 \dots}$ be a finite or infinite path.
A \emph{pumpable pair of paths} for $\pi$ is a pair of non-empty sequence of edges:
$(p_1,p_2)=(e_{i_1}e_{i_1+1}\dots e_{i_1 + n_1},  e_{i_2}e_{i_2+1}\dots e_{i_2 + n_2})$, 
for $n_1,n_2\geq 0$, $i_1 \geq 0$ and $i_2 > i_1+n_1$ such that for every $j\geq 0$ the path
$\pi_{(p_1,p_2)}^j$ obtained by pumping the pair $p_1$ and $p_2$ of paths $j$ times each is a valid path, i.e., 
for every $j \geq 0$ we have
\[
\pi_{(p_1,p_2)}^j = \atuple{
c_1 e_1 e_2 \dots e_{i_1 - 1} (e_{i_1} e_{i_1 + 1}\dots e_{i_1 + i_n})^j e_{i_1 + i_n + 1} \dots e_{i_2 - 1} (e_{i_2} e_{i_2 + 1} \dots e_{i_2 + n_2})^j e_{i_2 + n_2} \dots}
\]
is a valid path.
We will show that large additional stack height implies the 
existence of a pumpable pair of paths. To prove the results we need the notion
of \emph{local minimum} of paths.

\smallskip\noindent{\em Local minimum of a path and non-decreasing paths.}
Let $\pi = \atuple{c_1, c_2, \dots}$ be a path.
A configuration $c_i = (\alpha_i,q_i)$ is a \emph{local minimum} (aka ``stair position'' in the 
literature) if for every $j\geq i$ we have $\alpha_i \sqsubseteq \alpha_j$ (i.e., the stack string 
$\alpha_i$ is a prefix string of $\alpha_j$), and we say that $\pi$ is a \emph{non-decreasing path} if $c_1$ is a local minimum.
One basic fact about local minima of a path is as follows: 
Every infinite path has infinitely many local minima.
We discuss the proof of the basic fact and some properties of local minima.
Consider a path $\pi=\atuple{c_1,c_2,\dots}$. 
If there is a finite integer $j$ such that from some point on 
(say after the $i$-th index) the stack height is always at least $j$, 
and the stack height is $j$ infinitely often,
then every configuration after the $i$-th index with stack height $j$ is a 
local minimum (and there are infinitely many of them).
Otherwise, for every integer $j$, there exists an index $i$, such that
for every index after $i$ the stack height exceeds $j$, and then 
for every $j$, the last configuration with stack height $j$ is a 
local minimum and we have infinitely many local minima. 
This shows the basic fact about infinitely many local minima of a 
path.
We now discuss a property of consecutive local minima in a path.
If we consider a path and the sequence of local minima, and let
$c_i$ and $c_j$ be two consecutive local minima. 
Then either $c_i$ and $c_j$ have the same stack height, or else
$c_j$ is obtained from $c_i$ with one push operation.
In the following lemma we establish that if the additional stack height 
of a path exceeds $(|Q|\cdot |\Gamma|)^2$, then there is a pumpable pair 
of paths.
Also note that if the additional stack height of a path is at least $d$, then it 
means that there are at least $d+1$ configurations in the path.

\begin{lem}\label{lemm:EveryDeepPathHasAPumpablePair}
Let $\pi$ be a finite path such that $\ASH(\pi) =d \geq (|Q|\cdot|\Gamma|)^2$.
Then $\pi$ has a pumpable pair of paths.
\end{lem}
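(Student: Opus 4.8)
The plan is to run a two-sided pigeonhole argument over the stack levels that $\pi$ crosses on its way up to its highest configuration and back down. First I would fix a configuration $c_m=(\alpha_m,q_m)$ of maximal stack height $h=\SH(\pi)$ and set $b=\max\Set{|\alpha_1|,|\alpha_n|}$, so that $d=h-b\geq(|Q|\cdot|\Gamma|)^2$. Since $d\geq 1$ the maximum is attained strictly inside $\pi$, so the ascent (from $c_1$ to $c_m$) and the descent (from $c_m$ to $c_n$) are both nonempty. For each level $\ell\in\Set{b+1,\dots,h}$ I would single out two distinguished configurations: the \emph{push configuration} $c_{a_\ell}$, obtained immediately after the \emph{last} push reaching height $\ell$ before index $m$, and the \emph{pop configuration} $c_{b_\ell}$, the configuration just before the \emph{first} pop leaving height $\ell$ after index $m$. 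Both have height exactly $\ell$; write $p_\ell,r_\ell$ for their states and note they share the same top symbol $z_\ell$, namely the letter sitting at level $\ell$ in $\alpha_m$ (the letter pushed at $a_\ell$ and eventually popped at $b_\ell$).

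The key structural facts I would establish next all follow from the words \emph{last} and \emph{first}. Because $a_\ell$ marks the last ascent to level $\ell$ before $m$, the stack height stays $\geq\ell$ on the whole block $[a_\ell,m]$: were it to drop to $\ell-1$ and climb back, that later climb would be a push to $\ell$ after $a_\ell$. Symmetrically the height stays $\geq\ell$ on $[m,b_\ell]$. In particular the letter $z_\ell$ is never touched between $a_\ell$ and $b_\ell$, which justifies that $c_{a_\ell}$ and $c_{b_\ell}$ share top symbol $z_\ell$ and forces the monotone ordering $a_{b+1}<\dots<a_h\leq m\leq b_h<\dots<b_{b+1}$. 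These facts also record that any sub-segment of the ascent inside some $[a_\ell,m]$ never pops below level $\ell$, and dually on the descent; this is exactly the property that will make the pumped path valid.

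Now I would apply the pigeonhole principle. Assign to each level $\ell$ the signature $(p_\ell,z_\ell,r_\ell)\in Q\times\Gamma\times Q$; note the two top symbols that must be matched at the two ends coincide, so a single copy of $z_\ell$ suffices. There are $d$ levels but only $|Q|^2\cdot|\Gamma|$ possible signatures; since a positive additional height forces a non-$\bot$ letter to be pushed, $|\Gamma|\geq 2$, whence $d\geq(|Q|\cdot|\Gamma|)^2>|Q|^2\cdot|\Gamma|$ and two distinct levels $k<k'$ share a signature, i.e.\ $p_k=p_{k'}$, $r_k=r_{k'}$, $z_k=z_{k'}$. I would then take $p_1$ to be the ascent segment from $c_{a_k}$ to $c_{a_{k'}}$ and $p_2$ the descent segment from $c_{b_{k'}}$ to $c_{b_k}$. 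Both are nonempty (as $a_k<a_{k'}$ and $b_{k'}<b_k$), and $p_1$ ends at or before $c_m$ while $p_2$ begins at or after $c_m$, so $p_1$ strictly precedes $p_2$.

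It remains to verify that pumping $p_1$ and $p_2$ simultaneously $j$ times is valid for every $j\geq0$, and this is the step I expect to be the main obstacle. The point is that every transition reads only the current state and top stack symbol, hence is invariant under fixed content placed deep in the stack. Because $p_1$ has matching state and top symbol $(p_k,z_k)$ at both ends and never pops below its starting level $k$, each extra copy of $p_1$ stacks an identical block of height $k'-k$; dually, since $p_2$ matches $(r_k,z_k)$ at both ends and never pops below its final level $k$, each extra copy of $p_2$ removes exactly one such block. The prefix before $p_1$, the suffix after $p_2$, and the segment strictly between the two blocks are copied verbatim, and that middle segment stays above level $k'$, so it never inspects the pumped content. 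Consequently the $j$ copies of $p_1$ build a tower that the $j$ copies of $p_2$ dismantle completely, the configuration reached right after the pumped region is exactly $c_{b_k}$ again, and the suffix attaches unchanged; the case $j=0$ is the same computation shifted down by $k'-k$. Carefully arguing this cancellation — that the block pushed by $p_1$ is literally the block popped by $p_2$ and that no transition along the way reads below the relevant floor — is the technical heart of the proof, whereas the counting is routine once the structural facts are in place.
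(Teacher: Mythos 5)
Your proof is correct and follows essentially the same route as the paper: pick the peak configuration, mark the last ascent crossing and first descent crossing at each level, pigeonhole on (state, top-symbol) signatures, and pump the matched ascent and descent segments. The only differences are cosmetic refinements — you observe that the two top symbols at matched crossings automatically coincide (shrinking the signature space to $Q\times\Gamma\times Q$), and you spell out the block-cancellation argument that the paper leaves as ``straightforward to verify.''
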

\begin{proof}
We first select a subpath of $\pi$, denoted by $\pi^*$, such that 
$\pi^* = \atuple{c_1 ^*, \dots, c_p^*, \dots, c_n ^*}$ and the following conditions hold: 
(i)~$c_1^*$ is a local minimum in $\pi^*$, 
(ii)~$|\alpha_1^*| = |\alpha_n^*|$, and 
(iii)~$|\alpha_p^*| = |\alpha_1 ^*| + d$.
The subpath is selected as follows: consider a configuration 
$c_p^*$ in $\pi$ where the stack height is maximal, and $c_1^*$ is the 
closest configuration before $c_p^*$ where the stack height is exactly 
$d$ less than the stack height of $c_p^*$, and similarly $c_n^*$ is the 
closest configuration after $c_p^*$ where the stack height is exactly 
$d$ less than that of $c_p^*$ (see Figure~\ref{fig-mp1}).
Clearly all the three conditions are satisfied.
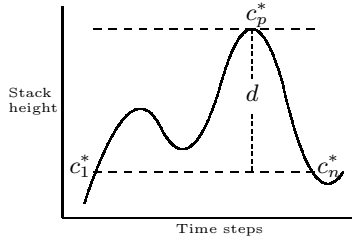
\begin{figure}[t]
\setlength{\unitlength}{0.4cm}
\begin{center}

\begin{picture}(10,8)(0,1)

\put(-0.5, 1){ \line(0, 1){7} }
\put(-0.5, 1){ \line(1, 0){9.5} }

\qbezier(0.5,1.5)(2,6)(3,4)

\qbezier(3,4)(4,2)(5,5.5)

\qbezier(5,5.5)(6,9)(7,5.5)

\qbezier(7,5.5)(8,1)(9,2.5)

\multiput(0.80,2.5)(0.5,0){15}{\line(1,0){0.25}}
\multiput(0.80,7.25)(0.5,0){15}{\line(1,0){0.25}}


\multiput(6,2.5625)(0.0,0.25){8}{\line(0,1){0.125}}

\multiput(6,5.625)(0.0,0.25){7}{\line(0,1){0.125}}

\put(5.8,4.875){$d$}

\put(5.8,7.55){$c_p^*$}

\put(0,2.5){$c_1^*$}
\put(8.15,2.5){$c_n^*$}

\put(-2,5){\tiny{Stack}}
\put(-2,4.5){\tiny{height}}

\put(3.5,0.5){\tiny{Time steps}}

\end{picture}
\end{center}

\caption{Subpath construction.}
  \label{fig-mp1}
\end{figure}

Let $c'_{j}$ (resp. $c''_{j}$) be the closest configuration before (resp. after) $c_p^*$ 
such that the stack height of $c'_{j}$ (resp. $c''_{j}$) is $|\alpha_1^*| + j$, 
for $j\geq 0$.
Since $d\geq (|Q|\cdot |\Gamma|)^2$ it follows from the pigeonhole principle that
there exist $j_1, j_2$ such that $j_1 < j_2$, and the states and the top stack symbol of $c'_{{j_1}}$, and 
$c'_{{j_2}}$ are identical, and the states and the top stack symbol of $c''_{{j_1}}$ and $c''_{{j_2}}$ 
are identical.
It is straightforward to verify that the sequence $p_1$ of edges between $c'_{{j_1}}$ and $c'_{{j_2}}$ 
along with the sequence $p_2$ of edges between $c''_{{j_2}}$ and $c''_{{j_1}}$ 
form a pumpable pair, 
i.e., $(p_1,p_2)$ is a pumpable pair for $\pi$.
\hfill\qed
\end{proof}

In the following lemma we establish the connection of additional stack height 
and the existence of a pumpable pair of paths with positive weights.

\begin{lem}\label{lem:HeavyPathWithBigDepthImpliesPosPump}
Let $c_1$ and $c_2$ be two configurations. If there exists $n \in \Z$ such that the minimal additional stack height (denoted by $d$) 
of the paths from $c_1$ to $c_2$ with weight at least $n$ is at least $(|Q|\cdot|\Gamma|)^2$, then there exists a path from $c_1$ to $c_2$ with 
additional stack height $d$ that contains a pumpable pair $(p_1, p_2)$ with $w(p_1) + w(p_2) >0$.
\end{lem}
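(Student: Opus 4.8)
The plan is to combine Lemma~\ref{lemm:EveryDeepPathHasAPumpablePair} with an extremal argument: I would fix a well-chosen witness path of additional stack height $d$, extract a pumpable pair from it via the previous lemma, and then show that a pumpable pair of non-positive weight would let me ``pump it down'' to a strictly better path, contradicting the choice of the witness.

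Concretely, since $d$ is the \emph{minimal} additional stack height among all paths from $c_1$ to $c_2$ of total weight at least $n$, this set of paths is non-empty and the minimum is attained; I would therefore choose $\pi^*$ to be a path of \emph{minimal length} (number of edges) subject to $w(\pi^*)\ge n$ and $\ASH(\pi^*)=d$. Such a $\pi^*$ exists because the set of candidates is non-empty and lengths are positive integers. As $\ASH(\pi^*)=d\ge (|Q|\cdot|\Gamma|)^2$, Lemma~\ref{lemm:EveryDeepPathHasAPumpablePair} supplies a pumpable pair $(p_1,p_2)$ for $\pi^*$. It then remains only to establish that $w(p_1)+w(p_2)>0$.

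For this last step I would argue by contradiction. Assume $w(p_1)+w(p_2)\le 0$ and consider the path $\pi_{(p_1,p_2)}^{0}$ obtained by pumping the pair zero times. By the $j=0$ case of the definition of a pumpable pair, $\pi_{(p_1,p_2)}^{0}$ is again a valid path from $c_1$ to $c_2$. Its weight is $w(\pi^*)-\big(w(p_1)+w(p_2)\big)\ge w(\pi^*)\ge n$, so the weight constraint is preserved. Deleting the matched push/pop segments $p_1,p_2$ can only lower stack heights and leaves the endpoints intact, so $\ASH(\pi_{(p_1,p_2)}^{0})\le \ASH(\pi^*)=d$; since $d$ is the minimal additional stack height over all weight-$\ge n$ paths, this forces $\ASH(\pi_{(p_1,p_2)}^{0})=d$. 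But $p_1$ and $p_2$ are non-empty, so $\pi_{(p_1,p_2)}^{0}$ is strictly shorter than $\pi^*$ while still satisfying $w\ge n$ and $\ASH=d$, contradicting the minimal length of $\pi^*$. Hence $w(p_1)+w(p_2)>0$, and $\pi^*$ is the required path.

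The step I expect to be the main (albeit routine) obstacle is the inequality $\ASH(\pi_{(p_1,p_2)}^{0})\le d$, i.e.\ the claim that deleting a matched pumpable pair never raises the maximal stack height. I would justify this by observing that pumping zero times removes exactly the pushes performed along $p_1$ together with the matching pops performed along $p_2$, uniformly lowering the heights of the configurations spliced between them while leaving every other configuration (in particular $c_1$ and $c_2$) unchanged; consequently $\SH$ can only decrease and, the endpoints being fixed, $\ASH$ can only decrease as well.
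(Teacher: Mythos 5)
Your proposal is correct and follows essentially the same route as the paper: both extract a pumpable pair via Lemma~\ref{lemm:EveryDeepPathHasAPumpablePair} and argue that a non-positive pair could be deleted to yield a strictly shorter path of weight at least $n$ with the same additional stack height $d$, which is impossible by minimality. The only difference is packaging — you fix a minimal-length witness and derive a contradiction, whereas the paper runs an induction on the witness length; your write-up also makes explicit the tacit step that deleting the pair cannot increase $\ASH(\cdot)$, which the paper uses without comment.
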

\begin{proof}
Let $\pi$ be one of the shortest paths (in terms of length) from $c_1$ to $c_2$ with weight at least $n$ and additional stack height $d$. 
By Lemma~\ref{lemm:EveryDeepPathHasAPumpablePair}, the path $\pi$ has a 
pumpable pair $(p_1, p_2)$ of paths. 
Assume towards a contradiction that the weight of the pair is not positive 
(i.e., consider that  $w(p_1) + w(p_2) \leq 0$). 
Then, we can remove the pair and obtain a path $\pi' = \pi^0_{(p_1, p_2)}$ from $c_1$ to $c_2$ with $w(\pi ') \geq w(\pi)$. 
The path $\pi'$ is shorter in length than $\pi$, since either $p_1$ or $p_2$ is not an empty path.
Moreover, the additional stack height of $\pi '$ is at most $d$.
\NewChange{If $\ASH(\pi ') = d$, then this 
this yields a contradiction to the assumption that the length of $\pi$ is 
minimal.
Otherwise, if $\ASH(\pi ') < d$, we have a contradiction to the minimality of $d$.}
\hfill\qed
\end{proof}

\smallskip\noindent{\em Ignoring finite plays.} 
For technical convenience, we will only consider infinite plays, and 
consider that finite plays do not satisfy the mean-payoff objective.
Thus if there are no transitions from a state, then we consider it as a 
losing sink state (a state with a self-loop with negative weight).

\subsection{Objectives $\LimInfAvg> 0$ and $\LimSupAvg> 0$}\label{subsec:wps1}
In this section we consider limit-average (or mean-payoff) objectives
with strict inequality. 
We show that WPSs with such objectives can be solved in polynomial
time. 
A crucial concept in the proof is the notion of good cycles, and we define 
them below.

\smallskip\noindent{\bf Good cycle.}
A finite path $\pi =\atuple{c_1, \dots, c_n}$ is a \emph{good cycle} if the following 
conditions hold:
\begin{enumerate}
\item $w(\pi) > 0$ (the weight of the path is positive); and
\item $\pi$ is a non-decreasing path; and
\item let $c_1 = (\alpha_1,q_1)$ and $c_n = (\alpha_n, q_n)$, then $q_1 = q_n$ and $\Top(\alpha_1) = \Top(\alpha_n)$.
\end{enumerate}

We first prove two lemmas and the intuitive descriptions of them are 
as follows: 
In the first lemma (Lemma~\ref{lemm:BigWeightMeanPosCycle}) we show that for 
every WPS, for every natural number $d$, there exists a natural number $n$ 
such that if there is a path with weight at least $n$ and additional
stack height at most $d$, then there is a good cycle in the WPS.
The second lemma (Lemma~\ref{lemm:MaxDepthBigThenWeightUnbounded}) is similar to the first lemma,
and shows that if the additional stack height of the path is large, 
then it is possible to construct paths with arbitrarily 
large weights.
Using the above two lemmas we then show that if the 
weight of a finite path is sufficiently large, then either 
a good cycle exists or paths with arbitrarily large weights
can be constructed (Lemma~\ref{lem:BigWeightImpliesOmegaOrCycle}). 
Finally we prove the key lemma (Lemma~\ref{lem:GoodCycleImpliesGoodMP}) that 
establishes the equivalence of the existence of a path satisfying mean-payoff 
objectives with strict inequality and the existence of a good cycle.

\smallskip\noindent{\em Restrictions to reachable configurations.}
In the sequel of this whole section we will only consider reachable 
configurations and reachable paths, and not explicitly 
mention the reachability property.
For good cycles, we often mention the reachable property explicitly.

\begin{lem}\label{lemm:BigWeightMeanPosCycle}
Let $\wps$ be a WPS.
For every $d\in\Nat$ there exists $n_{\wps,d} \in \Nat$ such that the 
following assertion holds:
If there exists a non-decreasing path $\pi = \atuple{c_1, \dots, c_r}$ such that (i)~$w(\pi) \geq n_{\wps,d}$ and (ii)~$\ASH(\pi) \leq d$;
then $\wps$ has a reachable good cycle.
\end{lem}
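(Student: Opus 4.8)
The plan is to reduce the statement to a pigeonhole over the local minima of $\pi$ and then to control the accumulated weight by an induction that peels the stack one level at a time.

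\emph{The reduction.} The first step I would isolate is that it suffices to find two local minima $c_i,c_j$ of $\pi$ with $i<j$, $q_i=q_j$, $\Top(\alpha_i)=\Top(\alpha_j)$ and $w(\pi[i,j])>0$. Since $c_i$ is a local minima of $\pi$ we have $\alpha_i\sqsubseteq\alpha_k$ for every $k\geq i$, so in particular $c_i$ is a local minima of the subpath $\pi[i,j]$; together with the matching state and top symbol and the positive weight, $\pi[i,j]$ is a good cycle, and it is reachable since it is a subpath of $\pi$. Thus the lemma amounts to choosing $n_{\wps,d}$ so large that, whenever $w(\pi)\geq n_{\wps,d}$ and $\ASH(\pi)\leq d$, such a matching positive-weight pair is forced.

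\emph{The counting skeleton.} I would then argue the contrapositive: assuming no such pair exists, I bound $w(\pi)$. List the local minima $m_1,\dots,m_K$ in order with prefix weights $\Phi_i=w(\pi[1,i])$; recall from the structural discussion that their heights are non-decreasing and that two local minima of equal height carry identical stack strings, hence identical top symbol. The assumption says that along any two local minima of the same type (state and top symbol) the prefix weight cannot strictly increase, so the subsequence of prefix-weight \emph{records} among $m_1,\dots,m_K$ uses each of the at most $|Q|\cdot|\Gamma|$ types at most once and therefore has length at most $|Q|\cdot|\Gamma|$. Because the last configuration is itself a local minima with prefix weight $w(\pi)$, the total weight is at most the sum of the increases into these records, and the increase into each record is realized by the single segment entering it: either a push or skip edge, contributing at most $W$, or one \emph{excursion} that leaves a local minima and returns to the very same stack level. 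An excursion is a strictly shorter subpath starting at a local minima and operating one level higher, so the intended move is an induction (on length, equivalently on the additional stack height the excursion spans) whose base case is a path confined to one stack level, a walk in the finite graph on $Q\times\Gamma$ that, lacking a positive cycle, has weight below $|Q|\cdot|\Gamma|\cdot W$.

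\emph{The main obstacle.} The difficulty I expect to dominate the proof is that these excursions need not be shallow. Since $\ASH(\pi)\leq d$ only bounds the stack \emph{above} the terminal height $|\alpha_n|$, an excursion hanging off a low level can rise almost to $|\alpha_n|+d$, so its own additional stack height is governed by $|\alpha_n|$ and can far exceed $d$; a naive induction then yields a bound depending on $|\alpha_n|$ rather than on $\wps$ and $d$ alone. The crux is to rule this out by showing that a \emph{heavy} excursion already contains the desired pair: traversing more than $(|Q|\cdot|\Gamma|)^2$ stack levels to and from its peak, it offers, for each level $k$, a matched pair (the last ascent and first descent visits at height $k$, sharing the top symbol because the length-$k$ prefix is preserved) whose enclosing segment keeps the stack above level $k$.

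A pigeonhole over levels and states, in the spirit of Lemma~\ref{lemm:EveryDeepPathHasAPumpablePair}, should then produce a same-state level with strictly positive enclosed weight, i.e.\ a good cycle, once the excursion's weight exceeds a threshold depending only on $|Q|$, $|\Gamma|$ and $W$. With deep heavy excursions eliminated this way, only excursions of bounded weight remain between records, the induction of the previous paragraph closes, and $n_{\wps,d}$ can be taken to depend on $\wps$ and $d$ alone; concretely I would expect a bound roughly of the form $(|Q|\cdot|\Gamma|\cdot W)$ raised to a power controlled by $d$.
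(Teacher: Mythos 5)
Your reduction and counting skeleton are sound (they essentially reproduce the pigeonhole the paper itself performs later, in the proof of Lemma~\ref{lem:BigWeightImpliesOmegaOrCycle}), and you have correctly identified the real difficulty: $\ASH(\pi)\leq d$ only bounds the stack above $|\alpha_r|$, so an excursion hanging off a low local minima can be arbitrarily deep. But the step you use to dispose of this case is false. A heavy, deep excursion need \emph{not} contain a level $k$ whose last-ascent visit and first-descent visit share the same state with positive enclosed weight. Take states $a,t$ and transitions $(a,\gamma,a,\Push(\gamma))$ with weight $-1$, $(a,\gamma,t,\Skip)$ with weight $0$, $(t,\gamma,t,\Pop)$ with weight $+2$: the excursion that pushes $n$ times in state $a$, switches, and pops $n$ times in state $t$ has weight $n$ and depth $n$, yet at every level the ascent state is $a$ and the descent state is $t$, so no matched pair ever has equal states. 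What the pigeonhole over levels actually yields is two levels with the same \emph{triple} (ascent state, descent state, top symbol), i.e.\ a pumpable pair as in Lemma~\ref{lemm:EveryDeepPathHasAPumpablePair}; when the excursion is heavy, the pair is positive (Lemma~\ref{lem:HeavyPathWithBigDepthImpliesPosPump}), but since the ascending part $p_1$ must have $w(p_1)\leq 0$ in a good-cycle-free system, all the positivity sits in the descending part $p_2$, whose start is not a local minima. Positive pumpable pairs therefore give paths of unbounded weight between the same endpoints (Lemma~\ref{lemm:MaxDepthBigThenWeightUnbounded}), not good cycles — which is exactly why Lemma~\ref{lem:BigWeightImpliesOmegaOrCycle} is stated as a disjunction (``good cycle \emph{or} unbounded weights''); your third paragraph needs these two cases to collapse into one, and they do not.

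Moreover, the gap cannot be patched, because in the generality in which the lemma is stated (arbitrary heights of $c_1$ and $c_r$) the statement is false. Extend the three transitions above by a state $u$ with $(t,\gamma,u,\Skip)$ of weight $0$ and $(u,\gamma,u,\Push(\gamma))$ of weight $0$, with no way back to $a$ or $t$. There is no good cycle: any path returning to state $a$ stays in $a$ (all weights $-1$); a nontrivial path returning to $t$ consists of pops only, so its start is not a local minima; and $u$-cycles have weight exactly $0$. Yet the path push$^n$ (in $a$), skip, pop$^n$ (in $t$), skip, push$^n$ (in $u$) starts at a local minima, has weight $n$, and has $\ASH=0$. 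The paper's own proof — embed all qualifying paths from $c_1$ to $c_r$ in a finite configuration graph $G_d$, rotate a positive cycle there into a good cycle, and otherwise bound weights by $|G_d|\cdot W$ — is sound only when $|\alpha_r|-|\alpha_1|$ is bounded, for otherwise $|G_d|$, and hence the claimed $n_{\wps,d}$, depends on the path rather than on $\wps$ and $d$ alone; and that bounded-difference case is the only one the paper uses, since Lemma~\ref{lem:BigWeightImpliesOmegaOrCycle} invokes the present lemma between \emph{consecutive} local minima, whose heights differ by at most one. So your instinct that uniformity of $n_{\wps,d}$ is the crux was right; the correct resolution is to restrict the statement to endpoints of bounded height difference (where the finite-graph argument, or your induction, closes immediately), not to seek a stronger pigeonhole.
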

\begin{proof}
Let us define $n_{\wps,d} = W\cdot(|Q|\cdot|\Gamma|^{d + |Q|\cdot|\Gamma|}+1)$.
We prove the result by induction over $\SH(c_r)-\SH(c_1)$. 
Note that since $\pi$ is a non-decreasing path, we have $\SH(c_r)-\SH(c_1) \geq 0$.

\smallskip\noindent{\em Base case.}
In the base case we prove the result for $\SH(c_r)-\SH(c_1) \leq |Q|\cdot |\Gamma|$.
Let $G_d$ be a graph that contains all the configurations that occur in one of the non-decreasing paths from $c_1$ to $c_r$, 
for which $c_1$ is a local minimum and that has additional stack height at most $d$. 
The set of edges of $G_d$ (and their weights) is induced by the set of transitions in $\wps$.
Note that $\ASH(\pi) \leq d$ and $\SH(c_r)-\SH(c_1) \leq |Q|\cdot |\Gamma|$ implies that $G_d$ contains only configurations with stack height between $\SH(c_1)$ and $\SH(c_1) + d + |Q|\cdot |\Gamma|$.
Hence the graph  $G_d$ is a finite graph with at most $|Q|\cdot|\Gamma|^{d + |Q|\cdot|\Gamma|}$ states, and the maximal absolute weight is at most $W$ (the maximal absolute weight of $\wps$).
A reachable positive cycle in $G_d$ implies the existence of a reachable good cycle in $\wps$,
and if no positive cycle is reachable, then the weight of each path is bounded by $(|G_d| + 1) \cdot W$. 
Thus with \NewChange{$n_{\wps,d}= W\cdot(|Q|\cdot|\Gamma|^{d + |Q|\cdot|\Gamma|}+1) \geq (|G_d| + 1) \cdot W$} we obtain the desired result.

\smallskip\noindent{\em Inductive case.}
We now prove the result for $\SH(c_r)-\SH(c_1) > |Q|\cdot |\Gamma|$.
Since $\SH(c_r)-\SH(c_1) > |Q|\cdot |\Gamma|$, then $\pi$ has at least $|Q|\cdot |\Gamma| + 1$ local minima with different stack heights.
Hence, by the pigeonhole principle, there must be two configurations in $\pi$, namely $c_i$ and $c_j$ (for $1\leq i \leq j\leq r$), such that $c_i$ and $c_j$ are local minima with different stack heights and $\Top(c_i) = \Top(c_j)$.
We denote by $\pi_1$ the sequence of transitions that is induced by $\pi$ from $c_1$ to $c_i$, by $\pi_2$ the sequence of transitions from $c_i$ to $c_j$, 
and by $\pi_3$ the sequence of transitions from $c_j$ to $c_r$.
We note that by definition $\pi = \pi_1\pi_2\pi_3$.
Moreover, since $c_i$ and $c_j$ are local minima, the path $\pi_1\pi_3$ is a valid path that begins in configuration $c_1$ and ends in some configuration $c_\ell$ such that $\SH(c_\ell) - \SH(c_1) < \SH(c_r)-\SH(c_1)$.
Hence, by the induction hypothesis, if $w(\pi_1\pi_3)\geq n_{\wps,d}$, then $\wps$ has a good cycle.
Otherwise, if $w(\pi_1\pi_3) < n_{\wps,d}$ and $w(\pi) \geq n_{\wps,d}$, then it must be the case that $w(\pi_2) > 0$.
Thus, $\wps$ has a good cycle, namely, the path $\pi_2$.
The desired result follows.
\hfill\qed
\end{proof}


\begin{lem}\label{lemm:MaxDepthBigThenWeightUnbounded}
Let $\wps$ be a WPS.
Let $n\in\Z$ and let $\pi = \atuple{c_1, \dots, c_r}$ be a non-decreasing
path with weight at least $n$, with minimal additional stack height among all 
paths from $c_1$ to $c_r$ with weight at least $n$.
If $\ASH(\pi)\geq (|Q|\cdot |\Gamma|)^2$, then for every $m\in\Nat$ 
there exists a non-decreasing path $\pi_m$ from $c_1$ to $c_r$ with $w(\pi_m) \geq m$.
\end{lem}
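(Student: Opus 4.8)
The plan is to extract from $\pi$ a pumpable pair of strictly positive weight and then pump it to drive the weight to infinity. Write $d = \ASH(\pi) \geq (|Q| \cdot |\Gamma|)^2$ and $c_1 = (\alpha_1, q_1)$. Since $d \geq (|Q| \cdot |\Gamma|)^2$, Lemma~\ref{lemm:EveryDeepPathHasAPumpablePair} already furnishes \emph{some} pumpable pair of $\pi$; what I really need is one with $w(p_1) + w(p_2) > 0$, and this is precisely what Lemma~\ref{lem:HeavyPathWithBigDepthImpliesPosPump} delivers (its proof being an induction on path length that repeatedly removes nonpositive pumpable pairs while keeping the weight at least $n$ and the additional stack height minimal). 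Applying that lemma to the configurations $c_1, c_r$ with threshold $n$, and taking $\pi$ as the minimal witness, I obtain a pumpable pair $(p_1,p_2)$ of $\pi$ with $w(p_1) + w(p_2) > 0$.

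Given such a pair, I would finish as follows. By the definition of a pumpable pair, for every $j \geq 0$ the pumped sequence $\pi_{(p_1,p_2)}^j$ is a valid path from $c_1$ to $c_r$, and
\[
w(\pi_{(p_1,p_2)}^j) = w(\pi_{(p_1,p_2)}^0) + j \cdot (w(p_1) + w(p_2)).
\]
Because $w(p_1) + w(p_2) > 0$, this quantity is strictly increasing and unbounded in $j$; so for a given $m \in \Nat$ it suffices to pick $j$ with $w(\pi_{(p_1,p_2)}^j) \geq m$ and set $\pi_m := \pi_{(p_1,p_2)}^j$.

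Finally I would verify that $c_1$ remains a local minima in $\pi_m$. As $c_1$ is a local minima of $\pi$, every configuration of $\pi$ — in particular every configuration inside $p_1$ and $p_2$ — has $\alpha_1$ as a stack prefix, so the pumping takes place strictly above stack level $|\alpha_1|$ and never disturbs $\alpha_1$. Hence every configuration created by pumping still has $\alpha_1$ as a prefix, and $c_1$ stays a local minima of $\pi_{(p_1,p_2)}^j$, as required.

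The step I expect to require the most care is the reconciliation in the first paragraph: Lemma~\ref{lem:HeavyPathWithBigDepthImpliesPosPump} is stated with minimality of the additional stack height taken over \emph{all} paths between the two configurations, whereas here minimality is assumed only among paths in which $c_1$ is a local minima. The hard part will be to confirm that its inductive construction never leaves the class of local-minima paths — which it should not, since the pumpable pairs supplied by Lemma~\ref{lemm:EveryDeepPathHasAPumpablePair} act above $|\alpha_1|$, and both pumping and removing such a pair preserve the endpoints and the local-minima property — so that minimality within this restricted class already forces $w(p_1) + w(p_2) > 0$.
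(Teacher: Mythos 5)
Your proof is correct and follows essentially the same route as the paper's: invoke Lemma~\ref{lem:HeavyPathWithBigDepthImpliesPosPump} to obtain a path from $c_1$ to $c_r$ with a pumpable pair of strictly positive total weight, then pump that pair enough times to push the weight above $m$. The two points you flag as needing care --- that pumping preserves $c_1$ as a local minima, and that the inductive construction behind Lemma~\ref{lem:HeavyPathWithBigDepthImpliesPosPump} never leaves the class of local-minima paths (so the mismatch between its minimality hypothesis and the one assumed here is harmless) --- are both genuine and are glossed over in the paper's one-line proof (the first is only acknowledged parenthetically later, inside the proof of Lemma~\ref{lemm:AlmostS}), and your resolutions of both are sound.
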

\begin{proof}
By Lemma~\ref{lem:HeavyPathWithBigDepthImpliesPosPump} there exists a path  $\ov{\pi}$ 
from $c_1$ to $c_r$ that has a pumpable pair $(p_1,p_2)$ such that $w(p_1) + w(p_2) > 0$.
Hence for every $i\in\Nat$ we get that $w(\ov{\pi}_{(p_1,p_2)}^{i+1}) > w(\ov{\pi}_{(p_1,p_2)}^{i})$
(i.e., the weight after pumping $i+1$ times the pair of paths exceeds the 
weight of pumping $i$ times).
Hence for $i = m - w(\ov{\pi})$ we get that 
$w(\ov{\pi}_{(p_1,p_2)}^{i}) \geq m$. 
The desired result follows.
\hfill\qed
\end{proof}

\begin{lem}\label{lem:BigWeightImpliesOmegaOrCycle}
Let $\wps$ be a WPS.
There exists $n_{\wps} \in \Nat$ such that if there 
exists a non-decreasing path $\pi$ from configuration $c_1$ to configuration 
$c_r$ and $w(\pi) \geq n_{\wps}$, then one of the following conditions holds:
\begin{enumerate}
\item The WPS $\wps$ has a reachable good cycle.
\item For every $n '\in \Nat$ there exists a non-decreasing path $\pi'$ from $c_1$ to $c_r$ with $w(\pi') > n'$.
\end{enumerate}
\end{lem}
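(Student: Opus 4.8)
The plan is to fix the threshold $n_{\wps}$ directly from Lemma~\ref{lemm:BigWeightMeanPosCycle} at the critical additional stack height $d_0 := (|Q|\cdot|\Gamma|)^2$, and then to derive the two-way dichotomy by a single case split on the additional stack height of a minimal witness path, feeding one side to Lemma~\ref{lemm:BigWeightMeanPosCycle} and the other to Lemma~\ref{lemm:MaxDepthBigThenWeightUnbounded}.

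Concretely, first I would set $d_0 = (|Q|\cdot|\Gamma|)^2$ and define $n_{\wps} := n_{\wps,d_0}$, where $n_{\wps,d_0}$ is the constant provided by Lemma~\ref{lemm:BigWeightMeanPosCycle} for the parameter $d_0$. Now assume there is a path $\pi$ from $c_1$ to $c_r$ with $c_1$ a local minima and $w(\pi) \geq n_{\wps}$. I would consider the set of all paths from $c_1$ to $c_r$ that have $c_1$ as a local minima and weight at least $n_{\wps}$. This set is nonempty, since it contains $\pi$, and because additional stack height takes values in $\Nat$ I may select from it a path $\hat\pi$ of minimal additional stack height; write $d^* := \ASH(\hat\pi)$.

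The case split is on whether $d^* \leq d_0$ or $d^* > d_0$. In the first case, $\hat\pi$ is a path with $c_1$ a local minima, $w(\hat\pi) \geq n_{\wps} = n_{\wps,d_0}$, and $\ASH(\hat\pi) = d^* \leq d_0$, so Lemma~\ref{lemm:BigWeightMeanPosCycle} applies verbatim and yields that $\wps$ has a reachable good cycle, which is conclusion~(1). In the second case, $d^* > d_0 \geq (|Q|\cdot|\Gamma|)^2$; moreover $\hat\pi$ was chosen with minimal additional stack height among all paths from $c_1$ to $c_r$ with $c_1$ a local minima and weight at least $n_{\wps}$, so it satisfies precisely the hypotheses of Lemma~\ref{lemm:MaxDepthBigThenWeightUnbounded} (with $n = n_{\wps}$). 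That lemma then produces, for every $m \in \Nat$, a path $\pi_m$ from $c_1$ to $c_r$ with $c_1$ a local minima and $w(\pi_m) \geq m$; instantiating $m = n'+1$ for a given $n'$ gives conclusion~(2).

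The points that need care are bookkeeping rather than a genuine obstacle: I must make sure the minimality of $\hat\pi$ is taken over exactly the class of paths that Lemma~\ref{lemm:MaxDepthBigThenWeightUnbounded} presumes (paths from $c_1$ to $c_r$ with $c_1$ a local minima and weight at least the fixed threshold), and I must check that the boundary value $d^* = d_0$ is absorbed into the first case, which is legitimate because Lemma~\ref{lemm:BigWeightMeanPosCycle} only requires $\ASH \leq d_0$. No new combinatorial argument beyond the two preceding lemmas is required; the entire content of the statement is the observation that these two lemmas together cover the two sides of the threshold $(|Q|\cdot|\Gamma|)^2$.
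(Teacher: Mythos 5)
Your proof breaks at your first case (the split $d^*\leq d_0$), and the breakage is real, not bookkeeping. You apply Lemma~\ref{lemm:BigWeightMeanPosCycle} to the whole path $\hat\pi$ from $c_1$ to $c_r$. The statement of that lemma does formally cover this use, but in that generality the lemma is false, and its proof in the paper (the finite graph $G_d$) only supports the case where the two endpoints of the path have essentially the same stack height, which is the only situation in which the paper ever invokes it. The problem is that $\ASH$ is measured against $\max\{|\alpha_1|,|\alpha_r|\}$: a path that ends very high can have $\ASH=0$ while visiting a set of configurations whose size grows with $|\alpha_r|-|\alpha_1|$, so $|G_d|$, and hence the constant $n_{\wps,d}$ extracted from it, cannot be chosen uniformly. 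Concretely, let $Q=\{q_1,q_2,q_3\}$, $\Gamma=\{\bot,\gamma,\delta\}$, with weight-$0$ transitions $(q_1,\bot,q_1,\Push(\gamma))$, $(q_1,\gamma,q_1,\Push(\gamma))$, $(q_1,\gamma,q_2,\Skip)$, $(q_2,\gamma,q_3,\Skip)$, $(q_3,\gamma,q_3,\Push(\delta))$, $(q_3,\delta,q_3,\Push(\delta))$, and the single weight-$(+1)$ transition $(q_2,\gamma,q_2,\Pop)$. The path from $(\bot,q_1)$ that pushes $\gamma$ $k$ times, pops $k-1$ times, and then pushes $\delta$ $k$ times ends at $c_r=(\bot\gamma\delta^k,q_3)$, has weight $k-1$, has $c_1$ as a local minima, and has $\ASH=0$ because its last configuration is also its highest. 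Yet this WPS has no good cycle at all: positive weight forces at least one pop; states never decrease along the order $q_1,q_2,q_3$, so a path whose first and last states agree stays in a single state; if it contains a pop that state is $q_2$, where the only transitions are pops, and then the very first transition already pops below the initial stack, so the first configuration is not a local minima. Hence for every choice of $n_{\wps}$ and large enough $k$ your first case fires (the minimal $\ASH$ is $0\leq d_0$) and asserts conclusion~(1), which is false here; only conclusion~(2) holds. (Your second case, by contrast, is in order and matches the paper's use of Lemma~\ref{lemm:MaxDepthBigThenWeightUnbounded}.)

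What your proposal omits is exactly the part of the paper's proof that neutralizes this phenomenon. The paper first decomposes $\pi$ at its local minimas $m_1,\dots,m_j$ and runs a pigeonhole over the $|Q|\cdot|\Gamma|$ classes given by the pair (state, top stack symbol); this is why its threshold is $n_{\wps}=|Q|\cdot|\Gamma|\cdot n_{\wps,d^*}$ rather than your $n_{\wps,d_0}$. Either two local minimas in the same class are joined by a positive-weight segment, which is itself a good cycle (conclusion~(1)), or some segment between \emph{consecutive} local minimas $m_i,m_{i+1}$ has weight at least $n_{\wps,d^*}$. Only to such a segment, whose endpoints differ in height by at most one, does the paper apply your case split on the minimal $\ASH$ of witness paths: in that regime Lemma~\ref{lemm:BigWeightMeanPosCycle} is sound (the band of configurations above $m_i$ is uniformly finite, and a net climb as in the counterexample above cannot occur between consecutive local minimas), and in the pumping case the resulting unbounded-weight paths from $m_i$ to $m_{i+1}$ must then be spliced between the prefix $c_1\to m_i$ and the suffix $m_{i+1}\to c_r$ of $\pi$, at a cost of $W\cdot|\pi|$, to recover conclusion~(2) for the original endpoints. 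So the local-minima decomposition and the pigeonhole are not dispensable bookkeeping; they are the content of the lemma, and the two auxiliary lemmas alone do not cover the two sides of the threshold.
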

\begin{proof}
Observe that the number $n_{\wps}$ is of our choice and we will choose it 
sufficiently large for the proof.
Let $d^*= (|Q|\cdot|\Gamma|)^2$, and our choice of 
$n_{\wps}$ is $|Q|\cdot |\Gamma|\cdot n_{\wps,d^*}$ (where $n_{\wps,d^*}$ is as defined in 
Lemma~\ref{lemm:BigWeightMeanPosCycle}). 
Let $\pi = \atuple{c_1, c_2, \dots, c_r}$ be a path such that $c_1$ is a local minimum
and $w(\pi)\geq n_{\wps}$.
Let $m_1,\dots, m_j$ be the local minima along the path.
Note that $m_1 = c_1$ and $c_r = m_j$.
Also note that $j \geq |\alpha_r| - |\alpha_1|$.
Note that if $m_{i_1} = (\alpha_{i_1}\gamma,q)$ and $m_{i_2} = (\alpha_{i_2}\gamma,q)$ (for some $\gamma\in\Gamma$),
then if a good cycle does not exist we get that the weight of the path between 
$m_{i_1}$ and $m_{i_2}$ is not positive.
Hence, since $Q$ and $\Gamma$ are finite, 
either a good cycle exists (by the pigeonhole principle) 
or there exists $m_i, m_{i+1}$ such that 
$\alpha_{i+1} = \alpha_i \gamma$ for some $\gamma\in\Gamma\cup\Set{\epsilon}$ (where $\epsilon$ denotes the empty string) and there exists 
a path from $m_i$ to $m_{i+1}$ such that $m_i$ is a local minimum and the 
weight of the path is at least 
$n_{\wps,d^*}$ 
(since the longest sequence of local minimum configurations that do not contain a cycle is of length at most $|Q|\cdot|\Gamma|$, and 
there is a sequence of acyclic configurations that has a weight of at least $n_{\wps}$).
Let $\pi ^*$ be such a path with minimal additional stack height between 
$m_i$ and $m_{i+1}$. 
We consider two cases to complete the proof.
\begin{enumerate}
\item 
If the additional stack height of $\pi^*$ is smaller than $d^*$, then by 
Lemma~\ref{lemm:BigWeightMeanPosCycle} we have a reachable 
good cycle from $m_i$ and since $m_i$ is reachable from $c_1$ 
we have reachable good cycle from $c_1$ (condition~1 of the lemma holds). 
\item 
If the additional stack of $\pi^*$ is at least $d^*$, then 
by Lemma~\ref{lemm:MaxDepthBigThenWeightUnbounded} for every $n_0$
we can construct a path $\pi_{n_0}$ between $m_i$ and $m_{i+1}$ 
with weight $w(\pi_{n_0})$ at least $n_0$, and $m_1$ is a local minimum of 
$\pi_{n_0}$.
For $n'\in \Nat$, let $n_0=n'+ W\cdot |\pi|$, and let $\pi'$ be the path 
constructed using the segment from $c_1$ to $m_i$, then the path $\pi_{n_0}$, 
and then the segment of $\pi$ from $m_{i+1}$ to $c_r$.
The configuration $c_1$ is a local minimum of $\pi'$ and the weight of 
$\pi'$ is at least $n_0 -W\cdot |\pi| \geq n'$.
Hence it follows that for every $n'$ we can construct a path 
from $c_1$ to $c_r$ with $c_1$ as a local minimum and weight at least $n'$ 
(condition~2 of the lemma holds).
\end{enumerate}
This completes the proof of the lemma.
\hfill\qed
\end{proof}

\begin{lem} \label{lem:GoodCycleImpliesGoodMP}
Let $\wps$ be WPS.
The following statements are equivalent:
(i)~There exists a path $\pi_1$ with $\LimSupAvg(\pi_1) > 0$;
(ii)~there exists a path $\pi_2$ with $\LimInfAvg(\pi_2) > 0$; and
(iii)~there exists a path $\pi$ that contains a good cycle.
\end{lem}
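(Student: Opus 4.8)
The plan is to prove the three-way equivalence by establishing $(iii)\Rightarrow(i)$, $(i)\Rightarrow(ii)$, and $(ii)\Rightarrow(iii)$, closing the cycle. The implication $(i)\Rightarrow(ii)$ looks like it should be the hard one at first glance (since $\LimInfAvg \le \LimSupAvg$ always, so a $\LimSupAvg>0$ path need not have $\LimInfAvg>0$), but in fact the natural route is to route everything through the good cycle: I would prove $(iii)\Rightarrow(ii)$ and $(i)\Rightarrow(iii)$, together with the trivial $(ii)\Rightarrow(i)$, which also yields a clean cyclic argument $(i)\Rightarrow(iii)\Rightarrow(ii)\Rightarrow(i)$.

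For $(iii)\Rightarrow(ii)$, suppose $\pi$ contains a good cycle, i.e.\ a finite segment $\sigma=\atuple{c_1,\dots,c_n}$ with $w(\sigma)>0$, with $c_1$ a local minima, and with $q_1=q_n$ and $\Top(\alpha_1)=\Top(\alpha_n)$. Because $c_1$ is a local minima of the good cycle and the states and top stack symbols agree at the endpoints, the segment can be concatenated with itself infinitely often to produce a \emph{valid} infinite path: after reaching $c_1$ we replay the edge sequence of $\sigma$ repeatedly, each replay being enabled since only the top symbol is read at each step and the stack below $\alpha_1$ is never touched (local minima). The resulting infinite path has weight growing by $w(\sigma)>0$ every $|\sigma|$ steps, so both $\LimInfAvg$ and $\LimSupAvg$ along it equal $w(\sigma)/|\sigma|>0$. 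Prefixing the finite reachable path from the initial configuration to $c_1$ does not affect the long-run average, giving a path with $\LimInfAvg>0$, hence $(ii)$, and a fortiori $(i)$ via $\LimInfAvg\le\LimSupAvg$.

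The substantive implication is $(i)\Rightarrow(iii)$: from a single path with $\LimSupAvg(\pi_1)>0$ I must extract a good cycle. Here I would use Lemma~\ref{lem:BigWeightImpliesOmegaOrCycle}. Since $\LimSupAvg(\pi_1)>0$, the prefix averages exceed some fixed $\delta>0$ infinitely often, so for the given $n_{\wps}$ there is a prefix whose weight is at least $n_{\wps}$; more carefully, I would locate a finite infix between two local minimas of equal stack height whose weight is at least $n_{\wps}$. The key observation is that every infinite path has infinitely many local minimas (the basic fact stated in the excerpt), and that consecutive local minimas either share stack height or differ by one push. I would pick a local minima $c_1$ as the start and a later configuration $c_r$ reached with a sufficiently large accumulated weight $\ge n_{\wps}$ so that Lemma~\ref{lem:BigWeightImpliesOmegaOrCycle} applies. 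That lemma gives two cases: either a reachable good cycle exists (which is exactly $(iii)$), or arbitrarily heavy paths from $c_1$ to $c_r$ exist. In the second case I would iterate: arbitrarily heavy finite paths from a fixed $c_1$ to a fixed $c_r$ with $c_1$ a local minima must themselves, once heavy enough, satisfy the hypothesis of Lemma~\ref{lem:BigWeightImpliesOmegaOrCycle} again; since the alternative ``arbitrarily heavy'' branch cannot postpone a good cycle forever in a finite-description system, a good cycle must eventually be exhibited.

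The main obstacle I anticipate is making the second branch of Lemma~\ref{lem:BigWeightImpliesOmegaOrCycle} terminate cleanly in the $(i)\Rightarrow(iii)$ argument: the lemma only tells us that \emph{either} a good cycle exists \emph{or} weights can be pumped unboundedly, and I must argue that unbounded pumpability cannot coexist with the \emph{absence} of a good cycle. The cleanest way is contrapositive reasoning: assume no good cycle exists anywhere in $\wps$; then I claim the weight of any path starting at a local minima is bounded. Indeed, with no good cycle, between any two local minimas of equal state and top symbol the weight is nonpositive (as noted in the proof of Lemma~\ref{lem:BigWeightImpliesOmegaOrCycle}), and combined with the additional-stack-height bound from Lemma~\ref{lemm:MaxDepthBigThenWeightUnbounded} and the finiteness argument of Lemma~\ref{lemm:BigWeightMeanPosCycle}, the total achievable weight from any fixed local minima is uniformly bounded by some constant depending only on $\wps$. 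But $\LimSupAvg(\pi_1)>0$ forces unbounded prefix weights along $\pi_1$ measured between local minimas, a contradiction. This contrapositive packaging is what turns the disjunctive Lemma~\ref{lem:BigWeightImpliesOmegaOrCycle} into the clean extraction of a good cycle, and it is the step I would write most carefully.
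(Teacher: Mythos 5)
Your overall decomposition matches the paper's (replay the good cycle to get $\LimInfAvg>0$; extract a good cycle from a $\LimSupAvg>0$ path via Lemma~\ref{lem:BigWeightImpliesOmegaOrCycle}), and your $(iii)\Rightarrow(ii)\Rightarrow(i)$ directions are correct. The gap is exactly at the step you flagged as the delicate one: your contrapositive claim that \emph{if $\wps$ has no good cycle, then the weight of every path starting at a local minima is uniformly bounded} is false, so it cannot be used to kill the second branch of Lemma~\ref{lem:BigWeightImpliesOmegaOrCycle}. Concretely, take $Q=\Set{q_1,q_2}$, $\Gamma=\Set{\bot,a}$, with transitions $(q_1,\bot,q_1,\Push(a))$ and $(q_1,a,q_1,\Push(a))$ of weight $-1$, $(q_1,a,q_2,\Skip)$ of weight $0$, and $(q_2,a,q_2,\Pop)$ of weight $+2$. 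From the local minima $(\bot,q_1)$, the path that pushes $n$ times, switches to $q_2$, and pops $n$ times has weight $n$, so achievable weights from a fixed local minima are unbounded; yet there is no good cycle: every nonempty path from $q_1$ back to $q_1$ consists only of pushes and has negative weight, every nonempty path starting in $q_2$ pops immediately (so its first configuration is not a local minima of it), and paths from $q_1$ to $q_2$ fail the equal-endpoint-state requirement. Thus ``no good cycle'' and ``arbitrarily heavy paths from a local minima'' genuinely coexist; the disjunction in Lemma~\ref{lem:BigWeightImpliesOmegaOrCycle} is irreducible on its own, and neither Lemma~\ref{lemm:BigWeightMeanPosCycle} nor Lemma~\ref{lemm:MaxDepthBigThenWeightUnbounded} rules the second branch out. (Your alternative ``iterate until it terminates'' idea fails on the same example.)

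What closes the argument---and what the paper does---is to use the fact that the heavy prefix does not sit between arbitrary endpoints: it is a prefix $\pi'=\atuple{m_1,\dots,c^*}$ of a segment $\pi[i_1,i_n]$ of the given infinite path, bracketed by local minimas $\pi[i_1]$ and $\pi[i_n]$ chosen (by pigeonhole over the infinitely many local minimas) to have the \emph{same state and top stack symbol} (note: this, not equal stack height, is the right bracketing condition). If the second branch of Lemma~\ref{lem:BigWeightImpliesOmegaOrCycle} holds, take the replacement path $\pi_{n_0}$ from $m_1$ to $c^*$ with $n_0 = W\cdot|\pi[i_1,i_n]|+1$ and append the tail of $\pi$ from $c^*$ to $\pi[i_n]$: the tail can lose at most $W\cdot|\pi[i_1,i_n]|$, so the concatenation has positive weight, starts at a local minima, and ends at a configuration with the same state and top symbol as $m_1$---that is, it \emph{is} a good cycle, and the second branch produces condition $(iii)$ directly rather than by contradiction. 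In short, unbounded weight that can be ``closed up'' along the original infinite path yields a good cycle, whereas unbounded weight in isolation (as in the counterexample, where the pops lead to a state with no way back) does not; your proof needs this splicing step in place of the false uniform-boundedness claim.
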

\begin{proof}
The direction from right to left (i.e., (iii)$\Rightarrow$(ii)$\Rightarrow$(i))
is immediate.
Let $\pi = \pi_1 \pi_2$ be a finite path in $\wps$ such that $\pi_2$ is a good cycle.
Let $\pi_1 = c_1 e^1_1 e^1_2 \dots e^1_{n_1}$ and $\pi_2 = c_2 e^2_1 e^2_2 \dots e^2_{n_2}$.
The infinite path $\pi ' = \pi_1 c_2 (e^2_1 e^2_2 \dots e^2_{n_2})^\omega$ 
obtained by repeating the good cycle forever is a valid path which 
witnesses that $\LimSupAvg(\pi ') \geq \LimInfAvg(\pi ') > 0$.

In order to prove the opposite direction, we consider 
an infinite path $\pi$ such that $\LimSupAvg(\pi) > 0$.
Let $q\in Q$ and $\gamma\in\Gamma$ be such that the sequence
$m_1 = (\alpha_{i_1},q) , m_2 = (\alpha_{i_2},q), \dots$ is an infinite 
sequence of local minima of $\pi$ and $\Top(\alpha_{i_j}) = \gamma$ 
(note that such state and symbol are guaranteed to exist due to the existence 
of infinitely many local minima and finiteness of $Q$ and $\Gamma$).
If there exists $j > 1$ such that $w(\pi[i_1,i_j]) > 0$ then by definition 
$\pi[i_1,i_j]$ is a good cycle and the result follows.
Otherwise let us assume that for every $j > 1$ we have $w(\pi[i_1,i_j]) \leq 0$.
As $\LimSupAvg(\pi) > 0$ it follows that for every $n^*\in\Nat$ there exists 
$n \in\Nat$ with $i_n > 1$ such that the path $\pi[i_1,i_n]$ contains a prefix 
with weight at least $n^*$ (otherwise $\LimSupAvg(\pi) \leq 0$).
We now use Lemma~\ref{lem:BigWeightImpliesOmegaOrCycle} to complete the proof.
Let $n^* = n_{\wps}$ (where $n_{\wps}$ is as used in Lemma~\ref{lem:BigWeightImpliesOmegaOrCycle}).
Let $\pi ' = m_1 ,\dots, c^*$ be the prefix of $\pi[i_1,i_n]$ such that 
$w(\pi ') \geq n^*$.
If the first condition of Lemma~\ref{lem:BigWeightImpliesOmegaOrCycle} holds 
(i.e., $\wps$ has a good cycle), then we are done with the proof.
Otherwise, by condition~2 of Lemma~\ref{lem:BigWeightImpliesOmegaOrCycle} it follows 
that for every $n_0\in\Nat$ there exists a path $\pi_{n_0}$ from $m_1$ to $c^*$ such that 
$m_1$ is a local minimum and $w(\pi_{n_0}) \geq n_0$.
Let us choose $n_0 = W \cdot  |\pi[i_1,i_n]| + 1$.
Then consider the path $\ov{\pi}=\pi_{n_0} \pi[i+|\pi '|,i_n]$ that is obtained by 
concatenating the witness 
path $\pi_{n_0}$ for $n_0$ from $m_1$ to $c^*$, and then the part of $\pi$ from $c^*$ to 
$\pi[i_n]$.
For the path $\ov{\pi}$ we have (i)~the sum of weights is at least $n_0 - W\cdot|\pi[i_1,i_n]| \geq 1>0$;
(ii)~$\pi[i_1]$ is a local minimum; and 
(iii)~the state and the top stack symbol of $\pi[i_1]$ and $\pi[i_n]$ are the same.
Thus $\ov{\pi}$ is a witness good cycle.
For conclusion we get that if $\LimSupAvg(\pi) > 0$, then there exists a good cycle, which also 
implies that there exists a path $\pi'$ such that $\LimInfAvg(\pi') > 0$.
This concludes the proof of the lemma.
\hfill\qed
\end{proof}

In the above key lemma we have established the equivalence of the decision 
problems for WPSs with mean-payoff objectives with strict inequality
and the problem of determining the existence of good cycles. 
We will now present a polynomial-time algorithm for detecting good cycles.
To this end we introduce the notion of non-decreasing $\alpha$-paths
and summary functions.

\smallskip\noindent{\em Non-decreasing $\alpha$-paths.}
A path from a configuration $(\alpha\gamma,q_1)$ to a configuration 
$(\alpha\gamma\alpha_2,q_2)$ is a \emph{non-decreasing $\alpha$-path} 
if $(\alpha\gamma,q_1)$ is a local minimum.
Note that if $\pi$ is a non-decreasing $\alpha$-path for some 
$\alpha\in\Gamma^*$, then the same sequence of transitions leads to a 
non-decreasing $\beta$-path for every $\beta\in\Gamma^*$.
Hence we say that $\pi$ is a non-decreasing path if there exists 
$\alpha \in \Gamma^*$ such that $\pi$ is a non-decreasing $\alpha$-path.

\smallskip\noindent{\em Summary function.}
Let $\wps$ be a WPS. 
For $\alpha\in \Gamma^*$ we define $s_{\alpha} : Q \times \Gamma \times Q \to \Set{-\infty} \cup \Z \cup \Set{\omega}$ as follows.
\begin{enumerate}
\item $s_\alpha(q_1,\gamma,q_2) = \omega$ iff for every $n\in\Nat$ there 
exists a non-decreasing path from $(\alpha \gamma, q_1)$ to $(\alpha \gamma, q_2)$ 
with weight at least $n$.
\item $s_\alpha(q_1,\gamma,q_2) = z\in\Z$ iff the weight of the maximal-weight 
non-decreasing path from configuration $(\alpha \gamma, q_1)$ to 
configuration $(\alpha \gamma, q_2)$ is $z$.
\item $s_\alpha(q_1,\gamma,q_2) = -\infty$ iff there is no non-decreasing path 
from $(\alpha \gamma, q_1)$ to $(\alpha \gamma, q_2)$. 
\end{enumerate}

\begin{rem}\label{rem:IndependentOfAlpha}
For every $\alpha_1, \alpha_2\in \Gamma^*$: $s_{\alpha_1} \equiv s_{\alpha_2}$.
\end{rem}
Due to Remark~\ref{rem:IndependentOfAlpha} it is enough to consider only $s \equiv s_{\bot}$.
The computation of the summary function will be achieved by considering the
stack height bounded summary functions defined below.

\smallskip\noindent{\em Stack height bounded summary function.}
For every $d\in\Nat$, the \emph{stack height bounded summary function} 
$s_d : Q \times \Gamma \times Q \to \Set{-\infty} \cup \Z \cup \Set{\omega}$ 
is defined as follows: 
(i)~$s_d(q_1,\gamma,q_2) = \omega$ iff for every $n\in\Nat$ there exists a 
non-decreasing path from $(\bot\gamma, q_1)$ to $(\bot \gamma, q_2)$ with 
weight at least $n$ and additional stack height at most $d$;
(ii)~$s_d(q_1,\gamma,q_2) = z$ iff the weight of the maximal-weight 
non-decreasing path from $(\bot\gamma,q_1)$ to $(\bot\gamma,q_2)$ with 
additional stack height at most $d$ is $z$; and
(iii)~$s_d(q_1,\gamma,q_2) = -\infty$ iff there is no non-decreasing path 
with additional stack height at most $d$
from $(\bot \gamma, q_1)$ to $(\bot \gamma, q_2)$.

\smallskip\noindent{\em Basic facts of summary functions.} 
We have the following basic facts:
(i)~for every $d\in\Nat$, we have $s_{d+1} \geq s_d$ (monotonicity); and
(ii)~$s_{d+1}$ is computable in polynomial time from $s_d$ and $\wps$ (we will show this
fact in Lemma~\ref{lem:ShortCutIsComputableInPoly}).
We first present a lemma that shows that from $s_d$, with $d=(|Q|\cdot |\Gamma|)^2$, 
we obtain the values of function $s$ for all values in $\Z \cup\Set{-\infty}$.

\begin{lem}\label{lemm:AlmostS}
Let $d = (|Q|\cdot |\Gamma|)^2$.
For all $q_1, q_2 \in Q$ and $\gamma \in \Gamma$,
if $s(q_1,\gamma,q_2) \in \Z \cup \Set{-\infty}$, then $s(q_1,\gamma,q_2) = s_d(q_1,\gamma,q_2)$.
\end{lem}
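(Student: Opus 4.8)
The plan is to prove the two implications separately, using the monotonicity of the stack-height-bounded summary functions together with Lemma~\ref{lem:HeavyPathWithBigDepthImpliesPosPump}. Fix $q_1,q_2 \in Q$ and $\gamma \in \Gamma$, and set $d = (|Q|\cdot|\Gamma|)^2$. Since every non-decreasing path with additional stack height at most $d$ is in particular a non-decreasing path, and since $s$ takes the maximum weight over \emph{all} non-decreasing paths (whereas $s_d$ maximizes over the restricted family), the inequality $s(q_1,\gamma,q_2) \geq s_d(q_1,\gamma,q_2)$ is immediate from the definitions; this handles one direction and also takes care of the $-\infty$ case, since if $s(q_1,\gamma,q_2)=-\infty$ there are no non-decreasing paths at all, hence none of bounded additional stack height, so $s_d(q_1,\gamma,q_2)=-\infty$ as well.

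The substance lies in showing $s_d(q_1,\gamma,q_2) \geq s(q_1,\gamma,q_2)$ when $s(q_1,\gamma,q_2) = z \in \Z$. First I would argue by contradiction: suppose there is a non-decreasing path $\pi$ from $(\bot\gamma,q_1)$ to $(\bot\gamma,q_2)$ with $w(\pi) = z$ but every non-decreasing path witnessing this weight has $\ASH > d$. Consider such a witness path of minimal additional stack height; call it $\pi$, with $\ASH(\pi) = d' \geq d+1 > d = (|Q|\cdot|\Gamma|)^2$. The idea is that a path this deep cannot be weight-optimal while also being ``essential'' at that depth. Concretely, I would invoke Lemma~\ref{lem:HeavyPathWithBigDepthImpliesPosPump} with $c_1 = (\bot\gamma,q_1)$, $c_2=(\bot\gamma,q_2)$ and $n = z$: since the minimal additional stack height over all paths from $c_1$ to $c_2$ with weight at least $z$ is at least $d \geq (|Q|\cdot|\Gamma|)^2$, that lemma guarantees a path $\pi^*$ from $c_1$ to $c_2$ with a pumpable pair $(p_1,p_2)$ satisfying $w(p_1)+w(p_2)>0$. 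Pumping this pair even once strictly increases the weight, producing a non-decreasing path of weight strictly exceeding $z$ --- contradicting $s(q_1,\gamma,q_2)=z$, which is the supremum over all such weights. (I would also note that the local-minima condition for $c_1$ is preserved, so the pumped path remains a genuine non-decreasing path witness.)

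Two points require care, and I expect the main obstacle to be reconciling the hypotheses of Lemma~\ref{lem:HeavyPathWithBigDepthImpliesPosPump} with the present situation. That lemma's hypothesis is a statement about the \emph{minimal} additional stack height among weight-$\geq n$ paths being large; I must verify this minimal value is genuinely $\geq (|Q|\cdot|\Gamma|)^2$ rather than merely exhibiting one deep path. The clean way to set this up is to assume $s = z$ yet $s_d < z$ (so no weight-$z$ non-decreasing path has $\ASH \leq d$), which forces every weight-$\geq z$ non-decreasing path --- and, with the pumpable-pair obstruction, every weight-$\geq z$ path between these configurations --- to have additional stack height exceeding $d$, thereby supplying exactly the large-minimal-depth hypothesis the lemma needs. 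The second delicate point is the $\omega$ case: the lemma statement deliberately excludes it, asserting agreement only when $s(q_1,\gamma,q_2) \in \Z \cup \Set{-\infty}$, so I would remark that the possibility of unboundedly large weights (captured separately via Lemma~\ref{lemm:MaxDepthBigThenWeightUnbounded}) does not arise here precisely because we have assumed $s$ is a finite integer, which rules out the pumpable pair leading to arbitrarily heavy paths. Combining the two inequalities yields $s = s_d$, completing the proof.
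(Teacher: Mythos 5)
Your proof is correct and follows essentially the same route as the paper's: the paper likewise notes $s(q_1,\gamma,q_2) \geq s_d(q_1,\gamma,q_2)$ by definition, assumes $s > s_d$ towards contradiction, observes that the minimal-additional-stack-height non-decreasing witness must then have $\ASH > (|Q|\cdot|\Gamma|)^2$, and invokes Lemma~\ref{lemm:MaxDepthBigThenWeightUnbounded} (which is exactly Lemma~\ref{lem:HeavyPathWithBigDepthImpliesPosPump} plus repeated pumping, with the same remark that pumping preserves the non-decreasing property) to conclude $s = \omega$, contradicting $s \in \Z \cup \Set{-\infty}$. The only cosmetic difference is that you cite Lemma~\ref{lem:HeavyPathWithBigDepthImpliesPosPump} directly and pump once; since that lemma does not guarantee the weight of $\pi^*$ itself is at least $z$, one should pump the positive pair sufficiently many times to exceed $z$ --- which is precisely what Lemma~\ref{lemm:MaxDepthBigThenWeightUnbounded} packages.
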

\begin{proof}
By definition we have $s(q_1,\gamma,q_2) \geq s_d(q_1,\gamma,q_2)$.
Towards a contradiction, we assume that 
$s(q_1,\gamma,q_2) > s_d(q_1,\gamma,q_2)$.
By the assumption there exists a non-decreasing 
path $\pi$ with minimal additional stack height from $(\bot \gamma,q_1)$ to 
$(\bot \gamma,q_2)$ with weight $n > s_d(q_1,\gamma,q_2)$ and additional stack 
height $d ' > (|Q|\cdot |\Gamma|)^2 $.
Hence by Lemma~\ref{lemm:MaxDepthBigThenWeightUnbounded} for every 
$m \in\Nat$ there exists a non-decreasing path from $(\bot \gamma,q_1)$ to 
$(\bot \gamma,q_2)$ with weight at least $m$ (note that in 
Lemma~\ref{lemm:MaxDepthBigThenWeightUnbounded} the witness 
path constructed by pumping the positive pumpable pair yields a non-decreasing path).
Hence $s(q_1,\gamma,q_2) = \omega$ in contradiction to the assumption 
that $s(q_1,\gamma,q_2) \in \Z \cup \Set{-\infty}$.
The desired result follows.
\hfill\qed
\end{proof}

Our goal now is the computation of the $\omega$ values of the summary function.
To achieve the computation of $\omega$ values we will define another summary 
function $s^*$ and a new WPS $\wps^*$ such that certain cycles in $\wps^*$ will 
characterize the $\omega$ values of the summary function. 
We now define the summary function $s^*$ and the pushdown system $\wps^*$. 
Let $d= (|Q|\cdot|\Gamma|)^2$. The new summary function $s^*$ is defined as follows:
if the values of $s_d$ and $s_{d+1}$ are the same then it is assigned the value 
of $s_d$, and otherwise the value $\omega$. Formally, for all states 
$q_1,q_2 \in \wps$ and a stack symbol $\gamma$,
\[ 
s^*(q_1,\gamma,q_2) = 
   \left\{ \begin{array}{ll}
	s_d(q_1,\gamma,q_2) & \mbox{   if $s_d(q_1,\gamma,q_2) = s_{d+1}(q_1,\gamma,q_2)$} \\
	\omega & \mbox{   if $s_d(q_1,\gamma,q_2) < s_{d+1}(q_1,\gamma,q_2)$}.
   \end{array} \right. 
\]
The new WPS $\wps^*$ is constructed from $\wps$ by adding the following set of $\omega$-edges:
$\Set{(q_1,\gamma,q_2,\Skip) \mid s^*(q_1,\gamma,q_2) = \omega}$.
Note that $s^*$ is a summary function for $\wps$, but not necessarily for 
$\wps^*$.

\begin{lem}\label{lemm:IfOmegaThenOmega}
For all $q_1, q_2 \in Q$ and $\gamma \in \Gamma$,
the following assertion holds:
the original summary function satisfies $s(q_1,\gamma,q_2) = \omega$ iff 
there exists a non-decreasing path in $\wps^{*}$ from 
$(\bot\gamma,q_1)$ to $(\bot\gamma,q_2)$ that goes through an $\omega$-edge.
\end{lem}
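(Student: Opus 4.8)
The plan is to prove the biconditional in Lemma~\ref{lemm:IfOmegaThenOmega} by establishing the two directions separately, in each case relating paths in $\wps$ with unbounded weight to paths in $\wps^*$ that traverse an $\omega$-edge. The key idea is that the $\omega$-edges of $\wps^*$ are $\Skip$-edges that act as ``shortcuts'' encoding the fact that between two configurations with the same top symbol, non-decreasing paths of arbitrarily large weight exist. So an $\omega$-value of the full summary function $s$ should manifest as a finite non-decreasing path in $\wps^*$ that uses at least one such shortcut, and conversely.

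For the direction ($\Leftarrow$), suppose there is a non-decreasing path $\pi$ in $\wps^*$ from $(\bot\gamma,q_1)$ to $(\bot\gamma,q_2)$ through an $\omega$-edge. I would decompose $\pi$ at the first $\omega$-edge it uses, say the edge $(q_1',\gamma',q_2',\Skip)$ with $s^*(q_1',\gamma',q_2')=\omega$. By definition of $s^*$, we have $s_d(q_1',\gamma',q_2')<s_{d+1}(q_1',\gamma',q_2')$, and by Lemma~\ref{lemm:MaxDepthBigThenWeightUnbounded} (applied to the witness path realizing additional stack height at least $d^*=(|Q|\cdot|\Gamma|)^2$) there exist non-decreasing paths in $\wps$ between the corresponding configurations of arbitrarily large weight. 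The plan is then to replace every $\omega$-edge along $\pi$ by a genuine $\wps$-path of weight large enough to dominate the (bounded) total contribution of the remaining ordinary $\wps$-edges and any other shortcuts; each such replacement can be chosen heavy enough, and since there are only finitely many edges in $\pi$ the replacements yield a single non-decreasing $\wps$-path of weight at least $n$ for every target $n$. This forces $s(q_1,\gamma,q_2)=\omega$. Care is needed to check that substituting a non-decreasing subpath preserves the local-minima/non-decreasing property of the whole path, which follows from the observation recorded after the definition of non-decreasing paths (being non-decreasing is independent of the stack context $\alpha$).

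For the harder direction ($\Rightarrow$), suppose $s(q_1,\gamma,q_2)=\omega$, so there are non-decreasing paths in $\wps$ from $(\bot\gamma,q_1)$ to $(\bot\gamma,q_2)$ of arbitrarily large weight. The plan is to take such a path $\pi$ with weight exceeding the threshold $n_{\wps}$ of Lemma~\ref{lem:BigWeightImpliesOmegaOrCycle}, and to decompose it along its sequence of local minima $m_1,\dots,m_j$ (as in the proof of Lemma~\ref{lem:BigWeightImpliesOmegaOrCycle}). Between consecutive local minima of equal stack height the subpath is itself a non-decreasing path returning to the same top symbol, so its weight is captured by the summary function; if this weight is unbounded over the family of witnesses, that segment realizes an $\omega$-value and I can route it through the corresponding $\omega$-edge in $\wps^*$. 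The main obstacle, and the place demanding the most care, is to argue that when $s(q_1,\gamma,q_2)=\omega$ the unbounded weight must be ``localized'' into at least one segment whose summary value is genuinely $\omega$ (rather than being spread as a bounded-but-growing number of bounded segments). Here I would invoke Lemma~\ref{lemm:AlmostS}: if every relevant segment had a finite summary value, then each such value equals its $s_d$-value and is bounded, and since the number of distinct $(q,\gamma,q')$ triples is finite, a pigeonhole/pumping argument on repeated local minima with identical $(q,\gamma)$ would either produce a good cycle or cap the total achievable weight — contradicting that weights are unbounded while no good cycle is assumed. Thus some triple has $s=\omega$, equivalently $s_d<s_{d+1}$ by Lemma~\ref{lemm:AlmostS}, so an $\omega$-edge exists and can be inserted to form the required non-decreasing $\wps^*$-path. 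Assembling these segment-level paths into one global non-decreasing path, using the stack-context independence of the non-decreasing property, completes the direction.
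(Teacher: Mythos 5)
Your right-to-left direction is essentially sound and matches the paper's (you flesh out, with an explicit substitution of heavy $\wps$-paths for $\omega$-edges, what the paper states tersely), modulo a minor omission: $s^{*}=\omega$ can also arise from $s_d=\omega$, not only from $s_d<s_{d+1}$, though that case is immediate. The genuine gap is in your left-to-right direction, at its final step: you claim that once some triple is shown to satisfy $s=\omega$, this is ``equivalently $s_d<s_{d+1}$ by Lemma~\ref{lemm:AlmostS}''. Lemma~\ref{lemm:AlmostS} says nothing of the sort: it states that if $s(q_1,\gamma,q_2)\in\Z\cup\{-\infty\}$ then $s(q_1,\gamma,q_2)=s_d(q_1,\gamma,q_2)$; it does not say that $s=\omega$ forces the jump to occur between levels $d$ and $d+1$ \emph{for that same triple}, and in fact this is false. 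The weight-versus-depth profile of a triple can plateau: take a WPS whose only productive behaviour is to push blocks of three symbols and collect reward $+6$ per block on the way back down; then $s_m(q_1,\gamma,q_2)$ increases only when $m$ crosses a residue-class threshold modulo $3$, and one can arrange $s_d(q_1,\gamma,q_2)=s_{d+1}(q_1,\gamma,q_2)\in\Z$ while $s(q_1,\gamma,q_2)=\omega$, so no $\omega$-edge exists for this triple at all. Worse, in such an example the witness paths return to stack level $\bot\gamma$ only at the very end, so your local-minima decomposition produces only top-level triples, all with the same plateau profile; your argument then terminates exactly at the false step. The lemma is still true there, but only because an $\omega$-edge exists for a \emph{different} triple, arising at a deeper nesting level with a different top-of-stack symbol (one of the phase-shifted triples over the pushed symbols, whose profile does jump between $d$ and $d+1$). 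Your top-level decomposition can never reach such triples; some recursion into the stack is unavoidable. Your ``good cycle or capped weight'' dichotomy has the same defect: producing a good cycle is not a contradiction (nothing in the lemma excludes good cycles), and converting a good cycle into an $\omega$-edge hits the same depth problem when the cycle's positivity itself requires deep excursions.

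This is precisely what the paper's proof does differently. It fixes $n$ large enough that every non-decreasing witness of weight at least $n$ has additional stack height at least $d+1$, takes a witness $\pi$ of \emph{minimal} additional stack height, and extracts from $\pi$ a subpath --- at an arbitrary nesting depth, from $(\alpha\gamma',q_1')$ to $(\alpha\gamma',q_2')$ --- whose additional stack height is exactly $d+1$. An exchange argument then shows $s_{d+1}(q_1',\gamma',q_2')>s_d(q_1',\gamma',q_2')$: otherwise that subpath could be replaced by one of additional stack height at most $d$ and no smaller weight, contradicting the minimality of $\pi$. The resulting $\omega$-edge is spliced into $\pi$ in place of the subpath, yielding the required non-decreasing path in $\wps^{*}$. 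The minimality-of-stack-height device is the missing idea in your proposal: it is what lets the argument descend to whatever nesting level actually carries the $\omega$-jump, rather than staying at the level of the original triple.
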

\begin{proof}
The direction from right to left is easy: if there is a non-decreasing
path in $\wps^*$ that goes through an $\omega$-edge, it means that 
there exists $(q_1',\gamma',q_2')$ with either $s_d(q_1',\gamma,q_2') =\omega$ 
or $s_d(q_1',\gamma',q_2')< s_{d+1}(q_1',\gamma',q_2')$.
If $s_d(q_1',\gamma,q_2') =\omega$, then clearly $s(q_1',\gamma,q_2') =\omega$.
Otherwise we have $s_d(q_1',\gamma',q_2')< s_{d+1}(q_1',\gamma',q_2')$, 
and then by Lemma~\ref{lemm:AlmostS} we get that 
$s(q_1',\gamma',q_2')=\omega$.
Since there exists a finite path from $(\bot\gamma,q_1)$ to $(\bot\gamma,q_2)$ 
with an $\omega$-edge it follows that $s(q_1,\gamma,q_2)=\omega$.

For the converse direction, we consider the case that
$s(q_1,\gamma,q_2) = \omega$.
If $s^*(q_1,\gamma,q_2) = \omega$, then the proof follows immediately.
Otherwise it follows that $s_d(q_1,\gamma,q_2) \in\Z$.
Hence there exists a weight $n\in\Z$ such that a non-decreasing path with minimal additional 
stack height with weight $n$ has additional stack height $d' \geq d + 1$.
Let $\pi$ be such a path.
Then there exists a non-decreasing subpath that starts at 
$(\alpha\gamma',q_1')$ and ends at $(\alpha\gamma',q_2')$ with additional stack 
height exactly $d+1$ (for some states $q_1',q_2'$ and stack symbol $\gamma'$).
If $s_{d+1}(q_1',\gamma',q_2') = s_{d}(q_1',\gamma',q_2')$, 
then $\pi$ is not a path with the minimal additional stack height.
Hence, as $s_{d+1}(q_1',\gamma',q_2') > s_{d}(q_1',\gamma',q_2')$, 
by definition $s^*(q_1',\gamma',q_2') = \omega$ and the proof follows. 
\hfill\qed
\end{proof}

We are now ready to show that the summary function $s$ can be computed 
polynomial time.

\begin{remark}
We show that the number of arithmetic operations required is polynomial in the 
size of the WPS, and hence the polynomial time 
bound follows.
In the sequel, instead of polynomial number of operations in the size of the 
WPS we simply write polynomial time.
\end{remark}

\begin{lem}\label{lem:ShortCutIsComputableInPoly}
For a WPS $\wps$,
the summary function $s$ is computable in polynomial time.
\end{lem}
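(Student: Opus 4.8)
The plan is to compute $s$ in three phases, exploiting the bounded summaries $s_d$ together with Lemmas~\ref{lemm:AlmostS} and~\ref{lemm:IfOmegaThenOmega}. First I would compute the stack-height bounded summaries $s_0, s_1, \dots, s_{d+1}$ for $d = (|Q|\cdot|\Gamma|)^2$ iteratively. Since $d$ is polynomial in $|\wps|$, the entire computation is polynomial provided each $s_{i+1}$ is obtained from $s_i$ in polynomial time, which is exactly the outstanding basic fact~(ii). Second, Lemma~\ref{lemm:AlmostS} guarantees that $s_d$ already equals $s$ on every triple where $s \in \Z \cup \Set{-\infty}$, so all finite and $-\infty$ values of $s$ can be read off directly from $s_d$. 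Third, for the remaining triples I would determine the $\omega$-values using $s^*$ and $\wps^*$ via Lemma~\ref{lemm:IfOmegaThenOmega}.

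The heart of the argument is the polynomial computation of $s_{i+1}$ from $s_i$. Fix a top symbol $\gamma\in\Gamma$ and build a finite weighted graph $G_\gamma^{i}$ with vertex set $Q$, encoding non-decreasing $\gamma$-paths of additional stack height at most $i+1$ as follows. For each skip edge $(q_1,\gamma,q_2,\Skip)$ of weight $w$ I put an edge $q_1 \to q_2$ of weight $w$ (a move at the base level, where the top symbol stays $\gamma$). For each push edge $(q_1,\gamma,q',\Push(z))$ of weight $w_1$, each intermediate state $q''$, and each pop edge $(q'',z,q''',\Pop)$ of weight $w_2$, I put a macro-edge $q_1 \to q'''$ of weight $w_1 + s_i(q',z,q'') + w_2$, representing an excursion one level deeper whose interior is a non-decreasing $z$-path summarized by $s_i$ (additional height at most $i$, so the whole excursion has additional height at most $i+1$). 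Every non-decreasing $\gamma$-path of additional height at most $i+1$ decomposes uniquely, between consecutive visits to the base level, into such base-level skips and one-level excursions; hence $s_{i+1}(q_1,\gamma,q_2)$ is the maximum weight of a path from $q_1$ to $q_2$ in $G_\gamma^{i}$, with value $\omega$ whenever such a path uses an $\omega$-weighted macro-edge or can be routed through a reachable positive cycle (a repeatable positive-weight return to the base level yields arbitrarily large weight at bounded additional height). Computing maximum-weight paths and detecting reachable positive cycles on a graph with $|Q|$ vertices is polynomial, e.g.\ by a Bellman--Ford style computation with $\omega$ treated as an absorbing value; the base case $s_0$ is the same computation restricted to skip edges only. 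Iterating this $d+1$ times keeps the overall cost polynomial.

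It then remains to identify the triples with $s = \omega$. I would form $s^*$ by comparing $s_d$ and $s_{d+1}$ entrywise, add the corresponding $\omega$-edges to obtain $\wps^*$, and invoke Lemma~\ref{lemm:IfOmegaThenOmega}: $s(q_1,\gamma,q_2) = \omega$ exactly when there is a non-decreasing path in $\wps^*$ from $(\bot\gamma,q_1)$ to $(\bot\gamma,q_2)$ passing through an $\omega$-edge. This is a purely Boolean non-decreasing reachability question, which I would solve by the standard saturation fixpoint over summaries $(q,\gamma,q')$ augmented with a single bit recording whether an $\omega$-edge has already been used; since the set of such flagged summaries is polynomial, the fixpoint converges in polynomially many steps. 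Setting $s(q_1,\gamma,q_2)=\omega$ on the triples flagged here and $s(q_1,\gamma,q_2)=s_d(q_1,\gamma,q_2)$ elsewhere yields $s$ in polynomial time.

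I expect the main obstacle to be the correctness of the key step, namely proving that the macro-edge graph $G_\gamma^{i}$ faithfully captures all and only non-decreasing $\gamma$-paths of additional height at most $i+1$, and that the three outcomes (finite maximum, $\omega$ via a reachable positive cycle, $\omega$ via an $\omega$-labelled macro-edge) are handled soundly so that the maximum-weight-path routine is both correct and polynomial. The decomposition relies on the earlier observations about consecutive local minima and on Lemma~\ref{lemm:MaxDepthBigThenWeightUnbounded} to justify that unbounded weight at bounded additional height is reported as $\omega$; verifying these invariants is the delicate part of the proof.
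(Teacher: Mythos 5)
Your proposal is correct and follows essentially the same route as the paper: iterate $s_{i+1}$ from $s_i$ via a finite weighted graph solved by Bellman--Ford with positive-cycle detection, read off the finite and $-\infty$ values from $s_d$ via Lemma~\ref{lemm:AlmostS}, and detect the $\omega$ values through $s^*$, $\wps^*$ and Lemma~\ref{lemm:IfOmegaThenOmega} reduced to a polynomial reachability computation. The only differences are cosmetic implementation choices: you collapse each one-level excursion into a single macro-edge of weight $w_1 + s_i(q',z,q'') + w_2$, whereas the paper keeps both stack levels explicit in $G_{\wps}^i$ and lets the shortest-path computation compose them, and you run the saturation fixpoint with an $\omega$-bit directly, whereas the paper reduces to black-box pushdown reachability on a doubled state space.
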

\begin{proof}
There are two key steps of the proof: 
(i)~computation of $s_d$, for $d=(|Q|\cdot |\Gamma|)^2$, and 
we will argue how to compute $s_{i+1}$ from $s_{i}$ in polynomial time; and
(ii)~computation of a non-decreasing path in $\wps^*$ that goes through an 
$\omega$-edge. 
We first argue how the key steps give us the desired result and then present
the details of the key steps.
Given the computation of (i), we construct $s_d$, $s_{d+1}$ in polynomial 
time, and hence also $s^*$.
Given $s^*$ we construct $\wps^*$ in polynomial time. 
By computation (ii) we can assign the $\omega$ values for the summary 
function, and all other have values as defined by $s_d$.
Thus with the computation of key steps (i) and (ii) in polynomial time, 
we can compute the summary function $s$ in polynomial time.
We now describe the key steps:

\begin{enumerate}

\item \emph{Computation of $s_{i+1}$ from $s_{i}$ and $\wps$.}
Let $G_{\wps}$ be the finite weighted graph that is formed by all the 
configurations of $\wps$ with stack height either zero, one or two, that is, 
the vertices are of the form $(\alpha, q)$ where $q\in Q$ and $\alpha \in 
\Set{\bot, \bot \gamma, \bot  \gamma_1  \gamma_2 
\mid \gamma, \gamma_1, \gamma_2 \in \Gamma}$.
The edges (and their weights) are according to the transitions of $\wps$: 
formally, 
(i)~(Skip edges): for vertices $(\bot  \alpha, q)$ we have an edge 
to $(\bot  \alpha,q')$ iff $e=(q, \Top(\alpha), \Skip,q')$ is an edge
in $\wps$ (and the weight of the edge in $G_{\wps}$ is $w(e)$) where 
$\alpha=\gamma$ or $\alpha=\gamma_1 \gamma_2$ for 
$\gamma,\gamma_1,\gamma_2 \in \Gamma$; 
(ii)~(Push edges): for vertices $(\bot  \gamma, q)$ we have an edge to 
$(\bot  \gamma  \gamma', q')$ iff 
$e=(q,\gamma, \Push(\gamma'),q')$ is an edge in $\wps$ 
(and the weight of the edge in $G_{\wps}$ is $w(e)$)  
for $\gamma,\gamma'\in \Gamma$; and
(iii)~(Pop edges): for vertices $(\bot  \gamma  \gamma', q)$
we have an edge to $(\bot  \gamma, q')$ iff 
$e=(q,\gamma', \Pop, q')$ is an edge in $\wps$ 
(and the weight of the edge in $G_{\wps}$ is $w(e)$)  
for $\gamma,\gamma'\in \Gamma$.
Intuitively, $G_{\wps}$ allows skips, push pop pairs, and only one 
additional push.
Note that $G_{\wps}$ has at most $3\cdot |Q|\cdot |\Gamma|^2$ vertices, 
and can be constructed in polynomial time.

For every $i \geq 1$, given the function $s_i$, the graph $G_{\wps}^i$ is 
constructed from $G_{\wps}$ as follows (e.g., see Figure~\ref{fig:example-gi}): adding edges 
$((\bot \gamma_1 \gamma_2, q_1), (\bot \gamma_1 \gamma_2, q_2))$ (if 
the edge does not exist already) and changing its weight to 
$s_i(q_1,\gamma_2,q_2)$ for every $\gamma_1, \gamma_2 \in \Gamma$ and 
$q_1, q_2 \in Q$.
The value of $s_{i+1}(q_1,\gamma,q_2)$ is exactly the weight of a  
maximal-weight path between $(\bot \gamma, q_1)$ and $(\bot \gamma, q_2)$ in 
$G_{\wps}^i$ (with the following convention: $-\infty < z < \omega$, 
$z + \omega = \omega$ and $z + -\infty = \omega + -\infty = -\infty$ 
for every $z\in\Z$). 
If in $G_{\wps}^i$ there is a path from $(\bot \gamma, q_1)$ to 
$(\bot \gamma, q_2)$ that contains a cycle with positive weight,
then we set $s_{i+1}(q_1,\gamma,q_2) = \omega$.
Hence, given $s_i$ and $\wps$, the construction of $G_{\wps}^i$ is 
achieved in polynomial time, and the computation of $s_{i+1}$ 
is achieved using the Bellman-Ford algorithm~\cite{CLRS-Book} in 
polynomial time 
(a maximal-weight path is a shortest-weight path if we define the edge length 
as the negative of the edge weight).
Also note that the Bellman-Ford algorithm reports cycles with positive weight 
(that is, negative length) which is required to set $\omega$ values of 
$s_{i+1}$.
It follows that we can compute $s_{i+1}$ given $s_i$ and $\wps$ in 
polynomial time.
In order to compute $s_0$ we run the Bellman-Ford algorithm over the graph $G_{\wps}^0$ in which all the push and pop transitions are disabled.
We note that the number of vertices in $G_{\wps}^0$ is at most $|\wps|$.
Hence, the computation is polynomial.

\item \emph{Non-decreasing $\omega$-edge path in $\wps^*$.}
We reduce the problem of checking if there exists a non-decreasing path from 
$(\bot\gamma, q_1)$ to $(\bot\gamma, q_2)$ in $\wps^*$ that goes through an 
$\omega$-edge to the problem of pushdown reachability in pushdown systems 
(or pushdown graphs), which is known to be in 
PTIME~\cite{Yan90,ABEGRY05}.
The reduction is as follows: for every state $q \in Q$ we add a fresh (new) 
state $q^\omega$, add a transition (or edge) 
$(q_1^\omega, \gamma, q_2^\omega, \mathit{com})$ for every 
$(q_1, \gamma, q_2, \mathit{com}) \in \Delta$ (i.e., the freshly added 
states follow the transition in the fresh copy as in the original WPS), and 
a transition $(q_1, \gamma, q_2^\omega, \mathit{com})$ for every transition 
$(q_1, \gamma, q_2, \mathit{com})$ that has an $\omega$ weight (i.e., 
there is a transition to the fresh copy only for an $\omega$-edge).
It follows that there exists an $\omega$-edge non-decreasing path in 
$\wps^*$ from $(\bot \gamma,q_1)$ to $(\bot \gamma,q_2)$ iff the configuration 
$(\bot \gamma,q_2^\omega)$ is pushdown reachable from the configuration 
$(\bot \gamma,q_1)$.
Hence it follows that the existence of a non-decreasing $\omega$-edge path in 
$\wps^*$ can be determined in polynomial time.



\end{enumerate}
The desired result follows.
\hfill\qed
\end{proof}


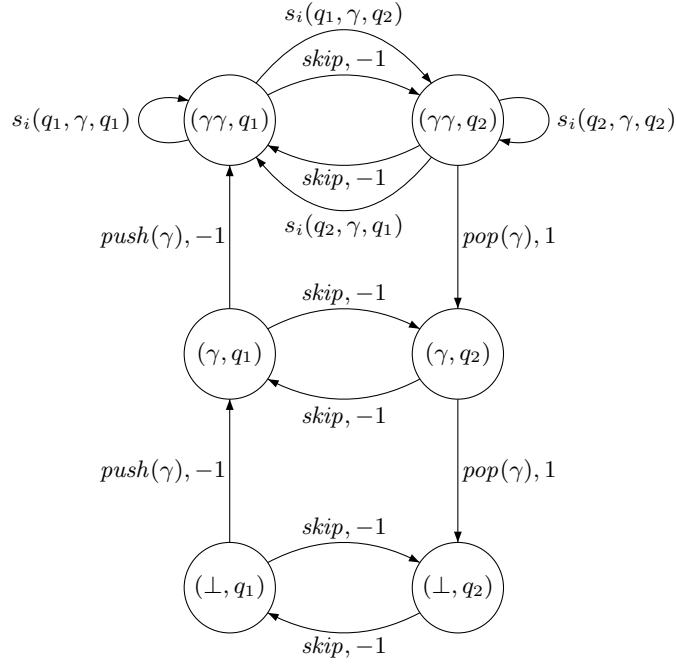
\begin{figure}[!tb]
\begin{center}
\begin{picture}(48,80)(0,-32)

\node[Nmarks=n, iangle=180,Nw= 12,Nh=12,Nmr=6](n0b)(10,-26){$(\bot,q_1$)}
\node[Nmarks=n, Nw= 12,Nh=12,Nmr=6](n1b)(40,-26){$(\bot,q_2)$}

\node[Nmarks=n, iangle=180,Nw= 12,Nh=12,Nmr=6](n0)(10,5){$(\gamma,q_1$)}
\node[Nmarks=n, Nw= 12,Nh=12,Nmr=6](n1)(40,5){$(\gamma,q_2)$}

\node[Nmarks=n, iangle=180, Nw= 12,Nh=12,Nmr=6](n0g)(10,36){$(\gamma\gamma,q_1$)}
\node[Nmarks=n, Nw= 12,Nh=12,Nmr=6](n1g)(40,36){$(\gamma\gamma,q_2)$}


\drawloop[ELside=l,loopCW=y, loopdiam=6, loopangle=180](n0g){$s_i(q_1,\gamma,q_1)$}
\drawloop[ELside=l,loopCW=y, loopdiam=6,loopangle=0](n1g){$s_i(q_2,\gamma,q_2)$}

\drawedge[ELpos=50, ELside=l,ELdist=0.5](n0b,n0){$\Push(\gamma),-1$}

\drawedge[ELpos=50, ELside=l,ELdist=0.5](n0,n0g){$\Push(\gamma),-1$}

\drawedge[ELpos=50, ELside=l,ELdist=0.5](n1g,n1){$\Pop(\gamma),1$}
\drawedge[ELpos=50, ELside=l,ELdist=0.5](n1,n1b){$\Pop(\gamma),1$}

\drawedge[ELpos=50, ELside=l, ELdist=0.5, curvedepth=6](n0,n1){$\Skip,-1$}
\drawedge[ELpos=50, ELside=l, curvedepth=6](n1,n0){$\Skip,-1$}

\drawedge[ELpos=50, ELside=l, ELdist=0.5, curvedepth=6](n0g,n1g){$\Skip,-1$}
\drawedge[ELpos=50, ELside=l, ELdist=0.5, curvedepth=6](n1g,n0g){$\Skip,-1$}

\drawedge[ELpos=50, ELside=l, ELdist=0.5, curvedepth=6](n0b,n1b){$\Skip,-1$}
\drawedge[ELpos=50, ELside=l, ELdist=0.5, curvedepth=6](n1b,n0b){$\Skip,-1$}

\drawedge[ELpos=50, ELside=l, ELdist=0.5, curvedepth=12](n0g,n1g){$s_i(q_1,\gamma,q_2)$}
\drawedge[ELpos=50, ELside=l, ELdist=0.5, curvedepth=12](n1g,n0g){$s_i(q_2,\gamma,q_1)$}

\end{picture}
\caption{$G^i_\wps$  that corresponds to the WPS $\wps$ from Figure~\ref{fig:ex-wps}.
The reader should note that this is a finite graph.
We explicitly label some of the transitions by $\Pop,\Push$ and $\Skip$ only to simplify the illustration.}\label{fig:example-gi}
\end{center}
\end{figure}


\begin{figure}[!tb]
\begin{center}
\begin{picture}(48,55)(0,-5)
\node[Nmarks=n, iangle=180,Nw= 12,Nh=12,Nmr=6](n0)(10,5){$(\bot,q_1$)}
\node[Nmarks=n, Nw= 12,Nh=12,Nmr=6](n1)(40,5){$(\bot,q_2)$}

\node[Nmarks=n, iangle=180, Nw= 12,Nh=12,Nmr=6](n0g)(10,36){$(\gamma,q_1$)}
\node[Nmarks=n, Nw= 12,Nh=12,Nmr=6](n1g)(40,36){$(\gamma,q_2)$}


\drawloop[ELside=l,loopCW=y, loopdiam=6, loopangle=195](n0g){$s(q_1,\gamma,q_1)$}

\drawloop[ELside=l,loopCW=y, loopdiam=6, loopangle=135](n0g){$\Push(\gamma),-1$}

\drawloop[ELside=l,loopCW=y, loopdiam=6,loopangle=0](n1g){$s(q_2,\gamma,q_2)$}

\drawedge[ELpos=50, ELside=l,ELdist=0.5](n0,n0g){$\Push(\gamma),-1$}

\drawedge[ELpos=50, ELside=l, ELdist=0.5, curvedepth=6](n0,n1){$s(q_1,\bot,q_2)$}
\drawedge[ELpos=50, ELside=l, curvedepth=6](n1,n0){$s(q_2,\bot,q_1)$}

\drawedge[ELpos=50, ELside=l, ELdist=0.5, curvedepth=6](n0g,n1g){$s(q_1,\gamma,q_2)$}
\drawedge[ELpos=50, ELside=l, ELdist=0.5, curvedepth=6](n1g,n0g){$s(q_2,\gamma,q_1)$}

\end{picture}
\caption{$\Gr(\wps)$ that corresponds to the WPS $\wps$ from Figure~\ref{fig:ex-wps}.
The reader should note that this is a finite graph.
We explicitly label some of the transitions by $\Pop,\Push$ and $\Skip$ only to simplify the illustration.}\label{fig:example-gr}
\end{center}
\end{figure}
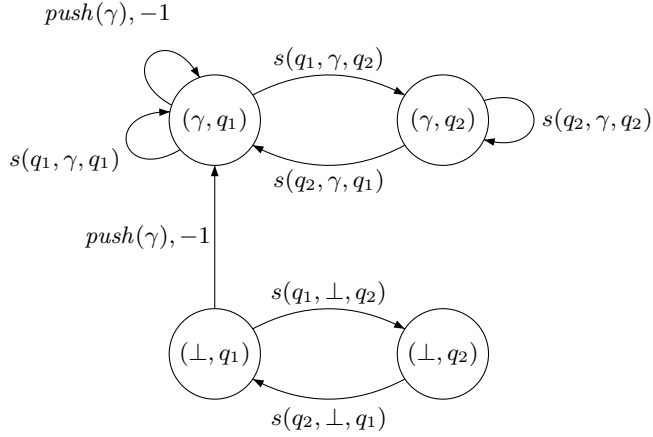


Given the computation of the summary function, we will construct a 
summary graph, and show the equivalence of the existence of good
cycles in a WPS with the existence of positive cycles in the summary graph.

\smallskip\noindent{\em Summary graph and positive simple cycles.} Given a WPS 
$\wps=\atuple{Q, \Gamma, q_0\in Q, E \subseteq  (Q\times\Gamma) \times (Q\times \Com(\Gamma)), w:E \to \Z}$
and the summary function $s$, we construct the \emph{summary graph} $\Gr(\wps)=(\ov{V},\ov{E})$ of $\wps$ with 
a weight function $\ov{w}: \ov{E} \to \Z \cup \Set{\omega}$ as follows (e.g., see Figure~\ref{fig:example-gr}): (i)~$\ov{V} = Q\times \Gamma$; and 
(ii)~$\ov{E} = E_{\Skip} \cup E_{\Push}$ where
$E_{\Skip} = \Set{((q_1,\gamma),(q_2,\gamma)) \mid s(q_1,\gamma,q_2) > -\infty}$, and
$E_{\Push} = \Set{((q_1,\gamma_1),(q_2,\gamma_2)) \mid (q_1,\gamma_1,q_2,\Push(\gamma_2)) \in E}$;
and (iii)~for all $e = ((q_1,\gamma),(q_2,\gamma))\in E_{\Skip}$ we have $\ov{w}(e) =  s(q_1,\gamma,q_2)$,
and for all $e\in E_{\Push}$ 
we have $\ov{w}(e)=w(e)$ (i.e., according to the weight function of $\wps$).
A simple cycle $C$ in $\Gr(\wps)$ is a \emph{positive simple cycle} iff one of
the following conditions holds: (i)~either $C$ contains an $\omega$-edge (i.e.,
edge labeled $\omega$ by $\ov{w}$); or 
(ii)~the sum of the weights of the edges of the cycle $C$ according to 
$\ov{w}$ is positive.

\begin{lem}\label{lemm:GoodCycleIffPositiveCycle}
A WPS $\wps$ has a good cycle iff the summary graph $\Gr(\wps)$ has a positive simple cycle.
\end{lem}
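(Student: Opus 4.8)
The plan is to prove the two directions separately, using the summary function machinery established in the preceding lemmas to translate between good cycles in $\wps$ and positive simple cycles in the summary graph $\Gr(\wps)$.

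First I would prove the direction from right to left: a positive simple cycle in $\Gr(\wps)$ yields a good cycle in $\wps$. Given a positive simple cycle $C$ in $\Gr(\wps)$, I would unfold its edges back into a concrete path in $\wps$. Each $E_{\Push}$ edge $((q_1,\gamma_1),(q_2,\gamma_2))$ comes from a genuine push transition in $\wps$, and each $E_{\Skip}$ edge $((q_1,\gamma),(q_2,\gamma))$ with $s(q_1,\gamma,q_2)>-\infty$ witnesses a non-decreasing path in $\wps$ from $(\alpha\gamma,q_1)$ to $(\alpha\gamma,q_2)$ whose weight equals $\ov{w}(e)=s(q_1,\gamma,q_2)$ (or can be made arbitrarily large when $\ov{w}(e)=\omega$). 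The key observation is that the summary-edge weights precisely capture the weight achievable along non-decreasing subpaths that return to the same stack level, so that concatenating the realizing subpaths along $C$ produces a closed path in $\wps$ returning to the same state and same top stack symbol, with $c_1$ a local minima. If $C$ contains an $\omega$-edge, I use Lemma~\ref{lemm:MaxDepthBigThenWeightUnbounded} (via the $\omega$ value of $s$) to make the weight of that subpath large enough to dominate the finite contributions of the other edges, so the total weight is positive; otherwise the positive sum of $\ov{w}$-weights directly gives positive total weight. Thus all three conditions of a good cycle are met.

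For the converse direction I would start from a good cycle $\pi=\atuple{c_1,\dots,c_n}$ in $\wps$ with $c_1=(\alpha_1,q_1)$ a local minima, $w(\pi)>0$, and $q_1=q_n$, $\Top(\alpha_1)=\Top(\alpha_n)$. The idea is to project $\pi$ onto its sequence of local minimas. Because $c_1$ is a local minima, every configuration of $\pi$ has $\alpha_1$ as a prefix, so I consider the configurations at stack level $|\alpha_1|$, which by the consecutive-local-minima property are exactly the local minimas of $\pi$; between two such minimas either the stack level stays the same (giving a non-decreasing subpath captured by an $E_{\Skip}$ edge with the appropriate $s$-value) or one push operation increases the level by one (captured by an $E_{\Push}$ edge). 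This decomposes $\pi$ into a closed walk in $\Gr(\wps)$ whose total $\ov{w}$-weight is at least $w(\pi)>0$, because each $E_{\Skip}$ edge carries the maximum achievable non-decreasing weight $s(q,\gamma,q')$, which is at least the actual weight of the corresponding subpath of $\pi$. A closed walk with positive total weight necessarily contains a simple cycle with positive total weight (or an $\omega$-edge), which is the desired positive simple cycle.

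The main obstacle I anticipate is the bookkeeping in the converse direction: correctly arguing that the decomposition of $\pi$ at local minima aligns with the two edge types of $\Gr(\wps)$, and that replacing each concrete subpath weight by the (possibly larger or $\omega$) summary value $s$ cannot decrease the total and cannot spuriously create positivity in a way that fails to reflect a genuine walk. I would also need to handle the passage from a positive closed \emph{walk} to a positive simple \emph{cycle} carefully: a standard argument splits the walk at a repeated vertex and discards a non-positive cycle, iterating until a simple positive cycle (or one containing an $\omega$-edge) remains; since the weights may include $\omega$, I would treat the $\omega$ case first, noting that any cycle carrying an $\omega$-edge is positive by definition, and otherwise work with the finite integer weights where the decomposition argument is routine.
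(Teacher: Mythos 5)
Your proposal is correct and follows essentially the same route as the paper's proof in both directions: you project the good cycle onto its consecutive local minimas (skip edges for equal-height pairs, push edges for single-push pairs) to get a closed walk in $\Gr(\wps)$ of weight at least $w(\pi)>0$ and then extract a positive simple cycle, and conversely you realize the summary edges by concrete non-decreasing paths, letting the $\omega$-edge's realization dominate the finite contributions of the remaining edges. One wording slip worth fixing: the local minimas of a good cycle are \emph{not} the configurations at stack level $|\alpha_1|$ (their stack heights increase toward $|\alpha_n|$), but the consecutive-minima decomposition you actually invoke is the correct one and is exactly what the paper uses.
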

\begin{proof}
If $\wps$ has a good cycle, then let $\pi$ be a good cycle.
The good cycle $\pi$ is a non-decreasing path $\atuple{c_1,\dots,c_n}$ 
such that $c_1 = (\alpha_1\gamma,q)$ and either $c_n = (\alpha_1\gamma\alpha_2\gamma, q)$ 
or $c_n = (\alpha_1\gamma, q)$ and $w(\pi) > 0$.
Let $m_1, \dots, m_r$ be the local minima along the path.
Note that for every $i < r$, either $m_i$ and $m_{i+1}$ have the same stack 
height or $m_{i+1}$ is reachable from $m_i$ via one push transition.
For configuration $c = (\alpha\gamma,q)$, let us denote $\Top(c) = (\gamma,q)$.
Hence the path $\Top(m_1), \dots, \Top(m_r)$ is a cycle in $\Gr(\wps)$.
If the cycle contains an $\omega$-edge, then it is a positive cycle (by 
the definition of positive cycles in $\Gr(\wps)$). 
Otherwise, the weight of the cycle in $\Gr(\wps)$ is at least $w(\pi)$, and
therefore $\Gr(\wps)$ has a positive cycle (and therefore 
a positive simple cycle).

The other direction is as follows.
Consider a positive cycle in $\Gr(\wps)$.
If the cycle does not contain an $\omega$-edge, then there exists a 
non-decreasing path in $\wps$ with the same weight that forms a good cycle.
Otherwise, let $(\gamma,q)$ be a vertex in the cycle, and 
$((\gamma_1,q_1),(\gamma_1,q_2))$ be an $\omega$-edge in the 
cycle of $\Gr(\wps)$. 
From the construction of $\Gr(\wps)$, it follows that there exist 
$\alpha_1,\alpha_2,\alpha_3$ in $\wps$ such that the following 
non-decreasing paths exist:
\begin{itemize}
\item A non-decreasing path $\pi_1$ from $(\alpha_1\gamma,q)$ to 
$(\alpha_1\gamma\alpha_2\gamma_1,q_1)$ (due to the path of the cycle).
\item For every $m\in\Nat$: a non-decreasing path $\pi^m$ from $(\alpha_1\gamma\alpha_2\gamma_1,q_1)$ to $(\alpha_1\gamma\alpha_2\gamma_1,q_2)$ with weight at least $m$ (due to the $\omega$-edge).
\item A non-decreasing path $\pi_2$ from $(\alpha_1\gamma\alpha_2\gamma_1,q_2)$ to $(\alpha_1\gamma\alpha_2\gamma_1\alpha_3\gamma,q)$ (due to the path of the
cycle).
\end{itemize}
Hence, for $m = W\cdot (|\pi_1| + |\pi_2|) + 1$, we get that the path 
$\pi_1 \pi^m \pi_2$ is a good cycle.
This completes both directions of the proof and gives us the result.
\hfill\qed
\end{proof}

Since the summary function and the summary graph can be constructed in polynomial 
time, and the existence of a positive cycle in a graph can be checked in 
polynomial time (for example, first checking the existence of a cycle 
with an $\omega$-edge, and then applying Karp's mean-cycle 
algorithm~\cite{Karp78} after removing all $\omega$ edges),
we have the following lemma.

\begin{lem}\label{lem:GoodCycleWithShortCutIsEasy}
Given a WPS $\wps$, whether $\wps$ has a good cycle 
can be decided in polynomial time.
\end{lem}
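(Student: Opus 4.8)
The plan is to reduce good-cycle detection to positive-cycle detection in the finite summary graph, reusing the machinery already in place. First I would invoke Lemma~\ref{lem:ShortCutIsComputableInPoly} to compute the summary function $s$ in polynomial time. With $s$ available, the summary graph $\Gr(\wps) = (\ov{V},\ov{E})$ is built directly in polynomial time: its vertex set $\ov{V} = Q\times\Gamma$ has $|Q|\cdot|\Gamma|$ elements, and each candidate skip- or push-edge together with its weight $\ov{w}$ is read off from $s$ and the transition relation of $\wps$, so $|\ov{E}| \leq (|Q|\cdot|\Gamma|)^2$ and the entire construction is polynomial in the size of $\wps$.

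By Lemma~\ref{lemm:GoodCycleIffPositiveCycle}, $\wps$ has a good cycle iff $\Gr(\wps)$ has a positive simple cycle, so it suffices to decide the latter in polynomial time, and I would split the test according to the two disjuncts in the definition of a positive simple cycle. For the $\omega$-edge case it is enough to check whether some $\omega$-edge lies on a cycle: for each $\omega$-edge $(u,v)\in\ov{E}$ I would test (ignoring weights) whether $u$ is reachable from $v$ in $\Gr(\wps)$; a positive answer yields a simple path from $v$ to $u$ which, closed by the edge $(u,v)$, is a positive simple cycle, and this is only a polynomial number of graph-reachability queries. For the remaining case I would delete all $\omega$-edges and search for a cycle of strictly positive total $\ov{w}$-weight among the finite integer weights that survive; negating the weights turns this into negative-cycle detection, solvable by the Bellman--Ford algorithm~\cite{CLRS-Book}, or equivalently by Karp's mean-cycle algorithm~\cite{Karp78}, which reports a positive cycle exactly when the maximum mean cycle value is positive.

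One point that deserves a line of care is that Lemma~\ref{lemm:GoodCycleIffPositiveCycle} is phrased in terms of \emph{simple} cycles while the weight test above detects possibly non-simple positive closed walks. This is harmless: any closed walk of positive total weight decomposes into simple cycles whose weights sum to the total, so at least one of them is positive, and hence a positive cycle exists iff a positive \emph{simple} cycle exists. I expect no genuine obstacle here; the only substantial ingredient is the polynomial computability of $s$, already discharged by Lemma~\ref{lem:ShortCutIsComputableInPoly}, after which everything reduces to elementary graph algorithmics. Combining the polynomial construction of $\Gr(\wps)$ with the polynomial positive-cycle test and appealing to Lemma~\ref{lemm:GoodCycleIffPositiveCycle} gives the desired polynomial-time decision procedure.
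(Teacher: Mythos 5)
Your proposal is correct and follows essentially the same route as the paper: compute the summary function via Lemma~\ref{lem:ShortCutIsComputableInPoly}, build $\Gr(\wps)$, invoke Lemma~\ref{lemm:GoodCycleIffPositiveCycle}, and then detect positive simple cycles by first testing for a cycle through an $\omega$-edge and afterwards running Karp's mean-cycle (or Bellman--Ford) algorithm on the graph with $\omega$-edges removed. Your explicit handling of the simple-cycle versus closed-walk decomposition is a detail the paper leaves implicit, but it is the same argument.
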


Lemma~\ref{lem:GoodCycleImpliesGoodMP} and Lemma~\ref{lem:GoodCycleWithShortCutIsEasy}
give us the following theorem.

\begin{thm}\label{thm:MPSupMPInfInPTIME}
Given a WPS $\wps$, whether there exists an infinite path 
$\pi$ such that $\LimInfAvg(\pi) > 0$ (or $\LimSupAvg(\pi) > 0$)
can be decided in polynomial time.
If there exists an infinite path $\pi$ such that $\LimSupAvg(\pi) > 0$, 
then there exists an ultimately periodic 
infinite path $\pi'$ such that both 
$\LimSupAvg(\pi') > 0$ and $\LimInfAvg(\pi') > 0$.
\end{thm}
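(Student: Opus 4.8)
The plan is to obtain the theorem directly as a corollary of the two preceding lemmas: Lemma~\ref{lem:GoodCycleImpliesGoodMP}, which characterizes satisfiability of the strict mean-payoff objectives through the existence of a good cycle, and Lemma~\ref{lem:GoodCycleWithShortCutIsEasy}, which supplies a polynomial-time decision procedure for good-cycle existence. I do not expect any genuinely hard step; the substance has already been packaged into these two lemmas.

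First, for the decidability claim, I would invoke Lemma~\ref{lem:GoodCycleImpliesGoodMP}, which asserts that the existence of a path with $\LimSupAvg > 0$, the existence of a path with $\LimInfAvg > 0$, and the existence of a path containing a good cycle are all equivalent. Hence, to decide whether either strict mean-payoff objective is satisfiable in $\wps$, it suffices to decide whether $\wps$ has a good cycle. By Lemma~\ref{lem:GoodCycleWithShortCutIsEasy}, good-cycle detection is in polynomial time, which immediately yields the polynomial-time decision procedure for both the $\LimInfAvg > 0$ and the $\LimSupAvg > 0$ problems.

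Second, for the ultimately periodic witness, I would reuse the construction already appearing in the left-to-right direction of the proof of Lemma~\ref{lem:GoodCycleImpliesGoodMP}. Assuming an infinite path with $\LimSupAvg > 0$ exists, that lemma guarantees a finite path $\pi_1 \pi_2$ in which $\pi_2$ is a good cycle reachable via the prefix $\pi_1$. Writing $\pi_2 = c_2 e^2_1 \dots e^2_{n_2}$, I would form the infinite path $\pi' = \pi_1 c_2 (e^2_1 \dots e^2_{n_2})^\omega$ that repeats the good cycle forever. This path is ultimately periodic, and it is valid: because the starting configuration $c_2$ of the good cycle is a local minima and the cycle returns to a configuration with the same control state and the same top-of-stack symbol (defining conditions~(2) and~(3) of a good cycle), its edge sequence can be executed again from its end configuration, so the infinite repetition is well-defined. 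This validity of repeated pumping is the only point that requires care, and it is not an obstacle.

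Finally, to conclude that $\pi'$ witnesses both objectives, I would note that $w(\pi_2) > 0$ while the cycle has bounded length, so each full traversal contributes a fixed positive weight against a bounded increase in length. Since the weight accumulated within a single cycle can dip by at most a fixed constant, the partial averages $\Avg(\pi'[0,i])$ are bounded below by a positive constant for all sufficiently large~$i$; consequently $\LimInfAvg(\pi') > 0$, and therefore also $\LimSupAvg(\pi') \geq \LimInfAvg(\pi') > 0$. Thus the single ultimately periodic path $\pi'$ satisfies both mean-payoff objectives simultaneously, completing the proof.
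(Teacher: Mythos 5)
Your proposal is correct and matches the paper's own proof exactly: the paper derives Theorem~\ref{thm:MPSupMPInfInPTIME} directly from Lemma~\ref{lem:GoodCycleImpliesGoodMP} and Lemma~\ref{lem:GoodCycleWithShortCutIsEasy}, with the ultimately periodic witness being the same ``repeat the good cycle forever'' path $\pi_1 c_2 (e^2_1 \dots e^2_{n_2})^\omega$ used in the left-to-right direction of Lemma~\ref{lem:GoodCycleImpliesGoodMP}. Your extra care about why the repetition is a valid path (the local-minima condition plus matching state and top-of-stack symbol) is exactly the justification the paper leaves implicit.
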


\subsection{Objectives $\LimInfAvg\geq 0$ and $\LimSupAvg\geq 0$}\label{subsec:wps2}
In this section we consider mean-payoff objectives with non-strict 
inequality.
Since in this section we will also consider rational weights due to certain 
transformation, we first discuss the issue of integer versus rational weights.

\smallskip\noindent{\em Integer versus rational weights.}
We will assume that the input WPS $\wps$ has integer weights, but we will 
consider certain transformations that produce rational weight functions.
We note that we can easily transform rational weights back to integer weights by multiplying 
all the weights by the least common multiple (LCM) of all the denominators of the weights. 
As a result the mean-payoff value of every path is multiplied by the least common multiple, 
but since we only ask if the mean-payoff value is positive (or non-negative), 
the result does not change.
We also note that the least common multiple is bounded by $D^{|E|}$, where $D$ is the greatest denominator 
that occur in the weight function (in absolute value) and $|E|$ is the number of transitions.
Hence, the least common multiple requires only $|E|\cdot\log(D)$ bits to encode and 
the blowup is polynomial.

\smallskip\noindent{\em Transformed weight functions and weighted graphs.}
Let $w:E\to\Q$ be a weight function, and $r \in \Q$ be a rational value, 
then the weight function $w+r: E\to\Q$ is defined as follows: 
for all $e \in E$ we have $(w + r)(e)=w(e) + r$.
Let $G=(V,E)$ be a (possibly infinite)\footnote{In this subsection we often look at a WPS as an infinite graph of the configurations.} 
graph with a weight function $w:E\to\Q$.
In order to emphasize that $w$ is the weight function for $G$, we use $w_G$.
We denote by $G^r$ the same infinite graph with weight function $w_G + r$.
We first show that if the lim-inf-average objective can be satisfied for all $\epsilon>0$,
then the non-strict lim-inf-average objective can also be satisfied.

\NewChange{
\smallskip\noindent{\em Solution overview.}
We prove that there is a computable $\epsilon > 0$ with polynomial number of bits such that there is a witness path $\pi$ with $\LimInfAvg(\pi) \geq 0$ iff there is a witness path $\pi$ with $\LimInfAvg(\pi) > -\epsilon$.
Hence, a reduction to the strict mean-payoff problem follows.
In Lemmas~\ref{rem:FromGToGe} and~\ref{lem:EpsilonGreatherThenImpliesGe} we prove the existence of such $\epsilon$, and in Theorem~\ref{thm:LeqIsInP} we prove the correctness of the reduction.
}

\begin{lem}\label{rem:FromGToGe}
Let $\wps$ be a WPS.
There exists a path $\pi$ with $\LimInfAvg(\pi) \geq 0$ iff for every $\epsilon > 0$ there exists a path $\pi_\epsilon$ with $\LimInfAvg(\pi_\epsilon) > -\epsilon$.
\end{lem}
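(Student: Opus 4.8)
The forward implication is immediate: if $\pi$ satisfies $\LimInfAvg(\pi) \geq 0$, then the same path witnesses $\LimInfAvg(\pi) > -\epsilon$ for every $\epsilon > 0$. The content lies in the converse, and my plan is to extract, for each scale $\epsilon = 1/k$, a finite good cycle from the hypothesis, and then splice a suitably chosen family of these good cycles into a single infinite path whose lim-inf-average is non-negative.

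For the converse I would first reduce each $\epsilon$-witness to a good cycle. Fix $k \in \Nat$ and let $\pi_{1/k}$ satisfy $\LimInfAvg(\pi_{1/k}) > -1/k$. Since adding the constant $1/k$ to every edge weight shifts every average by exactly $1/k$, the path $\pi_{1/k}$ has $\LimInfAvg > 0$ in the transformed graph $G^{1/k}$. Applying Lemma~\ref{lem:GoodCycleImpliesGoodMP} to the WPS with weight function $w + \tfrac{1}{k}$ (after clearing denominators, i.e.\ scaling all weights by $k$ to obtain an integer-weighted WPS, a transformation that preserves the sign of every average) yields a good cycle $\gamma_k$ of $G^{1/k}$. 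By definition $\gamma_k$ is a finite non-decreasing closed path: it starts at a local minima, returns to a configuration with the same state and top stack symbol, and has $w_{G^{1/k}}(\gamma_k) > 0$. Unwinding the weight shift, this inequality reads $w_G(\gamma_k) + \tfrac{1}{k}|\gamma_k| > 0$, i.e.\ $\Avg_G(\gamma_k) > -1/k$ in the original weights.

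Next I would normalize the cycles by their endpoints. Each $\gamma_k$ starts and ends at a configuration whose (top stack symbol, state) pair lies in the finite set $\Gamma \times Q$; by the pigeonhole principle there is a single pair $(\gamma^*, q^*)$ and infinitely many indices $k_1 < k_2 < \cdots$ such that every $\gamma_{k_j}$ both starts and ends with top symbol $\gamma^*$ and state $q^*$. Because each $\gamma_{k_j}$ is non-decreasing and its transitions depend only on the current state and top stack symbol, the end configuration of one such cycle is again a local minima with top $\gamma^*$ and state $q^*$, so $\gamma_{k_{j+1}}$ can be executed from it without ever popping below its starting height. Hence any concatenation $\gamma_{k_1}^{n_1}\gamma_{k_2}^{n_2}\cdots$ with $n_j \geq 1$ is a valid infinite path along which the stack never drops below its starting height.

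Finally I would fix the repetition counts $n_j$ so as to force the lim-inf-average up to $0$. Each block $\gamma_{k_j}$ has average $\Avg_G(\gamma_{k_j}) > -1/k_j$ with $k_j \to \infty$, so the per-block averages tend to $0$ from below. Choosing $n_j$ so that the length of block $j$ dominates the total length of blocks $1,\dots,j-1$, the running prefix-average is, throughout block $j$, a near convex combination of the average accumulated so far and $\Avg_G(\gamma_{k_j})$, perturbed only by a within-cycle fluctuation bounded by $W\,|\gamma_{k_j}|$ divided by the (growing) current length. Thus for every $\delta > 0$ all sufficiently long prefixes have average exceeding $-\delta$, giving $\LimInfAvg \geq 0$ for the constructed path. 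I expect this last step to be the main obstacle: the good cycles supplied by the hypothesis may have averages that only approach $0$ and need not be individually non-negative, so no single repeated cycle suffices, and non-negativity must be engineered by the geometric growth of the repetition counts together with the bookkeeping that keeps every partial prefix-average above $-\delta$.
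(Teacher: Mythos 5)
Your proposal is correct and follows essentially the same route as the paper's own proof: extract a good cycle for the shifted weight function $w+\tfrac{1}{k}$ via Lemma~\ref{lem:GoodCycleImpliesGoodMP}, use pigeonhole on the finite set $Q\times\Gamma$ to align the cycles at a common (state, top-symbol) pair, and concatenate them with rapidly growing repetition counts so that the prefix averages converge to a non-negative lim-inf. Your handling of the final bookkeeping (block lengths dominating earlier blocks and the within-cycle fluctuation term) is in fact more explicit than the paper's, which simply writes down the exponents $C_i^{2^{W\cdot |C_{i+1}|^i}}$ without further verification.
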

\begin{proof}
The direction from left to right is trivial.
In order to prove the converse direction let us assume that for every $n\in\Nat$ there exists a path $\pi_n$ with $\LimInfAvg(\pi_n) > -\frac{1}{n}$.
Hence for every $n\in\Nat$ there exists a path $\pi^*_n$ which leads to a path $C_n$ that is a good cycle with respect to the 
weight function $w + \frac{1}{n}$.
Since there are infinitely many values of $n \in \Nat$, and since $Q$ and $\Gamma$ are 
finite, w.l.o.g all the good cycles (with respect to $w + \frac{1}{n}$) start at 
the same top configuration $(\gamma,q)$.
We define an infinite path $\pi = \pi^*_1 C_1^{W\cdot |C_2|} C_2^{2\cdot W\cdot |C_3|}\dots C_i^{i\cdot W\cdot |C_{i+1}|}\dots$ 
such that $|C_{i+1}| > |C_i|$ (since we can always extend the length of a cyclic path by taking several copies of it).
We claim that $\LimInfAvg(\pi) \geq 0$.
To prove the claim it is enough to show that for all $\nu > 0$ we have $\LimInfAvg(\pi) \geq -\nu$, and 
for this purpose it is enough to prove that for the suffix $\pi'$ that begins at position $|\pi_1^*|$
we have $\LimInfAvg(\pi') \geq -\nu$.
For every $\ell\in\Nat$ we denote by $\pi'(\ell)$ the prefix of $\pi '$ that ends at position 
$\sum_{i=1}^\ell i\cdot |C_i|\cdot |C_{i+1}| \cdot W$.
Since $C_i$ is a good cycle for $w+\frac{1}{i}$ we get that $w(C_i) \geq -\frac{|C_i|}{i}$, 
hence the average weight of $\pi'(\ell)$ is at least
\[
\frac{\sum_{i=1}^\ell -\frac{i\cdot W \cdot |C_i| \cdot |C_{i+1}|}{i}}{\sum_{i=1}^\ell i\cdot |C_i|\cdot |C_{i+1}|\cdot W} = 
-\frac{\sum_{i=1}^\ell |C_i|\cdot |C_{i+1}|}{\sum_{i=1}^\ell i\cdot |C_i|\cdot |C_{i+1}|}.
\]
Since $|C_{i+1}| > |C_i|$,
we get that 
\[
\sum_{i=1}^\ell |C_i|\cdot |C_{i+1}| \leq 2\cdot \sum_{i=\frac{\ell}{2}}^\ell |C_i|\cdot |C_{i+1}|;
\quad \text{and} \quad 
\sum_{i=1}^\ell i\cdot |C_i|\cdot |C_{i+1}| \geq \frac{\ell}{2}\cdot \sum_{i=\frac{\ell}{2}}^\ell |C_i|\cdot |C_{i+1}|,
\] 
and thus we get that the average weight of $\pi'(\ell)$ is at least $-\frac{4}{\ell}$.
Since each copy of $C_{\ell + 1}$ has an average weight of at least $-\frac{1}{1+\ell}$ 
it is obvious that the average value of $\pi'(\ell) C_{\ell+1}^n$ remains at least $-\frac{4}{\ell}$.
Moreover, since $|\pi'(\ell)| \geq \ell \cdot W\cdot |C_{\ell + 1}|$ we get that the average value of $\pi'(\ell) C_{\ell+1}^n D$ 
(where $D$ is a prefix of the finite path $C_{\ell + 1}$) is at least 
\[
\frac{-\frac{4 \cdot \ell \cdot  W \cdot |C_{\ell + 1}|}{\ell} - W\cdot |D|}{\ell \cdot W\cdot |C_{\ell + 1}| + |D|} 
\geq -\frac{5}{\ell}.
\]
Hence, we get that for all $\nu > 0$, every prefix of $\pi'$ that is longer than $|\pi'(\lceil \frac{5}{\nu} \rceil)|$ 
has an average weight of at least $-\nu$ and thus $\LimInfAvg(\pi) = \LimInfAvg(\pi ') \geq -\nu$ and the proof is completed.
%
\hfill\qed
\end{proof}

\begin{lem}\label{lem:EpsilonGreatherThenImpliesGe}
Let $\Automat{A}$ be a WPS with integer weights (weight function $w$).
Let $\ell=|\Gamma|\cdot |Q|$, and fix $\epsilon = \frac{1}{\ell^{(\ell+ 1)^2} \cdot 2\cdot \ell}$.
Then the WPS $\wps^{\epsilon}$ (with weight function $w+\epsilon$) 
has a good cycle iff for every $\delta > 0$ the WPS $\wps^{\delta}$ 
(with weight function $w+\delta$) has a good cycle.
Moreover, every good cycle in $\Gr(\wps^\epsilon)$ is a good cycle in $\Gr(\wps^\delta)$.
\end{lem}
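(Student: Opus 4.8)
The plan is to prove the harder direction, namely that if $\wps^\epsilon$ has a good cycle for the specific $\epsilon$ given in the statement, then $\wps^\delta$ has a good cycle for every $\delta > 0$. The reverse direction is essentially trivial: if $\wps^\delta$ has a good cycle for every positive $\delta$, then in particular it has one for $\delta = \epsilon$. For the forward direction, I would first observe by monotonicity that it suffices to handle the case $\delta < \epsilon$, since a good cycle under weight function $w + \delta$ remains a good cycle under $w + \delta'$ for any $\delta' \geq \delta$ (adding to all weights only increases cycle weights and preserves the local-minima and endpoint conditions, which are purely structural). So the real content is: a good cycle for $w + \epsilon$ yields a good cycle for arbitrarily small $\delta > 0$.

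The key idea is that the threshold $\epsilon = \frac{1}{\ell^{(\ell+1)^2}\cdot 2\ell}$ is chosen to be smaller than the finest granularity of the achievable mean weights of simple structures in the summary graph. Concretely, I would invoke Lemma~\ref{lemm:GoodCycleIffPositiveCycle}: $\wps^\epsilon$ has a good cycle iff the summary graph $\Gr(\wps^\epsilon)$ has a positive simple cycle. A simple cycle in the summary graph has length at most $|\ov{V}| = |Q|\cdot|\Gamma| = \ell$, and each of its edge weights is either an $\omega$-value or a finite summary value $s(q_1,\gamma,q_2)$. The crucial point is that the \emph{finite} summary values, by Lemma~\ref{lemm:AlmostS}, equal $s_d(q_1,\gamma,q_2)$ for $d = (|Q|\cdot|\Gamma|)^2 = \ell^2$, and these are weights of non-decreasing paths of bounded additional stack height, hence of bounded length. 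The number $\ell^{(\ell+1)^2}$ in the denominator is designed to bound the length of the relevant bounded-stack-height paths contributing to a summary value, so that the total weight of a positive simple cycle under $w + \epsilon$ can be written as $\SUM(w\text{-part}) + \epsilon\cdot(\text{total length})$, where the total length is strictly less than $1/\epsilon$.

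The main argument then runs as follows. If the positive simple cycle in $\Gr(\wps^\epsilon)$ contains an $\omega$-edge, then that $\omega$-value is genuinely unbounded and is independent of the additive constant (since adding a bounded amount per edge to a weight-unbounded family of paths keeps it weight-unbounded); hence the same cycle is positive in $\Gr(\wps^\delta)$ for every $\delta > 0$, and by Lemma~\ref{lemm:GoodCycleIffPositiveCycle} we get a good cycle in $\wps^\delta$. If instead the cycle has only finite edge weights, let $N$ denote the total number of original edges making up the underlying non-decreasing path of the good cycle in $\wps$. Positivity under $w + \epsilon$ means $w(\pi) + \epsilon N > 0$. Since all original weights are integers, $w(\pi)$ is an integer, and $\epsilon N < 1$ by the choice of $\epsilon$ together with the length bound $N < \frac{1}{\epsilon}$; therefore $w(\pi) + \epsilon N > 0$ forces $w(\pi) \geq 0$. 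Now for the target $\delta$ we have $w(\pi) + \delta N \geq \delta N > 0$, so the very same path is a good cycle for $w + \delta$.

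The step I expect to be the main obstacle is pinning down the length bound $N < 1/\epsilon$ rigorously, i.e.\ justifying that the total number of edges underlying a positive simple summary cycle (expanding each finite summary edge into its witnessing bounded-stack-height non-decreasing path) is at most $\ell^{(\ell+1)^2}\cdot 2\ell - 1$, or more precisely strictly less than $1/\epsilon$. This requires bounding the length of a maximum-weight non-decreasing path with additional stack height at most $d = \ell^2$: such a path lives in the finite graph $G_d$ from Lemma~\ref{lemm:BigWeightMeanPosCycle} and, being a maximum-weight (hence acyclic, after collapsing positive cycles which would have yielded an $\omega$-value) path, has length at most the number of vertices of $G_d$, which is exponential in $\ell$ but still bounded by $\ell^{(\ell+1)^2}$. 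Combining this per-edge length bound with the at most $\ell$ edges of a simple cycle gives the claimed total, and the arithmetic $\epsilon \cdot N < 1$ closes the argument. The delicate accounting is therefore combinatorial rather than conceptual: everything reduces to showing $\epsilon$ is below the reciprocal of the worst-case expanded cycle length.
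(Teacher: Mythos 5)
Your right-to-left direction, the monotonicity reduction to $\delta<\epsilon$, and your finite-weight case are all sound: the finite case is essentially the paper's own argument (a per-edge bound plus integrality of the weight of the cycle in $\Gr(\wps)$), transplanted from the summary-graph level to the level of expanded witness paths, and your length accounting ($N$ at most roughly $\ell\cdot\ell^{(\ell+1)^2}$ plus push edges, hence $\epsilon N<1$) goes through. The genuine gap is the $\omega$-edge case. Your justification --- ``adding a bounded amount per edge to a weight-unbounded family of paths keeps it weight-unbounded'' --- argues in the wrong direction: to pass from $\epsilon$ to $\delta<\epsilon$ you are \emph{subtracting} $(\epsilon-\delta)$ per edge, and for a fixed family of paths this can destroy unboundedness. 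Concretely, a family with lengths $L_n\to\infty$ and integer weights $w_n\approx-\tfrac{\epsilon+\delta}{2}L_n$ satisfies $w_n+\epsilon L_n\to+\infty$ but $w_n+\delta L_n\to-\infty$; so the general principle you invoke is false, and $s^{\epsilon}(q_1,\gamma,q_2)=\omega$ cannot be converted into $s^{\delta}(q_1,\gamma,q_2)=\omega$ by any weight-shift argument applied to the witnessing family itself.

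The claim you need is true, but it is exactly the nontrivial half of the paper's Lemma~\ref{lemm:DiffNotTooBig} (item~2), and it needs structure, not just unboundedness: in a WPS an $\omega$-value is always witnessed by a \emph{pumpable} object --- either a positive cycle of bounded additional stack height, or a positive pumpable pair obtained from Lemma~\ref{lem:HeavyPathWithBigDepthImpliesPosPump} (this is how the case $s_d<s_{d+1}$ is exploited). Such an object can be taken cycle-free, hence of length less than $1/\epsilon$, and then the very same integrality argument you use in the finite case shows that its $w$-weight is nonnegative; consequently its $(w+\delta)$-weight is strictly positive for every $\delta>0$, and pumping it yields $s^{\delta}(q_1,\gamma,q_2)=\omega$, so the cycle remains positive in $\Gr(\wps^{\delta})$. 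In other words, the $\omega$ case must be reduced to your finite-case arithmetic applied to a short pumpable witness, rather than dismissed as ``independent of the additive constant''; without this step the forward direction of the lemma is unproven.
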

\begin{proof}
The direction from right to left is trivial. 
For the converse direction we first prove the following lemma.

\begin{lem}\label{lemm:DiffNotTooBig}
Let $s^{\epsilon}$ be the summary function for $\wps^{\epsilon}$.
\begin{enumerate}
\item If $s^{\epsilon}(q_1,\gamma,q_2) \neq \omega$, then $s^{\epsilon}(q_1,\gamma,q_2) \leq s(q_1,\gamma,q_2) + \frac{1}{2\cdot \ell}$.
\item If $s^{\epsilon}(q_1,\gamma,q_2) = \omega$, then for every $\delta > 0$ we have
$s^{\delta}(q_1,\gamma,q_2) = \omega$, where $s^{\delta}$ is the summary function for $\wps^{\delta}$.
\end{enumerate}
\end{lem}
\begin{proof} We prove both the items below.
\begin{enumerate}
\item If $s^{\epsilon}(q_1,\gamma,q_2) \neq \omega$, then consider a
maximal-weight non-decreasing path with minimal additional stack height from 
$(\bot \gamma,q_1)$ to $(\bot \gamma,q_2)$ that has an additional stack height 
of at most $(|Q|\cdot|\Gamma|)^2=\ell^2$.
Note that this path does not contain positive cycles (since $s^{\epsilon}(q_1,\gamma,q_2) \neq \omega$). 
Hence there exists a path $\pi$ with the same weight and with stack height at 
most $\ell^2$ which does not contain any cycles.
Hence $|\pi| \leq \ell^{\ell^2}$, and therefore 
\[
w_{\wps^{\epsilon}}(\pi) = w_{\wps}(\pi) + \epsilon \cdot |\pi| 
\leq
w_{\wps}(\pi) + \epsilon \cdot \ell^{\ell^2} 
\leq w_{\wps}(\pi) + \frac{1}{2 \cdot \ell}.
\]
Since $s(q_1,\gamma,q_2) \geq w_{\wps}(\pi)$ (as $\pi$ is a 
non-decreasing path we have $s(q_1,\gamma,q_2) \geq w_{\wps}(\pi)$), 
we obtain the result of the first item.

\item
In order to prove the second item of the lemma, 
it is enough to prove that if an edge weight is $\omega$ in  
$(\wps^\epsilon)^*$ 
(where  $(\wps^\epsilon)^*$ is the WPS constructed with the 
function $(s^\epsilon)^*$),
then for every $\delta > 0$ the weight of the edge is also $\omega$ in 
the summary graph $\Gr(\wps^\delta)$  of $\wps^{\delta}$. 
We consider two cases to complete the proof.

\begin{itemize}
\item \emph{Case~1.}
If $s_\ell^{\epsilon}(q_1,\gamma,q_2) = \omega$, then the 
infinite graph $\wps^{\epsilon}$ has a positive cycle $C$ with stack height at most $\ell^2$,
and hence there exists a positive cycle $C'$ such that $|C'| \leq \ell^{\ell^2}$.
Towards a contradiction, let us assume that $w_{\wps}(C') < 0$.
As all the weights in $\wps$ are integers we get that $w_{\wps}(C') \leq -1$.
As $w_{\wps}(C') + \epsilon \cdot |C'| = w_{\wps^\epsilon}(C') \geq 0$ 
we get that $|C'| \geq \frac{1}{\epsilon}$ which is a contradiction.
Thus $w_{\wps}(C') \geq 0$, and hence for every $\delta > 0$ we have 
$w_{\wps^\delta}(C') > 0$. Thus $s_\ell^{\delta}(q_1,\gamma,q_2) = \omega$.

\item \emph{Case~2.} Otherwise, we have 
$s_{\ell+1}^\epsilon (q_1,\gamma,q_2) > s_{\ell}^\epsilon (q_1,\gamma,q_2)$.
Let $\pi$ be a path from $(\bot\gamma,q_1)$ to $(\bot\gamma,q_2)$ with additional 
stack height $\ell+1$ and weight $s_{\ell+1}^\epsilon (q_1,\gamma,q_2)$.
As $s_{\ell+1}^\epsilon (q_1,\gamma,q_2) > s_{\ell}^\epsilon (q_1,\gamma,q_2)$, 
by Lemma~\ref{lem:HeavyPathWithBigDepthImpliesPosPump} it follows that 
$\pi$ has a pumpable pair $(p_1,p_2)$ with $w_{\wps^\epsilon}(p_1) + w_{\wps^\epsilon}(p_2) > 0$.
If $p_1$ (resp. $p_2$) contains a positive cycle, then by the same arguments 
presented in the proof of the first item of the lemma this cycle will 
be positive also in $\wps^\delta$, for every $\delta > 0$, and hence 
$s^\delta(q_1,\gamma,q_2) = \omega$.
If $p_1$ (resp. $p_2$) contains a non-negative cycle, then we can remove the cycle 
and still obtain a pumpable pair with sum of weights positive.
Therefore w.l.o.g both $p_1$ and $p_2$ do not contain any cycles and thus
$|p_1|,|p_2| \leq \ell^{\ell+1}$.
Again by the same arguments presented in the proof of the first item
we obtain that $w_{\wps}(p_1) + w_{\wps}(p_2) \geq 0$ and 
hence for every $\delta > 0$ we have $w_{\wps^\delta}(p_1) + w_{\wps^\delta}(p_2) > 0$.
As $(p_1,p_2)$ is a positive pumpable pair in $\wps^\delta$
it follows that $s^\delta(q_1,\gamma,q_2) = \omega$. 
\end{itemize}
This completes the proof of the second item.
\end{enumerate}
We obtain the desired result of the lemma.
\hfill\qed
\end{proof}

We are now ready to prove Lemma~\ref{lem:EpsilonGreatherThenImpliesGe}.
Let us assume that there exists a good cycle in $\wps^\epsilon$.
Then by Lemma~\ref{lemm:GoodCycleIffPositiveCycle} there exists 
a positive simple cycle $C$ in the summary graph $\Gr(\wps^\epsilon)$.
We consider two cases:

\begin{itemize}
\item If $C$ contains an $\omega$-edge $e$, 
then by Lemma~\ref{lemm:DiffNotTooBig} for every $\delta > 0$
the same cycle in $\Gr(\wps^\delta)$ will also contain an $\omega$-edge. 
Therefore $C$ is a positive cycle also in $\Gr(\wps^\delta)$ and 
hence $\wps^\delta$ has a good cycle.

\item
Otherwise $C$ does not contain an $\omega$-edge.
Towards a contradiction assume that the weight of $C$ in $\Gr(\wps)$ is negative.
As the weights of $\wps$ are integers it follows that the weight of $C$ is at most $-1$.
By Lemma~\ref{lemm:DiffNotTooBig}, for every $e\in C$ we have
$w_{\Gr(\wps^\epsilon)}(e) \leq w_{\Gr(\wps)}(e) + \frac{1}{2 \cdot \ell}$; and 
thus 
$w_{\Gr(\wps^\epsilon)}(C) \leq w_{\Gr(\wps)}(C) + \frac{|C|}{2\cdot \ell}$.
As $C$ is a simple cycle (in $\Gr(\wps^\epsilon)$) we get that $|C| \leq \ell$,
and hence we have 
$w_{\Gr(\wps^\epsilon)}(C) \leq w_{\Gr(\wps)}(C) + \frac{1}{2} \leq -\frac{1}{2}$, 
which contradicts the assumption that $C$ is a positive cycle.
Therefore we have $w_{\Gr(\wps)}(C) \geq 0$, and therefore for every $\delta >0$ 
we get that $w_{\Gr(\wps^\delta)}(C) > 0$ and hence $\wps^\delta$ 
has a good cycle. 
\end{itemize}
The moreover part of the lemma follows from the fact that $C$ is an arbitrary positive cycle in $\Gr(\wps^\epsilon)$.
This completes the proof of the lemma. 
\hfill\qed
\end{proof}

\begin{thm}\label{thm:LeqIsInP}
Given a WPS $\wps$, whether there exists an infinite path 
$\pi$ such that $\LimInfAvg(\pi) \geq 0$ (or $\LimSupAvg(\pi) \geq 0$)
can be decided in polynomial time.
There exists a WPS $\wps$ such that there exists a path $\pi$ with 
$\LimInfAvg(\pi) = 0$ but for every ultimately periodic path $\pi$ we have 
both $\LimInfAvg(\pi) < 0$ and  $\LimSupAvg(\pi) < 0$.
\end{thm}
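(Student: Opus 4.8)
The plan has two independent parts. For the polynomial-time claim I would chain the four lemmas of this subsection so that the universal quantifier over $\epsilon$ collapses to a single weight shift. First note that a path $\pi$ satisfies $\LimInfAvg(\pi) > -\epsilon$ in $\wps$ exactly when the same path, viewed in $\wps^{\epsilon}$ (whose weight is $w+\epsilon$), satisfies $\LimInfAvg(\pi) > 0$, because adding $\epsilon$ to every weight raises every average by $\epsilon$. Hence, by Lemma~\ref{lem:GoodCycleImpliesGoodMP} applied to $\wps^{\epsilon}$, a path with $\LimInfAvg > -\epsilon$ exists iff $\wps^{\epsilon}$ has a good cycle. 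Feeding this into Lemma~\ref{rem:FromGToGe} gives the equivalence: some path of $\wps$ has $\LimInfAvg \geq 0$ iff $\wps^{\epsilon}$ has a good cycle for \emph{every} $\epsilon > 0$.

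The decision procedure is then obtained from Lemma~\ref{lem:EpsilonGreatherThenImpliesGe}, which is precisely what eliminates the quantifier: it suffices to test the single value $\epsilon^* = \frac{1}{\ell^{(\ell+1)^2}\cdot 2\ell}$ with $\ell = |\Gamma|\cdot|Q|$, so the whole problem reduces to deciding whether $\wps^{\epsilon^*}$ has a good cycle, which is in polynomial time by Lemma~\ref{lem:GoodCycleWithShortCutIsEasy}. The only point requiring care is complexity bookkeeping: although the denominator of $\epsilon^*$ is an exponentially large number, it is written with polynomially many bits, so the rational weights of $\wps^{\epsilon^*}$ have polynomial size and the good-cycle test stays polynomial. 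The $\LimSupAvg \geq 0$ case reduces to the very same query: if some path has $\LimSupAvg \geq 0$ then for every $\epsilon$ the shifted path has $\LimSupAvg > 0$ in $\wps^\epsilon$, which again yields a good cycle by Lemma~\ref{lem:GoodCycleImpliesGoodMP}; conversely a good cycle produces a path with $\LimInfAvg \geq 0$, hence with $\LimSupAvg \geq \LimInfAvg \geq 0$. Thus both objectives are answered by the single question whether $\wps^{\epsilon^*}$ has a good cycle.

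For the separating example I would exhibit a WPS over $\Gamma=\Set{\bot,A}$ with states $p$ (push mode) and $r$ (pop mode) and the edges $(p,\bot,p,\Push(A))$ and $(p,A,p,\Push(A))$ of weight $-1$, a turnaround $(p,A,r,\Skip)$ of weight $-1$, a pop $(r,A,r,\Pop)$ of weight $+1$, and a reset $(r,\bot,p,\Skip)$ of weight $0$. Let $B_d$ be the balanced excursion that pushes $A$ exactly $d$ times, turns around, pops back to $\bot$, and resets; it has weight $-1$ and length $2d+2$. The infinite path $\pi = B_1 B_2 B_3 \cdots$ has cumulative weight $-n$ after its first $n$ excursions over total length $\Theta(n^2)$, and a direct estimate of the prefix averages (including the deepest dip inside $B_{n+1}$) shows they are all negative and tend to $0$; hence $\LimInfAvg(\pi)=\LimSupAvg(\pi)=0$, giving the required value-$0$ path.

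The heart of the argument, and the step I expect to be most delicate, is that every \emph{ultimately periodic} path has strictly negative value. Such a path is $u v^\omega$ for a finite edge cycle $v$ that repeats forever, so its value is exactly $w(v)/|v|$ and it suffices to prove $w(v)<0$ for every repeatable $v$. For $v^\omega$ to be valid the net stack effect of $v$ must be non-negative, so $v$ performs at least as many pushes as pops; writing $w(v) = -(\#\text{push}-\#\text{pop}) - \#\text{turnaround} \le -\#\text{turnaround} \le 0$, a short case split (any pop forces a preceding turnaround, and a turnaround-free non-empty cycle consists only of weight-$-1$ pushes) gives strict negativity for every non-empty $v$. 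The essential asymmetry is that the only profitable move, the pop, can never form an infinitely repeatable cycle by itself since it would drain the stack, whereas $\pi$ harvests this profit only by interleaving ever-deeper, non-repeating excursions; this is exactly why finite (periodic) memory cannot attain the value realized by the infinite-memory path $\pi$.
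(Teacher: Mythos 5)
Your proposal is correct and takes essentially the same route as the paper: the identical chain of lemmas (Lemma~\ref{rem:FromGToGe}, Lemma~\ref{lem:EpsilonGreatherThenImpliesGe}, and the polynomial-time good-cycle test, with the same bookkeeping that $\epsilon^*$ has polynomially many bits) for the algorithmic part, and essentially the paper's own counterexample (Figure~\ref{fig:ex-wps}, differing only in the immaterial choice of weight $0$ rather than $-1$ on the reset edge) with the same analysis showing every repeatable cycle has strictly negative weight. The only cosmetic difference is that you dispatch $\LimSupAvg \geq 0$ by reducing it to the same good-cycle query via $\LimSupAvg \geq \LimInfAvg$, whereas the paper observes that its two lemmas extend verbatim to $\LimSupAvg$.
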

\begin{proof}
From Lemma~\ref{rem:FromGToGe} it follows that if there is a 
path $\pi$ such that $\LimInfAvg(\pi) \geq 0$, then for every $\epsilon_1>0$
there is a path $\pi'$ such that $\LimInfAvg(\pi') > -\epsilon_1$.
By Lemma~\ref{lem:EpsilonGreatherThenImpliesGe} it follows that it suffices
to check for $\epsilon$ (for the $\epsilon$ described by 
Lemma~\ref{lem:EpsilonGreatherThenImpliesGe}). 
Given a WPS $\wps$, the WPS $\wps^\epsilon$ 
can be constructed in polynomial time (as $\epsilon$ only has polynomial number of bits).
Then applying the polynomial-time algorithm to find good cycles (as given in 
the previous subsection) we answer the decision problems 
in polynomial time.
We observe that Lemma~\ref{rem:FromGToGe} and Lemma~\ref{lem:EpsilonGreatherThenImpliesGe}
also hold for $\LimSupAvg$ objectives, and thus the result also follows
for $\LimSupAvg$ objectives.

Example~\ref{ex:illustration1} shows \NewChange{that ultimately periodic witness paths might not exist in some cases.}
\hfill\qed
\end{proof}

We get the next corollary from Lemma~\ref{lem:EpsilonGreatherThenImpliesGe}.

\begin{cor}\label{cor:BoundOnGe}
Given a WPS $\wps$ with integer weights, 
let $\ell=|\Gamma|\cdot |Q|$, and $\epsilon = \frac{1}{\ell^{(\ell+ 1)^2} \cdot 2\cdot \ell}$.
Then there is a path $\pi'$ with $\LimInfAvg(\pi') > 0$ if and only there is a path 
$\pi$ with $\LimInfAvg(\pi) \geq \epsilon$.
\end{cor}
\begin{proof}
Towards a contradiction we assume that there is a path $\pi'$ such that $\LimInfAvg(\pi') > 0$ 
but for all paths $\pi$ we have $\LimInfAvg(\pi) < \epsilon$.
Consider the WPS $\wps^{-\epsilon}$ obtained by subtracting $\epsilon$ from all the weights of $\wps$.
Then in  $\wps^{-\epsilon}$ there are no good cycles, that is, for all the cycles $C$ in 
$\Gr(\wps^{-\epsilon})$ the sum of the weights of the cycle $C$ is negative 
(note that if $\Gr(\wps^{-\epsilon})$ has a cycle with non-negative sum of weights then we could 
obtain a path $\pi$ with $\LimInfAvg(\pi) \geq \epsilon$).
We construct a WPS $(-\wps)^\epsilon$ by multiplying all the weights of $\wps^{-\epsilon}$ by $-1$.
Since all the cycles in $\Gr(\wps^{-\epsilon})$ have negative sum of weights it follows that all the cycles in $\Gr((-\wps)^\epsilon)$ are good.
Hence, by Lemma~\ref{lem:EpsilonGreatherThenImpliesGe} it follows that for every $\delta > 0$, all the cycles in $\Gr((-\wps)^\delta)$ are good.
Therefore, for every $\delta > 0$ we get that all the cycles in $\Gr(\wps^{-\delta})$ have negative sum of weights 
and it follows that for all paths $\pi$ in $\wps^{-\delta}$ we have $\LimInfAvg(\pi) <0$.
Thus, we conclude that for every $\delta > 0$ and for every infinite path $\pi$ we have $\LimInfAvg(\pi) \leq \delta$, 
which contradicts the assumption that there exists a path $\pi'$ with $\LimInfAvg(\pi') > 0$ 
(since surely, for some $\delta > 0$ it must hold that $\LimInfAvg(\pi') > \delta$).
\hfill\qed 
\end{proof}

\begin{remark}
In Corollary~\ref{cor:BoundOnGe} we show a bound $\epsilon$ such that if there 
is a path $\pi'$ with $\LimInfAvg(\pi') > 0$ then there is a path $\pi$ such 
that $\LimInfAvg(\pi) \geq \epsilon >0$, where the denominator of $\epsilon$ 
is exponential in the input WPS (thus can be expressed in polynomially many 
bits). 
A matching exponential lower bound is also easy to obtain and we describe the main 
ideas: 
consider the language that consists of a single string $a^{2^n}$, and it is 
well-known that a pushdown automata with $O(n)$ states can accept the 
language. 
Consider a WPS obtained from the pushdown automata that after reaching the 
accepting state receives weight~1, and all other weights are~0, and from the 
accepting state returns back to the start state.
In such a WPS with $O(n)$ states there is a path $\pi'$ such that 
$\LimInfAvg(\pi')>0$, but for all paths $\pi$ we have 
$\LimInfAvg(\pi) \leq \frac{1}{2^n}$. 
\end{remark}

\begin{cor}\label{cor:Ult-Per-Check}
Given a WPS $\wps$, whether there exists an ultimately periodic 
infinite path $\pi$ such that $\LimInfAvg(\pi) \geq 0$ (or $\LimSupAvg(\pi) \geq 0$)
can be decided in polynomial time.
\end{cor}
\begin{proof}
We first observe that for any ultimately periodic path $\pi$ we have 
$\LimInfAvg(\pi) = \LimSupAvg(\pi)$.
Hence, it is enough to prove the assertion for the $\LimInfAvg(\pi) \geq 0$ objective.
The proof will immediately follow from the next claim:
There exists an ultimately periodic path $\pi = \pi_0 (\pi_1)^\omega$ with $\LimInfAvg(\pi) \geq 0$ 
if and only if the summary graph $\Gr(\wps)$ contains a (reachable) cycle with non-negative sum of 
weights.
We first prove the claim.
The proof for the direction from right to left is straightforward and is as follows.
If $\Gr(\wps)$ has a cycle with non-negative sum of weights, then there exists a non-decreasing cyclic path $\pi_1$ 
that begins in configuration $(\alpha \gamma, q)$ with $w(\pi_1) \geq 0$.
Since the cycle is reachable, then there is a path $\pi_0$ that begins in $(\bot,q_0)$ and ends in $(\alpha \gamma, q)$. 
Since $\pi_1$ is non-decreasing, the path $\pi_0 (\pi_1)^\omega$ is a valid infinite path, and since $w(\pi_1) \geq 0$ 
we get that $\LimInfAvg(\pi_0 (\pi_1)^\omega) = \frac{w(\pi_1)}{|\pi_1|} \geq 0$.
To prove the converse direction, we assume that all the cycles in $\Gr(\wps)$ have negative sum of weights and 
show that for all ultimately periodic paths $\pi$ we have $\LimInfAvg(\pi) <0$.
Let $\pi = \pi_0 (\pi_1)^\omega$ be a valid path in $\wps$, and 
let $c_1,c_2,\dots$ be the configurations of $\pi$ and 
$m_1,m_2,\dots$ be the sequence of infinitely many local minima in $\pi$.
Let $(\gamma,q)$ be a stack symbol and a state such that $|\{i \mid \Top(m_i) = (\gamma,q)\}| = \infty$.
Since $\pi$ is ultimately periodic, then for some index $i$ we get that for every $j\in\Nat$ it holds that $\Top(c_{i+|\pi_1|\cdot j}) = (\gamma,q)$.
Hence, if we denote by $W_{(\gamma,q)}$ the weight of the maximal-weight non-decreasing path from $(\gamma,q)$ to $(\gamma,q)$ 
we get that $\LimInfAvg(\pi = \pi_0 (\pi_1)^\omega) \leq \frac{W_{(\gamma,q)}}{|\pi_1|}$.
Since all the cycles in $\Gr(\wps)$ have negative sum of weights we get that $W_{(\gamma,q)} < 0$, 
and therefore $\LimInfAvg(\pi) < 0$.
Hence, the polynomial-time algorithm is to construct $\Gr(\wps)$ and to detect the existence of a non-negative cycle.
\hfill\qed
\end{proof}

\subsection{Mean-payoff objectives with stack boundedness}
In this section we consider WPSs with mean-payoff objectives
along with the \emph{stack boundedness} condition that 
requires the height of the stack to be bounded.
An infinite path $\pi =\atuple{c_1, c_2, \dots c_i \dots}$ 
is a \emph{stack bounded path} if there exists $n\in\Nat$ 
such that $|\alpha_i| \leq n$ for every $i\in\Nat$ 
(recall that $\alpha_i$ is the stack string of configuration $c_i$).

\begin{thm}\label{thm:StackBoundnessInPTIME}
Given a WPS $\wps$, the following problems can be solved in 
polynomial time.
\begin{enumerate}
\item Does there exist a stack bounded infinite path $\pi$ such that $\LimInfAvg(\pi) \bowtie 0$ 
(resp. $\LimSupAvg(\pi) \bowtie 0$), for $\bowtie \in \Set{\geq,>}$?
\item Is $\sup\Set{\LimInfAvg(\pi) \mid \mbox{$\pi$ is a stack bounded path}} \geq 0$
 (resp. $\sup\Set{\LimSupAvg(\pi) | \mbox{$\pi$ is a stack bounded path}} \geq 0$)?
\end{enumerate}
\end{thm}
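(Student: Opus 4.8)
The plan is to reduce every problem in the statement to a single test on the already-computed summary function: whether some reachable top symbol carries a \emph{self-loop} of the correct sign. Call a good cycle \emph{bounded} if it returns to the same stack, i.e.\ a non-decreasing path from $(\alpha\gamma,q)$ back to $(\alpha\gamma,q)$ (zero net push); by definition its maximal weight is exactly the diagonal entry $s(q,\gamma,q)$ of the summary function. The point of stack boundedness is that iterating a bounded good cycle keeps the stack height inside a fixed window, hence yields a stack bounded path, whereas a good cycle with a strictly positive net push (the generic witness of Lemma~\ref{lem:GoodCycleImpliesGoodMP}) grows the stack unboundedly, as illustrated by the non-stack-bounded witness constructed in Theorem~\ref{thm:LeqIsInP}. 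Separating these two kinds of cycle is what the whole argument is about.

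The first and main step is the characterization for item~1: for every $\bowtie\in\Set{\geq,>}$ and for both objectives, there is a stack bounded path $\pi$ with $\LimInfAvg(\pi)\bowtie 0$ (resp.\ $\LimSupAvg(\pi)\bowtie 0$) iff there is a reachable $(q,\gamma)\in Q\times\Gamma$ with $s(q,\gamma,q)\bowtie 0$ (reading $\omega\bowtie 0$ as true). For the ($\Leftarrow$) direction I would reach a configuration $(\alpha\gamma,q)$ (pushdown reachability) and then iterate forever a non-decreasing cycle of weight $s(q,\gamma,q)$, available by Remark~\ref{rem:IndependentOfAlpha} over the base $\alpha\gamma$ and realizable with bounded additional height by Lemma~\ref{lemm:AlmostS}; the resulting path is stack bounded and meets the threshold. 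For ($\Rightarrow$) I would use that a stack bounded path of bound $n$ lives inside the \emph{finite} graph of configurations of height at most $n$, on which both $\LimInfAvg$ and $\LimSupAvg$ are optimized by iterating a single reachable cycle; the threshold therefore forces a reachable cycle of mean $\bowtie 0$, and rotating that cycle to its lowest configuration turns it into a non-decreasing cycle whose base vertex $(q',\gamma')$ satisfies $s(q',\gamma',q')\bowtie 0$.

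The second step treats the supremum (item~2), where I expect the genuine subtlety. The supremum can equal $0$ without being attained: there may be reachable bounded good cycles whose mean weight tends to $0$ from below while each one is strictly negative (a cycle climbs arbitrarily high and descends weight-neutrally, paying a single $-1$), so the naive diagonal test $s(q,\gamma,q)\geq 0$ is too weak for item~2 even though it is correct for item~1. The plan is to mirror the $\epsilon$-perturbation of the non-strict subsection: a cycle has mean weight $>-\epsilon$ iff it is a bounded good cycle of $\wps^{\epsilon}$ (weights $w+\epsilon$), so by the finite-graph argument above the supremum is $\geq 0$ iff for every $\delta>0$ the system $\wps^{\delta}$ has a reachable bounded good cycle. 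I would then prove the diagonal analogue of Lemma~\ref{lem:EpsilonGreatherThenImpliesGe} — its supporting Lemma~\ref{lemm:DiffNotTooBig} is phrased directly in terms of the summary function and transfers verbatim — namely that for the explicit $\epsilon$ of Lemma~\ref{lem:EpsilonGreatherThenImpliesGe} one has $s^{\epsilon}(q,\gamma,q)>0$ for some reachable $(q,\gamma)$ iff $\wps^{\delta}$ has such a cycle for all $\delta>0$. This collapses item~2 to one test on $\wps^{\epsilon}$.

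Finally, all tests are polynomial: $s$ and $s^{\epsilon}$ are computable in polynomial time by Lemma~\ref{lem:ShortCutIsComputableInPoly} ($\epsilon$ has polynomially many bits), pushdown reachability of a pair $(q,\gamma)$ is in PTIME, and comparing the $|Q|\cdot|\Gamma|$ diagonal entries against $0$ is immediate; hence both items are in PTIME. The part I expect to be most delicate is the ($\Rightarrow$) direction together with item~2: one must argue that optimal bounded behaviour is always witnessed by a cycle returning to the same stack height and cleanly separate this from good cycles with positive net push (which satisfy the mean-payoff objective but are not stack bounded). The finite-graph plus lowest-configuration rotation is what enforces ``same height'', and the integrality of $w$ is what makes the perturbation threshold $\epsilon$ simultaneously correct and polynomially small.
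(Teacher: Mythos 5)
Your proposal is correct and follows essentially the same route as the paper: item~1 is reduced to checking (reachable) self-loops of the summary graph, i.e.\ the diagonal entries $s(q,\gamma,q)$, which is exactly the paper's Lemma~\ref{lemm:OnlyCheckSelfLoop}, and item~2 is handled by the same $\epsilon$-perturbation of Lemma~\ref{lem:EpsilonGreatherThenImpliesGe}, transferring the positive self-loop from $\Gr(\wps^{\epsilon})$ to every $\Gr(\wps^{\delta})$ via the argument of Lemma~\ref{lemm:DiffNotTooBig}, which is the paper's Lemma~\ref{lemm:SupMP}. Your extra care about reachability and the rotate-to-lowest-configuration argument just fills in the directions the paper calls ``straightforward''.
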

\begin{proof}
The results for each item are proved with a lemma below.

\begin{lem}\label{lemm:OnlyCheckSelfLoop}
There exists a stack bounded infinite path $\pi$ in  $\wps$ 
such that $\LimSupAvg(\pi) > 0$ (resp. $\LimSupAvg(\pi) \geq 0$) iff the 
summary graph $\Gr(\wps)$ has a vertex with self-loop that has a positive 
(resp. non-negative) weight.
\end{lem}
\begin{proof}
If there exists a stack bounded infinite path $\pi$ in $\wps$ 
such that $\LimSupAvg(\pi) > 0$ (resp. $\LimSupAvg(\pi) \geq 0$), 
then it contains a cycle that begins and ends at configuration 
$(\alpha\gamma, q)$ with positive (resp. non-negative) weight.
Hence in the summary graph $\Gr(\wps)$ the vertex $(\gamma,q)$ will have a self-loop 
with positive (resp. non-negative) weight.
Conversely, if there is a (reachable) vertex $(\gamma,q)$ in $\Gr(\wps)$ with a positive (resp. non-negative) weight self-loop, then for some stack string $\alpha$ there is a non-decreasing path $\pi_1$ from $(\alpha \gamma,q)$ to $(\alpha \gamma,q)$ with $w(\pi_1) > 0$ (resp. $w(\pi_1) \geq 0$) and there is a path $\pi_0$ from $(\bot,q_0)$ to $(\alpha \gamma,q)$.
Hence the stack height of the path $\pi=\pi_0 (\pi_1)^\omega$ is bounded by $\ASH(\pi_0) + \ASH(\pi_1)$, and 
$\LimSupAvg(\pi) >0$ (resp. $\LimSupAvg(\pi) \geq 0$). 
\hfill\qed
\end{proof}

It is straightforward to verify that Lemma~\ref{lemm:OnlyCheckSelfLoop} 
also holds for $\LimInfAvg(\pi)$ objectives 
(we simply replace every occurrence of $\LimSupAvg$ by $\LimInfAvg$ and 
the same argument holds).
This gives us the first item of the theorem.
The next lemma proves the last item of the theorem.

\begin{lem}\label{lemm:SupMP}
Let $\epsilon$ be the constant from Lemma \ref{lem:EpsilonGreatherThenImpliesGe}.
Then there exists a stack bounded infinite path $\pi$ such that 
$\LimInfAvg(\pi) > -\epsilon$ iff $\sup\Set{\LimInfAvg(\pi) \mid \mbox{$\pi$ is a stack bounded path}} \geq 0$.
\end{lem}
\begin{proof}
The direction from right to left is immediate (clearly, if all paths $\pi$ have $\LimInfAvg(\pi) \leq -\epsilon$, 
then $\sup\Set{\LimInfAvg(\pi) \mid \mbox{$\pi$ is a stack bounded path}} < 0$).
In order to prove the other direction let us assume that there exists a stack bounded infinite path $\pi$ such that $\LimInfAvg(\pi) > -\epsilon$.
Hence by Lemma~\ref{lemm:OnlyCheckSelfLoop} the summary graph $\Gr(\wps^\epsilon)$ contains 
a self-loop for vertex $(\gamma, q)$ with positive weight.
By the same argument used in the proof of Lemma~\ref{lem:EpsilonGreatherThenImpliesGe} 
it follows that for every $\delta > 0$ the self-loop of vertex $(\gamma,q)$ will have a positive weight 
in graph $\Gr(\wps^\delta)$.
Hence for every $\delta > 0$ there exists a stack bounded path $\pi_\delta$ such that 
$\LimInfAvg(\pi_\delta) > -\delta$, which implies that $\sup\Set{\LimInfAvg(\pi) \mid 
\mbox{$\pi$ is a stack bounded path}} \geq 0$. \pfbox
\end{proof}
The proof of Lemma~\ref{lemm:SupMP} straightforwardly extends to $\LimSupAvg(\pi)$ objectives 
(we simply replace every occurrence of $\LimInfAvg$ by $\LimSupAvg$ and the same argument holds),
and hence we have the desired result of the theorem.
\hfill\qed
\end{proof}

Thus we have the following result summarizing the computational 
complexity.

\begin{thm}\label{thrm:wps-final-all-one}
Given a WPS $\wps$, the following questions can be solved in polynomial time:
(1)~Whether there exists a path $\pi$ in $\Phi \bowtie 0$, 
where $\Phi \in \Set{\LimSupAvg,\LimInfAvg}$ and $\bowtie \in \Set{\geq, >}$; 
and
(2)~whether there exists a path $\pi$ 
in $\Phi \bowtie 0$ such that $\pi$ is stack bounded, where 
$\Phi \in \Set{\LimSupAvg,\LimInfAvg}$ and $\bowtie \in \Set{\geq, >}$.
\end{thm}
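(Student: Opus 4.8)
The plan is to observe that this theorem is a consolidation of the three results already established in this section, so the proof amounts to checking that every one of the eight instances in the statement is covered by some earlier theorem and that all the underlying constructions compose in polynomial time. No new mathematical argument is needed.

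First I would dispatch part~(1) by splitting on the inequality type. For the strict case, where $\bowtie$ is $>$, the claim is precisely the decidability half of Theorem~\ref{thm:MPSupMPInfInPTIME}, which gives a polynomial-time procedure deciding the existence of a path $\pi$ with $\LimInfAvg(\pi) > 0$, and equivalently one with $\LimSupAvg(\pi) > 0$; the equivalence of the two objectives is supplied by the good-cycle characterisation of Lemma~\ref{lem:GoodCycleImpliesGoodMP}. For the non-strict case, where $\bowtie$ is $\geq$, the claim is exactly Theorem~\ref{thm:LeqIsInP}, whose algorithm reduces the $\geq 0$ question to a strict-inequality question on the perturbed system $\wps^{\epsilon}$ for the explicit, polynomially representable $\epsilon$ of Lemma~\ref{lem:EpsilonGreatherThenImpliesGe}, and then invokes the good-cycle test. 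Thus all four combinations of $\Phi \in \Set{\LimSupAvg,\LimInfAvg}$ and $\bowtie \in \Set{\geq,>}$ in part~(1) are settled. For part~(2), I would invoke Theorem~\ref{thm:StackBoundnessInPTIME} directly: its first item already asserts that the stack-bounded decision problems lie in PTIME for both $\Phi$ and both $\bowtie$, the reason being that, by Lemma~\ref{lemm:OnlyCheckSelfLoop}, a stack-bounded witness exists iff the summary graph $\Gr(\wps)$ carries a self-loop of positive (resp. non-negative) weight, which is checkable in polynomial time once the summary function $s$ is in hand (itself computable in polynomial time by Lemma~\ref{lem:ShortCutIsComputableInPoly}).

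Since there is no genuine difficulty here, the only point warranting care is the compositionality bookkeeping: confirming that the perturbation $\wps^{\epsilon}$, the summary function $s$, and the summary graph $\Gr(\wps)$ are all constructible in polynomial time and chain together without blow-up. I expect the main (and only minor) obstacle to be verifying that running the $\geq$ procedure of Theorem~\ref{thm:LeqIsInP} on the $\epsilon$-perturbed weights preserves a polynomial encoding size; this is immediate because the $\epsilon$ of Lemma~\ref{lem:EpsilonGreatherThenImpliesGe} has only polynomially many bits, so the perturbed weights and every derived object remain of polynomial size. Assembling these observations yields the theorem.
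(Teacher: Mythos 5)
Your proposal is correct and matches the paper exactly: the paper gives no separate proof for this theorem, stating it only as a summary ("Thus we have the following result summarizing the computational complexity") of Theorem~\ref{thm:MPSupMPInfInPTIME} (strict inequalities), Theorem~\ref{thm:LeqIsInP} (non-strict inequalities), and Theorem~\ref{thm:StackBoundnessInPTIME} (stack boundedness), which is precisely the case analysis you carry out. Your added bookkeeping remark about the polynomial bit-size of $\epsilon$ and the polynomial-time constructibility of the summary function and summary graph is consistent with what those earlier proofs establish.
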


\def\QI{{Q_{I}}}
\def\QII{Q_{II}}
\def\VI{{V_{I}}}
\def\VII{V_{II}}
\newcommand{\wpg}{\mathcal{G}}
\newcommand{\wpa}{\mathcal{A}}
\newcommand{\rev}{\mathsf{rev}}
\newcommand{\ch}{\mathit{ch}}
\newcommand{\Gad}{\mathsf{Gad}}

\section{Mean-Payoff Pushdown Games}\label{sec:games}
In this section we consider pushdown games with mean-payoff objectives.
We will show that the problem of deciding the existence of a strategy 
(or even a finite-memory strategy) to ensure mean-payoff objectives in 
pushdown games is undecidable.
The undecidability results will be obtained by a reduction from the 
\emph{universality problem} of weighted automata, which is known 
to be undecidable~\cite{Krob,ABK11}.
We start with the definition of weighted pushdown games.

\smallskip\noindent{\bf Weighted pushdown games (WPGs).}
A \emph{weighted pushdown game (WPG)} $\wpg=\atuple{\wps,(Q_1,Q_2)}$ 
consists of a WPS $\wps$ and a partition $(Q_1,Q_2)$ of the state
space $Q$ of $\wps$ into player-1 states $Q_1$ and player-2 states
$Q_2$.
A WPG defines an infinite-state game graph  
$(\ov{V},\ov{E})$ with partition $(\ov{V}_1,\ov{V}_2)$ of the vertex set 
$\ov{V}$, where $\ov{V}$ is the set of configurations of $\wps$, 
and $\ov{V}_1 = \Set{(\alpha, q) \in \ov{V} \mid q\in Q_1}$, 
$\ov{V}_2 = \Set{(\alpha, q) \in \ov{V} \mid q\in Q_2}$ 
and $\ov{E}$ is obtained from the transitions of $\wps$.
The \emph{initial vertex} is the configuration $(\bot,q_0)$.

\smallskip\noindent{\em Plays and strategies.}
A \emph{play} on $\wpg$ (or equivalently on the infinite-state game graph) is 
played in the following way: a pebble (or token) is placed on the initial vertex; 
and in every round, if the pebble is currently on player-$1$ vertex (a vertex
in $\ov{V}_1$), then he chooses an edge to follow, and moves the pebble accordingly;
and if the current vertex is a player-2 vertex, he does likewise.
The process goes on forever and generates an infinite \emph{play}
(an infinite path $\pi$ in the infinite graph of the game).
A \emph{strategy} for player~1 is a recipe to extend plays; 
formally, a strategy for player~1 is a function 
$\tau : \ov{V}^* \times \ov{V}_1 \to \ov{V}$ such that for all 
$w \in \ov{V}^*$ and $v \in \ov{V}_1$ we have 
$(v,\tau(w \cdot v)) \in \ov{E}$.
Equivalently a strategy for player~1 given a history of configurations (i.e., the sequence 
of configurations of the finite prefix of a play) ending in a player-1 state, 
chooses the successor configuration according to the transitions of $\wps$.
A play $\pi = v_1 v_2 \dots$ is \emph{consistent} with a strategy $\tau$ 
if for every $v_i \in \ov{V}_1$ we have $v_{i+1} = \tau(v_1 v_2 \dots v_i)$, i.e., 
the play is possible according to the strategy $\tau$.
The definition of player-2 strategies is analogous.
Informally a strategy can be viewed as a transducer that takes as 
input the sequence of transitions, and outputs the transitions to be taken.
A strategy is called a \emph{finite-memory strategy} if there is a finite-state transducer to 
implement the strategy.
Formally, a finite path in a WPG $\wpg$ with WPS $\wps$ starts in the initial 
configuration and is a finite sequence of transitions in $\wps$, and a  
finite-memory strategy is a finite-state transducer that has the set of 
transitions as both the input and the output alphabet.
Thus given a finite sequence of transitions, the strategy as the transducer 
outputs the next transition to be played.

\smallskip\noindent{\em Winning strategies.}
We will consider mean-payoff objectives, as already defined in the 
previous section.
A player-1 strategy $\tau$ is a \emph{winning strategy} if for every play $\pi$ consistent with $\tau$ 
we have $\LimInfAvg(\pi) \geq 0$ (resp. $\LimInfAvg(\pi) > 0, \LimSupAvg(\pi) \geq 0, \LimSupAvg(\pi) > 0$).
In other words, a winning strategy for player~1 ensures the mean-payoff objective against
all strategies of player~2.
We are interested in the question of existence of a winning strategy, 
and the existence of a finite-memory winning strategy for player~1 
in WPGs with mean-payoff objectives.
\begin{comment} 
Player-$i$ the \emph{winner of the game} if it has a winning strategy.

A strategy $\tau : V^* \times \VI \to V$ is a \emph{positional strategy} if $\tau(\rho v) = \tau(v)$ for every $v\in \VI$ and $\rho \in V^*$.

\Heading{Finite Memory Strategies}
\textbf{YARON TO KRISH - I am not sure that it is worth refereeing to finite memory strategies. It takes a lot of space, and I am not sure these definitions are standard}

An \emph{alphabeted mean-payoff pushdown game} is a tuple
$\Automat{A} = (Q = \QI \cup \QII, q_0 \in Q, \Gamma,\bot \in \Gamma,
\Sigma,
E : (Q\times \Gamma \times \Sigma) \to (Q \times \Com(\Gamma)), w:E\to\Z)$.

Where $\Sigma$ is a finite alphabet.

The edges of $G_{\Automat{A}} = (V = \VI \cup \VII, E)$ are labeled with letters from $\Sigma$ according to the transitions of $\Automat{A}$.
W.l.o.g the outdegree of every vertex is $|\Sigma|$ and the graph is bipartite.

A \emph{play} is played as follows.
At every round, player-1 selects a letter $\sigma_1 \in \Sigma$ and player-2 responds with a letter $\sigma_2 \in \Sigma$.

An infinite play forms an infinite word $\sigma_1 \sigma_2 \dots$
An infinite word forms a path (only one) $\pi$.
Player-1 wins the play if $\LimInfAvg(\pi) \geq 0$.

A \emph{player-1} strategy is a function $\tau : (\Sigma \times \Sigma)^* \to \Sigma$.

A strategy $\tau$ is a \emph{finite memory strategy} if it can be implemented by a transducer.
\subsection{Weighted Finite Automata}
\end{comment}
Our undecidability results for WPGs with mean-payoff objectives will
be obtained by a reduction from the non-universality problem of weighted 
finite automata.
We define the problem below.

\smallskip\noindent{\bf Weighted finite automata (WFA).} 
A \emph{weighted finite automaton (WFA)} is a tuple 
$\wpa = \atuple{\Sigma,Q,q_0,\Delta,w:\Delta\to\Z}$,
where $\Sigma$ is a finite input alphabet,
$Q$ is a finite set of states,
$\Delta \subseteq Q \times \Sigma \times Q$ is a transition relation,
$w:\Delta\to\Z$ is a weight function 
and $q_0 \in Q$ is the initial state.
For a word $\rho = \sigma_1 \sigma_2 \dots \sigma_n$, 
a \emph{run} of $\wps$ on $\rho$ is a sequence $r = r_0 r_1 \dots r_n \in Q^+$, 
where $r_0 = q_0$, and for all $1 \leq i \leq n$ we have 
$d_i = (r_{i-1},\sigma_i,r_i) \in \Delta$.
The weight of the run $r$ is $w(r) = \sum_{i=1}^n w(d_i)$.
Since the automaton is non-deterministic there maybe several 
runs for a word, and the weight of a finite word  
$\rho \in \Sigma^*$ over $\wps$ is the minimal weight over all runs 
on $\rho$, i.e.,
$L_{\wpa}(\rho) = \min\Set{w(r) \mid \mbox{$r$ is a run of $\wps$ on $\rho$}}$.
The \emph{non-universality} problem asks, given $\nu\in \Z$, whether there exists a word 
$\rho\in\Sigma^*$ for which $L_{\wpa}(\rho) \geq \nu$?; 
equivalently, is it not the case that for every $\rho\in\Sigma^*$ we have 
$L_{\wpa}(\rho) \leq \nu - 1$?

\begin{thm}[\cite{ABK11}]\label{thm:WeightedAutoUndec}
The non-universality problem is undecidable for WFA, \NewChange{even in the special case where the weight function is
$w :\Delta\to\Set{-1,0,1}$, the automaton has unique initial state and the threshold is $0$}. 
\end{thm}

Informally, given a WFA $\wpa$ we will construct a WPG
in such way that in the first rounds player-1 fills the stack with 
letters that construct a word $\rho$ of $\wpa$, and then player-2 
simulates the WFA's minimal run on $\rho$ and then the game returns to the 
initial state.
If for all $\rho\in\Sigma^*$ we have $L_{\wpa}(\rho) \leq 0$, 
then the mean-payoff of the play will be at most $0$,
otherwise, there exists a word $\rho\in\Sigma^*$ such that 
$L_{\wpa}(\rho) > 0$, and then by playing according to $\rho$, player-1 can 
ensure a positive mean-payoff.

\smallskip\noindent{\bf Reduction: WFA to WPGs.}
We first prove that WPGs are undecidable for $\LimInfAvg(\pi) > 0$ 
and $\LimSupAvg(\pi) > 0$ objectives.
This proof will immediately show the undecidability also for 
$\LimInfAvg(\pi) \geq 0$ and $\LimSupAvg(\pi) \geq 0$ objectives, 
as $\LimInfAvg(\pi) \geq 0$ (resp.  $\LimSupAvg(\pi) \geq 0) $ is dual to the
objective of player~2 when the objective of player~1 is $\LimSupAvg(\pi) > 0$ 
(resp. $\LimInfAvg(\pi) > 0$).

\smallskip\noindent{\em Reduction.}
The reduction from the non-universality problem of a weighted automaton is as follows.
Given a WFA $\wpa = \atuple{\Sigma,Q,q_0,\Delta,w:\Delta\to\Set{-1,0,1}}$ 
we construct a WPG $\wpg$ with the aid of \emph{five} gadgets, and we describe
the gadgets below.
WLOG we assume that there is a special symbol $\$$ that does not belong to 
$\Sigma$.

\begin{enumerate}

\item \emph{Gadget~1.} 
The first gadget contains only one state, namely $q_{{\$}}$, which is a 
player-1 state.
The state has two possible transitions.
In the first transition it pushes $\$$ into the stack and remains in the same 
state.
In the second transition it pushes $\$$ and goes to the second gadget.
All the weights in this gadget are $-10$.

\item \emph{Gadget~2.}
The second gadget 
also contains one state, namely $q_{{\Sigma}}$, which is also a player-1 state.
For every $\sigma \in \Sigma$ the state has a transition that pushes 
$\sigma$ into the stack and remains in the same state.
In addition there is one more transition, which leads to  the third gadget
keeping the stack unchanged with weight~0.
All the weights in this gadget (other than the skip transition) are $-1$.
Informally, in this gadget player~1 needs to construct a word $\rho$ such that 
the reverse of $\rho$ has value at least~1 in $\wpa$.
For a word $\rho$, let $\rev(\rho)$ denote the reverse of the word.

\begin{itemize}
\item In this gadget player~1 should construct a word $\rho \in\Sigma^*$ for which $L_{\wpa}(\rev(\rho)) \geq 1$.
\item The WPG $\wpg$ will be constructed in such way that player~1 must play 
in a way so that the number of $\$$ in the stack will be greater than the 
number of letters from $\Sigma$ to ensure the mean-payoff objectives.
\end{itemize}

\item \emph{Gadget~3.}
The third gadget is the \emph{choice} gadget with only 
one player-2 state $q_{\ch}$, which either leads to 
the fourth gadget or the fifth gadget.
The weights of the transitions are $0$ and the stack is not changed.
Informally, player-2 should go to the fifth gadget if the word that player~1 
pushed into the stack has non-positive weight, and 
should go to the fourth gadget if the number of $\$$ symbols in the stack is 
less than the number of symbols from $\Sigma$.

\item \emph{Gadget~4.}
The fourth gadget consists of only one player-2 state
$q_{<\$}$ (to denote that there is not enough $\$$ symbols). 
It has a transition $\Pop(\sigma)$ with $0$ weight, for all 
$\sigma \in \Sigma$; and a transition $\Pop(\$)$ with $+11$ weight.
If the stack is empty, then there is a transition to the initial state.

\begin{figure}[!tb]
\begin{center}
\begin{picture}(65,65)(-30,-30)
\node[Nmarks=i, iangle=180](n0)(-30,12){$q_{\$}$}
\node[Nmarks=n](n1)(10,12){$q_{\Sigma}$}
\node[Nmarks=n, Nmr=0](n2)(50,12){$q_{\ch}$}
\node[Nmarks=n, Nmr=0](n3)(50,36){$q_{<\$}$}
\node[Nmarks=n, Nw=20, Nh=20, Nmr=1](n4)(50,-12){$\Gad_5$}

\drawloop[ELside=l,loopCW=y, loopdiam=6, loopangle=90](n0){$\_,\Push(\$),-10$}
\drawloop[ELside=l,loopCW=y, loopdiam=6, loopangle=270](n1){$\_,\Push(\sigma),-1; \sigma\in \Sigma$}
\drawedge[ELpos=50, ELside=l, ELdist=0.5](n0,n1){$\_,\Push(\$),-10$}
\drawedge[ELpos=50, ELside=l, ELdist=0.5](n1,n2){$\_,\Skip,0$}
\drawedge[ELpos=50, ELside=r, ELdist=0.5](n2,n3){$\_,\Skip,0$}
\drawedge[ELpos=30, ELside=l, ELdist=0.5](n2,n4){$\_,\Skip,0$}

\drawloop[ELside=l,loopCW=y, loopdiam=6, loopangle=90](n3){$\sigma,\Pop(\sigma),0; \sigma\in \Sigma$}
\drawloop[ELside=l,loopCW=y, loopdiam=6, loopangle=0](n3){$\$,\Pop(\$),11$}
\drawedge[ELpos=50, ELside=r, curvedepth=-10, ELdist=0.5](n3,n0){$\bot,\Skip,0$}
\drawedge[ELpos=50, ELside=l, curvedepth=10, ELdist=0.5](n4,n0){$\bot,\Skip,0$}
\end{picture}
\caption{The WPG $\wpg$ from WFA $\wpa$.}\label{fig:undec_1}
\end{center}
\end{figure}

\noindent A pictorial description of the first four gadgets is shown in 
Figure~\ref{fig:undec_1}, where $\bigcirc$ denotes player-1 states, 
$\Box$ denotes player-2 states, and edges are labeled by stack top;
followed by the stack command; and then the weight; and if the stack top is 
irrelevant (i.e., the transition is valid for all stack tops), then it is
denoted as $\_$.
We now describe the fifth gadget.

\item \emph{Gadget~5.}
The fifth gadget is the \emph{simulate run} gadget.
The states in this gadgets are essentially the set $Q$ of states of the 
automaton $\wpa$; and all the states are player-2 states.
The transitions and edge weights are as follows:
(i)~for every $(q,\sigma,q')\in\Delta$ we have a transition
$(q,\sigma,q',\Pop(\sigma))$, with weight  
$w_{\wpa}(q,\sigma,q') + 1$ (1 plus the weight in $\wpa$); and
(ii)~in addition there exists a transition $(q,\$,q,\Pop(\$))$ 
with weight $+10$ and a transition $(q,\bot,q_{{\$}},\Skip)$ 
to the initial state for empty stack with weight $0$.

\end{enumerate}

\begin{remark}{\bf (Comment on naive solution).}
Our reduction consists of five gadgets and we comment on a simpler reduction.
Consider a naive solution with only two gadgets (Gadget~2 and Gadget~5): 
in one player~1 fills the stack and in the other player~2 simulates the run.
This reduction only works for $\LimInfAvg$ objectives with strict inequality.
We comment on other cases below.
\begin{enumerate}
\item  
For $\LimInfAvg(\pi)\geq 0$ objective, it is possible that the WFA assigns $-1$ 
value for all finite words, but player~1 can still achieve $\LimInfAvg(\pi)=0$ 
by selecting words with increasing lengths.  
\item 
For the $\LimSupAvg(\pi) >0$ objective it is possible that player~1 will select 
a finite word $\rho$ such that there is a run of the WFA $\wpa$ that assigns a 
zero weight to $\rho$, but the run has a prefix with positive weight.
Hence, for infinitely many positions of the path of the two-player game the total 
weight is positive and hence the path satisfies $\LimSupAvg(\pi) >0$ objective. 
\end{enumerate}
Hence a reduction with only Gadget~2 and Gadget~5 fails to uniformly 
capture all cases.
To handle the above issues, we introduce Gadgets~1, 3, and 4.
Intuitively, these gadgets ensure that before player~1 fills the stack with a word $\rho$, 
he will make at least $|\rho|+1$ push $\$$ transitions (with negative weights). 
Consequently, when player~2 simulates a run of the WFA the total weight of the play remains non-positive, 
at least until the pop $\$$ transitions are done, and the weight becomes positive only if 
$L_{\wpa}(\rho) > 0$.
\end{remark}

\Heading{Correctness of reduction.} 
We will now prove the correctness of the reduction by showing 
that there is a winning strategy (also a finite-memory winning strategy) 
in the WPG $\wpg$ for mean-payoff objectives with strict inequality 
iff there is a finite word $\rho \in \Sigma^*$ such that 
$L_{\wpa}(\rho) \geq 1$. Let $\pi$ be a play on the above WPG $\wpg$.
The \emph{$i$-th iteration} of the play are the positions between the $i$-th 
visit and the  $(i+1)$-th visit to the initial state.


\begin{lem}\label{lemm:game1}
If there is a word $\rho \in \Sigma^*$ such that $L_{\wpa}(\rho) \geq 1$, 
then there exists a finite-memory strategy $\tau_1^*$ for player~1  to ensure
that for all plays $\pi$ consistent with $\tau_1^*$ we have  
$\LimSupAvg(\pi)>0$ and 
$\LimInfAvg(\pi)>0$.
\end{lem}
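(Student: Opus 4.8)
The plan is to exhibit an explicit periodic player-1 strategy and then to compute the weight accumulated in a single iteration (one sojourn from $q_{\$}$ back to it). Fix a witness word $\rho$ with $L_{\wpa}(\rho) \geq 1$ and set $m = |\rho|$; we may assume $\wpa$ is complete so that the run gadget never gets stuck. The strategy $\tau_1^*$ I would use repeats the following identically in every iteration: in Gadget~1 push $\$$ exactly $m+1$ times (taking the self-loop $m$ times and then the edge into Gadget~2 on the last push), and in Gadget~2 push the letters of $\rev(\rho)$ one at a time and then take the skip edge to the choice state $q_{\ch}$. Since this prescription depends only on the position within the current iteration --- a counter bounded by $m+1$ in the $\$$-phase and by $m$ in the $\Sigma$-phase --- it is realizable by a finite-state transducer, so $\tau_1^*$ is finite-memory. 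After player~1's moves the stack is $\bot\,\$^{\,m+1}$ followed by $\rev(\rho)$, hence read from the top it spells $\rho$.

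Next I would compute the weight of one iteration, splitting on player~2's choice at $q_{\ch}$ and, in Gadget~5, on the run it simulates. Player~1's pushes contribute $-10(m+1)$ for the $\$$'s and $-m$ for the letters, and the skip contributes $0$. If player~2 enters Gadget~4, it is forced to pop the entire stack: the $m$ pops of $\Sigma$-letters have weight $0$ and the $m+1$ pops of $\$$ have weight $+11$, so the iteration weight is $-10(m+1)-m+11(m+1) = 1$. If player~2 enters Gadget~5, it must again pop down to $\bot$: the $\Sigma$-part simulates a run $r$ of $\wpa$ on $\rho$ of weight $w(r)+m$ (the ``$+1$ per letter'' cancelling the push cost), and the $m+1$ pops of $\$$ contribute $+10$ each, giving iteration weight $-10(m+1)-m+(w(r)+m)+10(m+1) = w(r) \geq L_{\wpa}(\rho) \geq 1$. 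The constants $+11,+10,+1$ are tuned precisely so that both branches evaluate to a weight at least $1$, and the choice of $m+1 > m$ copies of $\$$ (strictly more $\$$ than letters) is what makes the Gadget~4 branch positive.

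Finally I would lift the per-iteration bound to the mean-payoff. Every iteration has the same fixed length $\ell = 2(m+1)+2m+3$ and, as shown, weight at least $1$ regardless of player~2's choices; moreover player~2 cannot prolong an iteration, since in both Gadget~4 and Gadget~5 every edge strictly shrinks the stack, so $\bot$ is reached after exactly $2m+1$ pops and the only edge there returns to $q_{\$}$. Thus any play $\pi$ consistent with $\tau_1^*$ is an infinite concatenation of iterations: after $N$ of them the weight is at least $N$ over a prefix of length $N\ell$, while a truncation inside a running iteration shifts the partial sum by at most $\ell W$. Therefore $\LimInfAvg(\pi) \geq 1/\ell > 0$, and a fortiori $\LimSupAvg(\pi) \geq \LimInfAvg(\pi) > 0$.

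The steps that require the most care, rather than genuine difficulty, are the stack-order bookkeeping (pushing $\rev(\rho)$ so that the run gadget reads $\rho$, which is exactly why the hypothesis is phrased with $L_{\wpa}(\rho)\geq 1$) and checking that player~2 has no escape: it can neither leave a gadget before the stack is emptied nor delay returning to $q_{\$}$. Once these are pinned down, the argument reduces to the arithmetic of the four weight contributions per iteration.
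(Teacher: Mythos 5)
Your proof is correct and follows essentially the same route as the paper's: play $\$^{m+1}$ followed by $\rev(\rho)$ in every iteration, observe that each iteration has weight at least $1$ (whether player~2 chooses Gadget~4 or simulates a run in Gadget~5) and bounded length, and conclude that the mean-payoff is at least $1/\ell>0$. The only difference is that you spell out the per-iteration weight arithmetic and the no-stalling argument that the paper compresses into a single sentence, which is a welcome addition rather than a deviation.
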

\begin{proof}
The finite-memory strategy $\tau_1^*$ for player~1 is to play in every 
iteration $\$^{n+1}$ in $q_{\$}$ followed by $\rev(\rho)$ in $q_{\Sigma}$, 
where $n=|\rho|$ is the length of the word $\rho$.
In every iteration, the sum of the weights is at least $1$ as 
$L_{\wpa}(\rho) \geq 1$, and the length of a play in every iteration is at 
most $4\cdot n + 2$.
It follows that for all plays $\pi$ consistent with $\tau_1^*$ we have both
$\LimSupAvg(\pi)>0$ and $\LimInfAvg(\pi)>0$.
\hfill\qed
\end{proof}

\begin{lem}\label{lemm:game2}
If for all words $\rho \in \Sigma^*$ we have  $L_{\wpa}(\rho) \leq 0$, 
then there exists a counter strategy $\tau_2^*$ for player~2 to ensure that 
for all strategies $\tau_1$ of player~1, for the play given $\tau_2^*$ and 
$\tau_1$: 
for all iterations $i$, for every position between the $i$-th iteration and the 
$(i+1)$-th iteration, the sum of the weights from the beginning of the iteration
to the current position of the iteration is at most~0.
\end{lem}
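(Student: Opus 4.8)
The plan is to describe explicitly the counter strategy $\tau_2^*$ for player~2 and then verify the stated invariant by a round-by-round analysis of the weights accumulated in a single iteration. First I would observe how the weights are structured across the five gadgets within one iteration. In Gadget~1 player~1 pushes some number, say $k$, of $\$$ symbols onto the stack (each push costing $-10$), and in Gadget~2 player~1 pushes some word; let $n$ be the number of symbols from $\Sigma$ pushed (each costing $-1$). Thus upon reaching the choice state $q_{\ch}$ the accumulated weight is exactly $-10k - n$, and the stack contains $k$ copies of $\$$ below the $n$ letters pushed by player~1. The key accounting fact is that popping a $\$$ yields $+10$ (in Gadget~4) or $+11$ (in Gadget~4, actually $+11$) and $+10$ in Gadget~5, while popping a $\Sigma$-letter yields the simulated automaton weight $+1$ (at most $+2$) in Gadget~5 or $0$ in Gadget~4.

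The heart of the argument is specifying how player~2 uses the choice in Gadget~3. The plan is as follows: player~2 goes to Gadget~4 (the ``not enough $\$$'' gadget) if the number of $\$$ symbols pushed is strictly less than the number of $\Sigma$-letters, i.e.\ if $k < n$; and otherwise player~2 goes to Gadget~5 (the ``simulate run'' gadget) and there follows a \emph{minimal-weight} run of $\wpa$ on the word read off the stack. I would justify each case separately. In the case $k < n$: in Gadget~4 popping the $n$ letters contributes $0$ each and popping the $k$ copies of $\$$ contributes $+11$ each, so the total iteration weight is $-10k - n + 11k = k - n < 0$, which is at most $0$; moreover since all positive contributions come only at the very end (during the pops), one checks that the running partial sum never exceeds $0$ at any intermediate round. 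In the case $k \geq n$: in Gadget~5 player~2 simulates the minimal run of $\wpa$ on the reversed word, so the letters contribute $\sum (w_{\wpa}(\cdot)+1) = L_{\wpa}(\rho) + n \leq 0 + n = n$ (using the hypothesis $L_{\wpa}(\rho)\le 0$), and the $k$ copies of $\$$ contribute $+10$ each, giving total weight $-10k - n + 10k + (L_{\wpa}(\rho)+n) = L_{\wpa}(\rho) \leq 0$.

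For the \emph{intermediate} partial-sum claim (the actual statement of the lemma, which is stronger than just the per-iteration total being $\le 0$), I would argue that all the costly push operations happen at the beginning and all the compensating pops at the end, so the partial sum is monotone decreasing through Gadgets~1--3 and monotone increasing through Gadgets~4--5; since it ends at a value $\le 0$ and is minimized somewhere in the middle, the only risk is that it becomes positive during the final ascent. Here the reversal is essential: because player~2 simulates on $\rev(\rho)$ and the $\$$ symbols sit at the \emph{bottom} of the stack (pushed first, popped last), the large $+10$/$+11$ rewards are deferred until after all the $\Sigma$-letters have been consumed, so the partial sum cannot overshoot $0$ before the iteration completes. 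I expect this intermediate-prefix bookkeeping to be the main obstacle: verifying the total per iteration is routine arithmetic, but showing the running sum stays $\le 0$ at \emph{every} round requires carefully ordering the stack contents and confirming that player~1 cannot exploit an early surplus, which is precisely why the construction forces the $\$$ symbols below the word and lets player~2 defer its rewards to the pop phase.
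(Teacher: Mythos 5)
Your proposal is correct and follows essentially the same route as the paper: the same counter-strategy (at $q_{\ch}$, go to Gadget~4 when the number of $\$$ symbols is below the number of $\Sigma$-letters, otherwise simulate a minimal-weight run of $\wpa$ on the reversed word in Gadget~5) with the same per-iteration weight accounting. The only differences are cosmetic: you route the tie case $k=n$ to Gadget~5 rather than Gadget~4 (both work), and you verify the intermediate-round bound via monotonicity (non-positive weights in the push phase, non-negative in the pop phase, iteration total at most~$0$) instead of the paper's explicit prefix estimates such as $-10\cdot n+2\cdot|\beta|<0$ --- a slightly cleaner bookkeeping of the same invariant.
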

\begin{proof} 
The counter strategy $\tau_2^*$ is as follows: 
consider an iteration $i$, and let the strategy of player~1 in this 
iteration produce the sequence $\$^n \rho$, for $\rho \in \Sigma^*$.
Note that if the state $q_{\ch}$ is never reached, then all the 
weights are negative (in $q_{\$}$ and $q_{\Sigma}$ all weights are negative).
The strategy $\tau_2^*$ is described considering the following two cases.
\begin{enumerate}
\item 
If $n \leq |\rho|$, then the strategy $\tau_2^*$ chooses the state 
$q_{<\$}$ at the state $q_{\ch}$ (since there are not enough $\$$ in the 
stack).
For any position of the play till state $q_{\ch}$ is reached,
the sum of the weights is negative.
Once $q_{<\$}$ is reached, for any position, the payoff is at most 
$-10\cdot n -|\rho| + 11\cdot n = -|\rho| + n \leq 0$, as 
$n \leq |\rho|$.

\item Otherwise, we have $n > |\rho|$ and $L_{\wps}(\rev(\rho)) \leq 0$.
There exists a run $r$ on $\rev(\rho)$ such that for every prefix $\beta$ of 
$\rev(\rho)$ the sum of the weights is at most $2\cdot |\beta| 
\leq 2 \cdot |\rho| < 2\cdot n$ (since the absolute value of the weights of 
$\wpa$ are bounded by $1$ and therefore the weights in Gadget~5 are bounded by $2$) 
and in the end of the run, the sum of the weights is at most $0$.
The counter strategy $\tau_2^*$ follows the run $r$.  
Hence the sum of the weights for any prefix $\beta$ is at most 
$-10 \cdot n + 2\cdot|\beta| \leq -10 \cdot n + 2\cdot n < 0$, 
until the letter $\$$ is the top symbol of the stack.
Once $\$$ is the top symbol, the sum of the weights is at most $-10\cdot n$, 
since the sum of the weights of the run is at most~0.
Since with each pop of $\$$ the weight is $10$, and there are $n$ pops,
it follows that in every position of the iteration the sum of the weights
is at most~0.
Finally, once the iteration is completed the sum of the weights is 
also at most $0$.
\end{enumerate}
The desired result follows.
\hfill\qed
\end{proof}

\begin{lem}\label{lemm:game-key}
Given WFA $\wpa$ and the WPG $\wpg$ constructed by the reduction, the 
following the assertions hold:
\begin{enumerate}
\item If there is a word $\rho \in \Sigma^*$ such that $L_{\wpa}(\rho) \geq 1$, 
then there is a finite-memory winning strategy  $\tau_1^*$ for player~1
for the objectives $\LimSupAvg(\pi)>0$ and $\LimInfAvg(\pi)>0$.

\item If for all words $\rho \in \Sigma^*$ we have  $L_{\wpa}(\rho) \leq 0$, 
then there is no winning strategy for player~1 for the objectives 
$\LimSupAvg(\pi)>0$ and $\LimInfAvg(\pi)>0$.

\item There exists a winning strategy (resp. a finite-memory winning strategy)
for player~1 for the objectives $\LimSupAvg(\pi)>0$ and $\LimInfAvg(\pi)>0$
iff there is a word $\rho \in \Sigma^*$ such that $L_{\wpa}(\rho) \geq 1$. 
\end{enumerate}
\end{lem}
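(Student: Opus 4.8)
The plan is to derive all three items almost directly from the two preceding lemmas, so that this statement is essentially a bookkeeping step. Item~(1) is nothing but a restatement of Lemma~\ref{lemm:game1}: the finite-memory strategy $\tau_1^*$ constructed there already witnesses $\LimSupAvg(\pi)>0$ and $\LimInfAvg(\pi)>0$ for every consistent play, so there is nothing further to prove.

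For item~(2) I would start from the counter strategy $\tau_2^*$ of Lemma~\ref{lemm:game2} and upgrade its per-iteration guarantee into a global bound on the running sum of weights. Let $S_t$ denote the accumulated weight after $t$ rounds of a play consistent with $\tau_2^*$, and let $W_i$ be the value of $S$ at the start of the $i$-th iteration, with $W_1 = 0$ at the initial configuration $(\bot,q_{\$})$. Lemma~\ref{lemm:game2} gives $S_t - W_i \le 0$ for every round $t$ inside iteration $i$; taking $t$ to be the last round of iteration $i$ shows $W_{i+1} \le W_i$, so the sequence $(W_i)$ is non-increasing and hence $W_i \le 0$ for all $i$. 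Combining the two inequalities yields $S_t \le W_i \le 0$ at every round, so $\Avg(\pi[0,t]) = S_t/t \le 0$ and therefore $\LimSupAvg(\pi) \le 0$ and $\LimInfAvg(\pi) \le 0$; player~1 loses both strict objectives.

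The only gap to close in item~(2) is the case of a play that performs only finitely many iterations, i.e.\ never returns to the initial state after some point. Here I would observe that each of the gadgets~3--5 either hands control back immediately (from $q_{\ch}$ player~2 simply moves on) or strictly shrinks the stack on every transition ($q_{<\$}$ and the simulate-run gadget execute only pops until $\bot$ forces a return to the initial state), so none of them can trap a play forever. Consequently a play with finitely many iterations is eventually confined to the self-loops of $q_{\$}$ or $q_{\Sigma}$, all of whose weights are negative ($-10$ and $-1$ respectively), so that $\LimSupAvg(\pi) < 0$. Thus in every case $\tau_2^*$ prevents player~1 from winning, which establishes item~(2).

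Finally, item~(3) is the logical combination of the first two, using that the weights lie in $\Set{-1,0,1}$ so that $L_{\wpa}(\rho)$ is an integer for every $\rho$; hence for each word either $L_{\wpa}(\rho) \ge 1$ or $L_{\wpa}(\rho) \le 0$, and the two global conditions ``some word has value $\ge 1$'' and ``all words have value $\le 0$'' are exact negations of one another. The $\Leftarrow$ direction (even with finite memory) is item~(1); for $\Rightarrow$ I would argue by contraposition, since if no word attains value $\ge 1$ then all words have value $\le 0$, whence item~(2) denies player~1 any winning strategy at all, in particular any finite-memory one. The main --- and essentially only --- obstacle is the globalization argument of item~(2): turning the local per-iteration bound of Lemma~\ref{lemm:game2} into the uniform bound $S_t \le 0$ and correctly disposing of plays that stop iterating; everything else is immediate.
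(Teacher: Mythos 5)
Your proposal is correct and follows essentially the same route as the paper's proof: item~(1) directly from Lemma~\ref{lemm:game1}, item~(2) by fixing the counter strategy of Lemma~\ref{lemm:game2} and splitting into the cases of finitely many iterations (play eventually confined to the negative-weight loops of $q_{\$}$ and $q_{\Sigma}$) versus infinitely many iterations (per-iteration bound forces all prefix averages to be non-positive), and item~(3) as the logical combination of the first two. Your write-up in fact supplies two details the paper leaves implicit --- the telescoping argument $W_{i+1}\le W_i\le 0$ that turns the per-iteration bound into a global one, and the integrality of $L_{\wpa}$ needed to make ``some word has value $\geq 1$'' and ``all words have value $\leq 0$'' exact complements --- so no gap remains.
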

\begin{proof}
Note that the third item is a consequence of the first two items. 
The first item follows from Lemma~\ref{lemm:game1}.
We now use Lemma~\ref{lemm:game2} to prove the second item.
Given the condition of the second item, 
let us consider the strategy $\tau_2^*$ for player~2 as described in
Lemma~\ref{lemm:game2}. 
Let $\pi$ be a play consistent with  $\tau_2^*$. We consider two cases
to complete the proof.
\begin{itemize}
\item 
If $\pi$ does not have an infinite number of iterations, then from some point on
only states $q_{\$}$ or $q_{\Sigma}$ are visited, and they both have only 
negative weights. 
Hence all the weights that occur in $\pi$ from some point on are non-positive and 
hence $\LimInfAvg(\pi) \leq \LimSupAvg(\pi) \leq 0$.
\item
Otherwise, $\pi$ has an infinite number of iterations.
Given $\tau_2^*$ it follows from Lemma~\ref{lemm:game2} that for all iterations, 
in every position of an iteration, the sum of the weights from the beginning of 
the iteration to the current position is non-positive.
Hence $\LimInfAvg(\pi) \leq \LimSupAvg(\pi) \leq 0$.
\end{itemize}
The desired result follows.
\hfill\qed
\end{proof}

\smallskip\noindent{\bf Undecidability for related decision problems.}
It follows from Lemma~\ref{lemm:game-key} that the existence of 
winning strategies (resp. finite-memory winning strategies) for mean-payoff 
objectives with strict inequality is undecidable for WPGs.
For general strategies the result also follows for non-strict inequality by 
duality.
We now show the undecidability for finite-memory strategies for the non-strict
inequality as well as undecidability for stack boundedness.
This is done by showing a reduction from the non-universality problem for WFA 
with threshold $\nu = 0$.
The reduction is identical to the original reduction presented in this section.
If there exists a word $\rho \in \Sigma^*$ such that $L_{\wpa}(\rho) \geq 0$, 
then playing $\$^{|\rho|+1}\rev(\rho)$ in every iteration is a finite-memory winning 
strategy for player-1 (also for the stack boundedness condition).
Otherwise, for every $\rho \in \Sigma^*$ we have $L_{\wpa}(\rho) \leq -1$.
In this case, against every player-1 finite-memory strategy, with memory size $M$, 
player-2 has a strategy that ensures that the mean-payoff is at most $-\frac{1}{2M}$.
Indeed, every iteration is either of length at most $2M$ (i.e., $M$ steps for 
filling the stack and $M$ steps for simulating the WFA) or 
infinite (player~1 repeatedly pushes symbols to the stack, and player~2 does
not get a chance to make a move).
In the first case the mean-payoff is at most $-\frac{1}{2M}$ and in the second 
case it is at most $-1$.
Similarly, against every player-1 strategy that ensures stack height at most 
$M$, player~2 has a strategy that ensures that the mean-payoff at most 
$-\frac{1}{2M}$ 
(since every iteration is of length at most $2M$ and the total weight of every 
iteration is at most $-1$).

\begin{thm}\label{thrm:game}
Given a WPG $\wpg$, the following questions are undecidable:
(1)~Whether there exists a winning strategy (resp. finite-memory winning strategy) 
to ensure $\Phi \bowtie 0$, where $\Phi \in \Set{\LimSupAvg,\LimInfAvg}$ and 
$\bowtie \in \Set{\geq, >}$; and
(2)~whether there exists a winning strategy (resp. finite-memory winning strategy) 
to ensure $\Phi \bowtie 0$ along with stack boundedness, where 
$\Phi \in \Set{\LimSupAvg,\LimInfAvg}$ and $\bowtie \in \Set{\geq, >}$.
\end{thm}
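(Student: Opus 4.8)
The plan is to assemble Theorem~\ref{thrm:game} from the reduction developed above together with the undecidability of weighted-automata non-universality (Theorem~\ref{thm:WeightedAutoUndec}), treating the four objective variants ($\Phi\in\Set{\LimSupAvg,\LimInfAvg}$, $\bowtie\in\Set{\geq,>}$), both strategy classes, and the stack-boundedness variant in turn.

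For the strict objectives ($\bowtie$ is $>$), I would simply cite Lemma~\ref{lemm:game-key}: in the WPG $\wpg$ built from a WFA $\wpa$, player~1 has a winning strategy (indeed a finite-memory one) for both $\LimSupAvg(\pi)>0$ and $\LimInfAvg(\pi)>0$ iff there is a word $\rho$ with $L_{\wpa}(\rho)\geq 1$. Since Theorem~\ref{thm:WeightedAutoUndec} makes the question ``$\exists\rho:\ L_{\wpa}(\rho)\geq 1$'' undecidable (non-universality at threshold $\nu=1$), undecidability of the $>0$ problems is immediate. Crucially, Lemma~\ref{lemm:game1} produces a finite-memory witness while Lemma~\ref{lemm:game2} defeats every player-1 strategy, so the single reduction yields undecidability for both general and finite-memory strategies at once.

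For the non-strict objectives ($\bowtie$ is $\geq$) under general strategies, I would invoke the duality signposted in the reduction. Mean-payoff winning conditions are Borel, so the infinite-state game is determined (Borel determinacy applies to these turn-based, finitely branching games irrespective of the infinite configuration space); hence player~1 fails to win $\LimSupAvg_w>0$ in $\wpg$ iff player~2 wins the complement $\LimSupAvg_w\leq 0$. Using the identity $\LimSupAvg_w(\pi)=-\LimInfAvg_{-w}(\pi)$, this complement set coincides with $\LimInfAvg_{-w}\geq 0$. Passing to the dual game $\wpg'$, obtained from $\wpg$ by negating all weights and swapping the two players, player~2's win becomes a player~1 win for $\LimInfAvg\geq 0$. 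Chaining the equivalences, player~1 has a winning strategy for $\LimInfAvg\geq 0$ in $\wpg'$ iff $\forall\rho:\ L_{\wpa}(\rho)\leq 0$, the complement of non-universality and therefore undecidable; the $\LimSupAvg\geq 0$ case is symmetric, dualizing the $\LimInfAvg>0$ objective instead.

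For the non-strict objectives under finite-memory strategies, and for both inequalities under the stack-boundedness condition, I would run the second reduction (non-universality at threshold $\nu=0$) outlined above. If some $\rho$ satisfies $L_{\wpa}(\rho)\geq 0$, then replaying $\$^{|\rho|+1}\rev(\rho)$ in each iteration is a finite-memory strategy whose stack returns to its base every iteration, so it is simultaneously winning and stack bounded. If instead every $\rho$ has $L_{\wpa}(\rho)\leq -1$, then against any player-1 strategy of finite memory $M$, or any strategy keeping the stack height at most $M$, player~2 forces mean-payoff at most $-\frac{1}{2M}<0$, defeating even the $\geq 0$ objective; hence neither a finite-memory nor a stack-bounded winning strategy can exist. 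This disposes of the remaining combinations in items~(1) and~(2).

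The step that needs the most care is the duality for general strategies: it rests on Borel determinacy of these infinite-state mean-payoff games, which lets ``player~1 cannot win $>0$'' be converted into ``player~2 wins $\leq 0$''. This is precisely the ingredient that breaks for finite-memory play, since player~2's complementary strategy may require infinite memory, which is exactly why the finite-memory and stack-bounded cases cannot be folded into the duality argument and instead demand the separate threshold-$0$ reduction.
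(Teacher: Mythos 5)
Your decomposition is the same as the paper's: the threshold-$1$ reduction (Lemmas~\ref{lemm:game1}--\ref{lemm:game-key}) for the strict objectives under both strategy classes, duality for the non-strict objectives under general strategies, and the threshold-$0$ reduction for the finite-memory cases. Making Borel determinacy explicit in the duality step is correct and welcome (mean-payoff winning sets are Borel, and the paper leaves this ingredient implicit when it converts ``player~1 cannot win $\LimSupAvg>0$'' into ``player~2 wins the complement''); your closing remark on why duality cannot be recycled for finite-memory play is also exactly the right reason the paper needs the second reduction.

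There is, however, one genuine gap: you route \emph{strict} inequality with stack boundedness through the threshold-$0$ reduction, and that reduction is not sound for the strict question. Suppose $\max_{\rho}L_{\wpa}(\rho)=0$, i.e.\ some word has value exactly $0$ and none has value $\geq 1$. This is a yes-instance of non-universality at threshold $\nu=0$, and your argument produces a finite-memory stack-bounded strategy winning $\LimSupAvg\geq 0$ --- but \emph{not} $>0$. Indeed, since all words then satisfy $L_{\wpa}(\rho)\leq 0$, Lemma~\ref{lemm:game2} applies verbatim (the threshold-$0$ WPG has the same gadget weights), so player~2 can keep every iteration-prefix sum at most $0$ and hence force $\LimSupAvg(\pi)\leq 0$ against \emph{every} player-1 strategy. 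Thus a yes-instance of the source problem maps to a no-instance of ``does player~1 have a (stack-bounded) winning strategy for $\Phi>0$'', and the biconditional your reduction needs fails. The fix is simply to assign that case to the threshold-$1$ reduction instead: when some $\rho$ has $L_{\wpa}(\rho)\geq 1$, the finite-memory strategy of Lemma~\ref{lemm:game1} keeps the stack height below $2|\rho|+2$ in every play, so it simultaneously witnesses $\LimSupAvg>0$, $\LimInfAvg>0$ and stack boundedness; and when all words have value $\leq 0$, Lemma~\ref{lemm:game2} defeats every strategy, in particular every stack-bounded or finite-memory one. With stack boundedness for $\bowtie\,=\,>$ moved to the threshold-$1$ reduction, and the threshold-$0$ reduction retained only for $\bowtie\,=\,\geq$ (finite-memory and stack-bounded), all combinations in items~(1) and~(2) are covered.
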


\smallskip\noindent{\bf Distinguishing facts.}
We now show some interesting facts about WPGs with mean-payoff objectives
that distinguish them from finite game graphs with mean-payoff objectives.
\begin{enumerate}
\item \emph{Fact~1.} It follows from the proof of Theorem~\ref{thm:LeqIsInP} 
(Example~\ref{ex:illustration1} for the fact that ultimately periodic paths 
are not sufficient) that in general, positional (or memoryless) strategies are 
not sufficient, and infinite-memory strategies are required in general 
(in contrast, in finite game graphs, memoryless winning strategies are 
guaranteed to exist).

\item \emph{Fact~2.} The objectives $\LimSupAvg$ and $\LimInfAvg$ do not 
coincide in general for WPGs.
We show this in Example~\ref{Ex:SupAndInfAreDifferent}.

\item \emph{Fact~3.} We also note that pushdown mean-payoff games are very 
different as compared to parity games. 
For finite-state games both parity and mean-payoff objectives have the same
complexity (both lie in NP $\cap$ coNP and also in UP $\cap$ coUP~\cite{Jur98}),
in contrast for pushdown games the mean-payoff problem is undecidable, whereas 
the parity problem is EXPTIME-complete~\cite{Wal01}.
Moreover, for countably infinite games with finitely many priorities, 
for parity objectives memoryless winning strategies exist~\cite{Thomas97},
whereas as we show (Fact~1) for mean-payoff pushdown games infinite-memory 
strategies are required.

\end{enumerate}

\begin{figure}[!tb]
\begin{center}
\begin{picture}(48,48)(0,0)
\node[Nmarks=i, iangle=180](n0)(0,12){$q_{I}^1$}
\node[Nmarks=n](n1)(50,12){$q_{I}^2$}
\node[Nmarks=n, Nmr=0](n2)(0,36){$q_{II}^2$}
\node[Nmarks=n, Nmr=0](n3)(50,36){$q_{II}^1$}
\drawloop[ELside=l,loopCW=y, loopdiam=6, loopangle=270](n0){$\_,\Push(\gamma),-2$}
\drawloop[ELside=l,loopCW=y, loopdiam=6, loopangle=270](n1){$\_,\Pop,4$}
\drawloop[ELside=l,loopCW=y, loopdiam=6, loopangle=90](n3){$\_,\Push(\gamma),2$}
\drawloop[ELside=l,loopCW=y, loopdiam=6, loopangle=90](n2){$\_,\Pop,-4$}
\drawedge[ELpos=50, ELside=r, ELdist=0.5](n0,n1){$\_,\Skip,0$}
\drawedge[ELpos=50, ELside=r, ELdist=0.5](n1,n3){$\_,\Skip,0$}
\drawedge[ELpos=50, ELside=r, ELdist=0.5](n3,n2){$\_,\Skip,0$}
\drawedge[ELpos=50, ELside=r, ELdist=0.5](n2,n0){$\_,\Skip,0$}
\end{picture}
\caption{Example WPG $\wpg$.}\label{fig:ex}
\end{center}
\end{figure}

\begin{examp}\label{Ex:SupAndInfAreDifferent}
We show that there exists a WPG such that player-1 can ensure that $\LimSupAvg(\pi)\geq 2$ 
and player-2 can ensure that $\LimInfAvg(\pi) \leq -2$.
The WPG is described as follows:
Let $Q_1 = \Set{q^1_I, q^2_I}$ and $Q_{2} = \Set{q^1_{II}, q^2_{II}}$, and 
let $E$, the set of transitions, be as follows
\begin{itemize}
\item $(q^1_I, \bot, q^1_I, \Push(\gamma))$ with weight $-2$;
\item $(q^1_I, \gamma, q^1_I, \Push(\gamma))$ with weight $-2$;
\item $(q^1_I, \gamma, q^2_I, \Skip)$ with weight $0$;
\item $(q^2_I, \gamma, q^2_I, \Pop)$ with weight $+4$;
\item $(q^2_I, \bot, q^1_{II}, \Skip)$ with weight $0$;
\item $(q^1_{II}, \bot, q^1_{II}, \Push(\gamma))$ with weight $+2$;
\item $(q^1_{II}, \gamma, q^1_{II}, \Push(\gamma))$ with weight $+2$;
\item $(q^1_{II}, \gamma, q^2_{II}, \Skip)$ with weight $0$;
\item $(q^2_{II}, \gamma, q^2_{II}, \Pop)$ with weight $-4$;
\item $(q^2_{II}, \bot, q^1_{I}, \Skip)$ with weight $0$.
\end{itemize}
The WPG is shown in Figure~\ref{fig:ex}.
It is straightforward to verify that player-1 can ensure 
$\LimSupAvg(\pi) \geq 2$, and player-2 can ensure 
$\LimInfAvg(\pi) \leq -2$. 
\hfill\qed
\end{examp}

\newcommand{\wrg}{\mathcal{A}}
\newcommand{\En}{\mathit{En}}
\newcommand{\Ex}{\mathit{Ex}}

\section{Recursive Games and Modular Strategies}\label{sec:modular}
In this section we will consider a special class of strategies, namely \emph{modular} 
strategies, in pushdown games. 
Modular strategies are more intuitive in the model of recursive game graphs,
and recursive game graphs are equivalent to pushdown game graphs.
\NewChange{ We first present the definitions of recursive game graphs from~\cite{AlurReach}, then give an overview of the solution and present some basic properties for recursive game graphs.}

\smallskip\noindent{\bf Weighted recursive game graphs (WRGs).}
A \emph{recursive game graph} $\wrg$ consists of a tuple 
$\atuple{A_1,\dots,A_n}$ of \emph{game modules}, where each 
game module $A_i = (N_i,B_i,V^1_i,V_i^2,\En_i,\Ex_i,\delta_i)$ consists of the following components:
\begin{itemize}
\item A finite nonempty set of \emph{nodes} $N_i$.
\item A nonempty set of \emph{entry} nodes $\En_i \subseteq N_i$ and a 
nonempty set of \emph{exit} nodes $\Ex_i \subseteq N_i$.
\item A set of \emph{boxes} $B_i$.
\item Two disjoint sets $V_i^1$ and $V_i^2$ that partition the set of 
nodes and boxes into two sets, i.e., $V_i^1 \cup V_i^2 = N_i \cup B_i$ 
and $V_i^1 \cap V_i^2 = \emptyset$.
The set $V_i^1$ (resp. $V_i^2$) denotes the places where it is the 
turn of player~1 (resp. player~2) to play (i.e., choose transitions).
We denote the union of $V_i^1$ and $V_i^2$ by $V_i$.
\item A labeling $Y_i : B_i \to \RangeSet{1}{n}$ that assigns to 
every box an index of the game modules $A_1 \dots A_n$.
\item Let $\Calls_i = \Set{(b,e) \mid b\in B_i, e \in \En_j, j = Y_i(b)}$ 
denote the set of \emph{calls} of module $A_i$ and let 
$\Returns_i = \Set{(b,x) \mid b\in B_i, x\in \Ex_j, j = Y_i(b)}$ denote the set of \emph{returns} in $A_i$.
Then, $\delta_i \subseteq (N_i \cup \Returns_i)\times(N_i \cup \Calls_i)$ is the 
\emph{transition relation} for module $A_i$.
\end{itemize}
A \emph{weighted recursive game graph} (for short WRG) is a recursive game graph, equipped with a weight function 
$w$ on the transitions.
We also refer the readers to~\cite{AlurReach} for detailed description and illustration with figures 
of recursive game graphs.
WLOG we shall assume that the boxes and nodes of all modules are disjoint.
Let $B = \bigcup_i B_i$ denote the set of all boxes, 
$N=\bigcup_i N_i$ denote the set of all nodes, 
$\En = \bigcup_i \En_i$ denote the set of all entry nodes, 
$\Ex = \bigcup_i \Ex_i$ denote the set of all exit nodes, 
$V^1 = \bigcup_i V^1_i$ (resp. $V^2 = \bigcup_i V^2_i$) denote 
the set of all places under player~1's control (resp. player~2's control), 
and $V=V^1 \cup V^2$ denote the set of all vertices.
We will also consider the special case of one-player WRGs, where 
either $V^2$ is empty (player-1 WRGs) or $V^1$ is empty (player-2 WRGs).

\smallskip\noindent{\bf Configurations, paths, and local history.}
A \emph{configuration} $c$ consists of a sequence $(b_1,\dots,b_r,u)$, 
where $b_1,\dots,b_r \in B$ and $u\in N$.
Intuitively,  $b_1,\dots,b_r$ denote the current stack (of modules), and $u$ is the current node.
A sequence of configurations is \emph{valid} if it does not violate the transition relation.
The \emph{configuration stack height} of $c$ is $r$.
Let us denote by $\mathbb{C}$ the set of all configurations, and let 
$\mathbb{C}_1$ (resp. $\mathbb{C}_2$) denote the set of all configurations
under player~1's control (resp. player~2's control), that is, all 
configurations in which $u$ is under the control of player~1 (resp., player~2).
A \emph{path} $\pi = \atuple{c_1, c_2, c_3, \dots}$ is a valid sequence of configurations.
Let $\rho = \atuple{c_1, c_2, \dots, c_k}$ be a valid finite sequence of configurations, such that
$c_i = (b^i_1,\dots,b^i_{d_i}, u_i)$, and the stack height of $c_i$ is $d_i$.
The \emph{stack height} of $\pi$, denoted by $\SH(\pi)$, is $\max\{d_1,\dots,d_k\}$ and $\ASH(\pi) = \SH(\pi) - \max\{d_1,d_k\}$.
Let $c_i$ be the first configuration with stack height $d_i = d_k$, 
such that for every $i\leq j\leq k$, if $c_j$ has stack height $d_i$, 
then $u_j \notin \Ex$ ($u_j$ is not an exit node).
The \emph{local history} of $\rho$, denoted by $\LocalHistory (\rho)$, 
is the longest sequence $(u_{j_1},\dots,u_{j_m})$ such that $c_{j_1} = c_i$, 
$c_{j_m} = c_k$, $j_1 < j_2 < \dots < j_m$, and the stack height of 
$c_{j_1},\dots,c_{j_m}$ is exactly $d_i$.
In other words, the local history contains the nodes of configurations $c_i$ 
and $c_r$ and all the nodes of the intermediate configurations of the 
same stack height.
Intuitively, the local history is the sequence of nodes in a module. 
Note that by definition, for every $\rho \in \mathbb{C}^*$, there exists 
$i\in \Set{1,\dots,n}$ such that all the nodes that occur in 
$\LocalHistory(\rho)$ belongs to $V_i$. 
We say that $\LocalHistory(\rho) \in A_i$ if all the nodes in 
$\LocalHistory(\rho)$ belong to $V_i$.

\smallskip\noindent{\bf Global game graph and isomorphism to pushdown game graphs.}
The \emph{global game graph} corresponding to a WRG $\wrg=\atuple{A_1,\dots,A_n}$ 
is the graph of all valid configurations, and an edge $(c_1,c_2)$ between 
configurations $c_1$ and $c_2$ exists if there is a transition from $c_1$ to $c_2$.
It follows from the results of~\cite{AlurReach} that every recursive game graph has 
an isomorphic pushdown game graph that is computable in polynomial time.
We note that the simulation requires some extra transitions that may influence the value of mean-payoff in the original and the simulated runs. But since we only ask whether the mean-payoff value is positive (or non-negative), this does not influence our results, since we assign zero weights for the auxiliary transitions. 

\smallskip\noindent{\bf Plays, strategies, and modular strategies.}
A play begins at the entry node of module $A_0$ and it is played in the usual sense over the global game graph 
(which is possibly an infinite graph).
A (finite) play is a (finite) valid sequence of configurations 
$\atuple{c_1, c_2, c_3, \dots}$ (i.e., a path in the global game graph).
A finite path $\pi$ is a \emph{sub-play} if there exist a finite path $\pi_0$ such that $\pi_0 \cdot \pi$ is a prefix of a valid play.
A \emph{strategy} for player 1 is a function $\tau : \mathbb{C}^* \times \mathbb{C}_1 \to \mathbb{C}$
respecting the edge relationship of the global game graph, i.e., 
for all $w \in \mathbb{C}^*$ and $c_1 \in \mathbb{C}_1$ we have that 
$(c_1,\tau(w\cdot c_1))$ is an edge in the global game graph.
A \emph{modular strategy} $\tau$ for player~1 is a set of functions 
$\Set{\tau_i}_{i=1}^n$, one for each module, where for every $i$,
we have $\tau_i : (N_i \cup \Returns_i) ^* \to \delta_i$.
The function $\tau$ is defined as follows:
For every play prefix $\rho$ we have 
$\tau(\rho) = \tau_i(\LocalHistory(\rho))$, where $\LocalHistory(\rho) \in A_i$.
The function $\tau_i$ is the \emph{local strategy} of module $A_i$.
Intuitively, a modular strategy only depends on the local history, and 
not on the context of invocation of the module.
A modular strategy $\tau = \Set{\tau_i}_{i=1}^n$ is a \emph{finite-memory} 
modular strategy if $\tau_i$ is a finite-memory strategy for every $i\in\RangeSet{1}{n}$.
A \emph{memoryless} modular strategy is defined in similar way, 
where every component local strategy is memoryless.

\smallskip\noindent{\bf Mean-payoff  objectives and winning modular strategies.}
The weight of a finite path $\pi$, denoted by $w(\pi)$ is the sum of all the weights 
along the path.
For an infinite path $\pi$ (as in the previous sections) 
we denote $\LimInfAvg(\pi) = \liminf_{n\to\infty} \frac{w(\pi[1,n])}{n}$
(resp. $\LimSupAvg(\pi) = \limsup_{n\to\infty} \frac{w(\pi[1,n])}{n}$), where 
$\pi[1,n]$ is the initial prefix of length $n$. 
The \emph{modular winning strategy problem} asks if player~1 has a 
modular strategy $\tau$ such that for every play $\rho$ consistent with $\tau$
we have $\LimInfAvg(\rho) \geq 0$ (note that the counter strategy of player~2 
is a general strategy), and similarly for other mean-payoff objectives.

\smallskip\noindent{\bf Overview of the solution.}
We first show that player~1 has a modular winning strategy that is 
\emph{cycle free}, namely, it is does not depend on the simple cycles 
that occur in the history of the play.
We then show that a cycle-free strategy can be \emph{simulated} by 
a memoryless strategy and by the results of 
Section~\ref{sect:PushDownProcesses} we get that mean-payoff modular games 
are in NP (as we have a polynomial-time 
verifier for a player-1 winning strategy, namely, a verifier for a memoryless 
winning strategy).
The NP-hardness is obtained via a reduction from the 3-SAT problem.

\smallskip\noindent{\bf Basic properties.}
We now present some basic properties of recursive game graphs.

\smallskip\noindent{\em Non-decreasing cycles and proper cycles.}
A \emph{non-decreasing cycle} in a recursive game graph 
$\wrg=\atuple{A_1,\dots, A_n}$ 
is a path segment from a module $A_i$ and vertex $v_i \in A_i$ to the same module 
and the same vertex (possibly at different stack level), such that the 
first occurrence of module $A_i$ in the path segment does not return 
(i.e., does not reach an exit node) during the path segment.
A non-decreasing cycle $C$ is a \emph{proper cycle} if the stack heights 
at the beginning and the end of the path segment are the same.

\begin{lem}\label{lemm:NotEmptyIfCycle}
Consider a one-player WRG $\wrg=\atuple{A_1,\dots, A_n}$ (i.e., 
consists of only one-player). 
The following assertions hold: 
\begin{itemize}
\item The WRG $\wrg$ has a path $\pi$ with $\LimInfAvg(\pi) > 0$ 
(resp.  $\LimSupAvg(\pi) > 0$) iff there exists a positive non-decreasing cycle.

\item The WRG $\wrg$ has a path $\pi$ with $\LimInfAvg(\pi) < 0$ (resp. $\LimSupAvg(\pi) < 0$) iff 
there exists a negative non-decreasing cycle.
\end{itemize}
\end{lem}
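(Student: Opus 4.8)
The plan is to reduce the one-player WRG $\wrg$ to its isomorphic WPS (via~\cite{AlurReach}, under which paths, weights, and mean-payoff values correspond) and to identify the \emph{non-decreasing cycles} of $\wrg$ with the \emph{good cycles} of the associated WPS, so that the equivalence already established in Lemma~\ref{lem:GoodCycleImpliesGoodMP} (a path with $\LimInfAvg>0$ exists iff a path with $\LimSupAvg>0$ exists iff a good cycle exists) can be transferred. Under the standard encoding the WPS state records the current node (or return), while the stack records the sequence of open boxes; hence ``being at the same vertex $v_i$'' corresponds to ``being in the same state $q$'', ``the first occurrence of $A_i$ never returns'' corresponds to the starting configuration being a \emph{local minima} (its stack string is a prefix of all later ones), and the weight is preserved. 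I would first prove the first bullet for strict positivity and then obtain the second bullet by a sign-flip symmetry.

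For the direction ($\Leftarrow$) of the first bullet, suppose $C$ is a non-decreasing cycle of positive weight running from $(\alpha,v_i)$ to $(\alpha\beta,v_i)$, where $\alpha$ denotes the stack of open boxes and the first occurrence of $A_i$ does not return. The key point is that $C$ can be iterated forever: since $v_i$ is a \emph{node} (not a return), its outgoing transitions---and, because the first occurrence of $A_i$ never returns, \emph{every} transition taken along $C$---depend only on the current node together with the boxes that $C$ itself opens, never on the part of the stack below the starting level. Hence the same edge sequence is replayable from the endpoint $(\alpha\beta,v_i)$, yielding a valid infinite path $C^\omega$ visiting $(\alpha,v_i),(\alpha\beta,v_i),(\alpha\beta\beta,v_i),\dots$. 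Each block contributes weight $w(C)>0$ over $|C|$ steps, so $\LimInfAvg(C^\omega)=w(C)/|C|>0$, and a fortiori $\LimSupAvg(C^\omega)>0$. This replay is the crux and the \textbf{main obstacle}: in a generic WPS a cycle returning to the same state but a different top symbol need not be repeatable, and it is precisely the recursive structure---transitions out of a node being independent of the stack below---that makes the iteration legitimate here.

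For the direction ($\Rightarrow$), assume a path $\pi$ with $\LimSupAvg(\pi)>0$ (the weaker hypothesis, implied by $\LimInfAvg(\pi)>0$). Passing to the isomorphic WPS and invoking Lemma~\ref{lem:GoodCycleImpliesGoodMP}, there is a good cycle: a positive-weight path from a local minima $c_1=(\hat\alpha,q)$ back to a configuration with the same state $q$ and with $\hat\alpha$ preserved as a stack prefix. Translating back, this is a path segment from the vertex $v_i$ corresponding to $q$ to the same vertex $v_i$ whose first occurrence of the enclosing module never returns---that is, a non-decreasing cycle---and its weight is positive. (The extra ``same top symbol'' that a good cycle enjoys is simply discarded, since a non-decreasing cycle does not require it.) Combining ($\Leftarrow$), ($\Rightarrow$) and the $\LimInf$/$\LimSup$ equivalence of Lemma~\ref{lem:GoodCycleImpliesGoodMP} establishes both the $\LimInfAvg>0$ and the $\LimSupAvg>0$ versions of the first bullet.

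Finally I would deduce the second bullet by negating every weight. Writing $w'=-w$, a path satisfies $\LimInfAvg(\pi)<0$ (resp. $\LimSupAvg(\pi)<0$) under $w$ iff it satisfies $\LimSupAvg(\pi)>0$ (resp. $\LimInfAvg(\pi)>0$) under $w'$, and a non-decreasing cycle has negative $w$-weight iff it has positive $w'$-weight. Applying the already-proved first bullet to the WRG equipped with $w'$ therefore yields exactly the claimed equivalences between negative mean payoff and non-decreasing cycles of negative weight, completing the proof.
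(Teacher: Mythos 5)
Your proposal is correct and follows essentially the same route as the paper's proof: pass to the isomorphic WPS, invoke the equivalence of positive mean-payoff paths and good cycles (Lemma~\ref{lem:GoodCycleImpliesGoodMP}), relate good cycles to positive non-decreasing cycles, and obtain the second bullet by negating weights. The paper merely asserts the cycle correspondence; your direct-iteration argument for the ($\Leftarrow$) direction (using that transitions out of a node are independent of the stack below) and your remark that the good cycle's same-top-symbol condition can simply be dropped in the ($\Rightarrow$) direction supply exactly the details the paper leaves implicit.
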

\begin{proof}
The first item follows from (i)~the isomorphism of one-player WRGs and 
weighted pushdown systems (WPSs), (ii)~the correspondence of positive 
non-decreasing cycles and good cycles for WPSs, and 
(iii)~the results established in Section~\ref{sect:PushDownProcesses}
showing the equivalence of the existence of a path  
$\pi$ with $\LimInfAvg(\pi) > 0$ (resp.  $\LimSupAvg(\pi) > 0$)
and the existence of good cycles in a WPS. 
The second item follows from the duality of $\LimInfAvg(\pi) > 0$ 
and $\LimSupAvg(\pi) < 0$. 
\hfill\qed
\end{proof}

\smallskip\noindent{\em WRG given finite-memory strategies.}
Given a WRG $\wrg$, let $\tau = \Set{\tau_i}_{i=1}^n$ be a finite-memory 
modular strategy.
Let $M_i$ be the set of memory states of strategy $\tau_i$, i.e., 
$\tau_i$ is described as a deterministic transducer with state space $M_i$.
The \emph{one-player WRG (player-2 WRG)} given $\tau$ is the tuple 
$\wrg^{\tau}=\atuple{A_1^{\tau_1} = A_1 \times M_1, \dots , A_n^{\tau_n} = A_n \times M_n}$,
where each $A_i^{\tau_i}= A_i \times M_i$ is obtained as the synchronous 
product of $A_i$ and the deterministic transducer describing the local strategy 
$\tau_i$. 
Formally, in the product, if the second component is a state $x_i \in M_i$, 
then the transition for the second component is as defined by the transition 
function of the deterministic transducer over $x_i$, and in the first 
component transition we only retain the transition prescribed by $x_i$.
The weights of the transitions are specified according to the weight 
function of $\wrg$.
Note that if $\tau$ is a memoryless modular strategy, then $\wrg^{\tau}$ is a sub-game graph of $\wrg$.

\begin{lem}\label{lemm:EveryPathIsAlsoInGraph}
Given a WRG $\wrg$ and a modular strategy $\tau$,
every (finite or infinite) path in the one-player WRG 
$\wrg^{\tau}$ is a (finite or infinite) play in $\wrg$ consistent with $\tau$, 
and vice versa.
\end{lem}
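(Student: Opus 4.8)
The plan is to exhibit an explicit bijection between paths in $\wrg^{\tau}$ and plays in $\wrg$ consistent with $\tau$, given by the natural projection that forgets the memory annotations, and to verify it is a bijection by induction on the length of the path. First I would make precise the form of a configuration of $\wrg^{\tau}$: since each module is the synchronous product $A_i^{\tau_i}=A_i\times M_i$, a configuration of $\wrg^{\tau}$ is a sequence $((b_1,m_1),\dots,(b_r,m_r),(u,m))$ where $(b_1,\dots,b_r,u)$ is a configuration of $\wrg$ and $m_1,\dots,m_r,m$ are memory states of the respective transducers. I then define the projection $p$ that drops every memory component, so that $p$ maps a $\wrg^{\tau}$-configuration to the underlying $\wrg$-configuration. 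The claim is that $p$ restricts to a length-preserving bijection from the valid paths of $\wrg^{\tau}$ onto the $\tau$-consistent plays of $\wrg$.

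For the forward direction I would argue by induction on the prefix length that if $\pi'$ is a valid path in $\wrg^{\tau}$ then $p(\pi')$ is a valid path in $\wrg$ that is consistent with $\tau$, maintaining the invariant that the memory component of the currently active module always equals the state reached by feeding $\LocalHistory(p(\pi'))$ into the transducer for $\tau_i$ (where $\LocalHistory(p(\pi'))\in A_i$). Every transition of $A_i^{\tau_i}$ projects, by construction of the synchronous product, to a transition of $\delta_i$, so $p(\pi')$ is valid; and at a player-$1$ node the product retains only the single outgoing edge selected by $\tau_i$ in the current memory state, which by the invariant is exactly $\tau_i(\LocalHistory(p(\pi')))$, giving consistency with $\tau$.

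For the converse I would lift a $\tau$-consistent play $\rho$ back to $\wrg^{\tau}$ by annotating each configuration with the memory states the transducers occupy after processing the corresponding local histories; because each $\tau_i$ is a \emph{deterministic} transducer this annotation is uniquely determined, and consistency of $\rho$ with $\tau$ guarantees that every player-$1$ move of $\rho$ coincides with the unique edge present at the corresponding node of $\wrg^{\tau}$, so the annotated sequence is a valid $\wrg^{\tau}$-path projecting to $\rho$. Uniqueness of the annotation shows $p$ is injective, and the two directions together give the bijection. The statement for infinite paths follows since the bijection is defined prefix-wise and commutes with taking prefixes.

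The main obstacle I expect is the bookkeeping of transducer memory across module calls and returns: when a box is entered the caller's local history is suspended and must resume unchanged upon return, so the inductive invariant must track that the memory state $m_j$ stored on the stack alongside box $b_j$ is precisely the transducer state of the caller at the moment of the call, and that popping restores it. This is exactly why the product is formed module-wise and the memory is carried on the stack together with the boxes; verifying that this storage-and-restore discipline matches the definition of $\LocalHistory$ (which resets at a module entry and continues from the last configuration at the current stack height) is the one place where care is needed, while the remaining case analysis over skip, call, and return transitions is routine.
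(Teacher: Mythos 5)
Your proof is correct, and it is precisely the argument the paper leaves implicit: the paper states this lemma without any proof at all, treating it as immediate from the definition of $\wrg^{\tau}$ as the module-wise synchronous product of each $A_i$ with the deterministic transducer for $\tau_i$. Your elaboration via the memory-forgetting projection and its unique lift (including the one genuinely delicate point, that the memory state stored on the stack with a box is the caller's transducer state at call time and is restored on return, matching the definition of $\LocalHistory$) is the standard verification of exactly that construction, so there is no divergence from the paper's approach and no gap.
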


\Comment
{
\begin{lem}\label{lemm:FiniteWinImpliesGoodCycles}
Let $A$ be an RSM game graph.
Let $\tau$ be a finite memory modular strategy.
Then $\tau$ is a winning strategy, for the objective $\LimInfAvg \geq 0$, iff for every non decreasing cycle in $A^{\tau}$ has a non-negative weight.
\end{lem}
\begin{proof}
We begin with the proof for the direction from left to right.
Let us assume that $\tau$ is a winning strategy.
Towards contradiction we assume that there exist a finite path $\pi = \pi_0 \pi_1$ in $A^{\tau}$, such that $\pi_1$ is a cycle with negative weight.
Hence, by Lemma~\ref{lemm:EveryPathIsAlsoInGraph}, we get that $\pi_0 (\pi_1)^\omega$ is a play according to $\tau$ with $\LimInfAvg(\pi_0 (\pi_1)^\omega) < -\frac{1}{|\pi_1|} \leq 0$, which contradicts the assumption that $\tau$ is a winning strategy.
Hence, every non decreasing cycle in $A^{\tau}$ has a non-negative weight.

For the opposite direction let us assume that $\tau$ is not a winning strategy.
Hence in there is a play $\pi$ such that $\LimInfAvg(\pi) < 0$.
Hence in the graph $A^\tau$ there exists a path $\pi^*$ such that $\LimInfAvg(\pi^*) < 0$.
Thus, by Lemma~\ref{lemm:NotEmptyIfCycle} there exists a cycle in $A^\tau$ with negative weight.
\pfbox
\end{proof}
}

\subsection{Decidability of the modular winning strategy problem}
In this section we will establish the decidability of the 
existence of a modular winning strategy problem.
In the following section we will establish the NP upper bound,
and finally show NP-hardness.
We start with the objective $\LimInfAvg\geq 0$, and then show the 
result for the objective $\LimSupAvg \geq 0$. 
The results for mean-payoff objectives with strict inequality will
also easily follow from our results.

\begin{remark}
We note that the reduction we presented in Section~\ref{sec:games} from the non-universality
problem of WFA to two-player mean-payoff pushdown games does not hold when we restrict player~1 to modular strategies.
Indeed, a modular strategy cannot fill the stack with an arbitrary long stack alphabet string without eventually 
visiting the same module twice (and then it must be the case that the operations are repeated forever).
Hence, player~1 cannot fill the stack with an arbitrary word to witness the non-universality of the WFA.
\end{remark}

\smallskip\noindent{\bf Objective $\LimInfAvg\geq 0$.}
For the decidability result, we will show the existence of 
\emph{cycle independent} modular winning strategies, and 
the result will also be useful to establish the complexity results.
\NewChange{
\smallskip\noindent{Informal description of the solution.}
We show that there is a winning strategy that is oblivious to cycles that are formed in the play.
Informally, this is true since there must exists a winning strategy in which all the cycles that are formed during the play has non-negative weight.
Intuitively, if a strategy cannot prevent a negative weight cycle to occur even once, then it cannot prevent it from occurring infinitely often, and thus it cannot prevent a play with negative mean-payoff.
Since the strategy may assume that all formed cycles have positive weight, it may ignore any formed cycle, as it should be able to win even in the scenario where this cycle never occurred, and the fact that it did occur only increase the total weighted sum of the play.
We formalize this intuition in the next lemmas:
In Lemma~\ref{lemm:ManIsWellDefinedAndConsistent} we show how to construct a cycle-free strategy from any arbitrary (not necessarily modular) winning strategy.
In Lemmas~\ref{lem:ManStrategySimPayoff} and~\ref{lem:LimInfImpliesManipLimSup} we show that if the original strategy is winning for the lim-inf objective, then the formed cycle-free strategy is winning for the lim-sup objective.
In Lemmas~\ref{lemm:StrategyIsModularAndCycleFree} and~\ref{lem:ModularStrategyInfImpCycleIndStrategiesSup} we show that if the original strategy is a modular one, then the formed strategy is also a modular winning strategy for the lim-inf objective, and thus a modular winning strategy exists if and only if there exists a cycle-free modular winning strategy.
}

We start with the notion of a cycle free path in a graph.

\smallskip\noindent{\bf Cycle free path.}
Let $G=(V,E)$ be a simple (no parallel edges) directed graph.  
We define the operator $\CycleFree : V^* \to V^*$ in the following way:
let $\pi = \atuple{v_1, v_2, \dots, v_n}$ be a finite path in $G$.
\begin{itemize}
\item $\CycleFree(\pi) = \pi$ if $\pi$ is a simple path (i.e., with no cycles).
\item Otherwise we define $\CycleFree$ inductively as follows.
Let $\CycleFree(v_1 \dots v_{n-1}) = u_1 u_2 \dots u_m$.
Let $i$ be the first index such that $v_n = u_i$.
If such an index does not exist, then $\CycleFree(\pi) = u_1 u_2 \dots u_m v_n$.
Otherwise $\CycleFree(\pi) = u_1 u_2 \dots u_i$.
Intuitively, the $\CycleFree$ operator takes a finite path and returns 
a simple path by removing the simple cycles according to the order of 
appearance.
\end{itemize}



\smallskip\noindent{\bf Cycle independent modular strategy.} 
Given a recursive game graph, a local strategy $\tau_i$ for module $A_i$ 
is  a \emph{cycle independent local strategy}, 
if for every $\rho \in V_i^*$ we have $\tau_i(\rho) = \tau_i(\CycleFree(\rho))$.
A modular strategy $\tau = \Set{\tau_i}_{i=1}^n$ is a 
\emph{cycle independent modular strategy} if $\tau_i$ is a cycle independent 
local strategy for every $i\in\RangeSet{1}{n}$.

\begin{Obs}\label{obs:NotManyCycleIndStrategies}
For a recursive game graph $\wrg = \atuple{A_1, \dots, A_n}$, 
there exist at most $|V|^{|V|^{|V|}}$ different cycle independent modular 
strategies, where $|V|$ is the number of vertices in $\wrg$. 
\end{Obs}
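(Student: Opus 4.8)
The plan is to argue that every cycle independent modular strategy is determined by a \emph{finite} amount of data, and then to count that data. The key to finiteness is the defining identity $\tau_i(\rho)=\tau_i(\CycleFree(\rho))$ together with the fact that $\CycleFree$ always returns a simple (repetition free) path: the value of a local strategy $\tau_i$ on an arbitrary local history $\rho\in V_i^*$ is therefore already fixed once its values on the simple paths of $A_i$ are fixed. Hence a cycle independent modular strategy $\tau=\Set{\tau_i}_{i=1}^n$ is nothing more than a choice, for each module $A_i$ and each simple path of $A_i$ ending in a vertex controlled by player~1, of one legal outgoing transition; and conversely any such choice extends through $\CycleFree$ to a well defined cycle independent modular strategy. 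This identification is what I would establish first, as it turns the a priori infinite object (a function on $V_i^*$) into a finite table.

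First I would record the two elementary bounds feeding the count. A simple path is a sequence of pairwise distinct vertices, so it has length at most $|V|$ and there are only finitely many of them. At each such path the controlling player selects one of at most $|V|$ successor transitions. Combining these independent choices over all simple paths, and multiplying over the $n$ modules (whose vertex sets merely partition $V$), a routine estimate yields that the total number of cycle independent modular strategies is at most $|V|^{|V|^2}$, which is the assertion of the observation. I would keep this estimate crude on purpose, since all that is needed downstream is finiteness with an explicit computable bound.

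The only genuinely conceptual step is the reduction in the first paragraph: once one sees that $\CycleFree$ maps $V_i^*$ onto the finite set of simple paths, and that cycle independence means the strategy factors through this map, the statement is immediate. The two small points to check carefully are that $\CycleFree(\rho)$ is indeed always simple --- which is clear from its inductive definition, where each cycle is excised the moment it closes --- and that only simple paths ending at player~1 vertices carry a free choice, player~2 vertices contributing nothing to the strategy. I do not anticipate any real difficulty beyond this bookkeeping.
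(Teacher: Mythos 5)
The paper never actually proves this Observation---it is stated bare, as self-evident---so the only comparison available is with the argument it implicitly presupposes, and the first half of your proposal is exactly that argument: since $\CycleFree(\rho)$ is always a simple path and $\CycleFree$ acts as the identity on simple paths, a cycle independent local strategy $\tau_i$ is precisely an arbitrary assignment of a legal successor to each simple local history ending in a player-1 vertex, every such assignment extending to a cycle independent strategy via $\tau_i(\rho)=\tau_i(\CycleFree(\rho))$. This reduction is correct, and it does establish finiteness of the set of cycle independent modular strategies, which is the only consequence the paper ever uses (both Lemma~\ref{lem:ModularStrategyInfImpCycleIndStrategiesSup} and Theorem~\ref{thm:ModularStartegyIsDecidable} need only that this set is finite).

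The gap is in your ``routine estimate.'' Your argument bounds the number of strategies by $|V|^{S}$, where $S$ is the number of simple local histories ending in player-1 vertices, and $S$ is \emph{not} bounded by $|V|^{2}$ (nor by any polynomial in $|V|$): a simple path has length at most $|V|$, but the number of simple paths grows like $|V|!$ when the module's internal graph is dense. So what your argument actually delivers is a doubly exponential bound of order $|V|^{|V|^{|V|}}$, not $|V|^{|V|^{2}}$. Moreover, the literal constant cannot be rescued by a cleverer count: take a single module with entry $e$, exit $x$, and internal player-1 nodes $v_1,\dots,v_4$, with transitions $e\to v_i$, $v_i\to v_j$ for $i\neq j$, and $v_i\to x$. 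Then $|V|=6$, and there are $1+4+12+24+24=65$ simple local histories whose endpoint has exactly four legal successors, so---with strategies defined, as in the paper, as functions on all of $(N_i\cup\Returns_i)^*$---there are $4^{65}=2^{130}$ distinct cycle independent strategies, exceeding $|V|^{|V|^2}=6^{36}\approx 2^{93}$. So you should not try to prove the stated constant; the correct statement your reduction proves, and the one that suffices for everything downstream, is that the number of cycle independent modular strategies is finite, bounded by $|V|$ raised to the number of simple paths.
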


The main result of this section is that if there is a modular winning 
strategy, then there is a cycle independent modular winning strategy.
To establish the result we introduce the notion of \emph{manipulated} 
paths, using \emph{rewind}, \emph{fast forward}, and \emph{simulation} 
operations.

\smallskip\noindent{\bf Manipulated paths, rewind, fast forward, and simulation operations.}
Let $\tau = \Set{\tau_i}_{i=1}^n$ be a modular winning strategy for the objective 
$\LimInfAvg\geq 0$, and let $\epsilon > 0$ be an arbitrary constant.
Let $\pi_m = \pi_{m-1} \cdot n_i$ be a play prefix at position $m$, 
that ends at node $n_i \in A_i$ .
The \emph{manipulated play prefix} of $\pi_m$ according to $\tau$ and 
$\epsilon$, denoted by $\Manipulated(\pi_m)$, 
is defined inductively as follows:
Let $\Manipulated(\pi_{m-1})$ be the manipulated play prefix at position $m-1$.
Then $\Manipulated(\pi_{m})$ is obtained from $\Manipulated(\pi_{m-1})$ and $n_i$ by one of 
the following operations.
\begin{enumerate}
\item \emph{Rewind operation}:
The condition for the rewind operation is that 
$\CycleFree(\Manipulated(\pi_{m-1})) \cdot n_i$ closes a proper cycle in the 
top module $A_i$.
If the rewind condition holds, then $\Manipulated(\pi_{m})$ is formed from 
$\Manipulated(\pi_{m-1}) \cdot n_i$ by removing the proper cycle suffix from 
$\Manipulated(\pi_{m-1}) \cdot n_i$.
Intuitively the rewind operation \emph{rewinds} the path by chopping off 
the cycle in the end 
(we note that the cycle may not be simple\NewChange{, but it is unique}).

\item \emph{Fast forward operation:}
Let $h_0 = \Manipulated(\pi_{m-1}) \cdot n_i$.
The fast forward condition for a history $h$ that ends at node $n_i$ is 
as follows: 
there exists a play prefix $h \cdot \pi '(h)$ consistent with $\tau$ 
such that $n_i \cdot \pi '(h)$ is a proper cycle with average weight 
less than $-\epsilon$. 
In order to be precise, we define $\pi '(h)$ as the first such prefix according to 
the lexicographic ordering of the prefixes.
If the rewind condition does not hold, and the fast forward condition holds for $h_0$, 
then construct $h_1 = h_0 \cdot \pi'(h_0)$.
Continue the process and build $h_i= h_{i-1} \cdot \pi '(h_{i-1})$, 
as long as $h_{i-1}$ satisfies the fast forward condition.
If there exists a minimal index $i\in\Nat$ such that $h_i$ does not satisfy 
the fast forward condition, then we define $\Manipulated(\pi_{m}) = h_i$.
Otherwise, $\Manipulated(\pi_{m})$ is undefined (not well defined), 
and we say that the process is stuck in the fast forward operation.

\item \emph{Simulation operation:}
Else, if the rewind and the fast forward conditions do not hold, then we 
have $\Manipulated(\pi_{m}) = \Manipulated(\pi_{m-1}) \cdot n_i$.
\end{enumerate}

\Heading{Intuitive overview of the $\Manipulated$ operator.}
The $\Manipulated$ operator generates an \emph{alternative history} for the play.
The generated history does not contain cycles with average weight more than $-\epsilon$ (due to the rewind operations) 
and if the strategy $\tau$ allows a \emph{possible future} in which the play returns to the same position and 
the formed cycle has an average weight less than $-\epsilon$, 
then this possible future is added to the history (fast forward operation).
The alternative history has the following three key properties:
(i)~It is consistent with $\tau$, i.e., it could really have happened.
(ii)~If the average weight of the alternative history is is at least $-\epsilon$, then the average weight of the actual history is also at least $-\epsilon$. 
Hence, if player~1 wins in the play that is induced by the alternative history, then he also wins in the real play.
(iii)~When player~1 applies $\tau$ on the alternative history, all the cycles in the real history have average weight 
at least $-\epsilon$. 
Hence, player~1 does not need to \emph{remember} the actual cycles in the history (because in the worst case scenario 
they will simply occur again, and their weights are good for him) and he can play independently of the formed cycles.
The next example demonstrates a strategy according to manipulated history.

\begin{examp}\label{examp:Manip}
Consider the RSM shown in Figure~\ref{fig:ManExample}, and consider a 
player-1 modular strategy $\tau$ that follows the edge $v_1 \to v_3$ if $v_1$ 
is visited odd number of times (in the current invocation of $A_0$) and 
otherwise it follows the edge $v_1\to v_2$. 
In this strategy player~1 will play $v_1\to v_3$ in the first time $v_1$ is 
visited, $v_1 \to v_2$ in the second time, $v_1\to v_3$ in the third time, 
and so forth.
With this strategy player~1 can ensure a mean-payoff value of at least $0$.
We now illustrate a play according to the manipulated history for 
$\epsilon = \frac{1}{2}$.
The play begins by following $\En \to v_1$ and the $\Manipulated$ operator 
performs a simulation step. So the current manipulated history is $\En \to v_1$.
According to $\tau$, the next move for player~1 is $v_1\to v_3$ and if 
player~2 will then select $v_3 \to v_1$, then a cycle with average weight $-1 < -\epsilon$ will be formed.
Hence, a fast-forward step is made and the manipulated history is now 
$\En \to v_1\to v_3\to v_1$ (the \emph{real} history is $\En \to v_1$).
According to the manipulated history, $v_1$ has been visited twice, hence the next 
move for player~1 is $v_1 \to v_2$, and a corresponding simulation step is 
done for the manipulated history (which is currently 
$\En \to v_1 \to v_3 \to v_1\to v_2$).
We now consider that the next move for player~2 is $v_2 \to v_1$.
Hence, the new manipulated history is $\En\to v_1\to v_3\to v_1\to v_2\to v_1$, 
and since the suffix contains a cycle with average weight $1 > -\epsilon$, 
then a rewind operation is done and the manipulated history is (again) 
$\En\to v_1\to v_3\to v_1$, and therefore the next move for player~1 is 
(again) $v_1\to v_2$.
We now consider that the next move for player~2 is to invoke the module $A_0$.
This move is simulated in the manipulated history, and the play continues.

We observe that when playing according to the manipulated history the only 
\emph{real} move that player~1 will ever do is $v_1\to v_2$, and therefore the 
obtained strategy is cycle independent (although $\tau$ is not), and it is easy 
to verify that this strategy ensures a mean-payoff value of at least $-\epsilon$ 
(and in this example even a positive mean-payoff).

\begin{figure}[H]
\begin{center}
\begin{picture}(70,35)(16,-50)
\put(20,-22){\Large{$A_0$}}
\node[Nw=16.0,Nh=8.0,Nmr=1.0](n1)(68,-40.0){$A_0$}
\node[Nw=2.0,Nh=2.0,Nmr=1.0](n2)(60.0,-24.0){}

\node[Nw=16.0,Nh=8.0,Nmr=1.0](n3)(68.0,-24.0){$A_0$}
\node[Nw=2.0,Nh=2.0,Nmr=1.0](n4)(60.0,-40.0){}

\node[Nw=4.0,Nh=4.0,Nmr=2.0](Player1)(32.0,-32.0){$v_1$}

\node[Nw=4.0,Nh=4.0,Nmr=0.0](Player2Up)(46.0,-24.0){$v_2$}
\node[Nw=4.0,Nh=4.0,Nmr=0.0](Player2Down)(46.0,-40.0){$v_3$}

\node[Nw=2.0,Nh=2.0,Nmr=1.0](Entry)(17.455,-32.0){}

\drawedge(Entry,Player1){$0$}

\drawedge[curvedepth=4](Player1,Player2Up){$7$}
\drawedge[curvedepth=0](Player2Up,Player1){$-5$}

\drawedge(Player2Up,n2){$0$}

\drawedge[curvedepth=0](Player1,Player2Down){$9$}
\drawedge[curvedepth=4](Player2Down,Player1){$-11$}

\drawedge(Player2Down,n4){$-9$}

\node[Nw=70.0,Nh=35,Nmr=7.475](Frame)(52.455,-32.0){}
\end{picture}
\caption{RSM with only one module ($A_0$) and no exit nodes.
Player~1 controls the circle vertex and the rest of the vertices are controlled by player~2.}\label{fig:ManExample}
\end{center}
\end{figure}
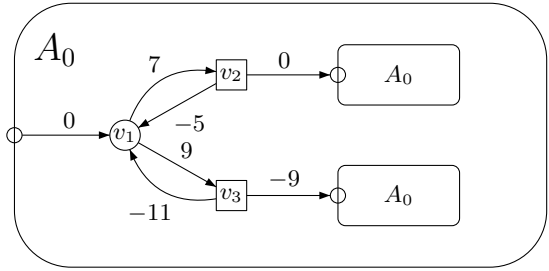

\end{examp}

In the following lemma we establish consistency of the manipulated operation for a winning strategy and the fact that it is well-defined.

\begin{lem}\label{lemm:ManIsWellDefinedAndConsistent}
Let $\tau$ be a winning strategy (a general winning strategy, not necessarily modular)
for the objective $\LimInfAvg \geq 0$.
Let $\epsilon>0$ be an arbitrary constant.
We define a strategy $\sigma$ in the following way: for a history $\pi'$ 
we have $\sigma(\pi') = \tau(\Manipulated(\pi'))$.
Let $\pi_m$ be a play prefix of length $m$ that is consistent with $\sigma$.
Then the following assertions hold:
\begin{itemize}
\item $\Manipulated(\pi_{m})$ is well defined, i.e., the process does not 
get stuck in the fast forward operation.
\item $\Manipulated(\pi_{m})$ is consistent with $\tau$.
\end{itemize}
\end{lem}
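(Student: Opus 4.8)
We have a winning strategy $\tau$ for the objective $\LimInfAvg \geq 0$. We define a new strategy $\sigma$ via $\sigma(\pi') = \tau(\Manipulated(\pi'))$, where the manipulated play prefix is built by three operations: rewind (chop off proper cycles), fast forward (insert negative-weight proper cycles when $\tau$ allows them), and simulation (append the new node). We must prove two things about any play prefix $\pi_m$ consistent with $\sigma$: (a) well-definedness — the fast forward process never gets stuck in an infinite loop; (b) consistency — $\Manipulated(\pi_m)$ is itself a play prefix consistent with $\tau$.

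**The plan for consistency.** The consistency claim I would prove by induction on $m$, maintaining as an invariant that $\Manipulated(\pi_m)$ is a $\tau$-consistent play prefix ending at the same node $n_i$ as $\pi_m$ (this node-preservation is what makes $\sigma$ well-defined as a strategy, since $\sigma$ reads off $\tau$ at the manipulated history). The base case is the empty/initial prefix. For the inductive step, $\pi_m = \pi_{m-1}\cdot n_i$ where $n_i$ is a successor of the last node of $\pi_{m-1}$. By the induction hypothesis $\Manipulated(\pi_{m-1})$ is $\tau$-consistent and ends at that same last node, so the move to $n_i$ is legal from $\Manipulated(\pi_{m-1})$ as well. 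I then examine the three operations. For \emph{rewind}, chopping off a proper cycle returns the path to an earlier configuration in the same module at the same stack height, so $\Manipulated(\pi_m)$ is a genuine prefix of a $\tau$-consistent path; the key point, which I would spell out, is that since $\tau$ is read through $\CycleFree$ the strategy's prescription at the cut point is unchanged, so consistency is preserved. For \emph{fast forward}, each inserted segment $\pi'(h)$ is by its very definition a play prefix consistent with $\tau$ (a proper cycle that $\tau$ permits), so concatenating it keeps the whole path $\tau$-consistent. For \emph{simulation}, we simply append $n_i$, which is legal by the opening remark. In each case the resulting manipulated prefix ends at $n_i$, closing the induction.

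**The plan for well-definedness — the main obstacle.** This is the hard part. I must show the fast forward process cannot insert infinitely many negative proper cycles in a single round, i.e.\ the chain $h_0, h_1, h_2, \dots$ with $h_i = h_{i-1}\cdot \pi'(h_{i-1})$ terminates. The argument I would pursue is by contradiction: suppose it never terminates. Then each $h_i$ satisfies the fast forward condition, so each inserted proper cycle $n_i\cdot \pi'(h_{i-1})$ has average weight $< -\epsilon$, and moreover each such segment is $\tau$-consistent by the same reasoning as above. Concatenating all of them produces an \emph{infinite} play consistent with $\tau$ in which infinitely many proper cycles of average weight below $-\epsilon$ occur. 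Because these are proper cycles (stack height returns to its starting value), the negative weight they contribute is not ``hidden'' by unbounded stack growth — the partial sums of weights are driven down by a definite amount $\epsilon\cdot(\text{cycle length})$ on each cycle. I would argue that this forces $\LimInfAvg$ of the resulting $\tau$-consistent infinite play to be at most $-\epsilon < 0$, contradicting that $\tau$ is a winning strategy for $\LimInfAvg \geq 0$.

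**Key technical point to nail down.** The delicate step in the well-definedness argument is converting ``infinitely many proper cycles each of average weight $< -\epsilon$'' into ``the lim-inf of the running average of the whole infinite play is $< 0$.'' Here I would use the properness of the cycles crucially: since each cycle begins and ends at the same stack height and the same node in module $A_i$, the play segments between cycles do not distort the overall averaging in a way that could rescue positivity. I would bound the running average after $k$ inserted cycles by comparing total accumulated weight against total length, observing that each cycle drags the total weight down by more than $\epsilon$ times its length; a standard Cesàro-type estimate then gives $\limsup$ of the partial averages bounded above by $-\epsilon$, hence $\LimInfAvg \leq -\epsilon$. This contradiction establishes that only finitely many fast forward steps occur, so $\Manipulated(\pi_m)$ is well defined, completing the proof.
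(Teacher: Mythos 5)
Your proposal is correct and follows essentially the same route as the paper's proof: consistency by induction on $m$ with a case analysis over the rewind, fast forward, and simulation operations (the paper also implicitly uses your node-preservation invariant), and well-definedness by contradiction, turning an infinite fast forward loop into an infinite $\tau$-consistent play with $\LimInfAvg \leq -\epsilon < 0$. One small precision: your claim that the $\limsup$ of the partial averages is bounded by $-\epsilon$ should be read only along the subsequence of cycle-completion points --- the full-sequence $\limsup$ bound can fail since inserted cycles may have long positive interior swings (the paper's own footnote flags exactly this asymmetry) --- but since the contradiction only requires the $\liminf$ bound, your argument stands as is.
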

\begin{proof}
We prove both the items 
by induction on $m$. In the base case when $m = 0$ (i.e., empty play prefix), 
all the claims are trivially satisfied.
We now consider the inductive case with  $m > 0$.
Let $\pi_m = \pi_{m-1} \cdot n_i$, for some $n_i \in A_i$, be a play prefix
consistent with $\sigma$.
By the inductive hypothesis $\Manipulated(\pi_{m-1})$ is well defined and 
consistent with $\tau$. 
Then $\Manipulated(\pi_{m})$ is computed by performing one of the following operations.
\begin{itemize}
\item \emph{Rewind operation:}
In this case clearly $\Manipulated(\pi_{m})$ is well defined.
In addition $\Manipulated(\pi_{m})$ is a prefix of $\Manipulated(\pi_{m-1})$, 
which is by the inductive hypothesis consistent with $\tau$, 
hence also $\Manipulated(\pi_{m})$ is consistent with $\tau$.

\item \emph{Fast forward operation:} 
Towards contradiction, let us assume that the fast forward process enters an
infinite loop.
We consider the prefix $h_0 = \ov{h}_0 \cdot n_i$, where 
$\ov{h}_0=\Manipulated(\pi_{m-1})$, and let $\ov{v}_0$ be the last 
vertex in $\ov{h}_0$. 
The prefix $h_0$ is consistent with $\tau$ for the following reason: 
$\Manipulated(\pi_{m-1})$ is consistent with $\tau$ by the inductive 
hypothesis, and if $\ov{v}_0$ is under player~1's control, 
then $\tau(\Manipulated(\pi_{m-1})) = n_i$ (as $\pi_m$ consistent with $\sigma$), and 
otherwise $\ov{v}_0$ is under player~2's control, and since there is a 
transition from $\ov{v}_0$ to $n_i$ (as $\pi_m$ is a play prefix) 
the consistency of $h_0$ follows.
The prefix $h_0$ has an infinite sequence of extensions 
$\pi^1, \pi^2, \dots$ such that the infinite play 
$h = h_0 \pi^1 \pi^2 \dots \pi^j \pi^{j+1} \dots$ is consistent with $\tau$ and 
$\Avg(\pi^j) < -\epsilon$ for every $j\in\Nat$ (by the infinite loop
of the fast forward operation).
Hence, by definition, $\LimInfAvg(h) \leq -\epsilon < 0$.\footnote{
only this inequality need not hold for $\LimSupAvg(h)$} 
Thus we get that there exists a play consistent with $\tau$ that is not winning
for the objective $\LimInfAvg \geq 0$, 
which contradicts that $\tau$ is a winning strategy.
Hence, the fast forward process always terminates.
It follows that $\Manipulated(\pi_{m})$ is well defined and also 
by definition of the fast forward operation it is consistent with $\tau$. 

\item \emph{Simulation operation:}
By definition, $\Manipulated(\pi_m) = \Manipulated(\pi_{m-1}) \cdot n_i$.
Let $\ov{h}_0=\Manipulated(\pi_{m-1})$, and let $\ov{v}_0$ be the last 
vertex in $\ov{h}_0$. 
The prefix $\Manipulated(\pi_m)$ is consistent with $\tau$ for the following 
reason: $\Manipulated(\pi_{m-1})$ is consistent with $\tau$ by the inductive 
hypothesis, and if $\ov{v}_0$ is under player~1's control, 
then $\tau(\Manipulated(\pi_{m-1})) = n_i$, and 
otherwise $\ov{v}_0$ is under player~2's control, and there is a 
transition from $\ov{v}_0$ to $n_i$.
Thus we have the consistency of $\Manipulated(\pi_m)$, and the fact that it well-defined is trivial.
\end{itemize}
Hence we have that $\Manipulated(\pi_m)$ is both consistent with $\tau$ and 
well defined.
\hfill\qed
\end{proof}

In the following lemma we obtain a bound on the average of the 
play prefixes obtained from the manipulated operation of a winning 
strategy.

\begin{lem}\label{lem:ManStrategySimPayoff}
Let $\tau$ be a winning strategy (a general winning strategy, not necessarily modular)
for the objective $\LimInfAvg \geq 0$. 
Let $\epsilon>0$ be an arbitrary constant.
We define a strategy $\sigma$ in the following way: for a history $\pi'$ 
we have $\sigma(\pi') = \tau(\Manipulated(\pi'))$.
Let $\pi_m$ be a play prefix of length $m$ that is consistent with $\sigma$.
Then we have $w(\pi_m) \geq w(\Manipulated(\pi_m)) - \epsilon \cdot |\pi_m|$.
\end{lem}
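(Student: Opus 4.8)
The plan is to prove the inequality by induction on the round $m$, following exactly the three-case split (rewind, fast forward, simulation) used in the construction of $\Manipulated(\cdot)$ and already exploited in Lemma~\ref{lemm:ManIsWellDefinedAndConsistent}, which guarantees that $\Manipulated(\pi_m)$ is well defined and consistent with $\tau$. The base case $m=0$ is the empty play prefix, where $w(\pi_0)=w(\Manipulated(\pi_0))=0$ and the claim is immediate. For the inductive step I write $\pi_m = \pi_{m-1}\cdot n_i$ and let $e$ be the transition appended at round $m$, so that $w(\pi_m) = w(\pi_{m-1}) + w(e)$ and $|\pi_m| = |\pi_{m-1}| + 1$. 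It is convenient to track $D_m := w(\Manipulated(\pi_m)) - w(\pi_m)$ and to prove the equivalent bound $D_m \leq \epsilon\cdot|\pi_m|$.

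First I would dispatch the two easy operations. In the \emph{simulation} case $\Manipulated(\pi_m) = \Manipulated(\pi_{m-1})\cdot e$, so $D_m = D_{m-1}$; since $|\pi_m| > |\pi_{m-1}|$ the induction hypothesis $D_{m-1} \leq \epsilon\cdot|\pi_{m-1}|$ already gives $D_m \leq \epsilon\cdot|\pi_m|$. In the \emph{fast forward} case $\Manipulated(\pi_m)$ is obtained from $\Manipulated(\pi_{m-1})\cdot e$ by appending a concatenation $P$ of proper cycles, each of average weight strictly below $-\epsilon$; hence $w(P) < -\epsilon\cdot|P| \leq 0$ and $D_m = D_{m-1} + w(P) \leq D_{m-1}$, so the bound is preserved (in fact improved). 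Both cases thus reduce to the monotonicity of the slack $\epsilon\cdot|\pi_m|$ in $m$.

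The \emph{rewind} case is the heart of the argument. Here $\Manipulated(\pi_m)$ is obtained from $\Manipulated(\pi_{m-1})\cdot e$ by deleting a proper cycle $C$, so $D_m = D_{m-1} - w(C)$; a naive per-step estimate would demand $w(C) \geq -\epsilon$, which is too strong for a long cycle. The key sub-claim I would establish is that every proper cycle $C$ deleted by a rewind has average weight at least $-\epsilon$, i.e. $w(C) \geq -\epsilon\cdot|C|$: if its average were below $-\epsilon$, then at the round where the manipulated history last stood at the start vertex of $C$ the fast forward condition would have been satisfied (with $C$ itself as the witnessing continuation), contradicting the fact that the fast forward loop had terminated there. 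To convert this per-cycle average control into the stated total bound I would pass to a global, amortized accounting. Summing the per-step changes gives $D_m = -\sum_{\text{rewind}} w(C_k) + \sum_{\text{ff}} w(P_j)$, and plugging in $w(C_k) \geq -\epsilon\cdot|C_k|$ and $w(P_j) \leq -\epsilon\cdot|P_j|$ yields $D_m \leq \epsilon\bigl(\sum_{\text{rewind}}|C_k| - \sum_{\text{ff}}|P_j|\bigr)$. Finally, counting lengths shows that the current length satisfies $|\Manipulated(\pi_m)| = m + \sum_{\text{ff}}|P_j| - \sum_{\text{rewind}}|C_k| \geq 0$, so the parenthesized quantity is at most $m = |\pi_m|$, and $D_m \leq \epsilon\cdot|\pi_m|$ follows.

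The main obstacle is precisely this rewind case, and within it the interaction between a deleted cycle and previously fast forwarded cycles: since the rewind condition is phrased through the $\CycleFree$ operator (which refers to the simplified history rather than the physical one), a physically deleted proper cycle may contain earlier fast forward cycles whose average lies below $-\epsilon$, so the clean per-cycle average bound must be argued with care. The resolution I expect to use is that the negative weight of any such inner fast forward cycle has already been booked in the $\sum_{\text{ff}} w(P_j)$ term and is exactly cancelled when that cycle is removed, so it cannot harm the final estimate; making the fast forward exhaustion property precise in the presence of $\CycleFree$ is the delicate step I anticipate requiring the most attention.
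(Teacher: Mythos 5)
Your proof takes essentially the same route as the paper's: the paper's key claim is precisely your sub-claim that every rewound cycle $C$ has average weight at least $-\epsilon$, established by the same contradiction (such a cycle would witness the fast-forward condition at the round where the manipulated history rested at its base, which is impossible since the fast-forward loop had been exhausted there), and your amortized length-versus-weight accounting just makes explicit the bookkeeping that the paper compresses into its final sentence. One refinement worth adopting from the paper: it anchors the contradiction at the \emph{first} round $j$ with $\Manipulated(\pi_j)=\Manipulated(\pi_m)$, which is exactly what rules out that round being a rewind round (where the fast-forward condition is never tested), whereas your choice of the \emph{last} such round leaves that case open.
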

\begin{proof}
The following claim is the key for the proof.

\smallskip\noindent{\em Claim.}
Every time a rewind operation is done, 
the cycle $C$, for which $\Manipulated(\pi_m) \cdot C = \Manipulated(\pi_{m-1}) \cdot n_i$, 
satisfies that $\Avg(C) \geq -\epsilon$. 

We first prove the claim.
Towards a contradiction, assume that $\Avg(C) < -\epsilon$.
Let $j < m$ be the first index for which $\Manipulated(\pi_j) = \Manipulated(\pi_m)$.
Recall that by definition a rewind operation was done if $\CycleFree(\Manipulated(\pi_{m-1}) \cdot n_i)$ forms a proper cycle that begins and ends in the node $n_i$. Hence, the beginning of the proper cycle was not generated by a fast-forward operation (otherwise, it would have been omitted by the $\CycleFree$ operator).
Thus, such an index $j$ must exist.
We first argue that $\Manipulated(\pi_m) \cdot C = \Manipulated(\pi_{m-1}) \cdot n_i$ 
is consistent with $\tau$: 
(i)~$\Manipulated(\pi_{m-1})$ is consistent with $\tau$ (by Lemma~\ref{lemm:ManIsWellDefinedAndConsistent});
and
(ii)~let $\ov{h}_0=\Manipulated(\pi_{m-1})$, and let $\ov{v}_0$ be the last 
vertex in $\ov{h}_0$; 
if $\ov{v}_0$ is under player~1's control, 
then $\tau(\Manipulated(\pi_{m-1})) = n_i$, and 
otherwise $\ov{v}_0$ is under player~2's control, and there is a 
transition from $\ov{v}_0$ to $n_i$.
Thus we have the consistency of 
$\Manipulated(\pi_m) \cdot C = \Manipulated(\pi_{m-1}) \cdot n_i$, and it 
follows that $\Manipulated(\pi_j) \cdot C$ is also consistent with $\tau$.
Hence, since $\Avg(C) < -\epsilon$, a fast forward operation had occurred in 
position $j$ 
(note that it is not possible that $\Manipulated(\pi_j)$ was obtained after a rewind operation, 
since $j$ is the first index for which $\Manipulated(\pi_j) = \Manipulated(\pi_m)$). 
So by definition, a fast forward operation and not a simulation operation must occur.
Hence it is not possible that $\Manipulated(\pi_j) = \Manipulated(\pi_m)$, since at the very least, 
$\Manipulated(\pi_m) \cdot C$ is a prefix of $\Manipulated(\pi_j)$.
Thus, for every $j < m$ we have $\Manipulated(\pi_j) \neq \Manipulated(\pi_m)$,
and the contradiction (to the existence of $j$) is obtained. 

We now complete the proof of the lemma using the claim.
We note that the difference between $\pi_m$ and $\Manipulated(\pi_m)$ 
contains only (i)~cycles with negative weight that were added to 
$\Manipulated(\pi_m)$ (by fast forward operation)
or (ii)~cycles with average weight at most $-\epsilon$
and length at most $|\pi_m|$ that were chopped from $\Manipulated(\pi_m)$
(by rewind operation).
The desired result follows. 
\hfill\qed
\end{proof}

We now show that from a winning strategy for the objective $\LimInfAvg \geq 0$,
the strategy obtained using manipulated operation is winning for the 
objective $\LimSupAvg \geq -2\cdot \epsilon$. 

\begin{lem}\label{lem:LimInfImpliesManipLimSup}
Let $\tau$ be a winning strategy (a general winning strategy, not necessarily modular)
for the objective $\LimInfAvg \geq 0$. 
Let $\epsilon>0$ be an arbitrary constant.
We define a strategy $\sigma$ in the following way: for a history $\pi'$ 
we have $\sigma(\pi') = \tau(\Manipulated(\pi'))$.
Then $\sigma$ is a winning strategy for the objective $\LimSupAvg \geq -2\cdot\epsilon$.
\end{lem}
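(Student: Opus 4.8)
The plan is to fix an arbitrary infinite play $\pi$ consistent with $\sigma$ and show $\LimSupAvg(\pi) \geq -2\epsilon$, writing $\pi_m$ for its length-$m$ prefix and $L_m = |\Manipulated(\pi_m)|$. By Lemma~\ref{lem:ManStrategySimPayoff} we have $w(\pi_m) \geq w(\Manipulated(\pi_m)) - \epsilon\cdot m$, hence $\LimSupAvg(\pi) \geq \limsup_m \frac{w(\Manipulated(\pi_m))}{m} - \epsilon$, and it suffices to prove $\limsup_m \frac{w(\Manipulated(\pi_m))}{m} \geq -\epsilon$. If the lengths $L_m$ stay bounded, then $|w(\Manipulated(\pi_m))| \leq W\cdot\sup_m L_m$ is bounded, so $\frac{w(\Manipulated(\pi_m))}{m}\to 0$ and we are done immediately. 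Thus I would assume $\sup_m L_m = \infty$.

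The core step is to exhibit a single infinite play $h^*$ consistent with $\tau$ whose finite prefixes are realized by the manipulated plays along a subsequence. First I would argue that a \emph{committed} prefix of $\Manipulated(\pi_m)$ is never later eroded: a rewind only deletes the proper cycle that the current step just closed in the top module, while fast-forward and simulation only append, so once the endpoint has passed a stack height that is a permanent local minimum, the prefix up to that point is frozen. Consequently the prefixes $\Manipulated(\pi_m)$ taken at the infinitely many \emph{record} rounds (where $L_m$ attains a new maximum) form a nested family and define an infinite play $h^*$. Since every $\Manipulated(\pi_m)$ is consistent with $\tau$ (Lemma~\ref{lemm:ManIsWellDefinedAndConsistent}), so is $h^*$; as $\tau$ is winning for $\LimInfAvg \geq 0$ we obtain $\LimInfAvg(h^*)\geq 0$. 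In particular, for every $\delta>0$ there is an $M_\delta$ with $w(h^*[1,L])\geq -\delta L$ for all $L\geq M_\delta$, and at every record round $m$ we have $\Manipulated(\pi_m)=h^*[1,L_m]$, hence $w(\Manipulated(\pi_m))\geq -\delta L_m$.

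It remains to control $L_m$ in terms of $m$. I would split $\Manipulated(\pi_m)$ into the surviving real segments (vertices produced by a simulation step or by the triggering move of a fast-forward) and the fast-forwarded cycles. The surviving real length $G$ is at most the number $m$ of real rounds, so its contribution to the weight is at most $W\cdot m$; each fast-forwarded proper cycle has average weight below $-\epsilon$, so the fast-forwarded part of length $F = L_m - G \geq L_m - m$ contributes at most $-\epsilon F$. This yields the upper bound $w(\Manipulated(\pi_m)) \leq (W+\epsilon)\,m - \epsilon\,L_m$. Combining with the record-round lower bound $w(\Manipulated(\pi_m))\geq -\delta L_m$ gives $(\epsilon-\delta)L_m \leq (W+\epsilon)m$, i.e. $L_m \leq \frac{W+\epsilon}{\epsilon-\delta}\,m$ once $\delta<\epsilon$. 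Therefore $\frac{w(\Manipulated(\pi_m))}{m}\geq -\delta\frac{L_m}{m} \geq -\delta\frac{W+\epsilon}{\epsilon-\delta}$ along the record rounds, and choosing $\delta>0$ small enough that this quantity is at least $-\epsilon$ yields $\limsup_m \frac{w(\Manipulated(\pi_m))}{m}\geq -\epsilon$, whence $\LimSupAvg(\pi)\geq -2\epsilon$. The two $\epsilon$'s, one from Lemma~\ref{lem:ManStrategySimPayoff} and one from this length estimate, are precisely what produces the factor $2$.

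I expect the main obstacle to be the stabilization argument underlying $h^*$: making rigorous that the interplay of rewind and fast-forward never erases a committed prefix, so that the record-round prefixes genuinely nest and converge to an infinite play consistent with $\tau$. The weight bookkeeping, namely that the surviving real length is at most $m$ and that the fast-forwarded cycles are exactly the super-negative ones already isolated in the Claim inside the proof of Lemma~\ref{lem:ManStrategySimPayoff}, is routine once this structural fact is established.
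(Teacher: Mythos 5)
Your overall plan mirrors the paper's proof: reduce via Lemma~\ref{lem:ManStrategySimPayoff} to showing $\limsup_m w(\Manipulated(\pi_m))/m \geq -\epsilon$, dispose of a bounded case, and in the unbounded case extract from the manipulated prefixes an infinite play consistent with $\tau$ so that $\LimInfAvg \geq 0$ can be invoked. The genuine gap is precisely in the step you flag as the main obstacle: the prefixes $\Manipulated(\pi_m)$ taken at \emph{record} rounds do \emph{not} form a nested family, and your stack-height ``committed prefix'' criterion does not make them one. A rewind occurring after a record round chops a proper-cycle suffix of the record prefix, and the subsequent re-extension may differ, so the next record prefix can disagree with the earlier one at intermediate positions. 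Concretely, suppose all moves stay in one module at a fixed stack height: the manipulated prefix grows to $v_0 v_1 v_2$ (a record), the play then moves to $v_1$, closing the proper cycle $v_1 v_2 v_1$, so rewind cuts back to $v_0 v_1$; it then extends to $v_0 v_1 v_3 v_4$ (a new record). The two record prefixes differ at position~$2$, even though every position sits at a (trivially permanent) minimal stack height. The correct freezing statement, which is what the paper uses, is about \emph{lengths}, not heights or maxima: since rewind only deletes a suffix, position $i$ is frozen from any round after which $|\Manipulated(\pi_m)|$ never again drops to $i$ or below. Accordingly, the rounds to sample are the forward-looking minima of the length sequence (rounds $m$ with $L_m \leq L_{m'}$ for all $m' \geq m$), at which the entire manipulated prefix is frozen and is a prefix of the limit play.

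This also exposes a defect in your case split. You split on $\sup_m L_m < \infty$ versus $\sup_m L_m = \infty$, whereas the paper's (correct) dichotomy is $\liminf_m L_m < \infty$ (lengths bounded infinitely often, handled by the bounded-weight argument applied along that subsequence) versus $L_m \to \infty$ (where the limit play exists and the forward minima have unbounded length). Under your split, a play whose manipulated lengths oscillate---returning to length $1$ infinitely often while also exceeding every bound---lands in your second case, where no limit play $h^*$ exists at all; it must instead be handled by the first-case argument. Both defects are repairable exactly along the paper's lines (Cases~1 and~2 of its proof). Once repaired, your weight bookkeeping in the final paragraph is a genuine plus rather than routine: the bound $L_m \leq \frac{W+\epsilon}{\epsilon-\delta}\, m$ relating manipulated length to the number of real rounds fills in a step the paper treats tersely, namely passing from $\Avg(\Manipulated(\pi_{m_i})) \geq -\epsilon$ to $\Avg(\pi_{m_i}) \geq -2\epsilon$, which implicitly needs the manipulated prefix not to be much longer than the real one.
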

\begin{proof}
Let $\pi$ be a play consistent with $\sigma$, and let $\pi_m$ be the play prefix until position $m$.
We consider two cases to complete the proof.
\begin{enumerate}
\item In the first case, 
there exists a constant $n_0\in\Nat$ such that for infinitely many indices 
$m_1, m_2, \dots$, we have $| \Manipulated(\pi_{m_i}) |\leq n_0$.
In this case, due to Lemma~\ref{lem:ManStrategySimPayoff}, in positions 
$m_1, m_2, \dots$ we get that  
$w(\pi_{m_i}) \geq -n_0\cdot W -\epsilon\cdot |\pi_{m_i}|$.
Hence, by definition, $\LimSupAvg(\pi) \geq -\epsilon > -2\cdot\epsilon$.

\item In the second case, for every $i > 0$ there exists $\ell_i \in \Nat$, 
such that for every $m > \ell_i$ we have $|\Manipulated(\pi_m) | \geq i$.
By the definition of the manipulation operations, we get that 
$\Manipulated(\pi_{\ell_i})[0,i] = \Manipulated(\pi_{\ell_{i}+1})[0,i]$, i.e., 
the prefix up to length $i$ coincides.
Denote $\rho_i = \Manipulated(\pi_{\ell_i})[i]$ the $i$-th position of 
$\Manipulated(\pi_{\ell_i})$. 
Due to Lemma~\ref{lemm:ManIsWellDefinedAndConsistent} the infinite play 
$\rho = \rho_1 \rho_2 \dots$ is consistent with $\tau$.
Since $\tau$ is a winning strategy we get that $\LimInfAvg(\rho) \geq 0$.
Hence there exists infinitely many indices $m_1, m_2, \dots$ for which 
$\Avg(\Manipulated(\pi_{m_i})) \geq -\epsilon$, and therefore, due to 
Lemma~\ref{lem:ManStrategySimPayoff}, we get that $\Avg(\pi_{m_i}) \geq -2\cdot\epsilon$.
Hence  by definition of $\LimSupAvg$, we obtain $\LimSupAvg(\pi) \geq -2\cdot \epsilon$.
\end{enumerate}

This concludes the proof of the lemma. 
\hfill\qed
\end{proof}

\begin{lem}\label{lemm:StrategyIsModularAndCycleFree}
Given a WRG $\wrg$, let $\tau$ be a modular winning strategy for the objective 
$\LimInfAvg \geq 0$.
Let $\epsilon>0$ be an arbitrary constant.
We define a strategy $\sigma$ in the following way: for a history $\pi'$ 
we have $\sigma(\pi') = \tau(\Manipulated(\pi'))$.
Then $\sigma$ is a cycle independent modular strategy.
\end{lem}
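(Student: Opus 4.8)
The plan is to reduce both required properties to a single statement: for every play prefix $\pi'$ consistent with $\sigma$, the local history $\LocalHistory(\Manipulated(\pi'))$ read by the modular strategy $\tau$ depends only on $\CycleFree(\LocalHistory(\pi'))$. Once this is established, both conclusions are immediate. Writing $\LocalHistory(\pi') \in A_i$, and noting that the manipulation ends at the same node as $\pi'$, modularity of $\tau$ gives $\sigma(\pi') = \tau(\Manipulated(\pi')) = \tau_i(\LocalHistory(\Manipulated(\pi')))$, so $\sigma(\pi')$ is a function of $\CycleFree(\LocalHistory(\pi'))$ alone; this lets us define the local components $\sigma_i$ (modularity), and since the value depends only on the cycle-free history we obtain $\sigma_i(\rho) = \sigma_i(\CycleFree(\rho))$ for every $\rho \in V_i^*$ (cycle independence). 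Throughout I rely on Lemma~\ref{lemm:ManIsWellDefinedAndConsistent}, which guarantees that $\Manipulated(\pi')$ is well defined and consistent with $\tau$ at every step.

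I would prove the statement by maintaining two invariants along an induction on the length $m$ of $\pi'$. The first is that the manipulation preserves the cycle-free local history, i.e. $\CycleFree(\LocalHistory(\Manipulated(\pi_m))) = \CycleFree(\LocalHistory(\pi_m))$. This is checked by the incremental case analysis of the three operations, using the recursive definition of $\CycleFree$ (which depends only on the previous cycle-free word and the new node): the simulation step appends the new node $n_i$ exactly as the true play does; the rewind step fires precisely when $\CycleFree(\Manipulated(\pi_{m-1})) \cdot n_i$ closes a proper cycle and chops that cycle, mirroring the truncation that $\CycleFree$ performs on the real play; and the fast-forward step only appends proper cycles that return to $n_i$, which vanish under $\CycleFree$. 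The second invariant is that $\LocalHistory(\Manipulated(\pi_m))$ is in turn reconstructible from $\CycleFree(\LocalHistory(\Manipulated(\pi_m)))$: the manipulated history is its cycle-free skeleton with fast-forward cycles inserted, and these insertions are fixed node-by-node by the deterministic (lexicographically first) choice rule together with the modularity of $\tau$, which localizes each candidate proper cycle and its consistency check to the current module. Composing the two invariants yields that $\LocalHistory(\Manipulated(\pi_m))$ is a function of $\CycleFree(\LocalHistory(\pi_m))$.

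A preliminary structural observation that makes the induction go through is the \emph{module-locality} of the manipulation: rewind, fast forward and simulation all insert or delete proper cycles of the current top module, and a proper cycle is, by definition, a segment beginning and ending at the same vertex at the same stack height without the module returning. Hence the operations never touch the portion of the play residing at strictly lower stack levels; on a call the process starts afresh in the invoked module and on a return it resumes the suspended one. This justifies reasoning one module instantiation at a time and identifying proper cycles of the top module with cycles of its local history.

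The step I expect to be the main obstacle is the second invariant, namely that the fast-forward insertions are a function of the cycle-free skeleton and not of the (possibly cyclic) accumulated manipulated history. Unlike rewind and simulation, fast forward genuinely enlarges the manipulated history with cycles, so I must argue that revisiting a node and rewinding returns the manipulation to exactly the canonical state it occupies at that node. This rests on the determinism of the lexicographic selection rule and on the modularity of $\tau$, which together ensure that the saturation of low-average proper cycles performed at a node is fixed once the cycle-free local history up to that node is fixed. The delicate point is the exact bookkeeping of which segment the rewind removes, so that it undoes precisely a traversed input cycle while leaving the canonical fast-forward saturation intact; this is controlled by the fact that both the rewind condition and the removed suffix are phrased in terms of $\CycleFree(\Manipulated(\pi_{m-1})) \cdot n_i$.
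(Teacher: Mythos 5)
Your high-level plan is the same as the paper's: the paper's (very terse) proof consists of exactly your two ingredients, namely that $\LocalHistory(\Manipulated(\pi))$ is a function of $\LocalHistory(\pi)$ (modularity) and that a traversed cycle is chopped by the rewind operation (cycle independence), and your composite claim is a unified restatement of these. The gap is in your first invariant, and precisely in the step you use to justify it: the assertion that fast-forward insertions ``vanish under $\CycleFree$'' because they are proper cycles returning to $n_i$. That is true only when the inserted cycle avoids the nodes currently present in the cycle-free skeleton; nothing in the definition of the fast forward operation forbids the inserted low-average cycle from revisiting such a node, and when it does, $\CycleFree$ chops \emph{through} the skeleton instead of erasing the insertion. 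Concretely, all in one module at one stack level: let the play's local history be $a\,b$, and suppose at $b$ the fast forward appends the proper cycle $b\,y\,a\,z\,b$ (average weight below $-\epsilon$, consistent with $\tau$ --- a single such cycle is perfectly compatible with $\tau$ being winning). Then $\LocalHistory(\Manipulated(\pi_2)) = a\,b\,y\,a\,z\,b$, and computing $\CycleFree$ step by step gives $a$, $ab$, $aby$, then $a$ (chop to position~$1$), then $az$, then $azb$; so $\CycleFree(\LocalHistory(\Manipulated(\pi_2))) = a\,z\,b \neq a\,b = \CycleFree(\LocalHistory(\pi_2))$. Your invariant~1 is false as stated, and with it the composition of your two invariants, which is the core of your argument.

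The failure is not cosmetic, because it lands exactly on the ``delicate point'' you flag and refutes the mechanism you offer to control it. You say the bookkeeping is fine because ``both the rewind condition and the removed suffix are phrased in terms of $\CycleFree(\Manipulated(\pi_{m-1}))\cdot n_i$''; in fact only the \emph{condition} is --- the chop acts on the raw sequence $\Manipulated(\pi_{m-1})\cdot n_i$, which may contain several occurrences of $n_i$, and the sequence alone does not record which one is the play's. In the example above, when the play now returns to $a$ (play $a\,b\,a$), cycle independence forces the chop to restore $\Manipulated = a$, i.e.\ to cut at the \emph{first} $a$; but in the benign case where the insertion is node-disjoint from the skeleton (say $\Manipulated = a\,b\,x\,b$, the play continues to $c$ and then returns to $b$), cycle independence forces the chop to cut at the \emph{most recent} $b$, so as to keep the saturation $x\,b$. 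No uniform rule on node sequences does both, so your induction cannot be closed in the form you set it up: you must carry extra state --- a marking of which positions of the manipulated history are play (skeleton) positions, or equivalently define the canonical state $M(\rho)$ recursively over simple local histories $\rho$ and prove $\Manipulated(\pi_m) = M(\CycleFree(\LocalHistory(\pi_m)))$ directly, with rewind understood as resetting to $M$ of the truncated skeleton --- rather than trying to recover this information from $\CycleFree(\LocalHistory(\Manipulated(\pi_{m-1})))$, which, as the counterexample shows, can lose it.
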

\begin{proof}
In order to verify that $\sigma$ is a modular strategy, 
we observe that $\LocalHistory(\Manipulated(\pi))$ only depends
on $\LocalHistory(\pi)$ and that $\tau$ is a modular strategy.

To verify that $\sigma$ is a cycle independent strategy, 
we observe that if $\pi$ and $\pi \cdot \pi_c$ are consistent 
with $\sigma$ and $\pi_c$ is a cycle in $\CycleFree(\pi) \cdot \pi_c$, 
then $\Manipulated(\pi \cdot \pi_c) = \Manipulated(\pi)$ as $\pi_c$ will be chopped
by the rewind operation. 
\hfill\qed
\end{proof}

\begin{lem}\label{lem:ModularStrategyInfImpCycleIndStrategiesSup}
Given a WRG $\wrg$, if there exists a modular winning strategy for the 
objective $\LimInfAvg \geq 0$ for player~1, then there exists a 
cycle independent modular winning strategy for the objective for player~1.
\end{lem}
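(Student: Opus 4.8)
The plan is to combine the path-manipulation machinery with the finiteness of cycle-independent strategies and the $\LimInfAvg$/$\LimSupAvg$ coincidence that holds in one-player graphs. I start from the given modular winning strategy $\tau$ for the objective $\LimInfAvg \geq 0$. For an arbitrary constant $\epsilon > 0$ I consider the strategy $\sigma_\epsilon$ defined by $\sigma_\epsilon(\pi') = \tau(\Manipulated(\pi'))$, where the manipulation operation uses the parameter $\epsilon$. By Lemma~\ref{lemm:StrategyIsModularAndCycleFree} the strategy $\sigma_\epsilon$ is a cycle-independent modular strategy, and by Lemma~\ref{lem:LimInfImpliesManipLimSup} it is winning for the (weaker) objective $\LimSupAvg \geq -2\cdot\epsilon$.

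The next step is to eliminate the slack $-2\cdot\epsilon$ by a pigeonhole argument. By Observation~\ref{obs:NotManyCycleIndStrategies} there are only finitely many cycle-independent modular strategies. Hence, fixing a sequence $\epsilon_n \to 0$, there must be a single cycle-independent modular strategy $\sigma^*$ that coincides with $\sigma_{\epsilon_n}$ for infinitely many $n$. For every play $\pi$ consistent with $\sigma^*$ we then have $\LimSupAvg(\pi) \geq -2\cdot\epsilon_n$ for all these $n$, and letting $n \to \infty$ along this subsequence yields $\LimSupAvg(\pi) \geq 0$. Thus $\sigma^*$ is a cycle-independent (and therefore finite-memory) modular strategy that is winning for the objective $\LimSupAvg \geq 0$.

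Finally I bridge from $\LimSupAvg$ back to $\LimInfAvg$. Since $\sigma^*$ is finite-memory, fixing it produces a well-defined one-player (player-2) WRG $\wrg^{\sigma^*}$, whose paths are exactly the plays consistent with $\sigma^*$ by Lemma~\ref{lemm:EveryPathIsAlsoInGraph}. In $\wrg^{\sigma^*}$ every path satisfies $\LimSupAvg \geq 0$, so by Lemma~\ref{lemm:NotEmptyIfCycle} there is no non-decreasing cycle of negative weight; applying the same lemma in the converse direction, no path can have $\LimInfAvg < 0$, that is, every path satisfies $\LimInfAvg \geq 0$. Consequently $\sigma^*$ is winning for $\LimInfAvg \geq 0$, and it is the desired cycle-independent modular winning strategy.

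I expect the main obstacle to be the two-stage gap between what the manipulation lemmas produce, namely $\LimSupAvg \geq -2\cdot\epsilon$, and the target $\LimInfAvg \geq 0$. The $-2\cdot\epsilon$ slack can only be closed because there are finitely many candidate cycle-independent strategies, so one of them succeeds for arbitrarily small $\epsilon$; and the passage from $\LimSupAvg$ to $\LimInfAvg$ is valid only after the modular strategy has been fixed, since the residual graph is then one-player and Lemma~\ref{lemm:NotEmptyIfCycle} makes the two mean-payoff variants coincide. The point requiring care is to verify that $\sigma^*$ is genuinely finite-memory, so that $\wrg^{\sigma^*}$ is a legitimate one-player WRG to which Lemma~\ref{lemm:NotEmptyIfCycle} applies.
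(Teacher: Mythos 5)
Your proposal is correct and follows essentially the same route as the paper's own proof: manipulate $\tau$ with parameter $\epsilon$ to get cycle-independent modular strategies winning for $\LimSupAvg \geq -2\cdot\epsilon$ (Lemmas~\ref{lem:LimInfImpliesManipLimSup} and~\ref{lemm:StrategyIsModularAndCycleFree}), use the finiteness of cycle-independent strategies (Observation~\ref{obs:NotManyCycleIndStrategies}) to extract one winning for $\LimSupAvg \geq 0$, and then pass back to $\LimInfAvg \geq 0$ via the fixed-strategy one-player graph and Lemmas~\ref{lemm:EveryPathIsAlsoInGraph} and~\ref{lemm:NotEmptyIfCycle}. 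Your explicit $\epsilon_n \to 0$ subsequence argument is simply a more rigorous rendering of the pigeonhole step the paper states informally.
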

\begin{proof}
Let $\tau$ be a modular winning strategy for the objective $\LimInfAvg \geq 0$.
For every $\epsilon > 0$, define the strategy $\sigma_\epsilon$ in the following way: 
for a history $\pi'$ we have
$\sigma_\epsilon(\pi') = \tau(\Manipulated(\pi'))$.
By Lemma~\ref{lem:LimInfImpliesManipLimSup} and 
Lemma~\ref{lemm:StrategyIsModularAndCycleFree}, 
for every $\epsilon > 0$ we have that $\sigma_\epsilon$ is a  
cycle independent modular winning strategy for the objective 
$\LimSupAvg > -2\cdot\epsilon$.
There are only a bounded number of cycle independent modular strategies
(by Observation~\ref{obs:NotManyCycleIndStrategies}), thus there is an optimal cycle independent modular strategy, and
it must be the case that it is a winning strategy for 
the $\LimSupAvg \geq 0$ objective (otherwise it does not win for the objective $\LimSupAvg > -\epsilon$ for some $\epsilon > 0$). 
Let $\sigma$ be that strategy.
Let $\wrg^\sigma$ be the player-2 WRG obtained given the strategy $\sigma$.
As $\sigma$ is a winning strategy for the objective $\LimSupAvg \geq 0$,
then due to Lemma~\ref{lemm:EveryPathIsAlsoInGraph} and Lemma~\ref{lemm:NotEmptyIfCycle}, 
the graph $\wrg^\sigma$ does not have a negative non-decreasing cycle.
Hence due to Lemma~\ref{lemm:EveryPathIsAlsoInGraph} and Lemma~\ref{lemm:NotEmptyIfCycle}, 
the strategy $\sigma$ is winning also for the objective $\LimInfAvg \geq 0$.
This completes the proof of the result.
\hfill\qed
\end{proof}

The next theorem is an immediate consequence of Lemma~\ref{lem:ModularStrategyInfImpCycleIndStrategiesSup}.

\begin{thm}\label{thm:ModularStartegyIsDecidable}
Given a WRG $\wrg$, the problem of deciding if player~1 has a modular winning strategy
for the objective $\LimInfAvg\geq 0$ is decidable.
\end{thm}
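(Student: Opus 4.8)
The plan is to combine Lemma~\ref{lem:ModularStrategyInfImpCycleIndStrategiesSup} with the finiteness bound of Observation~\ref{obs:NotManyCycleIndStrategies} to reduce the existence question to a finite search, and then to decide whether a single fixed cycle independent modular strategy is winning. First I would invoke Lemma~\ref{lem:ModularStrategyInfImpCycleIndStrategiesSup}: player~1 has a modular winning strategy for $\LimInfAvg \geq 0$ if and only if player~1 has a \emph{cycle independent} modular winning strategy for the same objective (the ``only if'' is exactly the content of that lemma, and the converse is immediate since a cycle independent modular strategy is in particular a modular strategy). By Observation~\ref{obs:NotManyCycleIndStrategies} there are at most $|V|^{|V|^2}$ cycle independent modular strategies, and each is finitely described by its values on cycle free local histories, so the set of candidate strategies can be effectively enumerated.

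Next I would give a decision procedure that, for a fixed cycle independent modular strategy $\sigma$, checks whether $\sigma$ is winning for $\LimInfAvg \geq 0$. Since $\sigma$ depends only on the cycle free prefix of the local history, it is a finite-memory modular strategy, so I can form the player-2 WRG $\wrg^{\sigma}$ by taking the product of each module with the transducer for the corresponding local strategy. By Lemma~\ref{lemm:EveryPathIsAlsoInGraph} the plays consistent with $\sigma$ are exactly the paths of the one-player graph $\wrg^{\sigma}$. Hence $\sigma$ fails to be winning iff $\wrg^{\sigma}$ has a path $\pi$ with $\LimInfAvg(\pi) < 0$, which by Lemma~\ref{lemm:NotEmptyIfCycle} happens iff $\wrg^{\sigma}$ contains a negative non-decreasing cycle. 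Detecting a negative non-decreasing cycle is decidable: negating all weights turns it into the positive non-decreasing cycle (good cycle) question for one-player WRGs, which is solved by the summary-graph machinery of Section~\ref{sect:PushDownProcesses} and the good cycle detection of Lemma~\ref{lem:GoodCycleWithShortCutIsEasy}. Thus ``is $\sigma$ winning?'' is decidable.

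The overall algorithm enumerates every cycle independent modular strategy $\sigma$ and, for each, tests whether $\wrg^{\sigma}$ has a negative non-decreasing cycle; it answers \textbf{yes} iff some $\sigma$ passes the test. Correctness follows from the equivalences above: if the algorithm exhibits such a $\sigma$, then $\sigma$ is a modular winning strategy; conversely, if player~1 wins, Lemma~\ref{lem:ModularStrategyInfImpCycleIndStrategiesSup} supplies a cycle independent winning strategy, which the enumeration will encounter and certify. I expect the main obstacle to lie only in justifying the two reductions cleanly, namely (i)~that a cycle independent modular strategy is genuinely finite-memory so that $\wrg^{\sigma}$ is a legitimate one-player WRG with finite modules, and (ii)~that winning against \emph{all} player-2 strategies collapses, via Lemma~\ref{lemm:EveryPathIsAlsoInGraph} and Lemma~\ref{lemm:NotEmptyIfCycle}, to the single cycle condition on $\wrg^{\sigma}$. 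The heavy structural work—that a winning strategy can be replaced by a cycle independent one—has already been carried out in Lemma~\ref{lem:ModularStrategyInfImpCycleIndStrategiesSup}, so this theorem is essentially a packaging of decidability out of that result together with the finite strategy count.
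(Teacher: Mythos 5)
Your proposal is correct and follows essentially the same route as the paper's proof: reduce to cycle independent modular strategies via Lemma~\ref{lem:ModularStrategyInfImpCycleIndStrategiesSup}, enumerate the finitely many candidates given by Observation~\ref{obs:NotManyCycleIndStrategies}, and check each fixed strategy $\tau$ by building $\wrg^{\tau}$ and deciding whether it has a path $\pi$ with $\LimInfAvg(\pi) < 0$ using the machinery of Section~\ref{sect:PushDownProcesses}. The extra details you supply (finite-memory realizability of cycle independent strategies, and the weight-negation reduction to good-cycle detection) are just an unpacking of the paper's appeal to Lemma~\ref{lemm:EveryPathIsAlsoInGraph}, Lemma~\ref{lemm:NotEmptyIfCycle}, and the Section~\ref{sect:PushDownProcesses} algorithms.
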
 
\begin{proof}
By Lemma~\ref{lem:ModularStrategyInfImpCycleIndStrategiesSup} it is enough to check if player~1 
has a cycle independent modular strategy.
As the number of such strategies is bounded (by Observation~\ref{obs:NotManyCycleIndStrategies}), it is 
enough to construct the graph $\wrg^\tau$ for every cycle 
independent modular strategy $\tau$ and check if in $\wrg^\tau$ there 
exists a path $\pi$ with $\LimInfAvg(\pi) < 0$ (this check is achieved using 
the algorithms of Section~\ref{sect:PushDownProcesses}). 
\hfill\qed
\end{proof}

\newcommand{\PosCycleFree}{\mathsf{NonNegCFLocalHistory}^\tau}

\smallskip\noindent{\bf Objective $\LimSupAvg\geq0$.}
The proof of Lemma~\ref{lem:ModularStrategyInfImpCycleIndStrategiesSup} shows the following:
if player~1 has a modular winning strategy for $\LimInfAvg \geq 0$ objective, 
then he has a cycle independent modular winning strategy for the $\LimSupAvg\geq0$ objective,
which is also a cycle independent modular winning strategy for the $\LimInfAvg\geq0$ objective.
However, it does not imply that an arbitrary modular winning strategy for the 
$\LimSupAvg\geq0$ objective can be transformed into a winning strategy for the $\LimInfAvg\geq0$ objective, 
or that it can be transformed into a cycle independent modular strategy.  
The key ingredient that is missing is that it does not establish that if 
there is a modular winning strategy for $\LimSupAvg \geq 0$ objective,
then there is a cycle independent modular winning strategy for $\LimSupAvg \geq 0$ objective.
We establish this fact now.
The proof for the objective $\LimSupAvg\geq0$ will reuse many parts of the proof
for  the objective $\LimInfAvg \geq 0$, however, some parts of the proof are 
different and we present them below.
In fact we show that for modular winning strategies 
the objective $\LimSupAvg\geq 0$ coincides with 
the objective $\LimInfAvg\geq 0$. 
For the proof we need the notion of non-negative cycle free local 
history, which we define below.

\smallskip\noindent{\bf Non-negative cycle free local history operator.}
Consider a WRG $\wrg$ and let $\tau$ be a modular strategy, and 
$\pi$ be a path in $\wrg$, consistent with $\tau$, that begins at the entry 
of module $A_i$ and ends at module $A_i$ (in the same stack height).
The \emph{non-negative cycle free local history operator} is defined as follows:
\begin{enumerate}
\item For $|\LocalHistory(\pi)| = 1$ we have $\PosCycleFree(\pi) = \LocalHistory(\pi)$.
\item For $|\LocalHistory(\pi)| > 1$, let $\pi = \pi_0 v_i$ such that
$\PosCycleFree(\pi_0) = u_0 u_1 \dots u_m$.
Let $j\in\RangeSet{0}{m}$ be the first index such that $u_j = v_i$, and 
every sub-play $\pi^*$ consistent with $\tau$ with local history $u_j u_{j+1} \dots u_m v_i$ 
is a cycle with non-negative total weight.
If such index $j$ exists, then $\PosCycleFree(\pi) = u_0 \dots u_j$,
otherwise $\PosCycleFree(\pi) = u_0 u_1 \dots u_m v_i$.
\end{enumerate}
Informally, the $\PosCycleFree$ operator removes cycles that are ensured to have non-negative total weight from the local history.

\smallskip\noindent{\bf Non-negative cycle independent modular strategy.}
Given a modular strategy $\tau$, \emph{the non-negative cycle independent modular strategy} $\sigma$ of $\tau$, 
is defined as follows: For a local history $\rho$ we have $\sigma(\rho) = \tau(\PosCycleFree(\rho))$.
We now present some notations required for the proofs.

\smallskip\noindent{\em Sure non-negative cycle and proper simple cycle.}
Given a modular strategy $\tau$,  a path $\pi = v_0 v_1 \dots v_m$ is a \emph{sure non-negative cycle} 
if $\pi$ is a proper cycle consistent with $\tau$, and every path $\pi'$ consistent with $\tau$ such that 
$\LocalHistory(\pi) = \LocalHistory(\pi')$ is a proper cycle with non-negative weight.
A path $\pi$ is a \emph{proper simple cycle} if $\pi$ is a proper cycle, and $\LocalHistory(\pi)$ is a simple cycle.

\begin{lem}\label{lem:PositiveCycleIndIsEnoughForSup}
If $\tau$ is a modular winning strategy for the objective $\LimSupAvg \geq 0$,
then the non-negative cycle independent modular strategy $\sigma$ of $\tau$ is also 
a winning strategy for the objective $\LimSupAvg \geq 0$.
\end{lem}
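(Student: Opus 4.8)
The plan is to fix an arbitrary play $\pi$ consistent with $\sigma$ and to show directly that $\LimSupAvg(\pi) \geq 0$, reusing the machinery developed for the objective $\LimInfAvg \geq 0$. The guiding idea is that playing $\sigma$ amounts to running $\tau$ on the $\PosCycleFree$-reduced local history, so the actual play $\pi$ differs from a $\tau$-consistent play only by the sure non-negative proper cycles that $\PosCycleFree$ has chopped off. Since removing these cycles never increases the weight, a prefix of the reduced $\tau$-play with good average lifts to a prefix of $\pi$ with good average, which is exactly what $\LimSupAvg \geq 0$ requires.

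First I would establish the analog of Lemma~\ref{lemm:ManIsWellDefinedAndConsistent}: for every prefix $\pi_m$ of $\pi$, the reduced history $\hat\pi_m$ obtained by applying $\PosCycleFree$ module-wise is well defined and consistent with $\tau$. Well-definedness is immediate here, because unlike the manipulated operator $\PosCycleFree$ has no fast-forward step --- it only appends a vertex or chops one sure non-negative cycle --- so no process can get stuck. Consistency is proved by induction on $m$ exactly as in the $\LimInfAvg$ case, using modularity of $\tau$ together with the fact that while a cycle is being built (before it closes) $\PosCycleFree$ merely appends, so $\sigma$'s moves along that segment coincide with the moves $\tau$ prescribes on the reduced base followed by the segment. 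A consequence I would record explicitly is that any cycle actually chopped while playing $\pi$ is a genuine $\tau$-consistent realization of its local history, and hence, being a \emph{sure non-negative cycle}, has non-negative weight in $\pi$ itself. Combined with the observation that a cycle in a local history is a proper cycle in the global play, this yields the key inequality $w(\pi_m) \geq w(\hat\pi_m)$ together with $|\hat\pi_m| \leq |\pi_m| = m$.

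With these facts I would conclude by the same two-case analysis as in Lemma~\ref{lem:LimInfImpliesManipLimSup}, but now with no $\epsilon$-loss, since the chopped cycles are non-negative rather than merely bounded below. In the first case, if $|\hat\pi_m| \leq n_0$ for some constant $n_0$ and infinitely many $m$, then along those indices $w(\pi_m) \geq w(\hat\pi_m) \geq -n_0 \cdot W$, so $\Avg(\pi_m) \geq -n_0 W / m \to 0$, giving $\LimSupAvg(\pi) \geq 0$. In the second case the reduced prefixes stabilize: for every $i$ there is $\ell_i$ with $|\hat\pi_m| \geq i$ for all $m > \ell_i$, so the pointwise limit defines an infinite play $\rho$ consistent with $\tau$. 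As $\tau$ is winning, $\LimSupAvg(\rho) \geq 0$, hence for every $\epsilon > 0$ there are infinitely many $k$ with $w(\rho[1,k]) \geq -\epsilon k$; choosing for each such $k$ an index $m$ with $\hat\pi_m = \rho[1,k]$ gives $w(\pi_m) \geq w(\hat\pi_m) \geq -\epsilon k \geq -\epsilon m$, so $\Avg(\pi_m) \geq -\epsilon$ for infinitely many $m$ and thus $\LimSupAvg(\pi) \geq -\epsilon$. As $\epsilon$ is arbitrary, $\LimSupAvg(\pi) \geq 0$.

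The step I expect to be the main obstacle is the consistency lemma and, bundled with it, the justification that every \emph{actually} chopped cycle has non-negative weight: one must argue carefully that the realized cycle, played under $\sigma$, is a genuine $\tau$-consistent realization of its local history --- so that the ``sure non-negative'' guarantee, which quantifies over all $\tau$-consistent realizations and hence over all player-2 responses, applies to it --- and that the bookkeeping between local-history cycles and global proper cycles under the recursive stack structure is faithful. Once that is in place, the averaging argument is a routine adaptation of Lemma~\ref{lem:LimInfImpliesManipLimSup}.
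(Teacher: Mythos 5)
Your proof is correct and takes essentially the same approach as the paper: the paper's (much terser) proof likewise defines manipulated operations that chop every sure non-negative cycle from the history via the rewind operation, observes that $\sigma$ plays $\tau$ on the reduced history so that the weight of the reduced history never exceeds that of the actual play, and then appeals to the two-case analysis of Lemma~\ref{lem:LimInfImpliesManipLimSup}. Your write-up merely supplies the details the paper leaves as ``similar arguments'' --- well-definedness in the absence of a fast-forward step, $\tau$-consistency of the reduced history and of the chopped cycles, and the observation that only infinitely many good prefixes of the limit play are needed because $\tau$ guarantees only $\LimSupAvg \geq 0$.
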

\begin{proof}
The proof is essentially similar to the proof of the result for the 
objective $\LimInfAvg\geq 0$, 
and we present a succinct argument (as it is very similar to the previous proofs
for $\LimInfAvg\geq 0$).
As previously, we define the manipulated operations on histories such that every 
sure non-negative cycle is chopped (by the rewind operation) from the history.
By definition, the non-negative cycle independent strategy $\sigma$ of $\tau$ makes
choices according to the choices of $\tau$ on the manipulated history. 
The weight of the manipulated history, in every position, is at most the weight of 
the original history, since only non-negative cycles are chopped.
By similar arguments to those presented in Lemma~\ref{lem:LimInfImpliesManipLimSup}, 
we get that the $\LimSupAvg$ of the original history is at most $0$, and 
thus the non-negative cycle independent strategy $\sigma$ of $\tau$ is a winning strategy.
\hfill\qed
\end{proof}

In the following lemmas we establish that for modular strategies 
the objectives $\LimSupAvg\geq 0$ and $\LimInfAvg\geq 0$ coincide.

\begin{lem}\label{lem:NegativeCycleForPosFree}
Given a WRG $\wrg$, let $\sigma$ be a non-negative cycle independent modular strategy.
Let $A_i$ be a module in $\wrg$.
Then there exist $n_\sigma \in \Nat$ and $\delta_\sigma > 0$, such that for every 
possible non-negative cycle free local histories $h_0$ and $h_1$ of the module $A_i$, 
and for every sub-play $\rho$, consistent with $\sigma$ such that
\begin{itemize}
\item $\rho$ is a proper cycle with negative weight; and
\item the non-negative cycle free local history of $A_i$ before (resp. after) $\rho$ was played is $h_0$ (resp. $h_1$);
\end{itemize}
there exists a sub-play $\rho_{h_0,h_1}$, consistent with $\sigma$ and that satisfies the items above, 
such that every play prefix of $\rho_{h_0,h_1}$ with length at least $n_\sigma$ has an average weight 
of at most $-\delta_\sigma$.
\end{lem}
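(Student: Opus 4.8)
The plan is to fix the module $A_i$ and the strategy $\sigma$ throughout, and to read off the two uniform constants from the finite graph that already governs the pushdown analysis of Section~\ref{sect:PushDownProcesses}. Concretely, once $\sigma$ is fixed the configurations reachable inside a proper cycle behave like a one-player weighted pushdown system, so I may import Lemma~\ref{lemm:EveryDeepPathHasAPumpablePair} and Lemma~\ref{lem:HeavyPathWithBigDepthImpliesPosPump}. Let $d^*=(|Q|\cdot|\Gamma|)^2$ and let $N$ be the number of vertices of the associated finite graph of configurations of additional stack height at most $d^*$. I intend to take $\delta_\sigma=\frac{1}{2N}$ and $n_\sigma$ a fixed polynomial in $N$ and $W$, so that both are manifestly independent of the particular pair $(h_0,h_1)$.

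First I would \emph{normalise} the given witness $\rho$ without changing its endpoints $h_0,h_1$. Since $\sigma$ is non-negative cycle independent, any proper sub-cycle of $\rho$ whose local history is a \emph{sure non-negative cycle} is chopped by the operator $\PosCycleFree$; hence deleting such a sub-cycle from $\rho$ yields a play that is still consistent with $\sigma$, still runs from $h_0$ to $h_1$, and whose weight only decreases and therefore stays negative. After removing all such excursions, I would use the negative-weight reading of Lemma~\ref{lem:HeavyPathWithBigDepthImpliesPosPump}: a pumpable pair with non-negative total weight sitting at a deeper stack level can be pumped down to zero copies without increasing the weight and without touching the local history of the top instance of $A_i$, so the additional stack height can be pushed below $d^*$. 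The outcome is a negative witness $\rho^\flat$ from $h_0$ to $h_1$ living inside the finite graph.

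Next I would extract a simple negative proper cycle $C$ from $\rho^\flat$: decomposing $\rho^\flat$ into simple cycles in the finite graph, if every such cycle were non-negative the total weight would be non-negative, contradicting $w(\rho^\flat)<0$, so some simple cycle $C$ has weight at most $-1$ and length at most $N$. I then define $\rho_{h_0,h_1}$ by iterating $C$ a large number $k$ of times at its occurrence inside $\rho^\flat$. The crucial point is that $C$ is a \emph{proper} cycle of bounded length, so along $C^{k}$ the partial sums follow a downward trend of average slope at most $-1/N$ with fluctuation at most $N\cdot W$ inside any single copy. Consequently, for $k$ large enough, every prefix of $\rho_{h_0,h_1}$ of length at least $n_\sigma$ has average weight at most $-\delta_\sigma$, the only prefixes that could fail this being those contained in the bounded positive ascent, whose length is below $n_\sigma$.

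The main obstacle is precisely the uniformity, and it has two faces. The first is guaranteeing that the positive \emph{ascent} of the witness — the maximal partial sum accumulated before the dominant negative part — is bounded by a constant independent of $(h_0,h_1)$; this is exactly what the normalisation step buys, since sure non-negative excursions, the only source of unbounded positive drift that does not already register in $h_1$, have been removed, while the remaining negative cycles only push the partial sums further down. The second is ensuring that iterating $C$ does not alter $h_1$: when $C$ sits strictly below the top instance of $A_i$ its repetitions are invisible to the local history of $A_i$, and when the negativity is already witnessed at the top level the progression $h_0\to h_1$ is itself strongly negative over long stretches, so $\rho^\flat$ is already robust and no insertion is needed. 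Carrying out this case distinction carefully, and verifying the two-regime average estimate, is the technical heart of the argument.
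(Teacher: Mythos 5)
There is a genuine gap, and it sits exactly at the point you yourself flag as ``the technical heart''. Your normalisation removes only \emph{sure} non-negative cycles, but the surviving witness $\rho^\flat$ can still contain proper cycles whose weight happens to be non-negative in this particular realization without being sure non-negative (some other consistent choice of sub-plays inside their boxes is negative). Such cycles are not chopped by the non-negative cycle free operator, they can occur in unbounded number, and along them the partial sums do not decrease. Consequently the prefix of your final witness preceding the iterated cycle $C$ can be arbitrarily long with average weight $\geq 0$, so no constant $n_\sigma$ independent of $(h_0,h_1)$ makes your two-regime estimate work; the claim that the only source of unbounded positive drift has been removed by deleting sure non-negative excursions is false. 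The paper closes exactly this hole with a second surgery that you are missing: every box-to-return sub-play of the witness is replaced by a canonical one --- the minimal-weight play $\pi_{b_j,n_j}$ consistent with $\sigma$ from the call of box $b_j$ to the return node $n_j$ when a minimum exists, and otherwise a consistent play of weight at most $-20\cdot|V_i|\cdot X$, where $X$ bounds the finite minimal weights and the edge weights of $A_i$, and $\Pi$ is the longest such canonical sub-play. Modularity makes this replacement consistency-preserving (the local history of $A_i$ is untouched), and after it any non-negative simple proper cycle would realize the \emph{minimal} weight over all realizations of its local history, hence would be a sure non-negative cycle, contradicting non-negative cycle freeness. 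Therefore \emph{all} simple proper cycles of the resulting witness $\rho''$ are negative and of length at most $|V_i|\cdot|\Pi|$, which is what yields the uniform constants $n_\sigma=(|V_i|\cdot|\Pi|\cdot X)^2$ and $\delta_\sigma = \frac{1}{|V_i|\cdot|\Pi|\cdot X}$.

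Second, your final construction --- iterating the extracted negative cycle $C$ a large number of times --- is unsound for this strategy class. The strategy $\sigma$ is only non-negative cycle independent, not cycle independent: a \emph{negative} cycle is not chopped from the local history, so after one copy of $C$ the history that $\sigma$ consults has grown, and $C^k$ need not be consistent with $\sigma$ (this is precisely why the paper, in Lemma~\ref{lem:NoInfiniteNegativeAvgCycles}, substitutes a different cycle $C_{h^i_0,h^i_1}$ for each pair of histories rather than repeating one cycle). Iteration also changes the non-negative cycle free local history at the end of the play, so the endpoint would no longer be $h_1$; your fallback for the top-level case (that $\rho^\flat$ is ``already robust'') is exactly what the first gap refutes. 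A similar consistency objection applies to your pumping-down step: removing a pumpable pair via Lemma~\ref{lem:HeavyPathWithBigDepthImpliesPosPump} alters the local histories of the inner modules mid-play, and the continuation of the play need not remain consistent with their local strategies. The paper's proof needs neither iteration nor pumping: the single traversal $\rho''$, all of whose simple proper cycles are negative and short, is itself the required witness $\rho_{h_0,h_1}$.
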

\begin{proof}
Let $(b_1,n_1), \dots ,(b_m,n_m)$ be the pairs of all boxes and their return nodes that appear in module $A_i$.
For every pair $(b_j,n_j)$, let $\pi_{b_j,n_j}$ be one of the shortest plays, consistent with $\sigma$, with minimal weight from $b_j$ to $n_j$.
If such a path exists we denote $w_{(b_j,n_j)} = w(\pi_{b_j,n_j})$.
Let $X$ be the maximal such weight.
W.l.o.g all the weights of the edges that occur in $A_i$ are at most $X$, and $X\geq 0$.

For every $b_j,n_j$ such that there exists a play from $b_j$ to $n_j$ consistent with $\sigma$, but a minimal-weight 
play does not exist, we denote by $\pi_{b_j,n_j}$ one of the shortest plays consistent with $\sigma$ that leads from 
$b_j$ to $n_j$ with weight at most $-20\cdot |V_i| \cdot X$, (where $|V_i|$ is the size of the vertex set in $A_i$).
Let $\Pi$ be one of the longest plays among all $\pi_{b_j,n_j}$.
Let $\rho$ be a sub-play consistent with $\sigma$ such that $\rho$ is a proper cycle in module $A_i$ 
with negative weight.
Let $h_0$ (resp. $h_1$) be the non-negative cycle free history of $A_i$ before (resp. after) playing $\rho$.

First, we form $\rho '$ from $\rho$ by removing all the sure non-negative cycles from $\rho$.
Clearly, (i)~the sum of weights of $\rho '$ is negative; and (ii)~$\rho '$ is consistent with $\sigma$, because 
$\rho$ is consistent with $\sigma$ and $\sigma$ is a non-negative cycle independent strategy.
Moreover, the non-negative cycle free local history after playing $\rho '$ is the same as after 
playing $\rho$.
Next we form $\rho ''$ from $\rho '$ by replacing every sub-play from $b_j$ to $n_j$ (such that 
$n_j$ is the first time the sub-play enters $A_i$) in $\rho '$ with $\pi_{b_j,n_j}$.
Again, $\rho ''$ is consistent with $\sigma$, since $\sigma$ is a modular strategy.
In addition, the local history of $A_i$ was not changed at all.

We claim that $\rho ''$ does not contain simple proper non-negative cycles in module $A_i$.
Indeed, towards a contradiction let $v_1 v_2 \dots v_m$ be the local history of the first such cycle 
(note that $m \leq |V_i|$).
Note that if the cycle contains a sub-play $\pi_{b_j,n_j}$ such that $w_{(b_j,n_j)} = -20\cdot |V_i|\cdot X$, 
then the sum of the weights of the cycle cannot be non-negative.
Hence it follows that this cycle is the cycle with minimal weight among all simple cycles with 
local history $v_1 v_2 \dots v_m$.
Hence this cycle is sure non-negative, which contradicts the fact that $v_1 v_2 \dots v_m$ 
is a local history of sub-play of $\rho ''$.
Thus for every simple proper cycle in $\rho ''$ the sum of the weights of the cycle is negative.
Moreover, the length of every simple proper cycle in $\rho ''$ is at most $|V_i|\cdot |\Pi|$.
Hence every sub-play of $\rho ''$ with length at least $n_\sigma = (|V_i|\cdot |\Pi|\cdot X)^2$ 
will have an average weight of at most $-\delta_\sigma = -\frac{1}{|V_i|\cdot |\Pi|\cdot X}$.
Note that $n_\sigma$ and $\delta_\sigma$ do not depend on $\rho$, and hence the desired result follows. 
\hfill\qed
\end{proof}

\begin{lem}\label{lem:NoInfiniteNegativeAvgCycles}
Let $\sigma$ be a non-negative cycle independent modular strategy.
If there exists a play $\rho$ consistent with $\sigma$ such that 
the suffix of $\rho$ is an infinite sequence of proper cycles $C_1, C_2, \dots$ 
with negative weights, then $\sigma$ is not a winning strategy
for the objective $\LimSupAvg \geq 0$.
\end{lem}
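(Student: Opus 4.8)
The plan is to turn the given play $\rho$ into another play $\rho'$ that is still consistent with $\sigma$ but has $\LimSupAvg(\rho') < 0$, which directly witnesses that $\sigma$ is not winning for $\LimSupAvg \geq 0$. Write the bad suffix of $\rho$ as the concatenation $C_1 C_2 C_3 \cdots$ of the negative proper cycles, and let $\rho_{\text{pre}}$ be the prefix of $\rho$ preceding $C_1$. Since every proper cycle returns to its starting vertex at the same stack height, $C_{k+1}$ begins where $C_k$ ends, so all the $C_k$ revolve around a single vertex $v$ of one module $A_i$. Let $h_0$ be the non-negative cycle free local history of $A_i$ just before $C_1$ and $h_k$ the one just after $C_k$, so that $C_k$ carries the non-negative cycle free local history from $h_{k-1}$ to $h_k$.

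First I would invoke Lemma~\ref{lem:NegativeCycleForPosFree} on the module $A_i$ to obtain the uniform constants $n_\sigma \in \Nat$ and $\delta_\sigma > 0$. For each $k$, the cycle $C_k$ is a proper negative cycle whose before/after non-negative cycle free local histories are $h_{k-1}$ and $h_k$, so the lemma supplies a replacement sub-play $\rho_{h_{k-1},h_k}$ that (i)~is consistent with $\sigma$, (ii)~is again a proper negative cycle with the same before/after histories $h_{k-1},h_k$, and (iii)~has the property that every prefix of length at least $n_\sigma$ has average weight at most $-\delta_\sigma$. Define $\rho' = \rho_{\text{pre}}\,\rho_{h_0,h_1}\,\rho_{h_1,h_2}\,\rho_{h_2,h_3}\cdots$. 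Because there are infinitely many $C_k$ and each replacement is a non-empty cycle, $\rho'$ is a genuine infinite play.

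Consistency of $\rho'$ with $\sigma$ is proved by induction on $k$: as $\sigma$ is non-negative cycle independent it reacts only to the non-negative cycle free local history, and after the prefix $\rho_{\text{pre}}\rho_{h_0,h_1}\cdots\rho_{h_{k-2},h_{k-1}}$ this history is exactly $h_{k-1}$ (each canonical cycle ends at its prescribed after-history, by~(ii)); hence splicing in $\rho_{h_{k-1},h_k}$, which Lemma~\ref{lem:NegativeCycleForPosFree} guarantees to be consistent with $\sigma$ from a context with history $h_{k-1}$, preserves consistency of the whole play. To bound $\LimSupAvg(\rho')$ I would use that weights are integers, so each negative proper cycle has weight at most $-1$; combined with~(iii) this gives $w(\rho_{h_{k-1},h_k}) \leq -\delta'\cdot|\rho_{h_{k-1},h_k}|$ for $\delta' = \min\Set{\delta_\sigma, 1/n_\sigma}$ (the long case uses~(iii), the short case uses the integer bound). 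For an arbitrary prefix of $\rho'$ of length $m$, decompose it into $\rho_{\text{pre}}$, a block of whole canonical cycles of total length $L$, and a partial final cycle: the whole cycles contribute at most $-\delta' L$, the partial cycle contributes at most $-\delta'$ times its length if it is long and at most the constant $n_\sigma\cdot W$ if it is short, and $\rho_{\text{pre}}$ contributes a constant. In either case $w(\rho'[1,m]) \leq c_0 - \delta'\,(m - c_1)$ for fixed constants $c_0,c_1$, so $\LimSupAvg(\rho') \leq -\delta' < 0$, establishing the lemma.

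The hard part is the combination in the last step: maintaining consistency while splicing the canonical cycles in place of the original ones (which is exactly what non-negative cycle independence together with the matched before/after histories in Lemma~\ref{lem:NegativeCycleForPosFree} are designed to make possible), and controlling the \emph{limsup} rather than merely the liminf. The danger specific to $\LimSupAvg$ is that the running average could climb back toward $0$ at a partial prefix inside some cycle; this is ruled out precisely by property~(iii), while the residual cases of short cycles and short partial prefixes are absorbed using the integrality of the weights, yielding the single negative bound $-\delta'$ that holds along the whole play.
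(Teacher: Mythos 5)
Your proof is correct and follows essentially the same route as the paper's: replace each negative cycle $C_k$ by the canonical cycle $\rho_{h_{k-1},h_k}$ supplied by Lemma~\ref{lem:NegativeCycleForPosFree}, use non-negative cycle independence together with the matched before/after histories to keep the spliced play consistent with $\sigma$, and bound the limsup by $-\min\Set{\delta_\sigma, 1/n_\sigma}$ (the paper states this as $\max\Set{-\frac{1}{n_\sigma},-\delta_\sigma}$, using the same integrality argument for short cycles that you make explicit). Your prefix-decomposition argument is just a spelled-out version of what the paper calls ``straight forward to verify.''
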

\begin{proof}
Let $\rho = \rho_0 C_1 C_2 \dots C_i \dots$,
and let $A_i$ and $n_i$ be the module and the vertex, respectively, 
that all the cycles begin and end in.
Let $n_\sigma$, $\delta_\sigma$ be the constants from Lemma~\ref{lem:NegativeCycleForPosFree}.
Let $h^i_0$ be the non-negative cycle free local history of $A_i$ before cycle $C_i$ is played, 
and $h^i_1$ be the non-negative cycle free local history of $A_i$ after cycle $C_i$ is played.
By Lemma \ref{lem:NegativeCycleForPosFree}, for every cycle $C_i$ there exists a cycle $C_{h^i_0,h^i_1}$ 
consistent with $\sigma$, with negative weight, and the average weight of every sub-play of 
$C_{h^i_0,h^i_1}$ longer than $n_\sigma$ is at most $-\delta_\sigma$.
The play $\rho' = \rho_0 C_1 C_2 \dots C_{i-1}  C_{h^i_0,h^i_1} C_{i+1} \dots$ is consistent with 
$\sigma$, since $C_{h^i_0,h^i_1}$ is consistent with the initial non-negative cycle free 
local history $h^i_0$, and the non-negative cycle free local history after playing $C_{h^i_0,h^i_1}$ is $h^i_1$.
Since $\sigma$ is a non-negative cycle independent strategy,
the play $\rho^* = \rho_0 C_{h^0_0,h^0_1} C_{h^1_0,h^1_1} \dots C_{h^i_0,h^i_1} \dots$ is 
consistent with $\sigma$.
On the other hand, it is straightforward to verify that we have 
$\LimSupAvg(\rho^*) \leq \max\Set{-\frac{1}{n_\sigma},-\delta_\sigma} < 0$.
Hence $\sigma$ is not a winning strategy for the objective $\LimSupAvg \geq 0$,
and the desired result follows.
\hfill\qed
\end{proof}

\begin{lem}\label{lem:CycleIndAreEnoughAlsoForLimSup}
If player~1 has a modular winning strategy for the objective $\LimSupAvg \geq 0$, 
then for every $\epsilon > 0$, player~1 has a cycle independent modular winning strategy 
$\sigma$ for the objective $\LimSupAvg \geq -\epsilon$.
\end{lem}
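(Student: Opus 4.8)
The plan is to start from the non-negative cycle independent strategy furnished by Lemma~\ref{lem:PositiveCycleIndIsEnoughForSup} and to show that forgetting the remaining, negative, cycles costs at most $\epsilon$ in the limit supremum. First I would apply Lemma~\ref{lem:PositiveCycleIndIsEnoughForSup} to the given modular winning strategy to obtain a \emph{non-negative cycle independent} modular strategy $\sigma$ that is still winning for $\LimSupAvg \geq 0$. I would then define a genuinely \emph{cycle independent} strategy $\sigma'$ that coincides with $\sigma$ on every cycle-free local history (for instance, player~1 answers a local history $\rho$ by the move $\sigma$ would make on $\CycleFree(\rho)$). The decisive observation is that $\CycleFree$ and $\PosCycleFree$ agree as long as the current local history carries no cycle, so $\sigma$ and $\sigma'$ make identical choices all the way up to the step at which a cycle first closes; in particular they agree throughout the formation of any \emph{simple} proper cycle, whose local history only closes at its last step.

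Next I would prove that $\sigma'$ is winning for $\LimSupAvg \geq -\epsilon$, arguing by contradiction. Assume a play $\rho$ consistent with $\sigma'$ has $\LimSupAvg(\rho) < -\epsilon$, so the running averages eventually remain below $-\epsilon$ and the total weight drifts to $-\infty$ at a definite rate. Since $\sigma'$ follows $\sigma$ along the cycle-free skeleton of $\rho$, the skeleton cannot by itself account for this drift; the negative weight must therefore be produced by infinitely many proper cycles that $\sigma'$ discards, and by decomposing each such proper cycle into simple factors (at least one simple factor of a negative cycle is negative) I would extract infinitely many \emph{negative proper simple cycles}. As there are only finitely many modules, vertices, and relevant cycle-free local histories, a pigeonhole argument concentrates infinitely many of these cycles at one base configuration $(A_i,n_i)$ with fixed endpoints.

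It then remains to transport this structure back to $\sigma$. By the observation above, each negative proper simple cycle is already consistent with $\sigma$, because $\sigma$ and $\sigma'$ coincide until the cycle's closing step. Using that $\sigma$ is non-negative cycle independent — so that its choices depend only on the non-negative cycle free local history, to which these negative cycles return their starting value — I would splice the extracted cycles into a single play $\rho^{\ast\ast}$ consistent with $\sigma$ whose suffix is an infinite sequence of negative proper cycles at $(A_i,n_i)$; Lemma~\ref{lem:NegativeCycleForPosFree} supplies, where needed, uniformly normalized replacements for these cycles so that the splicing respects consistency and keeps the weights negative. Lemma~\ref{lem:NoInfiniteNegativeAvgCycles} then contradicts the fact that $\sigma$ is winning for $\LimSupAvg \geq 0$, which establishes the lemma; afterwards the finiteness of the family of cycle independent strategies (Observation~\ref{obs:NotManyCycleIndStrategies}) will let one pass from $\LimSupAvg \geq -\epsilon$ for all $\epsilon$ to a single strategy winning for $\LimSupAvg \geq 0$.

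The main obstacle is precisely this bridging between $\sigma'$-consistency and $\sigma$-consistency: $\sigma'$ forgets the negative cycles that $\sigma$ remembers, so a play consistent with $\sigma'$ need not be consistent with $\sigma$, and one must isolate exactly the simple negative cycles on which the two strategies still agree. The $\epsilon$ margin is what makes this work: a strict negative drift guarantees that the loss is carried by infinitely many genuinely negative proper cycles, rather than by a bounded number of cycles or by an unbounded non-cyclic descent into the recursion, and these are exactly the cycles that Lemma~\ref{lem:NegativeCycleForPosFree} can normalize and re-assemble into a $\sigma$-consistent counterexample.
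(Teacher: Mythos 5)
Your opening move (passing to the non-negative cycle independent strategy $\sigma$ via Lemma~\ref{lem:PositiveCycleIndIsEnoughForSup}) is the same as the paper's, but the middle of your argument has a genuine gap. The paper does not use the naive composition $\sigma'(\rho)=\sigma(\CycleFree(\rho))$; it reruns the full $\Manipulated$ construction (rewind \emph{and fast forward}) on top of $\sigma$, replacing the footnoted lim-inf inequality in Lemma~\ref{lemm:ManIsWellDefinedAndConsistent} by Lemma~\ref{lem:NoInfiniteNegativeAvgCycles}, and then repeats the arguments up to Lemma~\ref{lem:ModularStrategyInfImpCycleIndStrategiesSup}. The fast forward operation is not an optional complication: it guarantees (the claim inside Lemma~\ref{lem:ManStrategySimPayoff}) that every cycle chopped by rewind has average at least $-\epsilon$, and, crucially, that any hypothetical infinite sequence of bad cycles produced by a non-terminating fast forward is a play consistent with $\sigma$ \emph{by construction}, since each successive cycle is chosen as a $\sigma$-consistent continuation of the manipulated history that already contains all the earlier ones. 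That consistency is exactly the input Lemma~\ref{lem:NoInfiniteNegativeAvgCycles} requires.

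Your attempt to recover this consistency post hoc fails at the splicing step. You assert that $\sigma$'s choices ``depend only on the non-negative cycle free local history, to which these negative cycles return their starting value.'' This is backwards: $\PosCycleFree$ removes only \emph{sure non-negative} cycles, and an extracted negative proper cycle is itself a witness that its local history is not such a cycle, so it is retained. Hence after splicing a negative cycle $C$ onto a history with view $h$, the strategy $\sigma$'s view becomes (essentially) $h\cdot C$ rather than $h$, and the next extracted cycle --- which is $\sigma$-consistent only as a continuation of its own skeleton history --- need not be playable consistently with $\sigma$ after $h\cdot C$. Lemma~\ref{lem:NegativeCycleForPosFree} cannot repair this: it normalizes a cycle that is \emph{already} $\sigma$-consistent with given before/after views $h_0,h_1$; it does not re-align mismatched views, and Lemma~\ref{lem:NoInfiniteNegativeAvgCycles} presupposes a play that already contains the infinite sequence of negative cycles consistently. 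So your $\rho^{**}$ need not be consistent with $\sigma$ and the contradiction does not go through. (A secondary problem: a ``simple'' cycle extracted at one module level still contains box calls whose deeper sub-plays may themselves have been altered by $\CycleFree$ chopping at lower levels, so even single-cycle consistency with $\sigma$ needs an induction over stack levels that your outline skips.) The fact that $\sigma$ remembers negative cycles is precisely why the paper pre-plays them via fast forward instead of trying to splice them in afterwards.
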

\begin{proof}
Let $\tau^*$ be a modular winning strategy for the objective $\LimSupAvg \geq 0$, and 
let $\tau$ be the non-negative cycle independent strategy of $\tau^*$.
For $\epsilon>0$, consider the cycle independent modular strategy $\sigma$ such that for 
histories $\pi$ we have \NewChange{$\sigma(\pi) = \tau(\Manipulated(\pi))$.}
By Lemma~\ref{lem:NegativeCycleForPosFree} the strategy $\tau$ is also a winning strategy.
We note that all the arguments for the objective $\LimInfAvg\geq 0$ also hold for 
the objective $\LimSupAvg\geq 0$, other than the inequality (mentioned as footnote) in
Lemma~\ref{lemm:ManIsWellDefinedAndConsistent}.
We  replace the inequality (mentioned as footnote) of Lemma~\ref{lemm:ManIsWellDefinedAndConsistent} 
with Lemma \ref{lem:NoInfiniteNegativeAvgCycles}, and repeat exactly the same arguments for the 
objective $\LimInfAvg\geq 0$ (up to Lemma~\ref{lem:ModularStrategyInfImpCycleIndStrategiesSup}) 
to obtain the desired result. 
\hfill\qed
\end{proof}

We obtain the following result as a corollary, and all the desired results follow for the 
objective $\LimSupAvg\geq 0$.

\begin{cor}\label{cor:ModularSupIffInf}
Given a WRG $\wrg$, there exists a modular winning strategy for the objective 
$\LimSupAvg \geq 0$ iff there exists a modular winning strategy for the objective 
$\LimInfAvg \geq 0$.
\end{cor}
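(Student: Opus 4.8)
The plan is to prove the two directions separately, with the left-to-right direction being essentially immediate and the right-to-left direction being the substantive one that assembles the machinery developed above. For the direction "$\LimInfAvg \geq 0$ implies $\LimSupAvg \geq 0$", I would simply observe that for every infinite play $\pi$ we have $\LimInfAvg(\pi) \leq \LimSupAvg(\pi)$; hence any modular strategy $\tau$ all of whose consistent plays satisfy $\LimInfAvg(\pi) \geq 0$ automatically has all its consistent plays satisfy $\LimSupAvg(\pi) \geq 0$, so the same $\tau$ witnesses the $\LimSupAvg \geq 0$ objective. No new work is needed here.

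For the converse direction, the plan is to start from a modular winning strategy for $\LimSupAvg \geq 0$ and extract a single cycle independent modular strategy that is winning for $\LimSupAvg \geq 0$ exactly. First I would invoke Lemma~\ref{lem:CycleIndAreEnoughAlsoForLimSup} to obtain, for every $\epsilon > 0$, a cycle independent modular winning strategy $\sigma_\epsilon$ for the objective $\LimSupAvg \geq -\epsilon$. By Observation~\ref{obs:NotManyCycleIndStrategies} there are only finitely many (at most $|V|^{|V|^2}$) cycle independent modular strategies, so by a pigeonhole argument there is a single cycle independent modular strategy $\sigma$ that serves as $\sigma_\epsilon$ for a sequence of values $\epsilon \to 0$. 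Since $\sigma$ is one fixed strategy guaranteeing $\LimSupAvg(\pi) \geq -\epsilon$ on all consistent plays for arbitrarily small $\epsilon$, every play $\pi$ consistent with $\sigma$ satisfies $\LimSupAvg(\pi) \geq 0$; thus $\sigma$ is a cycle independent modular winning strategy for $\LimSupAvg \geq 0$.

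The last step transfers this to the $\LimInfAvg \geq 0$ objective, reusing the argument from the end of Lemma~\ref{lem:ModularStrategyInfImpCycleIndStrategiesSup}. I would form the player-2 WRG $\wrg^{\sigma}$ obtained by fixing $\sigma$. By Lemma~\ref{lemm:EveryPathIsAlsoInGraph} the paths of $\wrg^{\sigma}$ are exactly the plays consistent with $\sigma$, and by Lemma~\ref{lemm:NotEmptyIfCycle} a path with $\LimSupAvg < 0$ (respectively $\LimInfAvg < 0$) exists in $\wrg^{\sigma}$ if and only if $\wrg^{\sigma}$ has a negative-weight non-decreasing cycle. Since $\sigma$ wins for $\LimSupAvg \geq 0$, the graph $\wrg^{\sigma}$ has no negative non-decreasing cycle, and therefore it also has no path with $\LimInfAvg < 0$; hence $\sigma$ is winning for $\LimInfAvg \geq 0$ as well, which completes the converse direction and the corollary.

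The proof carries little independent difficulty, since the heavy lifting is done by the preceding lemmas; the only genuinely delicate point is the finiteness/pigeonhole step, where I must be careful that a \emph{single} cycle independent strategy (not merely a family) is promoted from the family of $\sigma_\epsilon$'s to a winner for the threshold $0$. This relies crucially on Observation~\ref{obs:NotManyCycleIndStrategies} bounding the number of candidate strategies, so that a single $\sigma$ recurs for arbitrarily small $\epsilon$; the subsequent transfer from $\LimSupAvg$ to $\LimInfAvg$ is then a routine application of the cycle characterization in Lemma~\ref{lemm:NotEmptyIfCycle}.
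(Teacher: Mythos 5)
Your proof is correct and takes essentially the same route the paper intends: the corollary is stated there as an immediate consequence of Lemma~\ref{lem:CycleIndAreEnoughAlsoForLimSup}, with the pigeonhole step via Observation~\ref{obs:NotManyCycleIndStrategies} and the transfer to $\LimInfAvg \geq 0$ through $\wrg^{\sigma}$, Lemma~\ref{lemm:EveryPathIsAlsoInGraph}, and Lemma~\ref{lemm:NotEmptyIfCycle} borrowed verbatim from the proof of Lemma~\ref{lem:ModularStrategyInfImpCycleIndStrategiesSup}, which is exactly the chain you reassemble. The easy direction via the pointwise inequality $\LimInfAvg(\pi) \leq \LimSupAvg(\pi)$ is likewise the intended one.
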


\smallskip\noindent{\em Comment on $\LimSupAvg$ and $\LimInfAvg$ coincide.}
We have showed  in Example~\ref{Ex:SupAndInfAreDifferent} that in pushdown 
games with arbitrary strategies $\LimSupAvg$ and $\LimInfAvg$ do not coincide, 
whereas in contrast in Corollary~\ref{cor:ModularSupIffInf} we show that 
they coincide for modular strategies. 
The key reason that they do not coincide for arbitrary strategies is due to 
infinite-memory strategies, which makes it possible to construct paths 
(that are not ultimately periodic) where $\LimSupAvg$ and $\LimInfAvg$ do not 
coincide.
Our results for WPS (weighted pushdown systems, or only one-player pushdown
games) show that if there is only one-player (the opponent) then $\LimSupAvg$
and $\LimInfAvg$ coincide, even though infinite-memory strategies can be used.
Our results for modular strategies establish that even if the opponent is allowed
infinite-memory strategies, memoryless modular winning strategies exist. 
Since once a memoryless modular winning strategy is fixed, we obtain a WPS
where $\LimInfAvg$ and $\LimSupAvg$ coincides, we obtain the result
of Corollary~\ref{cor:ModularSupIffInf}.

\subsection{Modular winning strategy problem in NP}
In this section we will show that the modular winning 
strategy problem is in NP. 
\NewChange{
\smallskip\noindent{\bf Informal description of the solution.}
Our solution relays on the notion of \emph{strategy signature}.
The signature of a strategy is defined for every couple of nodes (in the same module) and stands for the weight of the worst-case path (w.r.t all possible histories) that exists between the two nodes, which is consistent with the strategy.
The intuition is that if two strategies has the same signature, then they obtain the same mean-payoff value.
This intuition need not hold in the general case, but we prove that it does hold if one strategy is modular and cycle-independent and the second strategy is memoryless.
Finally, the NP solution is to guess a signature and a memoryless strategy that realizes the signature, and to verify (in polynomial time) that the memoryless strategy is winning.
We formalize the intuition with the next lemmas:
In Lemma~\ref{lemm:WinInLocalIffWinInReduction} we show how to verify that a signature can be realized,
and in Lemma~\ref{lem:WinIffMemorylessStrategy} we show that a realizable signatures can be realized by memoryless strategies.
In Lemma~\ref{lemm:NoMinusInfInSig} we show that a cycle-independent modular winning strategy has a signature of polynomial size (and thus, we can guess one), and in Lemmas~\ref{lem:SignatureStrategy}-\ref{lem:MemorylessWinningStrategy} we prove that a memoryless strategy that has the same signature as a cycle-independent modular winning strategy is also a winning strategy.
}

\smallskip\noindent{\bf The signature game.}
Let $G=((V,E),(V_1,V_2))$ be a finite two-player game graph (on 
finite directed graph $(V,E)$) 
with vertex set $V$, edge set $E$,
and partition $(V_1,V_2)$ of the vertex set into player-1 (resp. player-2) 
vertex set  $V_1$ (resp. $V_2$).
Let the game graph be  equipped with a weight function 
$w : E \to \Z \cup \Set{-\omega}$. 
Let the initial vertex be $v_0 \in V$.
Let $\vec{\nu} = (\nu_0 , \dots, \nu_{|V|-1})$ be a threshold vector 
such that all $\nu_i \in \Z\cup \Set{+\infty, -\omega}$ (equivalently, $\vec{\nu}$ is a mapping from $V$ to $\Z \cup \Set{+\infty,-\omega}$\NewChange{ with $\vec{\nu}(v_i) = \nu_i$}).
The weight of a finite path in $G$ is the sum of the edge weights
of the path,
according to the following convention: $-\omega + z = -\omega$, for any $z\in\Z\cup \Set{-\omega}$.
A \emph{signature game} is defined with respect to the mapping vector 
$\vec{\nu}$ and consists of a tuple $(G,\vec{\nu})$, where $G$ is a two-player 
game graph, and $\vec{\nu}$ is the threshold vector.
For a play $\rho = \rho_0 \rho_{1} \rho_{2} \dots \rho_{j} \rho_{j+1} \dots$, 
player~1 is the winner if both the following two conditions hold:
\begin{itemize}
\item The play $\rho$ does not contain a negative cycle, or a cycle that has an edge with weight $-\omega$.
\item For every $j\in\Nat$, 
$w(\rho_0  \rho_{1} \rho_{2} \dots \rho_{j}) \geq \NewChange{\vec{\nu}(\rho_j)}$, 
according to the convention (i)~$-\omega < z$ for every $z\in\Z$, and
(ii)~$ z, -\omega < +\infty$ for every $z\in\Z$. 
In other words, the sum of the weights up to any index $j$ must be at least \NewChange{$\vec{\nu}(\rho_j)$}, and no $-\omega$-edge must be visited 
unless \NewChange{$\vec{\nu}(\rho_j)=-\omega$}.
\end{itemize}
We will consider signature games such that if $\nu_i \in \Z$, then $\nu_i 
\geq -2\cdot W\cdot |V|$, where $W$ is the maximum absolute values of the 
integer weights in graph $G$.
If for a vertex $v$, the threshold value is $+\infty$, then to ensure winning 
player~1 must ensure that $v$ is never visited. 
In other words, vertices with $+\infty$ threshold must be avoided (it can be 
interpreted as a safety objective with $+\infty$ vertices as non-safe 
vertices to be avoided).  
Our first goal is to show that memoryless winning strategies exist
in signature games, and the result will be obtained by a reduction 
to finite-state mean-payoff games. 
Given a signature game $(G,\vec{\nu})$ we define an auxiliary 
finite-state mean-payoff game as follows.

\smallskip\noindent{\bf From signature game $(G,\vec{\nu})$ to 
auxiliary mean-payoff game $\ov{G}_{\vec{\nu}}$.}
Given a signature game $(G,\vec{\nu})$ we construct a finite-state  
auxiliary mean-payoff game $\ov{G}_{\vec{\nu}}=((\ov{V},\ov{E}),(\ov{V}_1,\ov{V}_2))$, 
with a weight function $\ov{w}$ as follows: let the signature game 
graph be $G=((V,E),(V_1,V_2))$ and the weight function in $G$ be $w_G$. 
Then we have the following components in the auxiliary game:
\begin{itemize}
\item \emph{(Vertex set and partition).} 
$\ov{V} = V \times \Set{1,2}$; and $\ov{V}_1 = V_1 \times \Set{1}$. 
\item \emph{(Edges).}
$\ov{E} = \Set{((u,1),(v,2)) \mid (u,v) \in E} 
\cup \Set{((v,2),(v,1)) \mid v\in V} \cup \Set{((v_i,2),(v_0,1)) \mid v_i\in V, \nu_i \neq -\omega}$.
\item \emph{(Weight function).} 
If $w_G(u,v) \neq -\omega$, then $\ov{w}((u,1),(v,2)) = w_G(u,v)$; 
otherwise (we have $w_G(u,v) = -\omega$) we set  
$\ov{w}((u,1),(v,2)) = -10\cdot W\cdot |V|$.
Moreover, $\ov{w}((v_i,2),(v_0,1)) = -\nu_i$ and all the other edges 
are assigned with zero weight.
Note that if $\nu_i=+\infty$, then $\ov{w}$ assigns weight $-\infty$, and 
to win player~1 must avoid such edges (can be interpreted as a safety 
condition).
\end{itemize}
Informally, the auxiliary mean-payoff game is constructed from the signature 
game by adding for every vertex $v_i$ a fresh copy (vertex $(v_i,2)$ and the 
original vertex is represented as $(v_i,1)$), and an option for player~2 to 
return to the initial vertex  $(v_0,1)$ 
``paying" cost $-\nu_i$ (whenever $\nu_i \neq -\omega$).
Thus, if at any position of the play the current vertex is $(v_i,2)$, and 
the sum of the weights since the last visit to $(v_0,1)$ is less than 
$\nu_i$, then player~2 can ensure that a negative cycle is completed.
The mean-payoff objective of player~1 is to ensure non-negative 
average payoff.
Also note that player~1 must avoid the $-\infty$ edge weights, and equivalently
it can be treated as a mean-payoff safety game.
In the following lemma we establish the relation of the 
signature game and the auxiliary game.

\begin{lem}\label{lemm:WinInLocalIffWinInReduction}
Let $(G,\vec{\nu})$ be a signature game such that 
$\nu_i \geq -2\cdot W\cdot |V|$ for every $\nu_i \in \Z$, and
let $\ov{G}_{\vec{\nu}}$ be the corresponding auxiliary mean-payoff 
game.
Then the following statements are equivalent:
\begin{enumerate}
\item Player~1 is the winner in the auxiliary mean-payoff game $\ov{G}_{\vec{\nu}}$
(i.e., player~1 can ensure non-negative mean-payoff).
\item Player~1 has a memoryless winning strategy in the signature game  $(G,\vec{\nu})$.
\item Player~1 is the winner in the signature game $(G,\vec{\nu})$.
\end{enumerate}
\end{lem}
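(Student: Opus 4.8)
The plan is to prove the three-way equivalence as a cycle $(1)\Rightarrow(2)\Rightarrow(3)\Rightarrow(1)$. The implication $(2)\Rightarrow(3)$ is immediate, since a memoryless winning strategy is in particular a winning strategy. The ``memoryless'' content of item~(2) will be supplied for free by the classical fact that finite-state mean-payoff games are memoryless determined~\cite{EM79,ZP95}; the only wrinkle is the safety layer coming from $+\infty$ thresholds (equivalently, the $-\infty$-weight return edges), which I would treat by first solving the safety game ``avoid every vertex $v_i$ with $\nu_i=+\infty$'' and then playing the mean-payoff game on the safe subregion, noting that memoryless strategies compose across the two conditions. Throughout I use that $\ov{G}_{\vec{\nu}}$ is a finite graph and that the only player-1 choices live at vertices $(u,1)$ with $u\in V_1$, so a memoryless strategy on $\ov{G}_{\vec{\nu}}$ projects to a memoryless strategy $\sigma$ on $G$ via $\sigma(u)=v$ whenever the chosen auxiliary edge is $((u,1),(v,2))$.

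For $(1)\Rightarrow(2)$ I would take a memoryless winning strategy $\ov\sigma$ for player~1 in $\ov{G}_{\vec{\nu}}$ and project it to the memoryless strategy $\sigma$ on $G$ above; the claim is that $\sigma$ wins the signature game. I argue by contradiction: assume some play $\rho$ consistent with $\sigma$ violates the signature winning condition. There are three failure modes, and in each I exhibit a play of $\ov{G}_{\vec{\nu}}$ consistent with $\ov\sigma$ yet of negative mean payoff (or hitting a $-\infty$ edge), contradicting that $\ov\sigma$ wins. If $\rho$ contains a simple negative cycle $C$, player~2 drives the play to $C$ and loops it forever, always choosing ``continue'' at the copy-$2$ vertices; since $\ov\sigma$ is memoryless and $C$ is $\sigma$-consistent, player~1's forced moves reproduce $C$, giving an eventually periodic play of mean payoff $w(C)/|C|<0$. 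If $\rho$ contains a cycle carrying a $-\omega$ edge, the same loop has auxiliary weight at most $W(|V|-1)-10\cdot W\cdot|V|<0$ per turn, again forcing negative mean payoff. Finally, if some prefix violates a threshold, i.e.\ $\rho_j=v_i$ with $w(\rho_0\cdots\rho_j)<\nu_i$ and $\nu_i\in\Z$, player~2 follows $\rho_0\cdots\rho_j$ to $(v_i,2)$ and then takes the return edge of weight $-\nu_i$ back to $(v_0,1)$; the resulting closed walk has weight $w(\rho_0\cdots\rho_j)-\nu_i<0$, and repeating it (player~1, being memoryless, re-executes the same segment from $v_0$) forces negative mean payoff. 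The case $\nu_i=+\infty$ is handled by the safety layer: reaching such a $v_i$ lets player~2 take the $-\infty$ edge.

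For $(3)\Rightarrow(1)$ I would start from any (possibly history-dependent) winning strategy $\sigma^{*}$ of player~1 in $(G,\vec{\nu})$ and build a player-1 strategy in $\ov{G}_{\vec{\nu}}$ that simulates $\sigma^{*}$, resetting the simulated history to $v_0$ each time player~2 uses a return edge. A play consistent with this strategy decomposes into \emph{rounds}: each round is a $\sigma^{*}$-consistent path from $v_0$ to some $(v_i,2)$ followed either by a return edge of weight $-\nu_i$ or, in the last round, by an infinite tail. Because $\sigma^{*}$ wins the signature game, every finite prefix ending at $v_i$ satisfies $w(\cdot)\ge\nu_i$, so each completed round has total weight $w(\text{path})-\nu_i\ge 0$; moreover $\sigma^{*}$ never visits a $+\infty$-vertex, so the simulation avoids the $-\infty$ edges. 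It remains to bound the mean payoff, and here I would use that no $\sigma^{*}$-consistent play contains a negative cycle or a $-\omega$-cycle: by greedy cycle decomposition every prefix weight within a round stays $\ge -W(|V|-1)$, and since round boundaries are non-negative all partial sums of the whole play are bounded below, whence $\LimInfAvg\ge 0$. If player~2 eventually stops returning, the infinite tail is a single $\sigma^{*}$-consistent path with no negative cycle, and the same decomposition gives $\LimInfAvg\ge 0$ for the tail. Thus player~1 wins $\ov{G}_{\vec{\nu}}$.

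The main obstacle I anticipate is the $(3)\Rightarrow(1)$ direction. The delicate point is reconciling the fact that the signature thresholds are measured from the global start of the play with a gadget that only ever resets to $v_0$, and then converting the purely combinatorial ``no negative cycle'' guarantee into a uniform lower bound on all partial sums, so as to conclude non-negativity of the $\liminf$-average across all three play shapes (infinitely many returns, finitely many, or none). The cycle-decomposition bound and the bookkeeping that every completed round is non-negative are the steps most likely to require care; by contrast $(1)\Rightarrow(2)$ is comparatively routine once memoryless determinacy is invoked, and $(2)\Rightarrow(3)$ is free.
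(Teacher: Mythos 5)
Your proposal follows the paper's proof almost step for step: the same implication cycle $(1)\Rightarrow(2)\Rightarrow(3)\Rightarrow(1)$, memoryless determinacy of finite mean-payoff (safety) games to get item~(2), player~2 exploiting a violation by looping a $\ov\sigma$-consistent closed walk in $\ov{G}_{\vec{\nu}}$ for $(1)\Rightarrow(2)$, and the reset-simulation of the signature winning strategy for $(3)\Rightarrow(1)$. Your $(3)\Rightarrow(1)$ actually supplies details the paper waves off as ``straightforward''; the only soft spot there is the phrase ``greedy cycle decomposition'': excising a contiguous cycle from a walk can create new negative contiguous cycles, so the decomposition itself does not go through literally. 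The needed bound is still true: when no contiguous cycle is negative, the partial sums at successive visits to any fixed vertex are non-decreasing, which forces every partial sum to be at least $-(|V|-1)\cdot W$; this is a one-line repair.

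There is, however, a genuine gap in your case analysis for $(1)\Rightarrow(2)$. Your three failure modes (negative simple cycle; cycle carrying a $-\omega$ edge; threshold violation) together with the $+\infty$ safety case do not cover a violation in which the prefix $\rho_0\cdots\rho_j$ traverses a $-\omega$ edge, ends at $v_i$ with $\nu_i\in\Z$, and yet no cycle of the play contains a $-\omega$ edge and no finite-weight threshold is violated. For such a prefix your formula ``closed walk weight $w(\rho_0\cdots\rho_j)-\nu_i<0$'' conflates the signature weight (which is $-\omega$) with the auxiliary weight, which is finite: in $\ov{G}_{\vec{\nu}}$ the $-\omega$ edge costs only $-10\cdot W\cdot|V|$, and the given prefix may be padded with $\sigma$-consistent positive cycles (chosen by player~2 at his own vertices), so the closed walk you loop can have \emph{positive} mean payoff and yields no contradiction. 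The fix is exactly the step the paper makes explicit: since $\ov\sigma$ is memoryless, one may w.l.o.g.\ excise all cycles from the portions before and after the $-\omega$ edge, obtaining a $\sigma$-consistent prefix of auxiliary weight at most $2W(|V|-1)-10\cdot W\cdot|V|$, after which the return edge of weight $-\nu_i\leq 2\cdot W\cdot|V|$ makes the closed walk negative. Note that this omitted sub-case is the \emph{only} place where the hypothesis $\nu_i\geq -2\cdot W\cdot|V|$ (and the calibration $-10\cdot W\cdot|V|$ of the $-\omega$ edges) is ever used; your proof as written never invokes that hypothesis, which is precisely the symptom of the missing case.
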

\begin{proof}
We first prove that item~1 implies item~2.
\begin{enumerate}
\item 
In order to prove that item~1 implies item~2, 
let us assume that player~1 is the winner in the auxiliary mean-payoff game.
Note that the auxiliary mean-payoff game is equivalently a finite-state mean-payoff safety game, and
therefore player~1 has a memoryless winning strategy $\tau$ in the 
auxiliary mean-payoff game~\cite{EM79}
(a mean-payoff safety game is easily transformed to a mean-payoff game by making the
non-safe vertices absorbing with negative weights).
Hence the memoryless strategy $\tau$ ensures that edges with weight $-\infty$ are
never visited.
Towards a contradiction, let us assume that $\tau$ is not a winning strategy 
for the signature game (note that $\tau$ is also a well-defined player-1 strategy 
in the signature game choosing edges in copy~1 according to $\tau$).
Therefore one of the following two cases occur.

\begin{itemize}
\item Case~1: There exists a finite play prefix $\rho$ 
that is consistent with $\tau$, which starts from the initial vertex $v_0$ to some 
vertex $v_i \in V$ with sum of weights less than $\nu_i$.
In this case, either $\rho$ goes through an $-\omega$ edge, or 
$w_G(\rho) < \nu_i$.
If $\rho$ goes through an $-\omega$ edge in $G$, 
then the weight of $\rho$ in $\ov{G}_{\vec{\nu}}$ is at most 
$-9\cdot W\cdot |V|$, since w.l.o.g we can assume that $\rho$ does not have 
positive cycles (as $\tau$ is memoryless).
As $-9\cdot W\cdot |V| < \nu_i$, 
it follows that the path $(\rho \cdot ((v_i,2),(v_0,1)))^\omega$ is 
consistent with $\tau$ and has a negative mean-payoff in the auxiliary game.
This contradicts the assumption that $\tau$ is a winning strategy.
If $\rho$ does not go through an $-\omega$ edge, then $w_G(\rho) < \nu_i$ and again 
$(\rho \cdot ((v_i,2),(v_0,1)))^\omega$ is consistent with $\tau$ and has a negative mean-payoff
in the auxiliary game. 
This is again a contradiction that $\tau$ is a winning strategy, 
and concludes the proof of the first case.

\item
Case~2: There exists a finite play prefix $\rho = \rho_1 \cdot \rho_2$ 
that is consistent with $\tau$, such that $\rho_2$ is a negative cycle 
(or a cycle with $-\omega$ edge) in the signature game graph.
If $\rho_2$ does not contain an $-\omega$ edge $e$, 
then by definition, $\rho_2$ is a negative cycle also in the auxiliary game.
Otherwise, $\rho_2$ contains an $-\omega$ edge $e$, and then 
again $\rho_2$ is a negative cycle in the auxiliary game, 
as w.l.o.g we can assume that $\rho_2$ does not contain positive cycles, 
and since $\ov{w}(e) \leq -10\cdot W \cdot |V|$.
Thus the play $\rho_1 \cdot (\rho_2)^\omega$ is consistent with $\tau$ and 
has a negative mean-payoff in the auxiliary game.
This contradicts the assumption that $\tau$ is a winning strategy,
and completes the proof.
\end{itemize}

\item Item~2 trivially implies item~3.

\item We now show that item~3 immediately implies item~1.
It is straightforward to verify that if player~1 plays according 
to the signature game winning strategy in every position, then a negative 
cycle will not be formed in the auxiliary game (as a negative cycle 
is not formed in the signature game, and the threshold vector is
always satisfied in the signature game) and a vertex with threshold 
$+\infty$ will never be reached.
Hence the mean-payoff of the play in the auxiliary mean-payoff game will 
be non-negative and $-\infty$ edges will never be visited.
This shows that item~3 implies item~1.
\end{enumerate}
This completes the proof.
\hfill\qed
\end{proof}

\begin{lem}\label{lem:WinIffMemorylessStrategy} 
Let $(G,\vec{\nu})$ be a signature game such that 
$\nu_i \geq -2\cdot W\cdot |V|$ for every $\nu_i \in \Z$.
There is a winning strategy for player~1 in the
signature game $(G,\vec{\nu})$ iff player~1 has a memoryless winning strategy.
\end{lem}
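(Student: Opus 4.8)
The plan is to obtain Lemma~\ref{lem:WinIffMemorylessStrategy} as an almost immediate consequence of Lemma~\ref{lemm:WinInLocalIffWinInReduction}, which already does the real work. First I would observe that the statement has exactly the same hypothesis as Lemma~\ref{lemm:WinInLocalIffWinInReduction}, namely that $\nu_i \geq -2\cdot W\cdot |V|$ for every $\nu_i \in \Z$, so the auxiliary mean-payoff game $\ov{G}_{\vec{\nu}}$ is available and the three-way equivalence of that lemma may be invoked directly.

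The argument then proceeds by chaining the equivalences. The forward direction, that a memoryless winning strategy implies a winning strategy, is trivial since a memoryless strategy is a special case of a general strategy. For the reverse direction, suppose player~1 has some (possibly infinite-memory) winning strategy in the signature game $(G,\vec{\nu})$; this is precisely item~3 of Lemma~\ref{lemm:WinInLocalIffWinInReduction}. By the equivalence established there, item~3 implies item~1, i.e.\ player~1 is the winner in the auxiliary mean-payoff game $\ov{G}_{\vec{\nu}}$, and item~1 in turn implies item~2, i.e.\ player~1 has a \emph{memoryless} winning strategy in the signature game. Thus the existence of any winning strategy already yields a memoryless one, which is exactly what the lemma asserts.

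The only substantive point to record, and the step I would expect to require the most care, is the justification that the memoryless winning strategy is transferred from the auxiliary game back to the signature game soundly: the auxiliary game $\ov{G}_{\vec{\nu}}$ is a mean-payoff safety game (with the $+\infty$ thresholds encoded as $-\infty$ edge weights to be avoided), and the existence of memoryless optimal strategies there rests on the classical result for finite mean-payoff games~\cite{EM79} together with the standard reduction that makes non-safe vertices absorbing with negative weight. This is precisely the content already carried out inside the proof that item~1 implies item~2 in Lemma~\ref{lemm:WinInLocalIffWinInReduction}, so no new work is needed here; I would simply cite that implication rather than redo the transfer. Hence the proof reduces to: the right-to-left direction is immediate, and the left-to-right direction follows from the implications item~3~$\Rightarrow$~item~1~$\Rightarrow$~item~2 of Lemma~\ref{lemm:WinInLocalIffWinInReduction}.
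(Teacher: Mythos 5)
Your proposal is correct and matches the paper's proof, which consists of the single line ``Follows from Lemma~\ref{lemm:WinInLocalIffWinInReduction}''; you have simply made explicit the chain of implications (item~3 $\Rightarrow$ item~1 $\Rightarrow$ item~2, plus the trivial converse) that the paper leaves implicit.
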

\begin{proof}
Follows from Lemma~\ref{lemm:WinInLocalIffWinInReduction}.
\hfill\qed
\end{proof}

\begin{remark}
The result of Lemma~\ref{lem:WinIffMemorylessStrategy} holds for all thresholds 
$\nu_i$, but we consider $\nu_i \geq -2 \cdot W\cdot |V|$ for simplicity of the 
proof.
\end{remark}

\smallskip\noindent{\bf Signature games to memoryless modular strategies.}
We will now use the existence of cycle independent modular winning
strategies, and memoryless winning strategies in signature games to 
show the existence of memoryless modular strategies.
For simplicity we will consider recursive game graphs where every 
module has a single entry, and a simple polynomial reduction from 
multi-entry recursive game graphs to single entry recursive game
graphs is established in~\cite{AlurReach}.
To prove the result of memoryless modular strategies we define 
the signature games for modular strategies.

\smallskip\noindent{\bf Signature games for modular strategies.}
Consider a WRG $\wrg=\atuple{A_1, A_2,\dots,A_n}$ and 
let $\tau=\Set{\tau_i}_{i=1}^n$ be a modular strategy.
Consider a module $A_i$ in $\wrg$.
Let $b \in B_i$ be a box in module $A_i$, 
which invokes the module $A_j$, and 
let $n_i \in N_i$ be a node in module $A_i$ that is connected to 
one exit of $A_j$, which is reachable according to the strategy $\tau$.
We denote by $w^{\tau}_{b,n_i}$ the minimal weight of all plays according 
to $\tau$ that begins at the call to box $b$ and ends at $n_i$ 
(in the same stack height), and do not visit any other vertices in 
$A_i$ (in the same stack height).
If such a minimal-weight play does not exist, then let $w^{\tau}_{b,n_i} = -\omega$.
For every module $A_i$, we form a finite-state two-player game graph $G_{A_i}$, 
with a weight function as follows: 
(i)~in the module $A_i$ we add an edge from every box $b$ to every return node 
$n_i$ with weight $w^{\tau}_{b,n_i}$, and add a self loop, with weight $0$, 
to every exit node; and (ii)~every box is now interpreted as a player-2 
vertex.
Note that the local strategy $\tau_i$ is a well-defined player-1 
strategy in the game graph $G_{A_i}$.
For a vertex $v \in V_i$,
(i)~if $v$ is visited along a play consistent with $\tau_i$, then
let $\eta_v$ denote the maximal value such that in every position of a 
play according to $\tau_i$ on $G_{A_i}$, that begins in the entry node of $A_i$, 
and is currently at vertex $v \in V_i$, the sum of weights from the beginning 
of the play is at least $\eta_v$; and 
(ii)~otherwise, $v$ is never visited along all plays 
consistent with $\tau_i$, then $\eta_v=+\infty$ (note that this is like a 
safety condition to ensure that $v$ is not visited).
The \emph{signature game for $\tau$ on module $A_i$} 
consists of the game graph $G_{A_i}$ and the threshold vector $\vec{\nu}^i \in 
(\Set{-\omega, +\infty}\cup\Z)^{|V_i|}$, 
such that $\nu^i_v=\eta_v$ for all vertices $v \in V_i$.
We denote by $(G_{A_i},\vec{\nu}^i,\tau)$ the signature game obtained
given the modular strategy $\tau$ on module $A_i$.
\NewChange{We first give an example that illustrates the connection between signature games and cycle independent modular winning strategies.}
We establish some basic properties in Lemma~\ref{lemm:NoMinusInfInSig},
and then in Lemma~\ref{lem:SignatureStrategy} we establish properties of 
the winning strategies in the signature games from modular strategies.

\NewChange{
\begin{examp}\label{examp:4_2}
Consider the RSM $\langle A_0, A_1 \rangle$ (shown in 
Figure~\ref{fig:StraExample}) and a player-1 modular strategy 
$\tau = \{\tau_0, \tau_1\}$ such that in module $A_0$ the strategy $\tau_0$ 
always selects $v_4\to v_5$ if the play visited $v_3$ and otherwise it always 
invokes $A_1$, and in module $A_1$, the strategy $\tau_1$ selects the upper 
exit (denoted by $\Ex_1$) if $u_3$ is visited (in the current invocation of 
$A_1$) and otherwise it selects the lower exit (denoted by $\Ex_2$).
The strategy $\tau$ is a cycle independent strategy, but it is not a 
memoryless strategy.
In a play according to $\tau$ if the upper exit of $A_1$ is reached, then the 
path $\En_1 \to u_1\to u_3\to \Ex_2$ with weight $-1$ is played, and if the 
lower exit is reached, then the weight of the sub-play is $1$.
Hence, if in module $A_0$, vertex $v_4$ invokes $A_1$, then if the upper exit 
of $A_1$ is reached, then the play continues to $v_5$ and from there to $v_4$ 
and a cycle with weight $0$ is formed.
If the lower exit of $A_1$ is reached, then the play continues to $v_4$ and 
a cycle with weight $1$ is formed.
If in $v_4$ the player-1 move is $v_4 \to v_5$, then the play continues to 
$v_4$, and a cycle with weight $0$ is formed.
Therefore, the strategy $\tau$ ensure that mean-payoff is at least $0$.

The corresponding signature game is illustrated in 
Figure~\ref{fig:StraToSigExample}. Note that the box that invokes $A_1$ is 
replaced by $b_{A_1}$.
We note that player~1 has to decide on the next move only in $u_4$ in $A_1$ 
and in $v_4$ in $A_0$.
Hence, the strategies $\tau_0$ and $\tau_1$ are well defined over 
$G_{A_0}$ and $G_{A_1}$, respectively.
The strategy $\tau_1$ ensures the following signature over $G_{A_1}$:
$\nu^0  = 0,
\nu^1 = 0,
\nu^2 = 7,
\nu^3 = 6,
\nu^4 = 6,
\nu^5 = -1,
\nu^6 = 1$ and the strategy $\tau_0$ ensures the following signature over 
$G_{A_0}$:
$\nu^0 = 0,
\nu^1 = 0,
\nu^2 = 1,
\nu^3 = -1,
\nu^4 = 2,
\nu^5 = 7,
\nu^{b_{A_i}} = 2,
\nu^6 = 1,
\nu^7 = 3$.
The same signatures are satisfied by a memoryless strategy that always selects 
$u_4\to u_5$ in $G_{A_1}$ and $v_4 \to b_{A_1}$ in $G_{A_0}$.
The corresponding memoryless strategy in $A_1$ is to select the upper exit in 
$u_4$ and in $A_0$ is to invoke $A_1$ when in $v_4$.
It is easy to verify that this memoryless strategy is a winning strategy in 
the recursive game for the mean-payoff objective.

\begin{figure}[H]
\begin{center}

\begin{picture}(70,35)(16,-50)



\node[Nw=4.0,Nh=4.0,Nmr=0.0](Player2)(32.0,-32.0){$u_1$}
\node[Nw=4.0,Nh=4.0,Nmr=0.0](Player2Up)(46.0,-24.0){$u_2$}
\node[Nw=4.0,Nh=4.0,Nmr=0.0](Player2Down)(46.0,-40.0){$u_3$}

\node[Nw=4.0,Nh=4.0,Nmr=2.0](Player1)(60,-32.0){$u_4$}

\put(20,-26.5){\Large{$A_1$}}

\node[Nw=2.0,Nh=2.0,Nmr=1.0](Entry)(18,-32.0){}

\node[Nw=2.0,Nh=2.0,Nmr=1.0](ExUp)(74,-28.0){}
\node[Nw=2.0,Nh=2.0,Nmr=1.0](ExDown)(74,-36.0){}

\node[Nw=56.0,Nh=25,Nmr=7.475](Frame)(46,-32.0){}

\drawedge(Entry,Player2){$0$}
\drawedge(Player2,Player2Up){$7$}
\drawedge(Player2,Player2Down){$6$}
\drawedge(Player2Down,Player1){$0$}
\drawedge(Player2Up,Player1){$0$}
\drawedge(Player1,ExUp){$-7$}
\drawedge[ELside=r](Player1,ExDown){$-6$}

\end{picture}

\begin{picture}(70,35)(16,-50)

\put(20,-24){\Large{$A_0$}}
\node[Nw=2.0,Nh=2.0,Nmr=1.0](Entry)(18,-32.0){}

\node[Nw=4.0,Nh=4.0,Nmr=0.0](Player2)(32.0,-32.0){$v_1$}
\node[Nw=4.0,Nh=4.0,Nmr=0.0](Player2Up)(46.0,-24.0){$v_2$}
\node[Nw=4.0,Nh=4.0,Nmr=0.0](Player2Down)(46.0,-40.0){$v_3$}

\node[Nw=4.0,Nh=4.0,Nmr=2.0](Player1)(60,-32.0){$v_4$}

\node[Nw=16.0,Nh=8.0,Nmr=1.0](Box)(82,-24.0){$A_1$}
\node[Nw=2.0,Nh=2.0,Nmr=1.0](EntryBox)(74,-24.0){}
\node[Nw=2.0,Nh=2.0,Nmr=1.0](BoxEx1)(90,-22.0){}
\node[Nw=2.0,Nh=2.0,Nmr=1.0](BoxEx2)(90,-26.0){}

\node[Nw=4.0,Nh=4.0,Nmr=0.0](Player2Last)(74.0,-40.0){$v_5$}

\drawbpedge[ELpos=70](BoxEx2,-45,3,Player1,0,40){$0$} 

\drawbpedge[ELpos=70](BoxEx1,-20,7,Player2Last,0,25){$6$}

\drawedge[curvedepth=4](Player2Last,Player1){$-5$}


\node[Nw=2.0,Nh=2.0,Nmr=1.0](Exit)(96,-32.0){}

\node[Nw=78.0,Nh=30,Nmr=7.475](Frame)(57,-32.0){}

\drawedge(Entry,Player2){$0$}
\drawedge(Player2,Player2Up){$1$}
\drawedge(Player2,Player2Down){$-1$}
\drawedge(Player2Down,Player1){$3$}
\drawedge(Player2Up,Player1){$7$}
\drawedge(Player1,EntryBox){$0$}
\drawedge[ELside=r](Player1,Player2Last){$5$}

\end{picture}

\caption{RSM with two modules ($A_0$ and $A_1$).
Player~1 controls the round vertices and the rest of the vertices are controlled by player~2.}\label{fig:StraExample}
\end{center}
\end{figure}
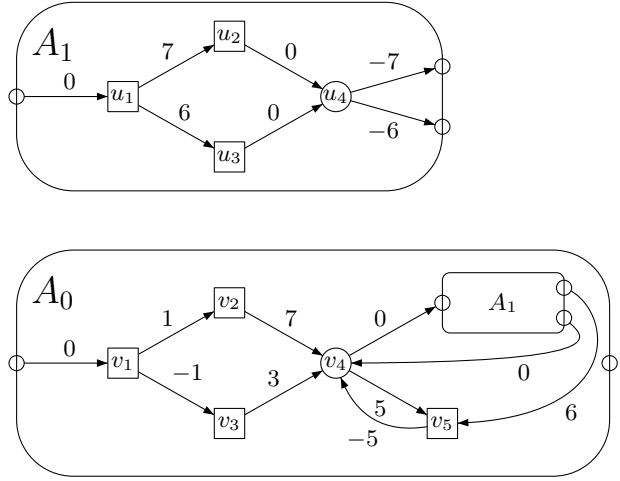

\begin{figure}[H]
\begin{center}

\begin{picture}(70,20)(16,-42)



\node[Nw=4.0,Nh=4.0,Nmr=0.0](Player2)(32.0,-32.0){$u_1$}
\node[Nw=4.0,Nh=4.0,Nmr=0.0](Player2Up)(46.0,-24.0){$u_2$}
\node[Nw=4.0,Nh=4.0,Nmr=0.0](Player2Down)(46.0,-40.0){$u_3$}

\node[Nw=4.0,Nh=4.0,Nmr=2.0](Player1)(60,-32.0){$u_4$}

\put(4,-33){\Large{$G_{A_1}$}}

\node[Nw=4.0,Nh=4.0,Nmr=2.0](Entry)(18,-32.0){$u_0$}

\node[Nw=4.0,Nh=4.0,Nmr=2.0](ExUp)(74,-28.0){$u_5$}
\node[Nw=4.0,Nh=4.0,Nmr=2.0](ExDown)(74,-36.0){$u_6$}

\drawedge(Entry,Player2){$0$}
\drawedge(Player2,Player2Up){$7$}
\drawedge(Player2,Player2Down){$6$}
\drawedge(Player2Down,Player1){$0$}
\drawedge(Player2Up,Player1){$0$}
\drawedge(Player1,ExUp){$-7$}
\drawedge[ELside=r](Player1,ExDown){$-6$}

\end{picture}

\begin{picture}(70,20)(16,-40)

\node[Nw=4.0,Nh=4.0,Nmr=2.0](Entry)(18,-32.0){$v_0$}

\put(4,-33){\Large{$G_{A_0}$}}

\node[Nw=4.0,Nh=4.0,Nmr=0.0](Player2)(32.0,-32.0){$v_1$}
\node[Nw=4.0,Nh=4.0,Nmr=0.0](Player2Up)(46.0,-24.0){$v_2$}
\node[Nw=4.0,Nh=4.0,Nmr=0.0](Player2Down)(46.0,-40.0){$v_3$}

\node[Nw=4.0,Nh=4.0,Nmr=2.0](Player1)(60,-32.0){$v_4$}

\node[Nw=5.0,Nh=5.0,Nmr=0.0](EntryBox)(74,-24.0){$b_{A_1}$}
\node[Nw=4.0,Nh=4.0,Nmr=2.0](BoxEx1)(90,-22.0){$v_6$}
\node[Nw=4.0,Nh=4.0,Nmr=2.0](BoxEx2)(90,-26.0){$v_7$}

\node[Nw=4.0,Nh=4.0,Nmr=0.0](Player2Last)(74.0,-40.0){$v_5$}

\drawbpedge[ELpos=70](BoxEx2,-45,3,Player1,0,40){$0$} 

\drawbpedge[ELpos=70](BoxEx1,-20,7,Player2Last,0,25){$6$}

\drawedge[curvedepth=4](Player2Last,Player1){$-5$}




\drawedge(EntryBox,BoxEx1){$-1$}
\drawedge[ELside=r](EntryBox,BoxEx2){$1$}

\drawedge(Entry,Player2){$0$}
\drawedge(Player2,Player2Up){$1$}
\drawedge(Player2,Player2Down){$-1$}
\drawedge(Player2Down,Player1){$3$}
\drawedge(Player2Up,Player1){$7$}
\drawedge(Player1,EntryBox){$0$}
\drawedge[ELside=r](Player1,Player2Last){$5$}

\end{picture}

\caption{$G_{A_0}$ and $G_{A_1}$ are the corresponding signature games for module $A_0$ and $A_1$ from Figure~\ref{fig:StraExample}.}\label{fig:StraToSigExample}
\end{center}
\end{figure}
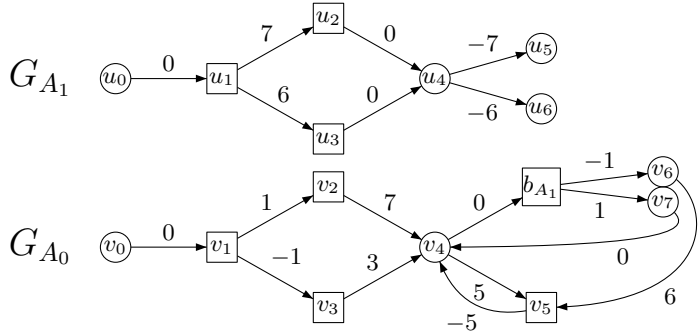
\end{examp}

}

\begin{lem}\label{lemm:NoMinusInfInSig}
Let $\wrg$ be a WRG.
If $\tau$ is a cycle independent modular winning strategy for the objective
$\LimInfAvg\geq 0$, then for every module $A_i$ 
the following assertions hold:
\begin{itemize}
\item $\tau_i$ is a winning strategy for the signature game $(G_{A_i},\vec{\nu}^i,\tau)$; and
\item the integer coefficients of $\vec{\nu}^i$ are at least $-2\cdot W \cdot |V_i|$.
\end{itemize}
\end{lem}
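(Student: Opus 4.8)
The plan is to verify the two bullets separately, reducing each statement about the finite game $G_{A_i}$ to a statement about proper (non-decreasing) cycles of the recursive game, and exploiting that $\tau$ is both winning and cycle independent. The device I would use throughout is the correspondence between a play of $G_{A_i}$ consistent with the local strategy $\tau_i$ and a play of $\wrg$ consistent with $\tau$: each summary edge $(b,n_i)$ of $G_{A_i}$ of weight $w^{\tau}_{b,n_i}$ stands for an excursion of $\wrg$ that enters box $b$, is played out inside the invoked module according to $\tau$ with player~2 driving the weight down to the minimum (or, when $w^{\tau}_{b,n_i}=-\omega$, arbitrarily low), and returns at $n_i$ at the same stack height. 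Consequently a finite (resp.\ infinite) $\tau_i$-consistent path in $G_{A_i}$ lifts to a finite (resp.\ infinite) play of $\wrg$ consistent with $\tau$ of exactly the same weight, or of weight below any prescribed bound whenever an $-\omega$ edge is traversed. With this in hand the second (threshold) winning condition of the signature game is immediate: by the very definition of $\eta_j$, every $\tau_i$-consistent play that stands at vertex $v_j$ has accumulated weight at least $\eta_j=\nu^i_j$, and the vertices with $\nu^i_j=+\infty$ are exactly those never visited by a $\tau_i$-consistent play, hence are automatically avoided.

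For the first bullet it then remains to check the cycle condition, that no $\tau_i$-consistent play of $G_{A_i}$ traverses a negative cycle or a cycle through an $-\omega$ edge. I would argue by contradiction: suppose such a cycle $C$ exists. Since $C$ returns to the same vertex of $A_i$ at the same stack height, its lift is a proper, hence non-decreasing, cycle of $\wrg$ consistent with $\tau$, of weight $w(C)$ when $C$ is an ordinary negative cycle and of arbitrarily negative weight when $C$ uses an $-\omega$ edge. Either way $\wrg^{\tau}$ contains a negative non-decreasing cycle, so by Lemma~\ref{lemm:NotEmptyIfCycle} applied to the one-player graph $\wrg^{\tau}$ (via Lemma~\ref{lemm:EveryPathIsAlsoInGraph}) there is a play consistent with $\tau$ with $\LimInfAvg<0$, contradicting that $\tau$ is winning for $\LimInfAvg\geq 0$. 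This establishes the cycle condition and hence that $\tau_i$ wins $(G_{A_i},\vec{\nu}^i,\tau)$.

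For the second bullet, fix a vertex $v_j$ with $\eta_j\in\Z$. First note that by the cycle condition just proved every $\tau_i$-consistent cycle of $G_{A_i}$ has non-negative weight and uses no $-\omega$ edge. Let $\rho$ be a minimum-weight $\tau_i$-consistent path from the entry of $A_i$ to $v_j$, so $w(\rho)=\eta_j$. Here I would use that $\tau_i$ is \emph{cycle independent}: deleting a cycle from the local history does not alter the subsequent choices of $\tau_i$, so we may prune every cycle from $\rho$ while keeping it $\tau_i$-consistent. Each pruned cycle is non-negative, hence the resulting simple path $\rho'$ is $\tau_i$-consistent, reaches $v_j$, and satisfies $w(\rho')\le\eta_j$; minimality then forces $w(\rho')=\eta_j$. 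As $\rho'$ is simple it uses at most $|V_i|$ edges, each of weight at least $-W$ with $W$ the maximal absolute edge-weight of $G_{A_i}$, so $\eta_j=w(\rho')\ge -W\cdot|V_i|\ge -2\cdot W\cdot|V_i|$, as required. If instead some $-\omega$ edge is usable before reaching $v_j$, then $\eta_j=-\omega\notin\Z$ and there is nothing to prove.

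The main obstacle I anticipate is making the play correspondence precise enough to transport cycles in both directions: one must justify that a cycle of $G_{A_i}$ closed through summary edges lifts to a genuine proper cycle of $\wrg$ of the same weight, and that $-\omega$ summary edges yield proper cycles of unboundedly negative weight. This is exactly where the definition of $w^{\tau}_{b,n_i}$ as a minimal (resp.\ unbounded) excursion weight is essential. A secondary point needing care is the appeal to cycle independence in the last paragraph: it is precisely what guarantees that pruning cycles from a minimum-weight path preserves consistency with $\tau_i$, so that the infimum defining $\eta_j$ is attained on a simple path and the elementary length bound applies.
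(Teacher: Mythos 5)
Your proof is correct and takes essentially the same route as the paper's (much terser) argument: the threshold condition holds by the very definition of $\vec{\nu}^i$, a negative or $-\omega$ cycle in $G_{A_i}$ would lift to a negative proper cycle of $\wrg$ consistent with $\tau$, contradicting that $\tau$ wins for $\LimInfAvg\geq 0$, and the bound $-2\cdot W\cdot|V_i|$ follows by using cycle independence to prune (necessarily non-negative) cycles and then bounding a simple path. The only caveat is that both your lifting argument and the paper's implicitly assume the module $A_i$ is reachable under $\tau$ (otherwise a lifted negative cycle yields no losing play from the initial configuration); this assumption is inherited from the paper's own statement and proof, not a gap particular to your write-up.
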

\begin{proof}
The first fact of the lemma follows from the facts that every path 
according to $\tau_i$ has sum of weights at least $\vec{\nu}^i$ 
and does not contain negative cycles (by the definition of the 
signature game given $\tau$, since $\tau$ is a winning strategy). 
The second fact of the lemma follows from the fact that every path 
according to $\tau_i$ does not contain negative cycles, as 
$\tau_i$ is a cycle independent modular winning strategy.
\hfill\qed
\end{proof}

\begin{lem}\label{lem:SignatureStrategy}
Let $\wrg$ be a WRG.
Let $\tau= \Set{\tau_i}_{i=1}^n$ be a cycle independent modular winning strategy 
in $\wrg$ for the objective $\LimInfAvg\geq 0$.
Let $\sigma = \Set{\sigma_i}_{i=1}^n$ be a modular strategy such that $\sigma_i$ 
is a winning strategy for the signature game $(G_{A_i},\vec{\nu}^i,\tau)$.
Then for every play $\rho^\sigma$, consistent with $\sigma$, which starts 
from the entry node of a module $A_i$ to a node $n$ in the same module 
(and possibly goes through box nodes), there exists a play $\rho^\tau$, 
consistent with $\tau$, from the same entry node to the same node $n$, 
such that $w(\rho^\sigma) \geq w(\rho^\tau)$.
In addition, the path $\rho^\sigma$ does not contain a negative proper cycle.
\end{lem}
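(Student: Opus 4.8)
The plan is to prove the statement by induction on the additional stack height $\ASH(\rho^\sigma)$ of the play, establishing both the weight inequality and the absence of negative proper cycles simultaneously. The central device is the projection of a play in $\wrg$ onto a play in the finite signature game graph $G_{A_i}$: whenever $\rho^\sigma$, while at the top stack level of $A_i$, enters a box $b$ (invoking some $A_j$) and later returns at a node $n_i$, I collapse this entire box excursion into the single edge $(b,n_i)$ of $G_{A_i}$, whose weight is $w^{\tau}_{b,n_i}$. Since $\sigma$ is modular, the resulting collapsed sequence $\hat\rho$ is a play in $G_{A_i}$ consistent with the local strategy $\sigma_i$ (the only nondeterminism collapsed lies inside boxes, and boxes are player-2 vertices of $G_{A_i}$).

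For the base case $\ASH(\rho^\sigma)=0$ there are no box excursions, so $\rho^\sigma$ is itself a play in $G_{A_i}$ consistent with $\sigma_i$. Because $\sigma_i$ is a winning strategy for the signature game $(G_{A_i},\vec{\nu}^i,\tau)$, the winning condition yields directly that $\rho^\sigma$ contains no negative cycle, in particular no negative proper cycle, and that the accumulated weight at the final node $n_\ell$ is at least the threshold $\nu^i_\ell=\eta_\ell$. By the definition of $\eta_\ell$ as the minimal accumulated weight over all $\tau_i$-plays reaching $n_\ell$, there is a $\tau$-play to $n_\ell$ of weight $\eta_\ell\le w(\rho^\sigma)$ (if $\eta_\ell=-\omega$ I simply pick a $\tau$-play of weight below $w(\rho^\sigma)$, which must exist), giving the required $\rho^\tau$.

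For the inductive step, each box excursion of $\rho^\sigma$ is a play in the invoked module $A_j$ from its entry to an exit, and has additional stack height at most $\ASH(\rho^\sigma)-1$, so the induction hypothesis applies to it. Part~(a) of the hypothesis gives a $\tau$-subplay through $A_j$ of no larger weight, which together with the surrounding call and return edges is a $\tau$-play from the call of $b$ to $n_i$; by minimality its weight is at least $w^{\tau}_{b,n_i}$, hence the excursion weight is $\ge w^{\tau}_{b,n_i}$. Summing over all excursions gives $w(\rho^\sigma)\ge w(\hat\rho)$, and applying the signature-game winning condition to $\hat\rho$ exactly as in the base case produces $\rho^\tau$ with $w(\rho^\tau)\le w(\hat\rho)\le w(\rho^\sigma)$. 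For the no-negative-proper-cycle part, every proper cycle of $\rho^\sigma$ is either contained inside a single box excursion, in which case part~(b) of the induction hypothesis forbids it from being negative, or it is a top-level cycle of $A_i$, whose collapse is a cycle of $\hat\rho$; the latter is non-negative by the signature winning condition, and since collapsing only replaces excursions by lighter edges, the original cycle weighs at least as much as its collapse and is therefore non-negative as well. Here I rely on Lemma~\ref{lemm:EveryPathIsAlsoInGraph} to move freely between plays of $\wrg$ and paths of the collapsed one-player graphs, and on the ``first occurrence does not return'' structure of proper cycles to guarantee this clean top-level/box-internal decomposition.

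The main obstacle I expect is the careful treatment of the unbounded ($-\omega$) weights and the faithful lifting of signature-game plays back to $\wrg$: I must verify that a finite value $\eta_\ell\in\Z$ forces every $\tau_i$-play reaching $n_\ell$ (and in particular a weight-minimizing one) to avoid all $-\omega$ edges, so that its box edges are realized by genuine minimum-weight $\tau$-subplays in $\wrg$, whereas a value $\eta_\ell=-\omega$ must be handled separately by exhibiting arbitrarily light $\tau$-plays. Checking that the infimum defining $\eta_\ell$ is actually attained when it is finite (using finiteness of $G_{A_i}$ and integrality of the weights), and that the collapse preserves consistency with $\sigma_i$, is the remaining routine bookkeeping.
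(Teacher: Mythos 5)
Your proof is correct and follows essentially the same route as the paper's: induction on the additional stack height, collapsing box excursions of $\rho^\sigma$ onto the edges $(b,n_i)$ of $G_{A_i}$ with weight $w^{\tau}_{b,n_i}$, using the inductive hypothesis to bound each excursion from below by $w^{\tau}_{b,n_i}$, and then invoking the signature-game winning condition (thresholds plus absence of negative or $-\omega$ cycles) to produce $\rho^\tau$ and rule out negative proper cycles. You are in fact somewhat more careful than the paper on the bookkeeping it leaves implicit (attainment of the infimum defining $\eta_\ell$, the $\eta_\ell=-\omega$ and $-\omega$-edge cases, and the decomposition of proper cycles into box-internal versus top-level ones), and you handle several boxes uniformly where the paper does one box and declares the extension straightforward.
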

\begin{proof}
The proof is by induction on the additional stack height of $\rho^\sigma$.
\begin{itemize}
\item \emph{Base case:} Additional stack height is $0$.
In this case the play $\rho^\sigma$ have only edges from $A_i$, and 
the weight of the play is identical to the weight of the same play in 
$(G_{A_i},\vec{\nu}^i,\tau)$.
The play $\rho^\sigma$ does not visit a vertex with threshold $+\infty$,
otherwise $\sigma_i$ would not be a winning strategy in the 
signature game $(G_{A_i},\vec{\nu}^i,\tau)$.
Hence, by definition, there exists a play $\rho^\tau$, consistent with $\tau$ from the entry node to 
$n$ with weight at most $\nu_n^i$.
Since $\sigma_i$ is a winning strategy in the signature game, and $\rho^\sigma$ is consistent with $\sigma_i$, 
we get that $w(\rho^\sigma) \geq \nu_n^i$.
Therefore $w(\rho^\sigma) \geq w(\rho^\tau)$.
Since $\sigma_i$ is a winning strategy in the signature game, and the path $\rho^\sigma$ is consistent with 
$\sigma_i$ also in graph $G_{A_i}$, we get that 
$\rho^\sigma$ does not contain negative cycles.

\item \emph{Inductive step:} Additional stack height $>0$.
For simplicity, we first assume that $\rho^\sigma$ goes only through one box 
node in the module $A_i$ (in the first stack level).
Let node $b$ be that box, and let node $u\in A_i$ be the return node in that path.
Let $\rho_{b,u}^\sigma$ be the sub-play from the entry node of $b$ to node $u$.
Recall that $w^{\tau}_{b,u}$ is the minimal weight among all plays 
consistent with $\tau$ between $b$ and $u$.
Let $A_j$ be the module invoked by $b$, and let $u'$ be the exit node that 
leads to the return node $u$ in $A_i$.
As the additional stack height from the entry node of $A_j$ to $u'$ is 
strictly smaller than the additional stack height of $\rho^\sigma$, 
it follows from the inductive hypothesis that there exists a path consistent 
with $\tau$ between these two nodes with weight at most $w(\rho_{b,u}^\sigma)$.
Hence $w^{\tau}_{b,u} \leq w(\rho_{b,u}^\sigma)$.
Thus, the weight of $\rho^\sigma$ is bounded from below by the induced 
path of $\rho^\sigma$ over the signature game $(G_{A_i},\vec{\nu}^i,\tau)$.
Thus, by the definition of the signature game there exists a path $\rho^\tau$ as desired.
In addition, by the inductive hypothesis, the path $\rho_{b,u}^\sigma$ does not contain 
proper negative cycles, and by the same arguments as above, there is also no negative proper 
cycle in module $A_i$.
The case where $\rho^\sigma$ goes through more then one box, is a straightforward extension 
of the argument presented above.
\end{itemize}
Thus we have the desired result.
\hfill\qed
\end{proof}

\begin{lem}\label{lem:MemorylessWinningStrategy} 
Let $\wrg$ be a WRG.
Let $\tau= \Set{\tau_i}_{i=1}^n$ be a cycle independent modular winning strategy 
in $\wrg$ for the objective $\LimInfAvg\geq0$.
Let $\sigma = \Set{\sigma_i}_{i=1}^n$ be a memoryless modular strategy such that $\sigma_i$ 
is a memoryless winning strategy for the signature game $(G_{A_i},\vec{\nu}^i,\tau)$.
Then $\sigma$ is a memoryless modular winning strategy in $\wrg$ 
for the objective $\LimInfAvg\geq 0$.
\end{lem}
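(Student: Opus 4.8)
The plan is to characterize winning through cycles and then transport a hypothetical bad cycle from $\sigma$ to the (winning) strategy $\tau$. First I would record that such memoryless $\sigma_i$ actually exist: by Lemma~\ref{lemm:NoMinusInfInSig} the strategy $\tau_i$ wins the signature game $(G_{A_i},\vec{\nu}^i,\tau)$, whose integer thresholds are at least $-2\cdot W\cdot|V_i|$, so Lemma~\ref{lem:WinIffMemorylessStrategy} yields a memoryless winning $\sigma_i$. Since each $\sigma_i$ is memoryless, $\wrg^\sigma$ is a one-player (player-2) WRG, and by Lemma~\ref{lemm:EveryPathIsAlsoInGraph} the plays consistent with $\sigma$ are exactly the paths of $\wrg^\sigma$. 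Hence, by Lemma~\ref{lemm:NotEmptyIfCycle}, it suffices to prove that $\wrg^\sigma$ has no negative non-decreasing cycle. I would also use the symmetric fact for $\tau$: as $\tau$ is cycle independent it is finite-memory (only finitely many cycle-free local histories), so $\wrg^\tau$ is a well-defined one-player WRG, and since $\tau$ is winning for $\LimInfAvg\geq 0$, Lemma~\ref{lemm:EveryPathIsAlsoInGraph} and Lemma~\ref{lemm:NotEmptyIfCycle} give that $\wrg^\tau$ has \emph{no} negative non-decreasing cycle.

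Towards a contradiction, suppose $C$ is a negative non-decreasing cycle in $\wrg^\sigma$ based at a node $v$ of module $A_i$, running from stack level $h$ to level $h'\geq h$. In the \emph{proper} case $h'=h$ I would prepend to $C$ a $\sigma$-consistent segment from the (single) entry $e_i$ of $A_i$ at level $h$ to $v$ (it exists because $v$ occurs on a $\sigma$-path). The resulting play runs from the entry of $A_i$ to a node of $A_i$ at the same level and contains the negative proper cycle $C$, contradicting the second conclusion of Lemma~\ref{lem:SignatureStrategy}.

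The \emph{non-proper} case $h'>h$ is the crux. Here $C$ descends into a recursive re-entry of $A_i$; let $\beta_h,\dots,\beta_{h'-1}$ be the spine of boxes pushed during $C$ and never popped, with $\beta_h\in A_i$ and $\beta_{h'-1}$ invoking $A_i$. This splits $C$ into same-level segments inside successive invocations: a first segment $v\to\beta_h$ inside the top invocation, then, after each (deterministic) call, an entry-to-next-spine-box segment inside each lower invocation, and finally the within-$A_i$ segment $E$ from $e_i$ to $v$ in the deepest invocation. Every segment except the first runs from an entry node to a node of the same module at the same level, so by the first conclusion of Lemma~\ref{lem:SignatureStrategy} each may be replaced by a $\tau$-consistent segment of no larger weight; chaining these replacements through the calls yields a $\tau$-consistent descending path $R$ from $e_i$ at level $h$ to $e_i$ at level $h'$, i.e.\ a non-decreasing cycle of $\wrg^\tau$. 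The one obstacle is the first segment, which starts at $v$ rather than $e_i$, so Lemma~\ref{lem:SignatureStrategy} does not apply to it directly. I would resolve this by prepending an entry-to-$v$ segment $P_1$ \emph{chosen to be the same local path as} $E$; then $P_1$ followed by $v\to\beta_h$ is an entry-to-$\beta_h$ play to which Lemma~\ref{lem:SignatureStrategy} applies, and in the weight accounting the term $w(P_1)$ cancels against $w(E)$ (as $w(P_1)=w(E)$), giving $w(R)\leq w(C)<0$. Thus $\wrg^\tau$ would have a negative non-decreasing cycle, contradicting that $\tau$ is winning.

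Combining the two cases, $\wrg^\sigma$ has no negative non-decreasing cycle, so every play consistent with $\sigma$ satisfies $\LimInfAvg\geq 0$; that is, $\sigma$ is a memoryless modular winning strategy. I expect the main difficulty to be exactly the weight bookkeeping of the non-proper case -- matching the descending $\tau$-path against $C$ invocation by invocation -- and the alignment trick above (taking the prepended prefix equal to the cycle's final entry-to-$v$ segment) is what keeps the overhead bounded so that strict negativity is preserved.
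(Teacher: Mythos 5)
Your proposal follows the same overall strategy as the paper's proof: reduce winning to the absence of negative non-decreasing cycles in $\wrg^\sigma$ (via Lemma~\ref{lemm:EveryPathIsAlsoInGraph} and Lemma~\ref{lemm:NotEmptyIfCycle}), kill the proper case with the second conclusion of Lemma~\ref{lem:SignatureStrategy}, and in the non-proper case transport the negative cycle $C$ to a $\tau$-consistent negative non-decreasing cycle based at an entry node by replacing, spine segment by spine segment, each entry-to-box sub-play using the first conclusion of Lemma~\ref{lem:SignatureStrategy}, then contradict that $\tau$ is winning. Where you differ is in how the cycle is re-based at an entry node: the paper rotates $C$ to start at the first entry node $e_1$ occurring in $C$ and pumps, considering $\rho_{e_1,n_i}\cdot C^m\cdot\rho_{n_i,e_1}$ with $m=2\cdot W\cdot(|\rho_{e_1,n_i}|+|\rho_{n_i,e_1}|)$, so that the $m$ pumped copies of $C$ dominate any boundary sloppiness; you instead prepend a copy $P_1$ of the final entry-to-$v$ segment $E$ and cancel $w(P_1)$ against $w(E)$, getting $w(R)\leq w(C)<0$ with no pumping. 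Your bookkeeping (explicit spine $\beta_h,\dots,\beta_{h'-1}$ of never-popped boxes, weight cancellation) is correct and, if anything, tighter and cleaner than the paper's rotation argument, since it never produces a segment that dips below the level of its starting entry node.

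There is, however, one step of the paper's proof that your argument omits, and it is needed: \emph{reachability of the transported cycle under $\tau$}. You conclude from ``$\wrg^\tau$ has a negative non-decreasing cycle $R$'' that $\tau$ is not winning, citing Lemma~\ref{lemm:NotEmptyIfCycle}; but a winning strategy only constrains plays from the initial configuration, so it only forbids \emph{reachable} negative cycles (the proof of Lemma~\ref{lemm:NotEmptyIfCycle} goes through reachable good cycles of the associated WPS, and the paper's proof of the present lemma is careful to say ``reachable non-decreasing negative cycle''). Your cycle $R$ sits at the entry $e_i$ of $A_i$, and nothing you wrote shows that any invocation of $A_i$ is ever reached in a play consistent with $\tau$ --- note that $\tau$ and $\sigma$ may make different choices long before $A_i$ is called, so reachability under $\sigma$ does not transfer for free. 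This is exactly what the first step of the paper's proof supplies: it takes the $\sigma$-consistent path $\rho$ leading to $C$ and replaces it, again segment by segment via Lemma~\ref{lem:SignatureStrategy}, by a $\tau$-consistent path with the same call structure, so that the base of the constructed cycle is $\tau$-reachable; only then does repeating the negative cycle (consistent with $\tau$ by modularity, since its base is an entry node) yield a genuine play with negative mean-payoff. The same technique you already use for the cycle fixes this --- apply it to the prefix reaching the invocation of $A_i$ --- and symmetrically your proper case already implicitly assumes $C$ reachable (``$v$ occurs on a $\sigma$-path''), so the whole proof should be phrased as ruling out \emph{reachable} negative non-decreasing cycles of $\wrg^\sigma$. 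Without this added step the contradiction does not go through.
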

\begin{proof}
Let $\wrg^\sigma$ be the player-2 WRG obtained by fixing the memoryless modular strategy 
$\sigma$ in $\wrg$.
Assume towards a contradiction that $\wrg^\sigma$ has a reachable non-decreasing 
negative cycle $C$, and let $\rho$ be a finite path that leads to the first vertex of $C$.
By Lemma~\ref{lem:SignatureStrategy} it follows that $C$ cannot be a proper cycle.

First, we argue that in $\wrg$ there exists a finite path $\rho^\tau$ 
from the first vertex of $\rho$ to the last vertex of $\rho$ 
that is consistent with $\tau$.
Indeed, by Lemma~\ref{lem:SignatureStrategy}, between every entry node 
and box node in $\rho$ there exists a path consistent with $\tau$,
and finally there also exists such a path between the last entry node and 
the last node of $\rho$.

Second, to achieve the contradiction we will show that $\tau$ is not a 
winning strategy.
Let $e_1$ be the first entry node in $C$ (it must exist as $C$ is not a proper cycle), 
and $n_i$ be the last (and the first) node in $C$ (note that $C$ is not a proper cycle, 
and hence this node is well defined).
Note that for every $m\in\Nat$, the path $C^m$ is a non-decreasing cycle that is 
consistent with $\sigma$ (as $\sigma$ is a memoryless strategy).
Let $\rho_{e_1,n_i}$ be the path from $e_1$ to $n_i$.
Let $\rho_{n_i,e_1}$ be the path from $n_i$ to first appearance of $e_1$ in $C$.
We consider the path $\rho^* = \rho_{e_1,n_i} \cdot C^m \cdot \rho_{n_i,e_1}$,
for $m = 2\cdot W\cdot (|\rho_{e_1,n_i}| + |\rho_{n_i,e_1}|)$.
This is a path that is (i)~consistent with $\sigma$, (ii)~begins and ends in the 
entry node $e_1$ of the same module (not necessarily in the same stack height).
Let $b_1, b_2, \dots, b_\ell$ be the boxes that occur in the path.
By Lemma~\ref{lem:SignatureStrategy}, for every $k$ there exists a path 
$\rho^\tau_{b_i,b_{i+1}}$ consistent with $\tau$ such that 
$w(\rho^\tau_{b_i,b_{i+1}}) \leq w(\rho^\sigma_{b_i,b_{i+1}})$.
Hence there exists a path $\rho^{\tau}_*$ that is consistent with $\tau$ 
from $e_1$ to $e_1$ such that the sum of the weights is negative.
As $\tau$ is a modular strategy, and $e_1$ is an entry node, it follows that the 
path $(\rho^{\tau}_*)^\omega$ is also consistent with $\tau$,
and has a negative mean-payoff.
In conclusion, we obtain that there exists a reachable negative non-decreasing cycle
in $\wrg$ consistent with $\tau$, and this contradicts that $\tau$ is a winning strategy.

Hence, every path consistent with $\sigma$ does not contain a negative non-decreasing
cycle.
By Lemma~\ref{lemm:NotEmptyIfCycle} it follows that $\sigma$ is a winning strategy 
in $\wrg$ for the objective $\LimInfAvg\geq 0$.
\hfill\qed
\end{proof}



\begin{lem}\label{lem:ModularMemorylessDeterminacy}
Let $\wrg$ be a WRG.
Player~1 has a modular winning strategy for the objective 
$\LimInfAvg\geq 0$ iff there exists a memoryless modular winning strategy 
for player~1 for the $\LimInfAvg\geq 0$ objective.
\end{lem}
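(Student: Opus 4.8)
The plan is to prove the two directions separately, with essentially all the work already packaged into the lemmas of this section. The right-to-left direction is immediate: a memoryless modular strategy is in particular a modular strategy, so a memoryless modular winning strategy for $\LimInfAvg \geq 0$ is trivially a modular winning strategy for the same objective.

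For the substantive left-to-right direction, I would first reduce to the single-entry setting using the polynomial reduction of~\cite{AlurReach}, since the signature-game construction is stated for single-entry modules. Then, starting from an arbitrary modular winning strategy for $\LimInfAvg \geq 0$, I would invoke Lemma~\ref{lem:ModularStrategyInfImpCycleIndStrategiesSup} to obtain a \emph{cycle independent} modular winning strategy $\tau = \Set{\tau_i}_{i=1}^n$ for the same objective. This first step is crucial rather than cosmetic, because the signature-game arguments that follow exploit the absence of negative proper cycles that cycle independence guarantees.

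Next, for each module $A_i$ I would pass to the signature game $(G_{A_i}, \vec{\nu}^i, \tau)$ built from $\tau$. By Lemma~\ref{lemm:NoMinusInfInSig} the local strategy $\tau_i$ is a winning strategy in this signature game, and the integer coefficients of $\vec{\nu}^i$ satisfy $\nu^i_j \geq -2\cdot W \cdot |V_i|$, which is exactly the hypothesis needed to apply memoryless determinacy of signature games. Hence Lemma~\ref{lem:WinIffMemorylessStrategy} yields a \emph{memoryless} winning strategy $\sigma_i$ for $(G_{A_i}, \vec{\nu}^i, \tau)$. Assembling these across all modules gives a memoryless modular strategy $\sigma = \Set{\sigma_i}_{i=1}^n$.

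Finally, I would close the argument with Lemma~\ref{lem:MemorylessWinningStrategy}: since $\tau$ is a cycle independent modular winning strategy for $\LimInfAvg \geq 0$ and each $\sigma_i$ is a memoryless winning strategy for the corresponding signature game $(G_{A_i}, \vec{\nu}^i, \tau)$, the combined strategy $\sigma$ is a memoryless modular winning strategy for $\LimInfAvg \geq 0$. The main point to watch is not this concluding composition---which is routine---but the consistency of the hypotheses across the three cited lemmas: every signature game must be constructed from the \emph{one} cycle independent strategy $\tau$ produced in the first step, and the threshold bound supplied by Lemma~\ref{lemm:NoMinusInfInSig} must match exactly the bound required by Lemma~\ref{lem:WinIffMemorylessStrategy}. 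Once these are verified, the chain closes and the lemma follows.
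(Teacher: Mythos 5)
Your proof is correct and follows essentially the same route as the paper: obtain a cycle independent modular winning strategy via Lemma~\ref{lem:ModularStrategyInfImpCycleIndStrategiesSup}, then assemble memoryless winning strategies of the per-module signature games into a memoryless modular winning strategy via Lemma~\ref{lem:MemorylessWinningStrategy}. You are in fact slightly more explicit than the paper's own proof, which cites only those two lemmas and leaves implicit the intermediate appeal to Lemmas~\ref{lemm:NoMinusInfInSig} and~\ref{lem:WinIffMemorylessStrategy} that guarantees the memoryless signature-game strategies exist.
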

\begin{proof}
The proof for the direction from right to left is trivial.
The opposite direction is obtained as follows: 
by Lemma~\ref{lem:ModularStrategyInfImpCycleIndStrategiesSup} it follows that 
if there is a modular winning strategy, then there is a cycle independent 
modular winning strategy;
and by Lemma~\ref{lem:MemorylessWinningStrategy} it follows that if there 
is a cycle independent modular winning strategy, then there is a memoryless 
modular winning strategy.
The desired result follows.
\hfill\qed
\end{proof}

We are now ready to prove the main result of this section.

\begin{thm}\label{thm:ModularStratInNP}
The problem of deciding if player~1 has a modular winning strategy in a 
WRG $\wrg$ for the objective $\LimInfAvg\geq 0$ is in NP.
\end{thm}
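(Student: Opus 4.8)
The plan is to give a guess-and-verify NP procedure resting on two pillars already established in the excerpt: the memoryless modular determinacy result (Lemma~\ref{lem:ModularMemorylessDeterminacy}) and the polynomial-time solvability of one-player pushdown mean-payoff problems from Section~\ref{sect:PushDownProcesses}. By Lemma~\ref{lem:ModularMemorylessDeterminacy}, player~1 has a modular winning strategy for $\LimInfAvg \geq 0$ iff player~1 has a \emph{memoryless} modular winning strategy. A memoryless modular strategy $\sigma = \Set{\sigma_i}_{i=1}^n$ assigns to each player-1 vertex of each module a single outgoing transition, so it is an object of size polynomial in $|\wrg|$. This is precisely the polynomial-size certificate that the nondeterministic machine will guess.

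The verification step proceeds as follows. Given a guessed memoryless modular strategy $\sigma$, I would construct the player-2 WRG $\wrg^\sigma$ obtained by fixing $\sigma$; since $\sigma$ is memoryless, $\wrg^\sigma$ is a sub-game-graph of $\wrg$ (no product with memory states is required) and is therefore computable in polynomial time. The strategy $\sigma$ is winning iff every play consistent with $\sigma$ satisfies $\LimInfAvg \geq 0$, which by Lemma~\ref{lemm:EveryPathIsAlsoInGraph} is equivalent to the assertion that $\wrg^\sigma$ has no path $\pi$ with $\LimInfAvg(\pi) < 0$. Since all remaining choices in $\wrg^\sigma$ belong to player~2, this is a one-player graph isomorphic to a WPS, and by Lemma~\ref{lemm:NotEmptyIfCycle} the existence of such a path is equivalent to $\wrg^\sigma$ having a negative non-decreasing cycle. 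Negating the edge weights and applying the good-cycle detection algorithm of Theorem~\ref{thm:MPSupMPInfInPTIME} (equivalently Lemma~\ref{lem:GoodCycleWithShortCutIsEasy}), the absence of a negative non-decreasing cycle can be decided in polynomial time.

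Assembling the pieces yields the NP algorithm: guess a memoryless modular strategy $\sigma$, build $\wrg^\sigma$, and accept iff $\wrg^\sigma$ has no path with $\LimInfAvg < 0$. For correctness, if player~1 has any modular winning strategy, then by Lemma~\ref{lem:ModularMemorylessDeterminacy} there is a memoryless modular winning strategy that the machine can guess and that the polynomial-time check will certify; conversely, if the machine accepts on some guess $\sigma$, then $\sigma$ is a memoryless modular strategy all of whose consistent plays satisfy $\LimInfAvg \geq 0$, so player~1 wins. Both the guess and the verification are polynomial, placing the problem in NP.

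I expect the only delicate point---rather than a genuine obstacle, since the hard work is carried by the cited lemmas---to be ensuring that the certificate is truly polynomial and the verification truly one-player. This is exactly why memoryless modular determinacy (Lemma~\ref{lem:ModularMemorylessDeterminacy}) is indispensable: a strategy requiring memory would force a synchronous product that could blow up the certificate, whereas fixing a memoryless $\sigma$ collapses the two-player recursive game to a single WPS instance on which the polynomial-time machinery of Section~\ref{sect:PushDownProcesses} directly applies.
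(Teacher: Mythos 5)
Your proposal is correct and takes essentially the same approach as the paper: guess a memoryless modular strategy as the polynomial witness (justified by memoryless modular determinacy, Lemma~\ref{lem:ModularMemorylessDeterminacy}), then verify it in polynomial time by fixing the strategy to obtain the one-player WRG $\wrg^\sigma$ and applying the WPS algorithms of Section~\ref{sect:PushDownProcesses}. You in fact spell out the verification step (via Lemma~\ref{lemm:EveryPathIsAlsoInGraph}, Lemma~\ref{lemm:NotEmptyIfCycle}, and weight negation) more explicitly than the paper's terse proof, and you cite the determinacy lemma where the paper somewhat loosely cites Lemma~\ref{lem:SignatureStrategy}.
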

\begin{proof}
By Lemma~\ref{lem:ModularMemorylessDeterminacy} it is enough to guess a memoryless 
modular strategy (the memoryless modular strategy is the polynomial witness) 
and verify that it is indeed a winning strategy.
The verification can be achieved in polynomial time using the polynomial-time 
algorithms of Section~\ref{sect:PushDownProcesses} for WPSs with mean-payoff objectives.
\hfill\qed
\end{proof}

\noindent{\bf Strict inequalities and stack boundedness.}
Note that by Corollary~\ref{cor:ModularSupIffInf}, the results of Lemma~\ref{lem:ModularMemorylessDeterminacy} 
and Theorem~\ref{thm:ModularStratInNP} also hold for the $\LimSupAvg \geq 0$ objective. 
For modular strategies we only presented the result for 
mean-payoff objectives with non-strict inequalities.
The results for strict inequalities follow from an adaptation of the 
proofs for non-strict inequalities (for which we prove memoryless modular 
strategies are sufficient).
In particular, Corollary~\ref{cor:ModularSupIffInf} holds also for strict 
inequalities, that is, given a WRG $\wrg$, there exists a modular winning 
strategy for the objective $\LimSupAvg > 0$ iff there exists a modular winning 
strategy for the objective $\LimInfAvg > 0$.
Moreover, the results also follow for mean-payoff 
objectives with the stack boundedness condition
for the following reason: we observe that the manipulated 
operations never increase the stack height.
Thus from our results it follow that if there is a modular winning
strategy to ensure the mean-payoff objective along with stack
boundedness, then there is also a memoryless modular winning strategy.
Hence the NP upper bound follows for strict inequalities 
as well as for stack boundedness.

\smallskip\noindent{\bf WRGs with multi-entry modules.}
We now discuss the implication of our results for WRGs with modules
that have multiple entries.
The polynomial-time reduction from multi-entry recursive game graphs $\wrg$ 
to single-entry recursive game graphs $\wrg'$ presented in~\cite{AlurReach} 
preserves the winner of the game for reachability objectives,
and we observe that the reduction also preserves the winner for 
mean-payoff objectives.
Hence, Theorem~\ref{thm:ModularStratInNP} holds also when modules have 
multiple entry nodes.
However,  the reduction does not preserve memorylessness of modular strategies,
i.e., in general a modular memoryless winning strategy for player~1 in 
$\wrg'$ corresponds to a modular strategy that needs memory over $\wrg$.
In other words, for WRGs with multi-entry modules, 
though the reduction and Theorem~\ref{thm:ModularStratInNP} can be used to obtain 
the NP complexity, the reduction and Lemma~\ref{lem:ModularMemorylessDeterminacy} do not imply the 
existence of memoryless modular winning strategies.
In the following example we show that there exist WRGs with multi-entry modules,
where modular winning strategies exist, but there is no memoryless modular
winning strategy (and the example works even for reachability objectives,
and when all nodes are controlled by player~1).

\begin{figure}[H]
\begin{center}

\begin{picture}(70,35)(16,-50)



\node[Nw=4.0,Nh=4.0,Nmr=4.0](Player2)(46.0,-32.0){$u_1$}

\put(20,-26.5){\Large{$A_1$}}

\node[Nw=2.0,Nh=2.0,Nmr=1.0](EntryUp)(18,-28.0){}
\node[Nw=2.0,Nh=2.0,Nmr=1.0](EntryDown)(18,-36.0){}

\node[Nw=2.0,Nh=2.0,Nmr=1.0](ExUp)(74,-28.0){}
\node[Nw=2.0,Nh=2.0,Nmr=1.0](ExDown)(74,-36.0){}

\node[Nw=56.0,Nh=25,Nmr=7.475](Frame)(46,-32.0){}

\drawedge(EntryUp,Player2){}
\drawedge(EntryDown,Player2){}

\drawedge(Player2,ExUp){}
\drawedge(Player2,ExDown){}

\end{picture}

\begin{picture}(70,35)(16,-50)

\put(20,-24){\Large{$A_0$}}
\node[Nw=2.0,Nh=2.0,Nmr=1.0](Entry)(18,-32.0){}



\node[Nw=16.0,Nh=8.0,Nmr=1.0](Box)(44,-32.0){$A_1$}
\node[Nw=2.0,Nh=2.0,Nmr=1.0](EntryBox1)(36,-30.0){}
\node[Nw=2.0,Nh=2.0,Nmr=1.0](EntryBox2)(36,-34.0){}
\node[Nw=2.0,Nh=2.0,Nmr=1.0](BoxEx1)(52,-30.0){}
\node[Nw=2.0,Nh=2.0,Nmr=1.0](BoxEx2)(52,-34.0){}

\node[Nw=16.0,Nh=8.0,Nmr=1.0](sBox)(70,-32.0){$A_1$}
\node[Nw=2.0,Nh=2.0,Nmr=1.0](sEntryBox1)(62,-30.0){}
\node[Nw=2.0,Nh=2.0,Nmr=1.0](sEntryBox2)(62,-34.0){}
\node[Nw=2.0,Nh=2.0,Nmr=1.0](sBoxEx1)(78,-30.0){}
\node[Nw=2.0,Nh=2.0,Nmr=1.0](sBoxEx2)(78,-34.0){}






\node[Nw=2.0,Nh=2.0,Nmr=1.0](Exit)(96,-32.0){}

\node[Nw=78.0,Nh=30,Nmr=7.475](Frame)(57,-32.0){}

\node[Nw=4.0,Nh=4.0,Nmr=4.0](BadSink)(87.0,-29.0){$v_1$}
\node[Nw=4.0,Nh=4.0,Nmr=4.0](GoodSink)(87.0,-35.0){$v_2$}

\drawloop[loopdiam=2](BadSink){$-1$}
\drawloop[loopdiam=2,loopangle=-90](GoodSink){$1$}

\drawedge(Entry,EntryBox1){}
\drawedge(BoxEx2,sEntryBox2){}
\drawedge(sBoxEx2,BadSink){}
\drawedge(sBoxEx1,GoodSink){}

\drawedge[curvedepth=5](BoxEx1,BadSink){}


\end{picture}

\caption{RSM with two modules ($A_0$ and $A_1$).
Player~1 controls all the vertices.}\label{fig:MultiEntriesExample}
\end{center}
\end{figure}
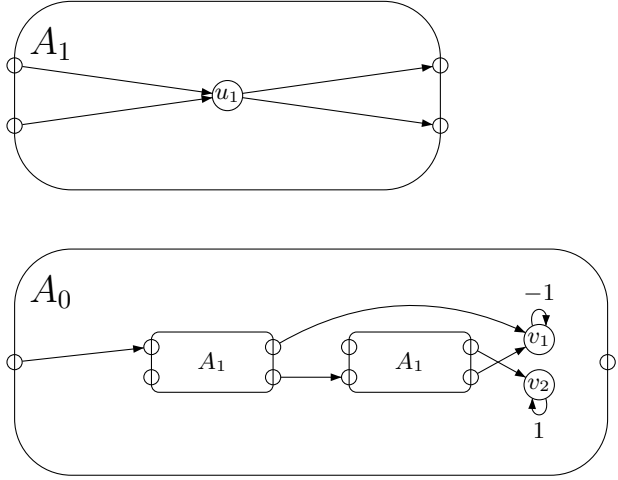

\begin{examp}
Consider the WRG shown in Figure~\ref{fig:MultiEntriesExample}, where
player~1 has to make decisions only in node $u_1$ in module $A_1$ 
(and all the other nodes have out-degree $1$).
A player-1 modular winning strategy in node $u_1$ is to go to the upper exit node 
if the module is invoked from the lower entry, and otherwise player~1 chooses 
the lower exit node.
With this strategy the play reaches $v_2$ and the mean-payoff is $1$.
On the other hand, any play according to a player-1 memoryless modular strategy, namely, 
either always go to the upper node or always go to the lower exit node, reaches $v_1$ and has mean-payoff $-1$.
Hence, in this example, player~1 has a modular winning strategy but not memoryless modular 
winning strategy (even for the objective is reachability to $v_2$).
\end{examp}

\subsection{NP-hardness of the modular winning strategy problem}
In this section we establish the NP-hardness of the modular 
winning strategy problem.
Our hardness result will be for one-player WRGs (player-1 WRGs), 
where every module will have single exits, and the weights are 
$\Set{-1,0,+1}$.
In other words, our hardness result shows that even a very simple 
version of the problem (single exit one-player WRGs with constant 
weights) is NP-hard.

\newcommand{\cl}{\mathit{cl}}

\smallskip\noindent{\bf Reduction.} 
We present a reduction from the 3-SAT problem (satisfiability of a CNF 
formula where every clause has exactly three distinct literals).
Consider a 3-SAT formula $\varphi(x_1,x_2,\dots,x_n) = \bigwedge_{i=1}^m \cl_i$,
over $n$ variables $x_1,x_2,\dots,x_n$, and $m$ clauses $\cl_1, \cl_2, \dots,\cl_m$.
A literal is a variable $x_i$ or its negation $\neg x_i$.
We construct a player-1 WRG as follows:
$\wrg_\varphi = \atuple{A_0,x_1, \neg x_1, x_2, \neg x_2, \dots, x_n, \neg x_n, \cl_1, 
\cl_2, \dots, \cl_m}$ 
in the following way: there is an initial module $A_0$, there is a module for every literal, 
and for every clause.
We now describe the modules.

\smallskip\noindent{\em Module $A_0$.}
The module invokes in an infinite loop in sequence the modules $\cl_1, \cl_2, \dots \cl_m$, and 
all the transitions in this module have weight zero.

\smallskip\noindent{\em Module for clause $\cl_i$.}
There is an edge from the entry node of module $\cl_i$ 
to a box that invokes module $y$, for every literal $y$ that appears in 
the clause $\cl_i$.
There is also an edge from the return node of $y$ to the exit node of $\cl_i$.
All the weights in this module are zero.

\smallskip\noindent{\em Module for literal $y_i$.}
The entry node of $y_i$ has outdegree two (left edge and right edge).
The left edge is the $\False$ edge, which leads to the exit node, and has a weight $-1$.
The right edge is the $\True$ edge, which leads to a box that invokes a 
call for module $\neg y_i$, and its weight is $-1$.
The return of the box leads to the exit node and the edge weight is $+2$.
The reduction is illustrated pictorially in Figure~\ref{fig-lower}.
\begin{figure}[t]
   \begin{center}
      \setlength{\unitlength}{0.00032371in}
\begingroup\makeatletter\ifx\SetFigFontNFSS\undefined%
\gdef\SetFigFontNFSS#1#2#3#4#5{%
  \reset@font\fontsize{#1}{#2pt}%
  \fontfamily{#3}\fontseries{#4}\fontshape{#5}%
  \selectfont}%
\fi\endgroup%
{\renewcommand{\dashlinestretch}{30}
\begin{picture}(15552,7975)(0,-10)
\put(7534.000,-13295.500){\arc{41217.457}{4.3671}{5.0577}}
\blacken\path(661.983,6165.517)(559.000,6097.000)(682.110,6108.994)(661.983,6165.517)
\put(559,5962){\ellipse{202}{202}}
\put(1448,5962){\ellipse{202}{202}}
\put(2753,5962){\ellipse{202}{202}}
\put(3923,5962){\ellipse{202}{202}}
\put(5228,5962){\ellipse{202}{202}}
\put(12023,5962){\ellipse{202}{202}}
\put(13339,5962){\ellipse{202}{202}}
\put(14509,5962){\ellipse{202}{202}}
\put(109,1462){\ellipse{202}{202}}
\put(2134,2272){\ellipse{202}{202}}
\put(3439,2272){\ellipse{202}{202}}
\put(5689,1462){\ellipse{202}{202}}
\put(2179,1417){\ellipse{202}{202}}
\put(3439,1417){\ellipse{202}{202}}
\put(2134,472){\ellipse{202}{202}}
\put(3439,472){\ellipse{202}{202}}
\put(9604,1417){\ellipse{202}{202}}
\put(11854,787){\ellipse{202}{202}}
\put(13159,787){\ellipse{202}{202}}
\put(15184,1822){\ellipse{202}{202}}
\put(15443,5962){\ellipse{202}{202}}
\thicklines
\path(109,2767)(5689,2767)(5689,22)
	(109,22)(109,2767)
\path(9604,2767)(15184,2767)(15184,22)
	(9604,22)(9604,2767)
\thinlines
\path(12034,6322)(13339,6322)(13339,5602)
	(12034,5602)(12034,6322)
\path(1459,6322)(2764,6322)(2764,5602)
	(1459,5602)(1459,6322)
\path(3934,6322)(5239,6322)(5239,5602)
	(3934,5602)(3934,6322)
\path(2134,2677)(3439,2677)(3439,1957)
	(2134,1957)(2134,2677)
\path(2134,1777)(3439,1777)(3439,1057)
	(2134,1057)(2134,1777)
\path(2134,832)(3439,832)(3439,112)
	(2134,112)(2134,832)
\path(11854,1147)(13159,1147)(13159,427)
	(11854,427)(11854,1147)
\thicklines
\path(559,7447)(15454,7447)(15454,4477)
	(559,4477)(559,7447)
\thinlines
\path(649,6007)(1369,6007)
\blacken\path(1249.000,5977.000)(1369.000,6007.000)(1249.000,6037.000)(1249.000,5977.000)
\path(2854,6007)(3844,6007)
\blacken\path(3724.000,5977.000)(3844.000,6007.000)(3724.000,6037.000)(3724.000,5977.000)
\path(5329,5962)(5869,5962)
\blacken\path(5749.000,5932.000)(5869.000,5962.000)(5749.000,5992.000)(5749.000,5932.000)
\path(13474,5962)(14419,5962)
\blacken\path(14299.000,5932.000)(14419.000,5962.000)(14299.000,5992.000)(14299.000,5932.000)
\path(199,1507)(2044,2272)
\blacken\path(1944.641,2198.326)(2044.000,2272.000)(1921.660,2253.750)(1944.641,2198.326)
\path(199,1462)(2044,1462)
\blacken\path(1924.000,1432.000)(2044.000,1462.000)(1924.000,1492.000)(1924.000,1432.000)
\path(199,1372)(2044,517)
\blacken\path(1922.509,540.236)(2044.000,517.000)(1947.737,594.675)(1922.509,540.236)
\path(3529,472)(5509,1417)
\blacken\path(5413.624,1338.238)(5509.000,1417.000)(5387.780,1392.387)(5413.624,1338.238)
\path(3529,1417)(5509,1462)
\blacken\path(5389.713,1429.281)(5509.000,1462.000)(5388.349,1489.266)(5389.713,1429.281)
\path(3574,2272)(5554,1552)
\blacken\path(5430.972,1564.815)(5554.000,1552.000)(5451.477,1621.203)(5430.972,1564.815)
\path(9694,1417)(15049,1867)
\blacken\path(14931.934,1827.057)(15049.000,1867.000)(14926.909,1886.846)(14931.934,1827.057)
\path(9739,1372)(11719,832)
\blacken\path(11595.335,834.631)(11719.000,832.000)(11611.122,892.517)(11595.335,834.631)
\path(13249,832)(15049,1732)
\blacken\path(14955.085,1651.502)(15049.000,1732.000)(14928.252,1705.167)(14955.085,1651.502)
\dashline{60.000}(6544,5917)(11314,5917)
\put(2539,3037){\makebox(0,0)[lb]{\smash{{\SetFigFontNFSS{7}{8.4}{\rmdefault}{\mddefault}{\updefault}$\cl_i$}}}}
\put(14329,1102){\makebox(0,0)[lb]{\smash{{\SetFigFontNFSS{7}{8.4}{\rmdefault}{\mddefault}{\updefault}+2}}}}
\put(10414,832){\makebox(0,0)[lb]{\smash{{\SetFigFontNFSS{7}{8.4}{\rmdefault}{\mddefault}{\updefault}-1}}}}
\put(12304,1732){\makebox(0,0)[lb]{\smash{{\SetFigFontNFSS{7}{8.4}{\rmdefault}{\mddefault}{\updefault}-1}}}}
\put(1819,5872){\makebox(0,0)[lb]{\smash{{\SetFigFontNFSS{7}{8.4}{\rmdefault}{\mddefault}{\updefault}$\cl_1$}}}}
\put(4294,5872){\makebox(0,0)[lb]{\smash{{\SetFigFontNFSS{7}{8.4}{\rmdefault}{\mddefault}{\updefault}$\cl_2$}}}}
\put(12394,5872){\makebox(0,0)[lb]{\smash{{\SetFigFontNFSS{7}{8.4}{\rmdefault}{\mddefault}{\updefault}$\cl_m$}}}}
\put(2539,2182){\makebox(0,0)[lb]{\smash{{\SetFigFontNFSS{7}{8.4}{\rmdefault}{\mddefault}{\updefault}$y_j$}}}}
\put(2494,1282){\makebox(0,0)[lb]{\smash{{\SetFigFontNFSS{7}{8.4}{\rmdefault}{\mddefault}{\updefault}$y_\ell$}}}}
\put(2539,337){\makebox(0,0)[lb]{\smash{{\SetFigFontNFSS{7}{8.4}{\rmdefault}{\mddefault}{\updefault}$y_k$}}}}
\put(12169,697){\makebox(0,0)[lb]{\smash{{\SetFigFontNFSS{7}{8.4}{\rmdefault}{\mddefault}{\updefault}$\neg y_i$}}}}
\put(11989,3037){\makebox(0,0)[lb]{\smash{{\SetFigFontNFSS{7}{8.4}{\rmdefault}{\mddefault}{\updefault}$y_i$}}}}
\put(7174,7717){\makebox(0,0)[lb]{\smash{{\SetFigFontNFSS{7}{8.4}{\rmdefault}{\mddefault}{\updefault}$A_0$}}}}
\end{picture}
}
   \end{center}
  \caption{NP-hardness reduction}
  \label{fig-lower}
\end{figure}

\begin{Obs}\label{obs:ObsOnConstruction}
Every path (that is generated by a modular strategy) from the entry node of 
the module $y_i$ to its exit node, has a total weight of at most $0$, 
hence every path (that is generated by a modular strategy) from the entry 
node of module $\cl_i$ to its exit node, has a total weight of at most $0$.
\end{Obs}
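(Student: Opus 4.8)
The plan is to fix an arbitrary modular strategy $\tau = \{\tau_i\}$ for player~1 and to prove the claim for every finite path consistent with $\tau$ that enters a literal module $y$ at its entry node and leaves it at its single exit node at the same stack height; the bound for clause modules will then be immediate. The feature of modular strategies I intend to exploit is context-independence: the local strategy $\tau_y$ resolves the choice at the entry node of $y$ identically at every fresh invocation of $y$, regardless of the calling context or the current stack height. First I would split on the choice of $\tau_y$ at the entry of $y$. If $\tau_y$ selects the $\False$ edge, the unique entry-to-exit path is that single edge, of weight $-1 \le 0$. If $\tau_y$ selects the $\True$ edge, then any entry-to-exit path decomposes as the $\True$ edge (weight $-1$), followed by an entry-to-exit path $\pi'$ of the invoked module $\neg y$, followed by the return edge (weight $+2$); its weight therefore equals $1 + w(\pi')$.

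The crux, and the main obstacle, is that this decomposition by itself yields only $1 + w(\pi') \le 1$, which is useless: one unfolding of the $\True$ edge together with its return contributes net weight $+1$, so a naive induction on recursion depth does not close, and indeed for arbitrary (global) paths the weight is unbounded. The point at which modularity must enter is the observation that $w(\pi')$ is itself the weight of a \emph{finite} returning path of $\neg y$ consistent with $\tau$. I would argue that such a $\pi'$ can exist only if $\tau_{\neg y}$ chooses the $\False$ edge at the entry of $\neg y$: were $\tau_{\neg y}$ to choose $\True$, the invocation of $\neg y$ would call $y$, which under the standing assumption $\tau_y = \True$ (and by context-independence) would again call $\neg y$, and so on without ever returning, so no finite entry-to-exit path of $\neg y$ consistent with $\tau$ would exist. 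Hence $\pi'$ is forced to be the single $\False$ edge of $\neg y$, with $w(\pi') = -1$, and the path has weight $-1 + (-1) + 2 = 0 \le 0$.

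Combining the two cases, every finite entry-to-exit path of a literal module consistent with $\tau$ has weight in $\{-1, 0\}$, hence at most $0$. For a clause module $\cl_i$, every entry-to-exit path consists of the weight-$0$ edge from the entry to some literal box, an entry-to-exit path of that literal module (of weight at most $0$ by the above), and the weight-$0$ return edge to the exit; summing these gives weight at most $0$, which is the desired conclusion. The only delicate step is the vacuity argument of the previous paragraph, where the non-termination of the $\True$/$\True$ mutual recursion is precisely what caps the $\True$ branch at $0$ rather than letting it accumulate $+1$ per level; everything else is a routine edge-weight bookkeeping.
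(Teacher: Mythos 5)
Your proof is correct, and it supplies something the paper does not: the paper states Observation~\ref{obs:ObsOnConstruction} with no proof at all, treating it as immediate from the construction. More importantly, your insistence on fixing a modular strategy $\tau$ and quantifying only over paths consistent with $\tau$ is not a cosmetic choice --- it is necessary, because the Observation as literally written (``every path'') is false. A valid global path may resolve the entry choice differently in different invocations of the same literal module: take the $\True$ edge at the outer invocation of $y_i$ (weight $-1$), take the $\True$ edge inside $\neg y_i$ (weight $-1$), then take the $\False$ edge at the inner, second invocation of $y_i$ (weight $-1$), and return twice (weight $+2$ each); this is an entry-to-exit path of $y_i$ of weight $+1$, and deeper alternation gives weight $k-1$ for $k$ nested calls --- exactly the unboundedness you point out. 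What makes the bound true is precisely the context-independence of modular (here, effectively memoryless) strategies, which rules out such inconsistent choices, and your vacuity argument --- that under $\tau_y=\True$ and $\tau_{\neg y}=\True$ no finite returning path of $\neg y$ consistent with $\tau$ exists --- is the right way to cap the $\True$ branch at weight $0$ rather than $+1$ per level. Your restricted statement is also exactly what the paper needs: in the proof of Lemma~\ref{lemm:WinsIffSat} the Observation is only ever applied to plays consistent with the fixed modular strategy $\tau$, and the $\True$/$\True$ non-terminating case is handled there by a separate argument (the play descends forever through negative edges), just as that case is vacuous in your decomposition. So your proof is both correct and, strictly speaking, a necessary repair of the statement it proves.
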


\begin{lem}\label{lemm:WinsIffSat}
There exists a modular winning strategy for player~1 in $\wrg_\varphi$ 
for the objective $\LimInfAvg\geq 0$ iff $\varphi$ is satisfiable.
\end{lem}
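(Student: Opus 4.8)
The plan is to prove both directions of the biconditional by setting up a tight correspondence between modular strategies of $\wrg_\varphi$ and truth assignments of $\varphi$. First I would observe that the only genuine choices a modular strategy makes are at the entry node of each literal module (the left/$\False$ edge versus the right/$\True$ edge) and at the entry node of each clause module (which literal-box to invoke); in both cases the relevant local history is just the singleton entry node, so a modular strategy is forced to make the same choice at every invocation. By Lemma~\ref{lem:ModularMemorylessDeterminacy} it suffices anyway to consider memoryless modular strategies, and such a strategy is captured by (i)~a map $c$ assigning to each literal module $y$ a bit $c(y)\in\{L,R\}$, and (ii)~for each clause $\cl_i$ a distinguished literal $y_i^\ast\in\cl_i$ that player~1 always invokes.

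Next I would compute the weight contributed by invoking a literal module $y$ under $c$, which is the crux of the reduction: if $c(y)=L$ the module returns with weight $-1$; if $c(y)=R$ and $c(\bar y)=L$ it returns with weight $-1-1+2=0$; and if $c(y)=R$ and $c(\bar y)=R$ the call $y\to\bar y\to y\to\cdots$ recurses forever, never returns, and every edge along it has weight $-1$. Together with Observation~\ref{obs:ObsOnConstruction} (every clause call has weight at most $0$), this lets me single out a \emph{good} literal as one with $c(y)=R$ and $c(\bar y)=L$, for which the call terminates with weight exactly $0$.

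For the direction ``$\varphi$ satisfiable $\Rightarrow$ winning strategy,'' given a satisfying assignment $\mu$ I would set $c(y)=R$ iff the literal $y$ is true under $\mu$, and in each clause $\cl_i$ let player~1 invoke a literal $y_i^\ast$ that $\mu$ makes true; consistency of $\mu$ forces $c(y_i^\ast)=R$ and $c(\bar{y_i^\ast})=L$, so every clause call returns with weight $0$. Since $A_0$ and the clause edges carry weight $0$, the unique resulting play decomposes into weight-$0$ blocks and has $\LimInfAvg=0\ge 0$. For the converse I would take any modular winning strategy, read off $c$ and the $y_i^\ast$, and show each $y_i^\ast$ must be good: if some $y_i^\ast$ had $c(y_i^\ast)=R,\ c(\bar{y_i^\ast})=R$ the play gets stuck in the $-1$-recursion, giving $\LimInfAvg=-1$; if $c(y_i^\ast)=L$ then, because $A_0$ revisits $\cl_i$ once per loop iteration of bounded length, each iteration accrues weight at most $-1$, so $\LimInfAvg<0$; either way player~1 loses. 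Hence every clause has a good literal, and defining $\mu(x_j)=T$ when $c(x_j)=R,\ c(\neg x_j)=L$ (and $\mu(x_j)=F$ when $c(x_j)=L,\ c(\neg x_j)=R$, arbitrarily otherwise) gives a well-defined assignment that, by goodness of each $y_i^\ast$, satisfies every clause, so $\varphi$ is satisfiable.

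The main obstacle I anticipate is the quantitative part of the converse: I must rule out mean-payoff-$0$ plays that are not block-structured, making precise that a single bad clause already forces $\LimInfAvg<0$. This rests on the uniform bound on the length of one loop iteration of $A_0$ combined with Observation~\ref{obs:ObsOnConstruction}; the delicate subcase is the non-returning recursion, where I must note that the call yields a genuine infinite suffix of average $-1$ (not merely a negative prefix), so that even the $\liminf$ is strictly negative and player~1 indeed loses.
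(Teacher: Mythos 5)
Your proposal is correct and follows essentially the same route as the paper's proof: both reduce to the observation that modular strategies here amount to a clause-to-literal choice plus a $\True/\False$ bit per literal module, then show that a $\False$ choice forces a negative-weight iteration (via Observation~\ref{obs:ObsOnConstruction}), that a $\True$/$\True$ conflict forces an infinite $-1$-weighted recursion, and that a satisfying assignment yields weight-$0$ iterations. Your "good literal" bookkeeping and explicit weight calculus ($-1$, $-1-1+2=0$, or non-returning recursion) is just a slightly more systematic presentation of the paper's identical case analysis.
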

\begin{proof}
We first observe that every modular strategy in $\wrg_\varphi$ is a memoryless
modular strategy.
Observe that modular strategy for player~1 is a selection of a 
literal for every clause module, and selecting either the $\True$ or $\False$ edge 
for every literal module.
We now present both directions of the proof.

\begin{itemize}
\item {\em Modular winning strategy implies satisfiability.}
Let $\tau$ be a modular winning strategy for player 1, 
such that the literal $y^\tau_i$ is chosen for clause $\cl_i$.
First, towards a contradiction, let us assume that there exists 
$i \in \RangeSet{1}{m}$ such that $\tau$ selects the $\False$ edge for 
module $y^\tau_i$.
Hence the weight of the path inside the module $\cl_i$ is negative.
Thus, due to Observation~\ref{obs:ObsOnConstruction}, the mean-payoff value according 
to $\tau$ is negative.
Therefore $\tau$ selects the $\True$ edge for the module $y^\tau_i$.
Next, towards a contradiction, let us assume that there exist $i,j \in \RangeSet{1}{m}$ such 
that $y^\tau_i = \neg y^\tau_j$ and $\tau$ selects the $\True$ edge for both modules.
In this case, the play will never exit module $y^\tau_i$, and will go forever through 
edges with negative weights.
Therefore if $\tau$ selects the $\True$ edge for $y^\tau_i$, then it does not select 
the $\True$ edge for $\neg y^\tau_i$.
Due to the above, the assignment that assigns a true value to the literal 
$y^\tau_i$ in clause $\cl_i$ is a valid (non-conflicting) assignment 
that satisfies $\varphi$.

\item {\em Satisfiability implies modular winning strategy.}
Let $\overline{x}$ be a satisfying assignment (a non-conflicting 
assignment of truth values to variables) for the formula $\varphi$.
We construct a modular winning strategy $\tau_{\overline{x}}$ as follows.
In module $\cl_i$, the modular strategy invokes the 
module $y^{\overline{x}}_i$, where $y^{\overline{x}}_i$ is a 
literal for which $\overline{x}$ assigns a true value (since $\overline{x}$ 
is a satisfying assignment such a literal must exist).
In module $y_i$, follow the $\True$ edge if $\overline{x}$ assigns a 
true value to the literal $y_i$, and follow the $\False$ edge otherwise.
It is straightforward to verify that the mean-payoff of a 
play according to $\tau_{\overline{x}}$ is zero. 
\end{itemize}
This completes the proof.
\hfill\qed
\end{proof}
Observe that in the hardness reduction we have used positive weight $+2$
for simplicity, which can be split into two edges of weight $+1$ each.
Hence we have the following theorem.

\begin{thm}\label{thm:np-hard}
The modular winning strategy problem is NP-hard for one-player WRGs
(player-1 WRGs) with single exit for every module and objective 
$\LimInfAvg\geq 0$ with edge weights in $\Set{-1,0,+1}$.
\end{thm}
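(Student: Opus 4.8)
The plan is to treat Theorem~\ref{thm:np-hard} as the packaging of the reduction already set up in this subsection together with the correctness Lemma~\ref{lemm:WinsIffSat}. First I would argue that the construction $\wrg_\varphi = \atuple{A_0, x_1, \neg x_1, \dots, x_n, \neg x_n, \cl_1, \dots, \cl_m}$ is computable in polynomial time: there are $2n + m + 1$ modules, each of constant size (the literal modules have two edges and one box, the clause modules have at most three boxes, and $A_0$ is a single loop), so the total size of $\wrg_\varphi$ is linear in $n+m$ and hence polynomial in $|\varphi|$. I would then record the three structural features required by the statement, all immediate from the construction: (i)~no node of $\wrg_\varphi$ is a player-2 node, so $\wrg_\varphi$ is a player-1 WRG (i.e.\ a one-player game); (ii)~every module has exactly one exit node; and (iii)~every edge weight lies in $\Set{-1,0,+2}$ before the final adjustment.

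Next I would invoke the correctness lemma. By Lemma~\ref{lemm:WinsIffSat}, player~1 has a modular winning strategy in $\wrg_\varphi$ for the objective $\LimInfAvg \geq 0$ if and only if $\varphi$ is satisfiable; Observation~\ref{obs:ObsOnConstruction} is the structural fact underpinning that lemma and is already available. Since 3-SAT is NP-complete and the reduction is polynomial, this yields a polynomial-time many-one reduction from 3-SAT to the modular winning strategy problem on the restricted class of instances, establishing NP-hardness.

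The only remaining point is to bring the weights into $\Set{-1,0,+1}$. Here I would use the weight-splitting remark: the single edge of weight $+2$ (the return edge of the box in each literal module $y_i$) is replaced by two consecutive edges of weight $+1$, inserting one fresh intermediate node between the box return and the exit node of $y_i$. I would verify that this replacement is semantically inert: because $\wrg_\varphi$ is one-player, the new node carries a single forced transition and adds no choice; it is not an entry or exit node, so the single-exit property is preserved; and the total weight of every path, and hence the value of every play, is unchanged. Consequently Lemma~\ref{lemm:WinsIffSat} continues to hold verbatim for the modified WRG, whose weights now lie in $\Set{-1,0,+1}$. Combining these steps gives the theorem.

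I do not expect any genuine obstacle, since the hard combinatorial content (the ``true-edge forces a matching call to $\neg y_i$, so a conflicting assignment traps the play in a negative loop'' argument) is entirely discharged by Lemma~\ref{lemm:WinsIffSat}. The one place to be careful is the inert-splitting check: I would make explicit that inserting the intermediate node does not create a spurious second exit and does not alter local histories in a way that a \emph{modular} strategy could exploit — which is immediate because the inserted node has out-degree one and belongs to the same module $y_i$.
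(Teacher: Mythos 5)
Your proposal is correct and follows essentially the same route as the paper: the 3-SAT reduction, correctness via Lemma~\ref{lemm:WinsIffSat}, and the observation that the single $+2$ edge can be split into two $+1$ edges (the paper states this splitting in one line; you merely spell out the inertness check, which is fine — note only that splitting can change the numerical mean-payoff value since path lengths grow, but it clearly preserves satisfaction of $\LimInfAvg \geq 0$, which is all that is needed).
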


\noindent{\bf Strict inequalities and stack boundedness.}
We first observe that the above reduction also holds for $\LimSupAvg\geq0$ 
objective.
Moreover, whenever the 3-SAT formula $\varphi$ is satisfiable, then the 
witness memoryless modular strategy along with mean-payoff objective also
ensures stack boundedness. 
Hence the hardness result follows for mean-payoff objectives with non-strict
inequalities as well as for stack boundedness.
The result for strict inequality is obtained as follows: we modify the above
reduction by changing the weight of the edge back to the entry node of $A_0$
from~0 to ~1.
Then if the formula $\varphi$ is satisfiable, then the average payoff for 
memoryless modular strategies is at least $\frac{1}{|V|}$, where $|V|$ is the 
number of vertices, and if the formula $\varphi$ is not satisfiable, then the 
mean-payoff under all memoryless modular strategies is at most~0.
Hence the hardness follows also for mean-payoff objectives with strict 
inequalities.
We have the following theorem summarizing the results for modular strategies.

\begin{thm}
The following assertions hold for WRGs with objectives 
$\Phi \bowtie 0$, for $\bowtie \in \Set{\geq,>}$, 
$\Phi\in \Set{\LimSupAvg,\LimInfAvg}$, as well as objectives
$\Phi\bowtie 0$ along with stack boundedness.
\begin{enumerate}
\item If there is a modular winning strategy, then there is a memoryless
modular winning strategy.

\item The decision problem of whether there is a memoryless modular 
winning strategy is NP-complete. 

\item The decision problem is NP-hard for player-1 WRGs with single exit
for every module and edge weights in $\Set{-1,0,+1}$.
\end{enumerate}
\end{thm}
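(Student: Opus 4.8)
The plan is to assemble the theorem from the results already proved for the canonical objective $\LimInfAvg \geq 0$, and then to transport each of the three conclusions to the remaining objective variants ($\LimSupAvg$, and the strict inequalities $>$) and to the stack-boundedness condition. First I would settle $\LimInfAvg \geq 0$ completely, and only afterwards handle the transfers, since the transfers are where the genuine (if small) work lies.

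For $\LimInfAvg \geq 0$ every assertion is already in hand. Assertion~1 is exactly Lemma~\ref{lem:ModularMemorylessDeterminacy}, which chains the reduction of a modular winning strategy to a cycle-independent modular winning strategy (Lemma~\ref{lem:ModularStrategyInfImpCycleIndStrategiesSup}) and then to a memoryless modular winning strategy via the signature game (Lemma~\ref{lem:MemorylessWinningStrategy}). The NP upper bound of assertion~2 is Theorem~\ref{thm:ModularStratInNP}: guess a memoryless modular strategy as the polynomial witness and verify it in polynomial time using the WPS algorithm of Section~\ref{sect:PushDownProcesses}. The matching hardness, and assertion~3 in full, are Theorem~\ref{thm:np-hard}, the $3$-SAT reduction to single-exit player-$1$ WRGs with weights in $\Set{-1,0,+1}$. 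Thus the whole statement holds for $\LimInfAvg \geq 0$.

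Next I would transfer to $\LimSupAvg \geq 0$ through Corollary~\ref{cor:ModularSupIffInf}, which asserts that a modular winning strategy exists for $\LimSupAvg \geq 0$ iff one exists for $\LimInfAvg \geq 0$; since the witness produced by Lemma~\ref{lem:CycleIndAreEnoughAlsoForLimSup} together with Lemma~\ref{lem:ModularMemorylessDeterminacy} is the same memoryless modular strategy, all three assertions carry over, and the reduction of Theorem~\ref{thm:np-hard} already applies to $\LimSupAvg \geq 0$. For the strict-inequality objectives I would re-run the manipulated-path argument, observing that it is insensitive to replacing $\geq$ by $>$ up to the $\epsilon$-slack already present in Lemma~\ref{lem:LimInfImpliesManipLimSup} and Lemma~\ref{lem:CycleIndAreEnoughAlsoForLimSup}; hardness for $>$ uses the variant reduction that sets the weight of the return edge of $A_0$ to $1$, so satisfiability yields average at least $\frac{1}{|V|}$ while unsatisfiability yields average at most $0$. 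For stack boundedness the key observation is that the rewind and fast-forward operations never increase the stack height, so the cycle-independent and memoryless witnesses constructed above already respect any stack bound respected by the original strategy; hence assertion~1 and the NP membership are preserved, and the $3$-SAT witness strategy of Lemma~\ref{lemm:WinsIffSat} is itself stack bounded, giving assertion~3.

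The main obstacle is confined to the transfer step: I must check that the $\epsilon$-approximation machinery of the manipulated operation and the signature-game reduction remain valid uniformly across $\LimInfAvg/\LimSupAvg$ and $\geq/>$, and that the stack-height monotonicity of the manipulated operation genuinely holds at every step so that the memoryless witness is stack bounded whenever the original strategy is. Everything else is a direct appeal to the established lemmas and theorems.
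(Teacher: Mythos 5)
Your proposal is correct and follows essentially the same route as the paper: the paper also assembles the theorem from Lemma~\ref{lem:ModularMemorylessDeterminacy}, Theorem~\ref{thm:ModularStratInNP}, and Theorem~\ref{thm:np-hard} for $\LimInfAvg\geq 0$, transfers to $\LimSupAvg\geq 0$ via Corollary~\ref{cor:ModularSupIffInf}, handles strict inequalities by the same $\epsilon$-adaptation and the modified reduction with return-edge weight~$1$, and handles stack boundedness by the same observation that the manipulated (rewind/fast-forward) operations never increase stack height and that the 3-SAT witness strategy is stack bounded.
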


\section{Conclusion}
In this work we study for the first time mean-payoff objectives in 
pushdown games and present a complete characterization of computational 
and strategy complexity.
We show that pushdown systems (one-player pushdown games) 
with mean-payoff objectives under global strategies 
can be solved in polynomial time, whereas pushdown games 
with mean-payoff objectives under global strategies are 
undecidable.
For modular strategies both pushdown systems and pushdown games with 
mean-payoff objectives are NP-complete.
We also show 
that global strategies for mean-payoff objectives in general 
require infinite memory even in pushdown systems; 
whereas memoryless strategies
suffice for modular strategies for mean-payoff objectives.
An interesting direction of future work would be to consider such games 
with multi-dimensional mean-payoff objectives.


\end{document}